\documentclass[a4paper,10pt,leqno]{amsart}
\usepackage{amsmath}
\usepackage{amsthm}
\usepackage{amssymb}
\usepackage{stmaryrd}
\usepackage{url}
\usepackage{here}
\usepackage[all]{xy}
\usepackage{comment}
\usepackage{mathtools}
\usepackage{ascmac}
\usepackage{ulem}
\usepackage{framed}
\usepackage{graphicx}
\usepackage{nicematrix}
\usepackage{lmodern}
\usepackage{caption}
\usepackage{subcaption}
\usepackage{footnote}
\usepackage{amscd}
\usepackage{mathdots}
\usepackage{titlesec}
\usepackage{titletoc}
\mathtoolsset{showonlyrefs}
\usepackage{enumitem}
\usepackage{bbm}

\usepackage{tikz}

\usepackage{xcolor}
\definecolor{lgray}{RGB}{240,240,240}
\definecolor{webgreen}{rgb}{0,.5,0}
\definecolor{webbrown}{rgb}{.6,0,0}
\definecolor{RoyalBlue}{cmyk}{1, 0.50, 0, 0}
\usepackage[colorlinks=true, breaklinks=true, urlcolor=webbrown, linkcolor=RoyalBlue, citecolor=webgreen, backref=page]{hyperref}

\usepackage{newtxtext,newtxmath}
\usepackage{newtxtext}

\usepackage{mathabx}

\titlecontents{section} 
[0pt]                 
{\bfseries}            
{\thecontentslabel\quad} 
{}                    
{\titlerule*[0.5em]{.}\contentspage} 

\titlecontents{subsection} 
[5pt]                 
{}            
{\thecontentslabel\quad} 
{}                    
{\titlerule*[0.5em]{.}\contentspage} 

\titlecontents{subsubsection} 
[10pt]                 
{}            
{\thecontentslabel\quad} 
{}                    
{\titlerule*[0.5em]{.}\contentspage} 

\titlecontents{paragraph} 
[15pt]                 
{}            
{\thecontentslabel\quad} 
{}                    
{\titlerule*[0.5em]{.}\contentspage} 

\titlespacing*{\section}{0pt}{5.5ex plus 1ex minus .2ex}{4.3ex plus .2ex}

\numberwithin{equation}{subsection}

\titleformat{\section}{\centering\normalfont\large\scshape}{\thesection}{1em}{}

\titleformat{\subsection}{\normalfont\normalsize\bfseries}{\thesubsection}{1em}{}

\titleformat{\subsubsection}{\normalfont\normalsize\bfseries}{\thesubsubsection}{1em}{}

\titleformat{\paragraph}{\normalfont\normalsize\it}{\theparagraph}{1em}{}

\theoremstyle{plain}
\newtheorem{thm}{Theorem}[section]
\newtheorem{lem}[thm]{Lemma}
\newtheorem{prop}[thm]{Proposition}

\newtheorem{RHP}{Riemann-Hilbert Problem}

\theoremstyle{definition}

\newtheorem*{IMY}{Theorem (Its-Miyahara-Yattselev)}

\theoremstyle{remark}
\newtheorem*{rem}{Remark}

\newtheoremstyle{case}{}{}{}{}{}{:}{ }{}
\theoremstyle{case}

\usepackage[most]{tcolorbox}
\tcbsetforeverylayer{autoparskip}
\usepackage{varwidth}
\tcbuselibrary{skins}
\tcbuselibrary{theorems}
\tcolorboxenvironment{myitemize}{blanker, breakable, before skip=6pt, after skip=6pt, borderline west={1pt}{0pt}{webbrown}}
\tcbset{colback=lgray,colframe=webbrown,boxrule=0.2mm,breakable,before skip=6pt,after skip=6pt,left=3pt,right=3pt}

\makeatletter
\renewenvironment{cases}[1][l]{\matrix@check\cases\env@cases{#1}}{\endarray\right.}
\def\env@cases#1{%
  \let\@ifnextchar\new@ifnextchar
  \left\lbrace\def\arraystretch{1.2}%
  \array{@{}#1@{\quad}l@{}}}
\makeatother

\newcommand{\C}{\mathbb C}   
\newcommand{\R}{\mathbb R}

\newcommand{\Z}{\mathbb Z}

\renewcommand{\S}{\mathfrak{S}}

\renewcommand{\a}{\boldsymbol{a}}
\newcommand{\z}{\boldsymbol{z}}

\newcommand{\ii}{\mathrm{i}}

\newcommand{\aj}{\mathsf{a}}
\newcommand{\kk}{\mathsf{k}}
\newcommand{\KK}{\mathsf{K}}

\newcommand{\stkout}[1]{\ifmmode\text{\sout{\ensuremath{#1}}}\else\sout{#1}\fi}

\DeclareMathOperator{\Arg}{Arg}

\DeclareMathOperator{\dist}{dist}

\DeclareMathOperator{\res}{res}
\DeclareMathOperator{\sign}{sgn}
\DeclareMathOperator{\Ai}{Ai}

\DeclareMathOperator{\sd}{sd}

\renewcommand{\O}{\mathcal{O}}
\renewcommand{\Re}{\operatorname{Re}}
\renewcommand{\Im}{\operatorname{Im}}

\begin{document}

\title[Transition asymptotics for the sinh-Gordon PIII]{Transition asymptotics for the real solutions of the sinh-Gordon Painlev\'e III equation}

\author[Miyahara]{Kenta Miyahara}
\address{(K. Miyahara) Department of Mathematical Sciences, 
Indiana University Indianapolis,
402 N. Blackford St., Indianapolis, IN 46202 USA}
\email{kemiya@iu.edu}

\author[Yattselev]{Maxim L. Yattselev}
\address{(M.L. Yattselev) Department of Mathematical Sciences, 
Indiana University Indianapolis,
402 N. Blackford St., Indianapolis, IN 46202 USA}
\email{maxyatts@iu.edu}

\begin{abstract}
We consider solutions of the sinh-Gordon Painlev\'e~III equation
\[
u_{xx} + \frac{1}{x} u_x = \sinh u
\]
that are real on \( (0,\infty)\). They are parametrized by the monodromy parameter \( p\in\overline\C \), \( |p|>1 \), and an additional real parameter \( s^\R \) when \( p=\infty \). Our previous joint work with A.~Its described the asymptotic behavior of these solutions as \( x\to\infty \). Here, we describe the transition as \( x, p\to \infty \), \( 2\Im(p)=-s^\R\), between singular solutions (\( |p|<\infty \)) and smooth solutions (\( p=\infty \)). In short, if we parametrize \( |p|^2 = 1 + e^{2\varkappa x}\), then the smooth exponential asymptotics of the solutions extends to the region \( \varkappa>1 \), with a change of the leading order term at \( \varkappa=2\); at \( \varkappa=1\) the exponential behavior transitions into an elliptic asymptotics, which holds for all \( 0<\varkappa<1\); as \( \varkappa \) decays to zero, elliptic asymptotics degenerates into trigonometric one, which holds for all \( p \) fixed.

\smallskip

\noindent
\textit{Keywords}: Painlev\'e III equation; sinh-Gordon equation; isomonodromic deformation; Riemann-Hilbert problem; steepest-descent method
\end{abstract}

\date{\today}

\let\ds\displaystyle

\maketitle 

\setcounter{tocdepth}{3}
\tableofcontents

\setlength{\parskip}{6pt}

\section{Introduction and Main Results}

This paper is a continuation of our previous work \cite{IMY}, joint with Alexander Its, where we studied the asymptotic behavior of {\it real} solutions of the sinh-Gordon Painlev\'e III equation
\begin{align}
    u_{xx} + \frac{1}{x} u_x = \sinh u.
    \label{sinh-gordon PIII}
\end{align}
Our interest stems from the fact that \eqref{sinh-gordon PIII} is the simplest tt*-Toda equation of Cecotti and Vafa \cite{CV1, CV2, CV3} whose analysis has been of much recent interest \cite{GuLi12, GuLi14, GIL1, GIL2, GIL3, GIL4, mcintosh2024}.

We understand reality in the following sense. It is known that solutions of \eqref{sinh-gordon PIII}  can have movable logarithmic singularities and that going around any logarithmic singularity (via the complex plane) of a solution of \eqref{sinh-gordon PIII} results in the addition of \( 2 \pi \ii \) to the solution. So, when we speak of real solutions, we ignore all the acquired additions of \( 2\pi \ii\) (since \eqref{sinh-gordon PIII} does not change when \( u(x) \) is replaced by \( u(x) + 2\pi \ii n \), \( n\in \Z \), thus obtained functions are still local solutions of \eqref{sinh-gordon PIII}).

In \cite{IMY} we were interested in the asymptotic behavior of the real solutions of \eqref{sinh-gordon PIII} on \( \R_{>0}:=(0,\infty) \) as \( x\to 0^+\) and \( x\to+\infty\). There, we have explained that these solutions can be parametrized by a parameter \( p \) from the extended complex plane with \( |p|>1\) and an additional real parameter \( s^\R \) in the case \(p=\infty\). In terms of the asymptotic behavior at infinity, we have obtained the following result \cite[Theorem~2.1]{IMY}\footnote{We refer the reader to our previous work \cite{IMY} where the relation of Theorem~\hyperref[IMY main thm]{IMY} with the results of the earlier works of Novokshenov \cite{Novok4} and Kitaev \cite{Kitaev2} obtained by the isomonodromic deformation method is discussed in detail.}.

\begin{tcolorbox}[colback=white,colframe=gray]
\begin{IMY}
\label{IMY main thm}
\it
To each finite \( p \), \( |p|>1 \), there corresponds one real solution of the sinh-Gordon Painlev\'e III equation \eqref{sinh-gordon PIII} on \( \R_{>0} \) and
\begin{align}
\label{asymp for B neq 0}
    u(x;p) = \ln \left(\frac{1 - \sin \left( x - \kappa\ln x + \phi \right)+\O(x^{-1})}{1 + \sin \left( x - \kappa\ln x + \phi \right)+\O(x^{-1})} \right), 
\end{align}
as $x \to +\infty$, where the error term is uniform in \( x \),
\begin{equation}
\label{kappa phi}
\kappa := \frac{1}{2 \pi} \ln ( |p|^2 - 1 ) \quad \text{and} \quad 
\phi := -\kappa \ln 4 + \arg \left( p\,\Gamma\left(\frac12+\ii\kappa\right) \right).
\end{equation}
To \( p=\infty \) there corresponds a one-parameter family of real solutions of the sinh-Gordon Painlev\'e III equation \eqref{sinh-gordon PIII} on \( \R_{>0} \) such that 
\begin{align}
    u(x;\infty,s^\R) = \frac{2s^\R}\pi K_0(x) + \O\left(\frac{e^{-2x}}{\sqrt{x}} \right),
    \label{exp decay}
\end{align}
as \( x\to+\infty \), where \( s^\R\in\R \) is the parametrizing (Stokes) parameter and \( K_0(x) \) is the modified Bessel function of the second kind.
\end{IMY}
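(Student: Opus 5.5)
The plan is to realize \eqref{sinh-gordon PIII} as the compatibility condition of the Lax pair for radial sinh-Gordon, so that \( u(x) \) can be recovered from the solution \( \Psi(\lambda;x) \) of a \( 2\times 2 \) Riemann--Hilbert problem. That problem has jumps on rays through \( 0 \) and \( \infty \), with Stokes matrices depending on the monodromy data, and exponential factors \( e^{\pm x(\lambda-\lambda^{-1})/2} \) (up to normalization). The reality conditions on \( u \) reduce the monodromy manifold to one complex parameter \( p \), and we use \( |p|>1 \) for real solutions with the admissible sign behaviour. When \( p=\infty \) we take the degenerate family parametrized by the Stokes multiplier \( s^\R \).

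The asymptotics as \( x\to+\infty \) will be obtained by the Deift--Zhou nonlinear steepest descent method. The phase \( \theta(\lambda)=\tfrac12(\lambda-\lambda^{-1}) \), after rescaling, has stationary points at \( \lambda=\pm\ii \), where the jumps concentrate.

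For \( p=\infty \), after factorization all jump matrices are exponentially close to the identity away from the stationary points. A small-norm argument then gives \( \Psi=I+\O(\cdot) \), and the first Neumann iterate is a Cauchy integral of \( s^\R e^{-x(\cdot)} \) along a line. That integral evaluates exactly to \( \frac{2s^\R}{\pi}K_0(x) \). The next iterate is quadratic in the jump, so it is of size \( e^{-2x}/\sqrt x \) once one accounts for the local \( x^{-1/2} \) scale at the saddle. This gives \eqref{exp decay}.

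For finite \( p \), the diagonal part of the jump does not decay. I would remove it with a \( g \)-type scalar function \( \delta(\lambda)=\left(\frac{\lambda-\ii}{\lambda+\ii}\right)^{\ii\kappa} \), where the exponent is fixed by \( |p|^2-1 = e^{2\pi\kappa} \). Opening lenses around \( \lambda=\pm\ii \) then leaves a model problem near each saddle. That model is solved by parabolic cylinder functions, which produce the factor \( \Gamma(\tfrac12+\ii\kappa) \), the \( x^{-\ii\kappa} \)-type oscillation responsible for the \( \kappa\ln x \) shift, and the \( 4^{-\ii\kappa} \) normalization that enters \( \phi \).

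The main difficulty is that real solutions with finite \( p \) have singularities accumulating at infinity: \( u \) has logarithmic poles wherever \( \sin(x-\kappa\ln x+\phi)=\pm1 \). Consequently the RH problem is not solvable for all large \( x \), and the usual "\( I+\O(x^{-1/2}) \)" conclusion fails. To deal with this I would not expand \( u \) itself but the quantities \( e^{\pm u/2} \), or equivalently the entries of \( \Psi \) from which \( e^{u} \) is read off as a ratio. In the final step the parametrix contributions from \( +\ii \) and \( -\ii \) are of the same order as the leading term, so they must be kept exactly rather than treated as corrections. This gives \( e^{u}=\frac{1-\sin(\cdot)+\O(x^{-1})}{1+\sin(\cdot)+\O(x^{-1})} \), with the numerator and denominator each computed separately with uniform errors. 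The formula stays valid up to the zeros, where the errors cannot be divided out, and this is exactly why the statement keeps the \( \O(x^{-1}) \) terms inside the logarithm.

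The hardest point is proving uniform solvability of the modified (pole-free) problem after this rearrangement. I would first try a Schlesinger-type transformation or a rational dressing of the global parametrix. This would move the possible zero of the determinant of the outer model into a factor that is explicitly computable and bounded, after which the small-norm theory applies uniformly in \( x \).
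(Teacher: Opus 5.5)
Your overall architecture matches the one the paper uses when it re-derives this result (the statement itself is quoted from \cite{IMY}). For \( p=\infty \) it is the proof of Theorem~\ref{thm:1} with \( B^\R=0 \), \( A=1 \). For finite \( p \) it is the proof of Theorem~\ref{thm:5} at fixed \( \varkappa x \), which the paper says follows \cite{IMY}.

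\textbf{The gap.} The central step of the finite-\( p \) part is set up wrongly. The sign of the real part of the phase forces the lens factorization: upper-times-lower triangular factors on one half of the circle, lower-times-upper on the other. The leftover diagonal factors are then \( \mathrm{diag}(1/B^\R,B^\R) \) and \( -\mathrm{diag}(B^\R,1/B^\R) \). So the jump remaining on the segment joining the two saddles (these are \( \pm1 \) in the paper's variable) is \( -e^{-2\pi\kappa\sigma_3} \), not \( e^{-2\pi\kappa\sigma_3} \). Since \( \big(\frac{\lambda-1}{\lambda+1}\big)^{\nu} \) has jump \( e^{2\pi\ii\nu} \) there, one needs \( \nu\in\tfrac12+\ii\kappa+\Z \); the paper takes \( \nu=\tfrac12+\ii\kappa \), see \eqref{defn of Pgl}. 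Your \( \delta \) with exponent \( \ii\kappa \) leaves a residual \( -I \) jump.

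This is not a normalization detail. With a bounded outer model like yours, the parabolic cylinder matching would be \( I+\O(x^{-1/2}) \). The small-norm theorem would then give solvability for all large \( x \), which is incompatible with the logarithmic singularities of \( u(x;p) \). A purely imaginary exponent would also put \( \Gamma \) at a purely imaginary point, whereas the \( \Gamma(\tfrac12+\ii\kappa) \) in \( \phi \) is exactly \( \Gamma(\nu) \).

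\textbf{The missing mechanism.} The half-integer real part of \( \nu \) is what your proposal asserts but never supplies. The outer model \( \big(\frac{\lambda-1}{\lambda+1}\big)^{\nu\sigma_3} \) has square-root zeros and poles at the saddles. Consequently the matching on \( \partial U_{\pm1} \) has the form \( T^{(\pm1)}(\lambda)\,(I+\O(x^{-1}))\,\big(\frac{\lambda-1}{\lambda+1}\big)^{\nu\sigma_3} \), see Riemann--Hilbert Problem~\ref{RHP P1}. Here \( T^{(\pm1)} \) is triangular and of order one, with a simple pole at the saddle whose coefficient has modulus \( |\alpha^*|=2 \), where \( \alpha^*=-2\ii e^{-\ii(x-\kappa\ln x+\phi)} \). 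This is why the saddle contributions are of leading order. The \( \O(x^{-1}) \) remaining after extracting \( T^{(\pm1)} \) is also the source of the \( \O(x^{-1}) \) in the statement, in place of the generic \( \O(x^{-1/2}) \).

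Removing \( T^{(\pm1)} \) produces a meromorphic problem with residue conditions. The paper solves it by \( V^*=(\lambda I+B)\,V\,\mathrm{diag}(\lambda-1,\lambda+1)^{-1} \), with \( V \) uniformly small-norm. So the degeneracy is not a zero of the determinant of the outer model, which is diagonal with unit determinant, and the explicit factor is not bounded. Non-solvability sits in the \( 2\times2 \) linear system for \( B \). To leading order its determinant is \( 4-(\alpha^*)^2=8e^{-\ii\Theta}\cos\Theta \), with \( \Theta=x-\kappa\ln x+\phi \), and it vanishes exactly where \( \sin\Theta=\pm1 \). This determinant cancels from \( e^u \), leaving \( \frac{4\alpha^*+4+(\alpha^*)^2}{4\alpha^*-4-(\alpha^*)^2}=\frac{1-\sin\Theta}{1+\sin\Theta} \) with \( \O(x^{-1}) \) perturbations in numerator and denominator.

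\textbf{A smaller point.} As written, a single phase serves both cases, but \( \tfrac12(\lambda-\lambda^{-1}) \) has purely imaginary critical values. In that gauge (the paper's \( W \)), the \( p=\infty \) jump on the circle is off-diagonal and unimodular rather than small. The \( K_0 \) term needs the other gauge, with phase \( \tfrac\ii4(\lambda-\lambda^{-1}) \) and real critical values \( \mp\tfrac12 \) at \( \pm\ii \). There, for \( B^\R=0 \), \( A=1 \), the circle jump vanishes identically and the imaginary-axis jump is \( \O(e^{-x}) \) uniformly, including at \( \pm\ii \). No factorization or local parametrix is needed. The factor \( x^{-1/2} \) comes from \( \|G_Y-I\|_{L^2} \) being of order \( x^{-1/4}e^{-x} \), used twice via Cauchy--Schwarz.
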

\end{tcolorbox}

Formula \eqref{asymp for B neq 0} implies that solutions \( u(x;p) \) for finite \( p \) possess logarithmic singularities while solutions \( u(x;\infty,s^\R) \) are singularities free for all \( x \) large enough (entirely smooth on \(\R_{>0}\) when \(|s^\R|\leq 2\), see \cite{GIL2,GIL4}). In this work, we are interested in the transition asymptotics describing the evolution of singular solutions towards smooth solutions as \( x\to\infty \) and \( p \to \infty \). More precisely, given real parameters \( s^\R \) and \( \varkappa>0 \), we are interested in \( p=p(\varkappa,x)\) such that
\begin{equation}
\label{asymp regime}
\Im(p(\varkappa,x)) = -\tfrac12s^\R \quad \text{and} \quad |p(\varkappa,x)|^2 = 1+ e^{2\varkappa x}
\end{equation}
(notice that \( \pi\kappa=\varkappa x\) in the notation of \eqref{kappa phi}; also, given \( |s^\R|\), it must holds that \( e^{\varkappa x}\geq \frac12 |s^\R| \) assuming that \( |\Re(p)|\geq 1\)). It is not obvious from the statement of the above theorem that the imaginary part of \( p \) must be kept constant while \( |p| \) tends to infinity. This choice will be made clear further below in Section~\ref{ss:dimp}, see \eqref{mon data}, where we explain that another natural way to parametrize real solutions is with the monodromy data \eqref{mon surface} of the system  (Lax pair) of matrix differential equations \eqref{Lax pair sinh-Gordob PIII}. That parametrization shows that \( \Im(p)\) and \( |p|\) can be viewed as independent parameters. Hence, we keep one of them fixed and send the other to infinity.

Investigations of this type are not new. The study of the gap probabilities in the bulk of the eigenvalue spectrum of Hermitian random matrices drawn from the Gaussian Unitary Ensemble and related questions naturally lead to the Fredholm determinant 
\[
D_{\sin}(s, \gamma) = \det \left( \mathcal I - \gamma \, \mathcal{K}_{\sin} \right), \quad K_{\sin} (\lambda, \mu) = \frac{\sin (\lambda - \mu)}{\pi (\lambda - \mu)},
\]
where \( \mathcal{K}_{\sin} \) is an integral operator on $L^2(-s,s)$ with the kernel \( K_{\sin} (\lambda, \mu) \) and \( \gamma \in [0, 1] \) is a parameter. By the 1990s, the large $s$ asymptotics of \( D_{\sin}(s, \gamma) \) were fully and rigorously found for each fixed $\gamma$. It was shown that $D_{\sin}(s, \gamma)$ decays exponentially for each $\gamma \in [0, 1)$, whereas $D_{\sin}(s, 1)$ admits a super-exponential decay. Dyson conjectured that the transition between these two modes happens when one considers the double scaling regime
\[
\gamma = 1 - e^{-2\varkappa_{\sin} s}, \quad \varkappa_{\sin}>0,
\]
and that the transition asymptotic formulae involve Jacobi theta functions. This was only recently verified by Bothner-Deift-Its-Krasovsky using Riemann-Hilbert analysis of integrable operators in their ongoing seminal work \cite{BDIK1, BDIK2, BDIK3, BDIK4}.

In a similar vein, Bothner \cite{Bothner17} has studied the Fredholm determinant \( D_{\mathrm{Ai}}(x, \gamma) \), \( \gamma>0 \), corresponding to an integral operator on \( L^2(x,\infty) \) with the Airy kernel, specifically, its asymptotics as \( x\to-\infty \). This determinant admits an explicit expression
\begin{align}
D_{\mathrm{Ai}}(x, \gamma) = \exp \left[ - \int_x^\infty (y-x) u^2(y; \gamma) dy \right],
\end{align}
where \( u(x; \gamma) \) is the solution of the Painlev\'e II equation
\begin{align}
u_{xx} = x u + 2u^3
\end{align}
such that \( \gamma^{-\frac12}u(x; \gamma) \) possesses Airy asymptotics as \( x\to+\infty\). As with the large $s$ behavior of $D_{\sin}(s, \gamma)$, asymptotics of \( u(x; \gamma) \) as \( x\to-\infty \) for each \( \gamma \) fixed was already known \cite{AblowitzSegur77,AblowitzSegur81,HatingsMcLeod,Kapaev92}. The  Ablowitz-Segur solutions ($\gamma < 1$) and the Hastings-McLeod solution ($\gamma = 1$) are smooth near $-\infty$, while Kapaev solutions ($\gamma > 1$) are singular oscillatory. Bothner has studied the transition from the Ablowitz-Segur and the Kapaev solutions to the Hastings-McLeod one in the asymptotic regime
\[
\gamma = 1 \pm e^{-\varkappa_\mathrm{Ai}(-x)^{3/2}}\, \quad \varkappa_\mathrm{Ai}>0.
\]
Moreover, in \cite{Bothnergap}, Bothner has studied the above double scaling limits of \( D_{\mathrm{Ai}}(x, \gamma) \) as well as of \( D_{\mathrm{Bes}}(s, \gamma) \) (a Fredholm determinant corresponding to an integral operator with the Bessel kernel) directly, i.e., without going through the Painlev\'e functions.

\begin{figure}[ht!]
    \centering
    \includegraphics[width=0.68\linewidth]{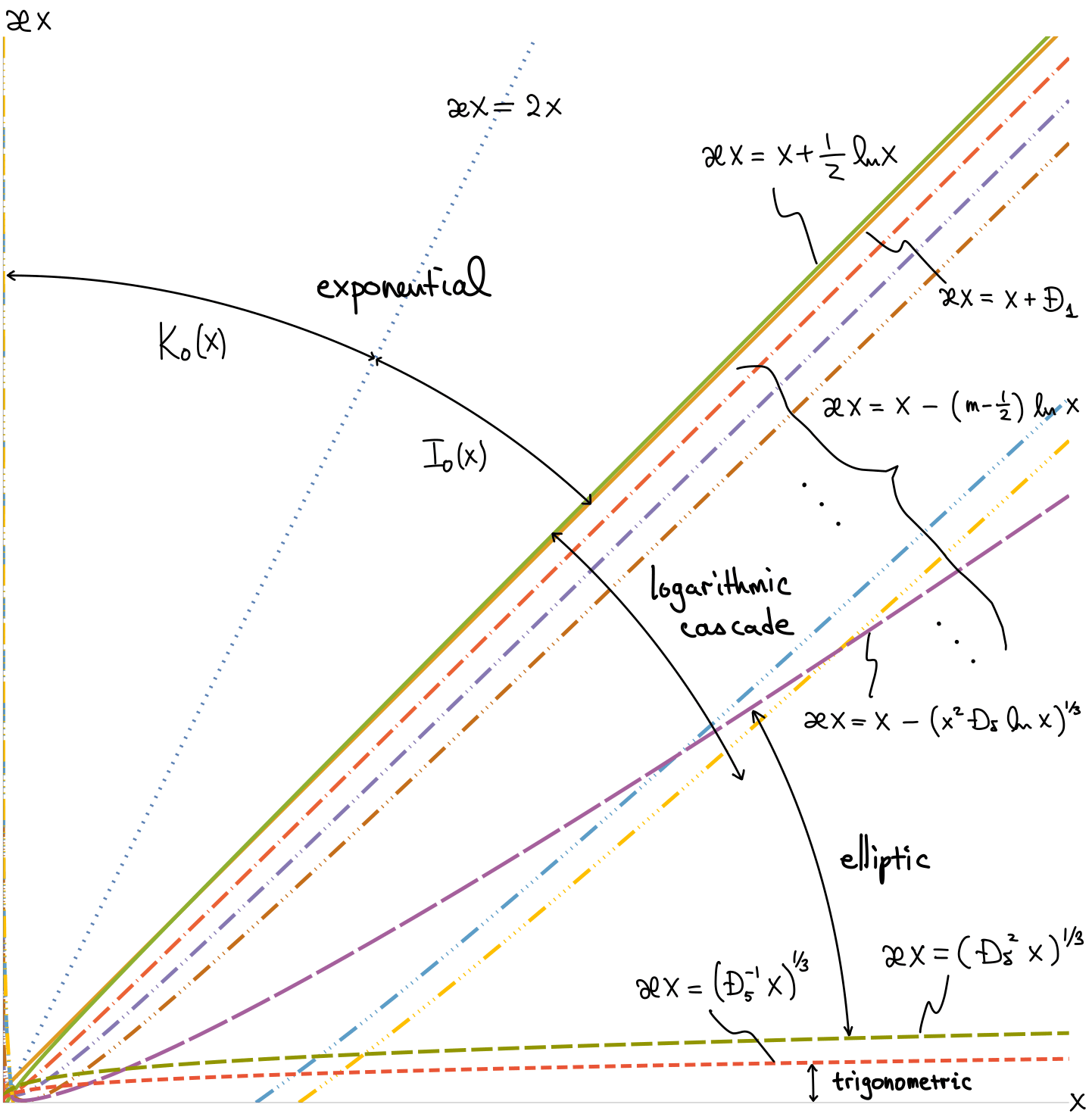}
    \caption{The type of behavior of \( u(x;p(\varkappa,x)) \) for each choice of \( (\varkappa,x)\).}
    \label{fig:results}
\end{figure}

As was the case in all of the above-mentioned works, the precise statements of our result are rather technical, but can be summarized as follows.
We show that the asymptotic formula \eqref{exp decay} extends to the region \( \varkappa>2\) of the \((x,\varkappa x)\)-plane  (the first part of the exponential region on Figure~\ref{fig:results}). At \( \varkappa=2\), the exponential nature of the decay remains the same, but the leading order term changes from \( K_0(x)\) to \( I_0(x) \), the modified Bessel function of the first kind. This behavior persists for all \( 1<\varkappa<2\) (the second part of the exponential region on Figure~\ref{fig:results}). These results are described in Theorem~\ref{thm:1}. At \( \varkappa=1\), exponential decay changes into the behavior similar to \eqref{asymp for B neq 0}, but with the sine function replaced by a certain Jacobi elliptic function. This transition happens via ``logarithmic cascades'', see Theorem~\ref{thm:2}-\ref{thm:3} and the ``logarithmic cascade'' region of Figure~\ref{fig:results}. Elliptic asymptotics holds for each \( 0<\varkappa<1\), see Theorem~\ref{thm:4}, which corresponds to the ``elliptic'' region on Figure~\ref{fig:results}. Finally, when \( \varkappa \) decays to zero, elliptic asymptotics degenerates into \eqref{asymp for B neq 0}, see Theorem~\ref{thm:5}. This is the ``trigonometric'' region of Figure~\ref{fig:results}. 

Figure~\ref{fig:results} can also be understood as follows. Solution \( u(x,p) \) for a fixed parameter \( p \) corresponds to a horizontal line on Figure~\ref{fig:results}. Hence, for a very large fixed \( p \), we see that such a solution first resembles an exponentially decaying one with no singularities. However, as \( x \) increases, singularities start to appear, but seemingly not regularly enough to be described by an elliptic or trigonometric function (this corresponds to the horizontal line passing through the logarithmic cascades region). Next, for even larger \( x \), the behavior of the solution is described with the help of elliptic functions, and eventually it becomes trigonometric as in \eqref{asymp for B neq 0}.

Throughout the paper, we always assume that \eqref{asymp regime} takes place. For a non-negative quantity \( q \), we let \( q_+:=\max\{1,q\} = \exp\{\ln^+q\} \). We further set \( \varepsilon :=\sign(\Re(p)) \).

Our first theorem describes asymptotics of the real solutions of the sinh-Gordon Painlev\'e III equation \eqref{sinh-gordon PIII} in the exponential region \( \varkappa>1 \).

\begin{tcolorbox}[colback=white,colframe=gray]
\begin{thm}
\label{thm:1}
There exists a large enough constant \( D_1>0 \) such that for all 
\begin{align}
\label{D1}
\varkappa_1 x \geq D_1 + \ln |s^\R|_+ , \quad \varkappa_1:=\min\{1,\varkappa-1\},
\end{align}
and all \( x\geq1 \), it holds that
\begin{equation}
\label{asymptotic formula 1}
u(x,p(\varkappa,x)) = \frac2\pi s^\R K_0(x) - 2\varepsilon e^{-\varkappa x} I_0(x) + \O\left(|s^\R|_+ x^{-\frac12}e^{-2\varkappa_1 x}\right),
\end{equation}
where \( I_0(x)\) and \( K_0(x) \) are the modified Bessel functions of the first and second kind, respectively, and the constant in the error term \( \O(\cdot) \) is absolute. If \( s^\R=0 \), \( \varkappa_1 \) can be replaced by \( \varkappa-1 \) in \eqref{D1} and \eqref{asymptotic formula 1}.
\end{thm}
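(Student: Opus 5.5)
The plan is a Deift--Zhou steepest-descent analysis of the Riemann--Hilbert problem attached to the Lax pair \eqref{Lax pair sinh-Gordob PIII}, with the monodromy data parametrized as in \eqref{mon data}. We take the data for \( p=p(\varkappa,x) \), so that \( \Im p=-\tfrac12 s^\R \) and \( |p|^2-1=e^{2\varkappa x} \), and rewrite the jump matrices in a form where the large quantity \( |p| \) appears only through \( e^{\varkappa x} \).

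The key observation is what happens after the standard rescaling \( \lambda\mapsto x\lambda \). The jumps then carry oscillatory factors \( e^{\pm x(\lambda-\lambda^{-1})/2} \) (or the analogous phase \( \theta \)) on rays through the two stationary points \( \lambda=\pm\ii \), where \( |e^{\pm x\theta}|=e^{\mp x} \) at the saddles. The entries of the Stokes matrices can be arranged so that:
\begin{itemize}
\item one triangular factor contains \( s^\R \) (bounded);
\item a diagonal or reflection-type factor contains \( (|p|^2-1)^{\pm1/2}=e^{\pm\varkappa x} \);
\item a residual term is proportional to \( \varepsilon\,\Re p/|p| \), approximately \( \varepsilon \).
\end{itemize}

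We perform an explicit factorization that moves the \( e^{\varkappa x} \) growth into a triangular jump whose off-diagonal entry has size \( e^{-\varkappa x}\cdot e^{x} \) at the far saddle. For \( \varkappa>1 \) every jump is then \( I+\) exponentially small on its contour. No \( g \)-function is needed, because we are in the ``pre-genus'' exponential regime.

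The solution \( u \) is recovered from the \( \lambda\to\infty \) (or \( 0 \)) expansion of the RH solution. We write it as the identity plus a Neumann series of the small-norm singular integral operator. The first term of that series is a pair of explicit Cauchy-type integrals of the off-diagonal jumps:
\begin{itemize}
\item The integral carrying \( s^\R e^{-x\cosh t} \) over a ray yields \( \frac{2}{\pi}s^\R K_0(x) \), via \( K_0(x)=\int_0^\infty e^{-x\cosh t}dt \).
\item The integral carrying \( e^{-\varkappa x} \) is taken over a closed contour (a circle through both saddles), where the phase is \( e^{x\cos\theta} \). Via \( I_0(x)=\frac{1}{\pi}\int_0^\pi e^{x\cos\theta}d\theta \), it yields \( -2\varepsilon e^{-\varkappa x}I_0(x) \).
\end{itemize}
The sign \( \varepsilon \) comes from \( \Re p/|p|\to\varepsilon \); the correction \( \Re p/|p|-\varepsilon=O(s_\R^2 e^{-2\varkappa x}) \) is absorbed into the error. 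Since \( u \) is real, \( \ln \) of a diagonal entry linearizes, and we then check that the logarithmic branch issues are absent.

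For the error bound we estimate the jump deviations in \( L^1\cap L^\infty \) of the contour. They are \( O(|s^\R|e^{-x}) \) and \( O(e^{(1-\varkappa)x}) \), up to \( x^{-1/2} \) factors from Laplace's method at the saddles. The second Neumann term is quadratic in these, giving \( O(x^{-1/2}(|s^\R|_+e^{-x}+e^{(1-\varkappa)x})^2) \). This is bounded by \( |s^\R|_+x^{-1/2}e^{-2\varkappa_1x} \) once \( \varkappa_1x\ge D_1+\ln|s^\R|_+ \), a condition that makes the operator norm less than \( 1/2 \) uniformly. When \( s^\R=0 \), only the \( e^{(1-\varkappa)x} \) term is present, so the cap \( \varkappa_1\le1 \), which came from the \( e^{-x} \) size of the \( s^\R \) term, is unnecessary.

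The main obstacle is the factorization step: finding a rearrangement of the monodromy data in which the \( e^{\varkappa x} \)-large entries become exponentially small after conjugation, uniformly in \( \varkappa>1 \) and \( s^\R \). We must also ensure that the closed-contour jump produces exactly \( I_0 \) without spurious boundary contributions. I would try first the LDU/UDL factorization of the connection matrix used in \cite{IMY} for the \( p=\infty \) case, perturbed by the rank-one term proportional to \( |p|^{-1} \). A secondary difficulty is keeping every constant absolute, which requires uniform (non-asymptotic) bounds on the Cauchy operator along contours fixed after rescaling.
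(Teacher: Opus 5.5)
Your plan is essentially the paper's proof. The paper does a direct small-norm analysis of the problem gauged by \( e^{-x\theta(\lambda)\sigma_3} \), \( \theta(\lambda)=\frac{\ii}{4}(\lambda-\lambda^{-1}) \), on \( \ii\R\cup S^1 \): the first Neumann term gives \( \frac{s^\R}{\pi}K_0(x) \) from the imaginary-axis jump and \( -B^\R I_0(x) \) from the unit-circle jump, the quadratic remainder gives the error, and \( u \) is read off from \( Y(0)=P_0 \).

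The factorization you single out as the main obstacle is not needed. In the \( (A,s^\R,B^\R) \) parametrization every jump entry is already bounded, since \( B^\R=e^{-\varkappa x} \) and \( |A|^2=1+e^{-2\varkappa x} \). After reducing to \( \Re A>0 \) via \eqref{symmetries of u}, which is also where \( \varepsilon \) enters, one gets \( \|G_Y-I\|_{L^\infty}\leq 2|s^\R|_+e^{-\varkappa_1 x} \) directly. This follows from \( |A-1|\leq 2|s^\R|_+e^{-\varkappa x} \) and \( B^\R|e^{\pm 2x\theta}|\leq e^{-(\varkappa-1)x} \) on \( S^1 \).
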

\end{tcolorbox}

Of course, condition \eqref{D1} simply says that \( x\geq D_1\) when \( \varkappa\geq2 \), while it becomes \( \varkappa x\geq x+D_1 \) when \( \varkappa\in(1,2) \), which is a simultaneous condition on \( \varkappa \) and \( x \).

\begin{tcolorbox}[colback=white,colframe=gray]
\begin{rem}
Recall that \( K_0(x)\) and \( I_0(x) \) possess the following asymptotic expansions:
\[
\frac1\pi K_0(x) \sim \frac{e^{-x}}{\sqrt{2\pi x}} \sum_{n=0}^\infty (-1)^n \frac{a_n}{x^n}, \quad I_0 (x) \sim \frac{e^x}{\sqrt{2\pi x}} \sum_{n=0}^\infty \frac{a_n}{x^n}, \quad a_n = \frac{((1/2)_n)^2}{2^n n!},
\]
where \( (\cdot)_n\) is the Pochhammer symbol (rising factorial), see \cite[(10.40.1-2)]{NIST:DLMF}. Hence, for each fixed \( \varkappa \) and all \( x \) sufficiently large, it holds that
\[
u(x,p(\varkappa,x)) \sim e^{-\varkappa_1x}\sqrt{\frac2{\pi x}} \sum_{n=0}^\infty \frac{a_n}{x^n}\begin{cases}
(-1)^ns^\R, & \varkappa>2, \\
((-1)^ns^\R - \varepsilon), & \varkappa=2, \\
(- \varepsilon), & 1<\varkappa<2,
\end{cases}
\]
assuming \( s^\R\neq 0 \), where we have omitted the error terms as they are smaller than the terms in the above expansions.
\end{rem}
\end{tcolorbox}

Theorem~\ref{thm:1} allows us to take \( \varkappa \) as close to \( 1 \) as we wish at a price of increasing \( x \) since we always must have \( (\varkappa-1)x\geq D_1\). Naturally, this condition does not allow us to take \( \varkappa=1 \). The latter case is addressed by the following theorem.

\begin{tcolorbox}[colback=white,colframe=gray]
\begin{thm}
\label{thm:2} 
For any non-negative integer \( m\geq0 \), there exists a constant \( D_2(m)>0 \) such that for all
\begin{align}
\label{D2}
x\geq D_2(m) + 8\ln |s^\R|_+  \quad \text{and} \quad \varkappa x = x - (m+\nu)\ln x, \quad  |\nu| \leq \tfrac12,
\end{align}
it holds that
\begin{equation}
\label{asymptotic formula 2}
u(x,p(\varkappa,x)) = 2\varepsilon \ln\left(\frac{1 + \chi_m x^{-\nu-\frac12} + \chi_{m+1}^{-1} x^{\nu-\frac12} + \O_m(x^{-1})}{1 - \chi_m x^{-\nu-\frac12} - \chi_{m+1}^{-1} x^{\nu-\frac12} + \O_m(x^{-1})}\right),
\end{equation}
where \( \chi_m = \sqrt{\frac\pi 2} \frac{(-4)^m}{(m-1)!}\) (it is zero when \( m=0 \)).
\end{thm}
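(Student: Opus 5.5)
We will derive \eqref{asymptotic formula 2} by Deift--Zhou nonlinear steepest descent applied to the Riemann--Hilbert problem for the Lax pair \eqref{Lax pair sinh-Gordob PIII}. This is the same problem used in \cite{IMY} and in the proof of Theorem~\ref{thm:1}. We first rescale the spectral variable so that the jump exponentials become \(e^{\pm x\theta(\zeta)}\) with \(\theta(\zeta)=\zeta+\zeta^{-1}\) (up to normalization), with stationary points at \(\zeta=\pm1\). In the exponential region the Stokes multipliers of size \(|p|\sim e^{\varkappa x}\) cause no trouble. There the \(g\)-function is trivial and the solution is read off from a small-norm problem; this produced the terms \(\frac2\pi s^\R K_0(x)-2\varepsilon e^{-\varkappa x}I_0(x)\). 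At \(\varkappa x=x-(m+\nu)\ln x\) this perturbative expansion fails: \(e^{-\varkappa x}I_0(x)\sim x^{m+\nu-\frac12}\) is no longer small. The obvious remedy would be a genus-one \(g\)-function as in the elliptic region \(0<\varkappa<1\), but here the band would have width \(\O(\ln x/x)\). Instead we keep the trivial \(g\)-function and treat the large jump \(p\,e^{-x\theta}\) exactly near the saddle points, in the spirit of Bothner's treatment of the Hastings--McLeod transition \cite{Bothner17}.

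The first step is to open lenses so that the only non-decaying jumps are localized in \(\O(x^{-1/2})\)-neighborhoods of \(\zeta=\pm1\). There the jump has the form \(\begin{pmatrix}1&0\\ \pm e^{\varkappa x}e^{-x\theta}&1\end{pmatrix}\), together with its symmetric counterpart. In the local variable \(\xi=\sqrt{x}(\zeta\mp1)\) it reads \(\exp(-x\mp x\xi^2/2)\), whose amplitude relative to \(e^{-x}\) is \(e^{(\varkappa-1)x}=x^{-(m+\nu)}\). The key device is a discrete Schlesinger-type (rational dressing) transformation in the spirit of the ``logarithmic cascades'' of Bothner and Bothner--Its. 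We multiply the solution by an explicit rational matrix with a pole of order \(m\) at the saddle point, which converts the jump amplitude \(x^{-(m+\nu)}\) into an amplitude \(x^{-\nu}\) with \(|\nu|\le\frac12\). This dressing factor is built from Hermite-type (parabolic cylinder at integer index) local parametrices. The requirement that the dressed local parametrix match the outer solution fixes the coefficients, and the Gaussian moments \(\int \xi^{2k}e^{-\xi^2/2}\,d\xi\) then produce the constants \(\chi_m=\sqrt{\pi/2}\,(-4)^m/(m-1)!\), as well as \(\chi_{m+1}^{-1}\) from the neighboring index.

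The second step is to solve a reduced finite-dimensional problem. After dressing, the residual jumps satisfy
\[
I+\O\bigl(x^{-\nu-\frac12}\bigr)+\O\bigl(x^{\nu-\frac12}\bigr),
\]
where the two contributions come from the \(m\)-th and \((m+1)\)-st cascade levels. Their sizes must be tracked together, since neither is negligible uniformly in \(\nu\in[-\frac12,\frac12]\). The resulting algebraic (\(2\times2\), rank-one) system is solved explicitly, and the remaining problem is small-norm with error \(\O_m(x^{-1})\). The condition \(x\ge D_2(m)+8\ln|s^\R|_+\) guarantees that the \(s^\R K_0\)-type contributions, of size \(|s^\R|e^{-x}\), are absorbed into this error. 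Finally we recover \(u\) from the coefficient of \(\zeta^{-1}\) or the value at \(\zeta=0\), as in \cite{IMY}. The exactly solved rational system yields a quotient \((1+A)/(1-A)\) with \(A=\chi_m x^{-\nu-\frac12}+\chi_{m+1}^{-1}x^{\nu-\frac12}\). Taking the logarithm then gives \eqref{asymptotic formula 2}, with the sign \(\varepsilon\) inherited from \(\sign\Re(p)\).

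The main obstacle is the construction of the dressed parametrix and the exact computation of the matching constants. This requires showing that the order-\(m\) rational dressing is nondegenerate, which amounts to invertibility of a Hankel-type Gaussian moment matrix. It also requires showing that the two adjacent levels \(m\) and \(m+1\) interact only through the combination \(A\) above, uniformly as \(\nu\) crosses \(\pm\frac12\). We would first establish this invertibility and the Hermite-polynomial identities by induction on \(m\), and only then carry out the small-norm estimates.
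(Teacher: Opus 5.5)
Your architecture is the paper's: the trivial exponent \(\theta\), Hermite local parametrices \(He_m\) at the saddles \(\pm\ii\) (your \(\zeta=\pm1\)), the order-\(m\) outer factor \(\big(\frac{\lambda+\ii}{\lambda-\ii}\big)^{m\sigma_3}\), a rational prefactor fixed by finite linear algebra, a small-norm problem at level \(\O_m(x^{-1})\), and \(u\) read off at \(\lambda=0\). However, the local size on which you base the dressing is inverted. On \(S^1\) the relevant entry is \(B^\R e^{-2x\theta(\lambda)}=e^{-x(\varkappa-\Im\lambda)}\), which equals \(e^{(1-\varkappa)x}=x^{m+\nu}\) at the saddle. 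So in the variable \(s=\sqrt x\,\phi_\ii(\lambda)\) the jump entry is \(-A^{-1}x^{m+\nu}e^{-s^2}\); also, after your rescaling the Gaussian is \(e^{-\xi^2/2}\), not \(e^{-x\xi^2/2}\). With your amplitude \(x^{-(m+\nu)}\), the jump would be small for \(m\ge1\) and the proof of Theorem~1 would already go through, contradicting your own first paragraph. The dressing must strip a \emph{large} factor \(x^m\). That is what the degree-\(m\) polynomial in \(He_m\) together with the outer factor achieves. The leftover \(x^{\nu}\) enters through the conjugation \((-x^{\nu}/A)^{\frac12\sigma_3}\) in \(E^{(\ii)}\), which is why \(x^{-\nu-\frac12}\) and \(x^{\nu-\frac12}\) appear in the two off-diagonal entries of the matching.

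The genuine gap is the passage from the matching to a small-norm problem. The corrections of relative size \(x^{-\nu-\frac12}\) and \(x^{\nu-\frac12}\) cannot be left in the jump. For \(m\ge1\), \(|\tau_{12}|=2|\chi_mA|x^{-\nu-\frac12}\) is of order one at \(\nu=-\frac12\), and \(|\tau_{21}|=2|\chi_{m+1}A|^{-1}x^{\nu-\frac12}\) is of order one at \(\nu=\frac12\). So no small-norm theorem applies, and ``solving an algebraic system'' has no content until you say where that system comes from. The missing idea is that on \(\partial U_\ii\) the mismatch has the exact form \(I+(\lambda-\ii)^{-1}\begin{pmatrix}0&t_{12}\\ t_{21}&0\end{pmatrix}+\O_m(x^{-1})\), with \(t_{12},t_{21}\) holomorphic in \(U_\ii\) and \(t_{12}t_{21}=\O(m/x)\). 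One absorbs the leading part into a unimodular \(T^{(\ii)}\) (two triangular factors, double pole at \(\ii\)) and uses \(T^{-1}P\) as the local parametrix. All jumps then become \(I+\O_m(x^{-1})\), at the cost of pole conditions at \(\pm\ii\). A left prefactor \(\prod_k\big(I+\frac{\lambda B_k+C_k}{\lambda^2+1}\big)\) satisfies these conditions via explicit linear systems.

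The nondegeneracy you single out is also misplaced. The Hermite construction needs nothing, since the Gaussian weight is positive. The prefactor systems, by contrast, have determinants \(\approx(2+\ii\tau)(2-\ii\tau)\approx4-|\tau|^2\), because \(\ii\tau_{12},\ii\tau_{21}\) are real up to exponentially small errors. You must first remove the pole tied to the smaller of \(|\tau_{12}|,|\tau_{21}|\); one of them is \(<1\) since \(\tau_{12}\tau_{21}=m/x\). This gives a uniformly solvable first step with bounded coefficients. The second determinant \(\approx4-|\tau^*|^2\) can and does vanish, on the curves where \(u\) has its logarithmic singularities. So it cannot be proved nonzero by any induction; the final formula simply accommodates those zeros. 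Finally, the ``interaction'' of the two levels needs no separate argument. It is only the cross term \(\tau_{12}\tau_{21}=m/x\) in \((2\pm\ii\tau_{12})(2\pm\ii\tau_{21})\), which is absorbed into \(\O_m(x^{-1})\).
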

\end{tcolorbox}

In the integrable systems literature, asymptotic formulae of type \eqref{asymptotic formula 2} are known as {\it logarithmic cascades} since the coefficient next to the leading power of \( x \) changes each time a curve \( x-(m+\frac12)\ln x\) is crossed.

\begin{tcolorbox}[colback=white,colframe=gray]
\begin{rem}
Any half-integer constant in front of \( \ln x \) can represent as \( m+\tfrac12 = (m+1)-\tfrac12\). If we plug these representations into \eqref{asymptotic formula 2}, we obtain
\[
2\varepsilon \ln\left( \frac{1 + \chi_{m+1}^{-1} + \O_m(x^{-1})}{1 - \chi_{m+1}^{-1} + \O_m(x^{-1})} \right) \quad \text{and} \quad 2\varepsilon \ln\left( \frac{1 + \chi_{m+1} + \O_m(x^{-1})}{1 -\chi_{m+1} + \O_m(x^{-1})} \right).
\]
Clearly, both formulae lead to the same asymptotic statement in \( x \). Of course, it is important to observe here that the error term does depend on \( m \).
\end{rem}
\end{tcolorbox}

\begin{tcolorbox}[colback=white,colframe=gray]
\begin{rem}
Formula \eqref{asymptotic formula 2} points to the appearance of the first (after the period of exponential behavior) singularity of \( u(x,p(\varkappa,x)) \). The matters are complicated by the dependence on \( D_2(m) \) that restricts the range of applicability of \eqref{asymptotic formula 2} and implicitly appears in the error term. However, ignoring these complications, \eqref{asymptotic formula 2} says that there must be a singularity around
\[ 
|\chi_{m+1}|^{\frac12-\nu}
\]
for all \( m \) large and \( \nu \) sufficiently close to \( \frac12\) (it is a zero of either numerator or denominator in \eqref{asymptotic formula 2} depending on the parity of \( m \)).
\end{rem}
\end{tcolorbox}

Formula \eqref{asymptotic formula 2} contains only two leading terms of the asymptotic expansion of the solution \( u(x,p(\varkappa,x))\) while \eqref{asymptotic formula 1} gives us a full expansion. In theory, our technique can be extended to obtain more terms, but in practice it is computationally unfeasible except in the case \( m=0 \), which is addressed in the following theorem.

\begin{tcolorbox}[colback=white,colframe=gray]
\begin{thm}
\label{thm:3}
For each natural number \( n \), there exists a constant \( D_3(n)>0 \) such that for all
\begin{align}
\label{D3}
x\geq D_3(n) + 8\ln |s^\R|_+ \quad \text{and} \quad \varkappa x= x- \nu \ln x, \quad |\nu|\leq \tfrac12, 
\end{align}
it holds that
\begin{equation}
    \label{asymptotic formula 3}
    u(x,p(\varkappa,x)) = 2\varepsilon\ln\left( \frac{P_n(-x^{\nu}) + \O_n\big(x^{\nu-n-\frac12}\big)}{P_n(x^{\nu}) + \O_n\big(x^{\nu-n-\frac12}\big)}\right),
\end{equation}
where \( P_n(t) \) is a polynomial of degree \( n \) defined by
\[
P_n(t) := \det(I - t \, C_nL_n),
\]
where \( C_n=[c_{n,k+m-1}]_{k,m=1}^{2n-1} \) (\( k \) is the row index and \( m \) is the column one), 
\begin{equation}
    \label{definition Cn}
c_{n,k} := \frac{\ii^k}{\sqrt{2\pi x}} \sum_{m = \left\lceil \frac{k-1}{2} \right\rceil}^{n-1} \left( -\frac2x\right)^m \left(\frac12\right)_m \binom{m+\frac12}{2m+1-k}    
\end{equation}
for each \( k=1,\ldots,2n-1\) and \( c_{n,k}:=0\) if \( k\geq 2n\), and
\begin{equation}
    \label{definition Ln}
    L_n := \left[ \frac{(-1)^m}{(2\ii)^{m+k-1}}\binom{-m}{k-1} \right]_{k,m=1}^{2n-1}.
\end{equation}
Moreover, if we write
\[
P_n(t) = 1 + p_{n,1}t + p_{n,2}t^2 + \cdots + p_{n,2n-1}t^{2n-1},
\]
then \( p_{n,1} = \O\big(x^{-\frac12}\big)\), \( p_{n,2k} = \O(x^{-k(k+1)}) \) and \( p_{n,2k+1} = \O(x^{-\frac12-k(k+2)}) \), \( k\geq 1 \).
\end{thm}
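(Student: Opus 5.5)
We will prove Theorem~\ref{thm:3} through the Riemann--Hilbert problem for the Lax pair \eqref{Lax pair sinh-Gordob PIII}, using the monodromy data \eqref{mon data}. The approach is nonlinear steepest descent in the regime \(\varkappa x = x-\nu\ln x\), so that \(e^{-\varkappa x} = x^{\nu}e^{-x}\).

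\textbf{Step 1: normalization.} We rescale the spectral variable so that the phase becomes \(x\theta(\lambda)\) with \(\theta=\lambda+\lambda^{-1}\) (up to conventions). We open lenses along the unit circle, or along the real saddle-point contour through \(\lambda=\pm1\). This reduces the problem to a jump matrix that differs from the identity by two kinds of terms:
\begin{itemize}
\item terms of size \(|s^\R| e^{-2x}\), which contribute to the error \(\O_n(x^{\nu-n-1/2})\) once \(x\geq D_3(n)+8\ln|s^\R|_+\);
\item terms of size \(e^{-\varkappa x}e^{\pm x\theta}\sim x^{\nu}\) near the saddle points, arising from the large part \(\Re(p)\sim e^{\varkappa x}\).
\end{itemize}
The second kind of term is not small when \(\nu\geq0\). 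This is why a small-norm argument fails and an explicit finite-rank solution is needed.

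\textbf{Step 2: finite-rank reduction.} Near each saddle we expand the problematic jump entry in a Laurent/Taylor series in the local variable up to order \(2n-1\). The local integrals against \(e^{x(\theta-2)}\) produce exactly the coefficients \(c_{n,k}\) in \eqref{definition Cn}: they are the Laplace-method expansions of the Bessel-type integrals truncated at \(m=n-1\). The Cauchy operator acting between the two saddles is represented, in the monomial basis, by the matrix \(L_n\) of \eqref{definition Ln}, whose binomial entries \(\binom{-m}{k-1}(2\ii)^{-m-k+1}\) come from expanding \((\lambda\mp\ii\cdot)^{-m}\) around the opposite point.

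The RH problem then becomes a singular integral equation whose kernel is, up to an error \(\O(x^{-n-1/2})\) relative to its size, the finite-rank operator \(t\,C_nL_n\) with \(t=\pm x^{\nu}\). The sign of \(t\) is set by \(\varepsilon\) and by which diagonal entry is being computed. By Fredholm theory, the solution exists whenever \(\det(I-tC_nL_n)\neq0\). The quantity \(e^{\varepsilon u/2}\), read off from the \(\lambda\to0\) or \(\lambda\to\infty\) expansion of the solution, equals a ratio of Fredholm determinants, namely \(P_n(-x^\nu)/P_n(x^\nu)\) after Cramer's rule. This gives \eqref{asymptotic formula 3}. The perturbative errors are \(\O_n(x^{\nu-n-1/2})\) because the first neglected coefficient is multiplied by \(t\).

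\textbf{Step 3: size of the coefficients.} The bounds \(p_{n,2k}=\O(x^{-k(k+1)})\) and \(p_{n,2k+1}=\O(x^{-1/2-k(k+2)})\) follow from writing each \(p_{n,j}\) as a sum of principal \(j\times j\) minors of \(C_nL_n\). Since \(c_{n,k}=\O(x^{-1/2-\lceil (k-1)/2\rceil})\) and \(C_n\) is a Hankel matrix, the Cauchy--Binet formula lets us bound each minor by the best choice of \(j\) distinct rows of \(C_n\). Summing the minimal exponents \(\tfrac12+\lceil (r-1)/2\rceil\) over rows \(r=1,\dots,j\) gives exactly \(k(k+1)\) for \(j=2k\) and \(\tfrac12+k(k+2)\) for \(j=2k+1\).

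\textbf{Main obstacle.} The hardest step is Step 2, for two reasons. First, we must show that the truncated model problem is uniformly solvable. This requires ruling out zeros of \(P_n(\pm x^\nu)\) for \(|\nu|\leq\tfrac12\), or else handling them as the poles of \(u\) they signal. Second, we must obtain a uniform resolvent bound so that the remainder can be absorbed. We would try first to handle this with a discrete Fredholm-determinant argument that compares against the exact Bessel-kernel determinant, together with the dominance of the \(t\,c_{n,1}\) term, which is \(\O(x^{\nu-1/2})\) and hence bounded.
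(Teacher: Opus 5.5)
\textbf{Genuine gap: Step~2 does not produce the determinant ratio.}

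Your Step~3 is essentially the paper's argument: row $r$ of $C_n$, hence of $C_nL_n$, is $\O(x^{-\frac12-\lceil (r-1)/2\rceil})$, and the worst products come from the first $l$ rows. Your reading of $c_{n,k}$ as truncated Laplace moments of the Gaussian bump at $\ii$, and of $L_n$ as the double Taylor expansion of the Cauchy kernel between $\ii$ and $-\ii$, is also correct in spirit.

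The gap is the claim that $e^{\varepsilon u/2}$ equals $P_n(-x^\nu)/P_n(x^\nu)$ ``after Cramer's rule''. A finite linear system only gives you a bordered quantity, not the opposite-sign determinant.
\begin{itemize}
\item In the paper's notation, the $\sigma_1$-symmetry $Z_n(\lambda)=\sigma_1Z_n(-\lambda)\sigma_1$, $B_n(\lambda)=-\sigma_1B_n(-\lambda)\sigma_1$ decouples the system for the prefactor into $(T_n-H_nL_n)(\vec d_n+\vec b_n)=H_n\vec e_1=(T_n+H_nL_n)(\vec d_n-\vec b_n)$. This decoupling, not $\varepsilon$ or ``which diagonal entry'', is where the two opposite signs come from.
\item Up to a factor $1+\O_n(x^{\nu-n-\frac12})$, $e^u$ is the quotient of $1+\sum_m\ii^m(b_{n,m}+d_{n,m})$ by $1+\sum_m\ii^m(b_{n,m}-d_{n,m})$.
\item Take the first of these. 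It equals $1+\vec\ell^{\mathsf T}(T_n-H_nL_n)^{-1}H_n\vec e_1$ with $\vec\ell=(\ii,\ldots,\ii^{2n-1})^{\mathsf T}$. The matrix determinant lemma turns it into $\det\big(T_n+H_n(V-L_n)\big)/\det(T_n-H_nL_n)$, where $V=\vec e_1\vec\ell^{\mathsf T}$.
\item The numerator is not visibly $\det(T_n+H_nL_n)$. That equality is a nontrivial structured identity, Proposition~\ref{prop:det}. It is proved by row and column operations $L_{1,2}\cdots L_{N-1,N}$, which commute with the Toeplitz part and pass through the Hankel part as their transposes. A binomial-sum identity then gives $L_{N,N-1}\cdots L_{2,1}L+LL_{1,2}\cdots L_{N-1,N}=[\ell^{i+j-1}]$, in that proposition's notation with $\ell=\ii$.
\end{itemize}
Without this identity, or a continuous analogue for your operator, you only get $u$ as the logarithm of an unidentified rational function of the entries of $(I\mp x^\nu C_nL_n)^{-1}$. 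That is not the stated formula.

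\textbf{The reduction itself is only asserted.} The equation $\rho=I+\mathcal C_-(\rho(G-I))$ on the arcs near $\pm\ii$ is a singular integral equation with a jump of sup-size $x^\nu$. It becomes approximately finite rank only for a structural reason you do not state. First move the $s^\R$-jumps onto $\partial U_{\pm\ii}$ and factor the circle jump, as in passing from $Y$ to $\hat Y$. Then the large jump is lower triangular at $\ii$ and upper triangular at $-\ii$. So each jump is driven by a column that is analytic at that point, and that column is itself a Cauchy integral over the opposite arc. The two arcs therefore interact only through the smooth kernel $1/(s-\lambda)$.

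It is this analytic column, not the Gaussian jump entry, that you Taylor expand. To reach the error $x^{\nu-n-\frac12}$ rather than $x^{\nu-n}$, you need that odd moments gain an extra $x^{-\frac12}$. No lens opening is involved: on $S^1$ the exponent is real, and the circle is already the descent path.

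\textbf{The obstacle you flagged is not one.} Every entry of $x^\nu C_nL_n$ is $\O(1)$, and $p_{n,1}=\frac{1}{2\sqrt{2\pi x}}(1+\O(x^{-1}))$. Your Step~3 then gives $|P_n(\pm x^\nu)-1|\le\frac1{2\sqrt{2\pi}}+\O(x^{-1})$, so the truncated system is uniformly invertible.

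\textbf{How the paper proceeds instead.} It never forms a Fredholm operator. It takes the exact error-function parametrix and subtracts its rational truncation $\mathcal H_{0,n}$, which has a pole of order $2n-1$ at $\pm\ii$. This leaves a small-norm problem with jump $I+\O_n(x^{\nu-n-\frac12})$. The pole is absorbed by the polynomial prefactor $B_n(\lambda)$, whose coefficients solve exactly the $T_n\pm H_nL_n$ system above.
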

\end{tcolorbox}

Clearly,  \( P_n(t) \) is simply the reciprocal of the characteristic polynomial of \( C_nL_n \).

\begin{tcolorbox}[colback=white,colframe=gray]
\begin{rem}
Regions \eqref{D1} and \eqref{D3} overlap for \( x+D_1 \leq \varkappa x\leq x+\tfrac12\ln x\). 
Hence, we can readily see by comparing \eqref{asymptotic formula 1} and \eqref{asymptotic formula 3} that
\[
p_{n,1} = -\mathrm{tr}(C_nL_n) = \frac1{2\sqrt{2\pi x}} \sum_{m=0}^{n-1}\frac{a_m}{x^m},
\]
which also can be verified by a direct computation of the trace. 
\end{rem}
\end{tcolorbox}

\begin{tcolorbox}[colback=white,colframe=gray]
\begin{rem}
It follows from the previous remark that 
\[
u(x) = 2\varepsilon \ln \left( \frac{1 - \tfrac{x^{\nu-\frac12}}{2\sqrt{2\pi}} + \cdots + \O_n\big(x^{\nu-n-\frac12}\big)}{1 + \tfrac{x^{\nu-\frac12}}{2\sqrt{2\pi }} + \cdots + \O_n\big(x^{\nu-n-\frac12}\big)}\right),
\]
which matches exactly with \eqref{asymptotic formula 2} (recall that \( \chi_0=0 \)), where \( \cdots \) represent an increasing with \( n \) number of terms of the expansion. That is, Theorem~\ref{thm:3}, in theory, gives a full asymptotic expansion of \( u(x) \) as compared to Theorem~\ref{thm:2}, but only up to the first Stokes line \( \varkappa x = x - \tfrac12\ln x\). 
\end{rem}
\end{tcolorbox}

Next, we turn our attention to the case \( \varkappa\in(0,1)\). We show below in Lemma~\ref{lem:varkappa} that to each \( \varkappa\in(0,1) \) there corresponds a unique \( \alpha \in(0,\tfrac\pi2) \) such that
\begin{align}
\label{defn of alpha}
\varkappa = \frac{1}{2} \int_0^\alpha \sqrt{2(\cos 2\theta - \cos 2\alpha)} \, d\theta.
\end{align}
 Moreover, \( \alpha(\varkappa)\) is continuous and increasing with range \( (0,\tfrac\pi2) \). In fact, as in  Lemma~\ref{lem:varkappa}, definition \eqref{defn of alpha} can be rewritten as
\begin{align}
\label{kappa as elliptic}
\varkappa = E(\sin \alpha) - \cos^2 \alpha \, K(\sin \alpha),
\end{align}
where \(K(\cdot) \) and \( E(\cdot)\) are the complete elliptic integrals of the first and second kind:
\begin{equation}
\label{Elliptic integrals}
K(k) = \int_0^1 \frac{1}{\sqrt{(1 - y^2)(1 - k^2 y^2)}} \, dy \quad \text{and} \quad E(k) = \int_0^1 \sqrt{\frac{1 - k^2 y^2}{1 - y^2}} \, dy.
\end{equation}
\begin{figure}
    \centering
    \includegraphics[width=0.7\linewidth]{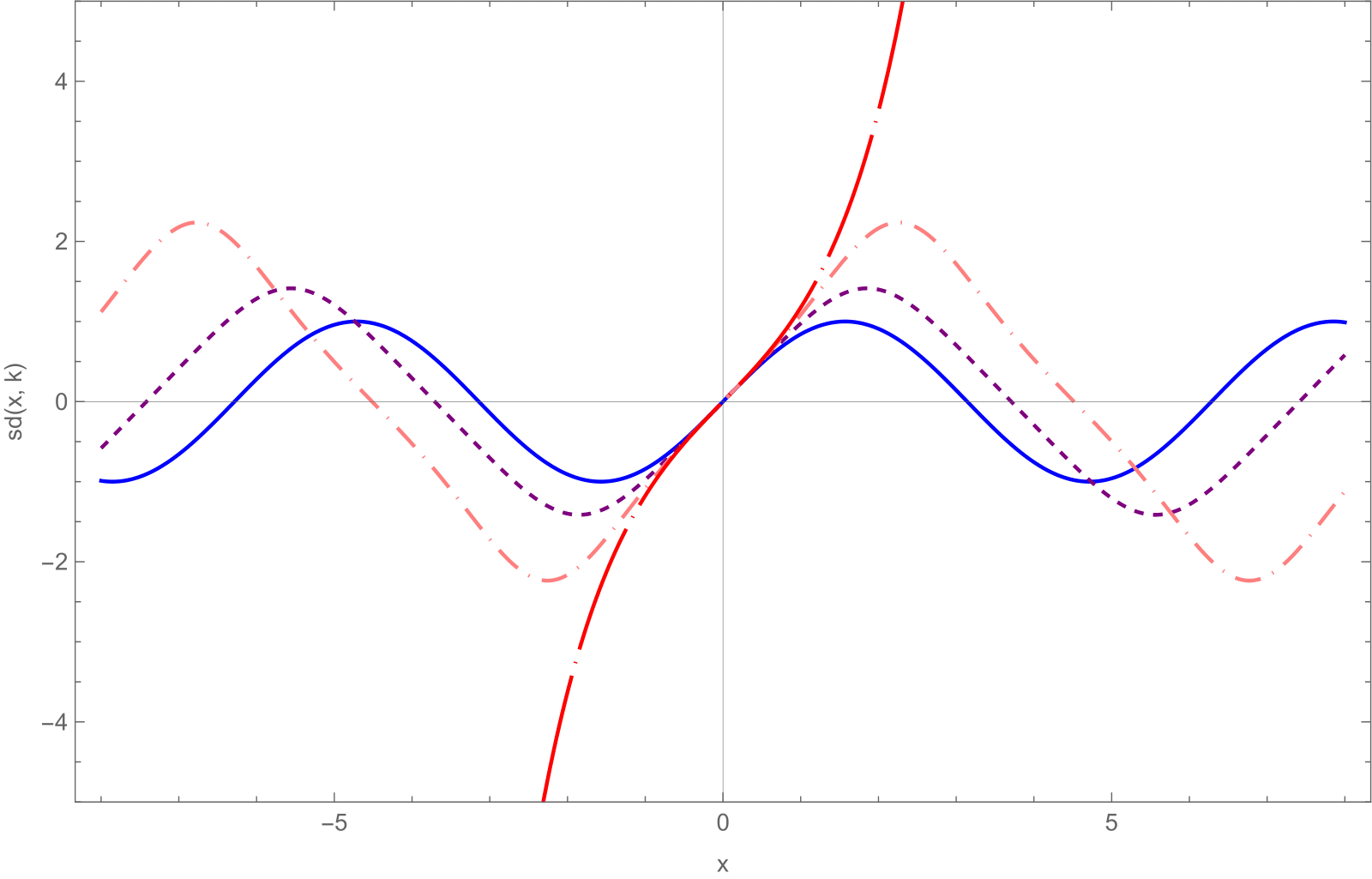}
    \caption{Graphs of $y = \sd(x,k)$ for $k=0, 0.5, 0.8, 1$; $y = \sd(x,0) = \sin(x)$ (blue solid curve); $y = \sd(x,0.5)$ (purple dashed curve); $y = \sd(x,0.8)$ (pink dash-dotted curve); $y = \sd(x,1) = \sinh(x)$ (red long-dash-dotted curve).}
    \label{fig:y=sd(x,k)}
\end{figure}
\noindent
Recall also that \( \sd(\cdot,\kk)\) is one of the Jacobian elliptic functions corresponding to the elliptic modulus \( \kk \), where, to shorten our notation, we set
\begin{align}
\label{ks}
\kk := \sin\alpha, \quad \kk^\prime := \cos\alpha, \quad \KK := K(\kk), \quad \text{and} \quad \KK^\prime := K(\kk^\prime),
\end{align}
which are, of course, functions of \( \varkappa \). The function \( \sd(\cdot,\kk)\) is odd and periodic on the real line with period \( 4\KK \), see \cite[Table 22.4.1]{NIST:DLMF}. It is increasing from \( -1/\kk^\prime \) to \( 1/\kk^\prime \) on \( (-\KK,\KK) \) and then symmetrically decreasing on \( (\KK,3\KK) \), see Figure~\ref{fig:y=sd(x,k)}.

\begin{tcolorbox}[colback=white,colframe=gray]
\begin{thm}
\label{thm:4}
Let \( \varkappa\in(0,1)\) and, as before, \( \pi\kappa = \varkappa x = \tfrac12\ln ( |p(\varkappa,x)|^2 - 1 ) \). Given \( \delta\in(0,\tfrac12)\), define
\begin{equation}
\label{Z delta}    
\mathcal Z_\delta := \left\{ (\varkappa,x)~\big|~\left|\frac{x-2\kappa\KK^\prime}{2\KK}-n-\frac12\right|<\delta,~n\in\Z \right\}.
\end{equation}
Then, there exists a large enough constant \( D_4>0 \) such that for all \((\varkappa,x)\notin\mathcal Z_\delta \) satisfying
\begin{align}
\label{D4}
(D_\delta^2x)^{\frac13} + \ln |s^\R|_+ \leq \varkappa x \leq x - ( x^2 D_\delta \ln x)^{\frac13}
\end{align}
and \( x\geq e \), where \( D_\delta := \delta^{-2} D_4 \), it holds that
\begin{equation}
\label{asymptotic formula 4}
u(x,p(\varkappa,x)) = \varepsilon \ln \left(\frac{1 - \kk^\prime \sd(x - 2\kappa \KK^\prime,\kk) + E_\varkappa(x)}{1 + \kk^\prime\sd(x - 2\kappa \KK^\prime,\kk) + E_\varkappa(x)}\right),
\end{equation}
where \( x \delta^2\KK^2\kk^3(\kk^\prime)^6 |E_\varkappa(x)| \leq C \) for some absolute constant \( C \).
\end{thm}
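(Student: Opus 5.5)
We will prove Theorem~\ref{thm:4} by Deift--Zhou nonlinear steepest descent applied to the Riemann--Hilbert problem (RHP) for the Lax pair \eqref{Lax pair sinh-Gordob PIII}. The monodromy data \eqref{mon data} enter the jumps, and the solution is recovered as $u=\varepsilon\ln(\cdot)$ of a ratio of entries of the normalized solution. We first rescale the spectral variable by $x$. The jump matrices then contain the large parameter $|p|\sim e^{\varkappa x}$ together with the oscillatory exponentials $e^{\pm x\theta(\lambda)}$, where $\theta(\lambda)$ is the standard phase $\tfrac12(\lambda-\lambda^{-1})$ after rescaling.

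In the trigonometric regime of \cite{IMY} one uses a scalar $g$-function with logarithmic singularities at the stationary points $\pm\ii$. Here, instead, the factor $e^{\varkappa x}$ is comparable to $e^{x}$. We therefore introduce a $g$-function on a two-cut Riemann surface of genus one, with branch points $\pm e^{\pm\ii\alpha}$ (or their images on the unit circle). The endpoints are fixed by two conditions on $\Re g$: it must vanish on the cuts, and its increment across the gap must equal $\varkappa$. The second condition is exactly \eqref{defn of alpha}, so Lemma~\ref{lem:varkappa} provides the unique $\alpha(\varkappa)$. We then verify the sign inequalities for $\Re g$ on the lens boundaries, using $\kk=\sin\alpha$, $\kk'=\cos\alpha$, and check that the real and imaginary periods of $g'$ are $2\KK$-related and $2\KK'$-related respectively. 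This produces the phase $x-2\kappa\KK'$ in \eqref{asymptotic formula 4}.

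After the $g$-transformation and the opening of lenses around the two cuts, the jumps reduce to constant (in $\lambda$) jumps on the cuts, together with a quasi-periodic twist of size $e^{\ii(x-2\kappa\KK')\pi/\KK}$ on the gap. The global parametrix is built from Jacobi theta functions on the elliptic curve with modulus $\kk$. It exists exactly when the theta function in the denominator does not vanish, that is, when the relevant theta divisor is avoided. The excluded set $\mathcal Z_\delta$ in \eqref{Z delta} is precisely a $\delta$-neighbourhood of that divisor. Off this set, we bound the theta quotient uniformly by a constant multiple of $\delta^{-1}$, with explicit $\kk$ and $\kk'$ dependence coming from the nome. Evaluating the parametrix at the normalization point and using standard identities relating theta quotients to Jacobi functions should give the leading term $\kk'\sd(x-2\kappa\KK',\kk)$.

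Next, we place Airy parametrices at the four branch points and solve the small-norm problem. The error is then $\O(x^{-1})$ times the norm of the outer parametrix, times the inverse of the local scale at the endpoints. The main obstacle is uniformity as $\varkappa\to0$ or $\varkappa\to1$. When $\varkappa\to0$ we have $\alpha\to0$, the two cuts shrink and the branch points collide pairwise. When $\varkappa\to1$ we have $\alpha\to\tfrac\pi2$, $\kk'\to0$, the gap closes and $\KK\to\infty$. In both cases the Airy disks must shrink with the endpoint separation, and the error deteriorates polynomially in $\kk$, $\kk'$, $\KK$. We would choose disk radii proportional to the gap or cut length and track the jump estimates on the disk boundaries, roughly of size $(x\cdot\text{radius}^{3/2})^{-1}$. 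Combining these bounds with the $\delta^{-2}$ loss from the theta denominator should yield the bound $x\delta^2\KK^2\kk^3(\kk')^6|E_\varkappa|\le C$. Requiring this quantity to be small translates, via $\kk\sim\varkappa^{1/2}$ as $\varkappa\to0$ and $\kk'^2\ln(1/\kk')\sim 1-\varkappa$ as $\varkappa\to1$, into the two-sided condition \eqref{D4}. The $\ln|s^\R|_+$ term enters through the need to absorb the $\Im p=-\tfrac12 s^\R$ contribution, which is exponentially smaller than $e^{\varkappa x}$ once $\varkappa x\ge\ln|s^\R|_+$ plus a constant.
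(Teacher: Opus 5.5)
Your outline is essentially the paper's proof. It has a genus-one \(g\)-function with branch points \(\pm e^{\pm\ii\alpha}\) fixed by \(\varkappa=\widehat\varkappa(\alpha)\), and a theta-function outer parametrix that degenerates exactly when \(xV\in\Z+\tfrac12\); this set is \(\mathcal Z_\delta\) by Legendre's relation \(xV=(x-2\kappa\KK^\prime)/(2\KK)\). It then places Airy parametrices at the four branch points and closes with a small-norm estimate whose size \(1/(\delta^2x\KK^2\kk^3(\kk^\prime)^6)\) is \(|N|^2=\O\big(\delta^{-2}\KK^{-2}\kk^{-1}(\kk^\prime)^{-4}\big)\) times the Airy matching error \(\O\big(1/(x(\kk\kk^\prime)^2)\big)\).

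Two details should be corrected or differ from the paper. First, the lenses are opened around the arcs \(\Gamma(a,-\overline a)\) and \(\Gamma(-a,\overline a)\) through \(\pm\ii\), where the jump oscillates like \(e^{\pm\ii x\Omega}\). They are not opened around the arcs near \(\pm1\) that shrink as \(\varkappa\to0\), which are what you call the cuts; there the jump is already off-diagonal up to entries \(e^{-x(\varkappa\pm\Pi)}\). Second, instead of disks shrinking with the branch-point separation, the paper uses large, \(x\)-independent Airy domains \(\mathcal D_i\) that contain the whole lenses and are controlled by the Koebe distortion theorem, so only the sign of \(\Re g\) is needed on the lenses.
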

\end{tcolorbox}

\begin{tcolorbox}[colback=white,colframe=gray]
\begin{rem}
When \( x-2\kappa\KK^\prime = (2n+1)\KK \), it holds that \( \kk^\prime \sd(x - 2\kappa \KK^\prime,\kk) = (-1)^n \). Thus, the fraction in \eqref{asymptotic formula 4} is close to either zero or infinity at such points. This makes it more transparent why one might need to introduce the set \( \mathcal Z_\delta \) and why the error term could be expected to depend on \( \delta \) (the distance to singularities). Of course, the same formation of singularities can be seen in \eqref{asymp for B neq 0} and \eqref{asymptotic formula 5} further below around the points where the sine function is equal to \( \pm1 \). However, in those formulae the error terms hold uniformly for all \( x \) large and no analogue of the set \( \mathcal Z_\delta \) is needed. The difference stems from employed techniques: one uses global parametrix that utilizes theta functions and the other uses dressing method with polynomial prefactor.
\end{rem}
\end{tcolorbox}

\begin{tcolorbox}[colback=white,colframe=gray]
\begin{rem}
Assuming \eqref{D4} and \( x\geq e \), we show further below in Lemma~\ref{lem:K alpha} that \( x \delta^2\KK^2\kk^3(\kk^\prime)^6 \geq c_0 D_4 \) for some absolute constant \( c_0 \). Thus, it holds that
\[
|E_\varkappa(x)| \leq C/(c_0 D_4),
\]
which can be made small by choosing \( D_4 \) large. However, if \( \epsilon\leq \varkappa \leq 1-\epsilon \) for some \( \epsilon>0 \), then \( \KK^2\kk^3(\kk^\prime)^6 \geq c_\epsilon \) for some constant \( c_\epsilon \), in which case
\[
E_\varkappa(x) = \O_{\delta,\epsilon}\big(x^{-1}\big).
\]
\end{rem}
\end{tcolorbox}

\begin{tcolorbox}[colback=white,colframe=gray]
\begin{rem}
Trivially, every given point in the region \( 0<\varkappa<1\), \( e\leq x \) lies on some curve \( \varkappa x=x-(m+\nu)\ln x \) for an appropriate (and mostly very large) choice of \( m+\nu\). This makes \eqref{asymptotic formula 2} meaningful only for very large \( x \) as the error term in \eqref{asymptotic formula 2} depends on \( m \). It seems then that \( x \) should be so large that there is no overlap of this part of the curve \( \varkappa x=x-(m+\nu)\ln x \) with the region in \eqref{D4}. This makes direct comparison between \eqref{asymptotic formula 2} and \eqref{asymptotic formula 4} impossible, which is certainly a significant drawback of the logarithmic cascades asymptotics.
\end{rem}
\end{tcolorbox}

\begin{tcolorbox}[colback=white,colframe=gray]
\begin{rem}
Recall \eqref{kappa phi}, where now \( \kappa=\kappa(\varkappa,x) \) and \( \phi=\phi(\varkappa,x) \). When  
\[
\pi\kappa = \varkappa x = D x^{\frac13},
\]
we show in Theorem~\ref{thm:sd-sine}, further below in Appendix~\ref{app:elliptic}, that
\begin{align}
\label{4 to 5 reduction}
\kk^\prime \sd(x - 2\kappa \KK^\prime,\kk) = \varepsilon\sin \left( x - \kappa \ln x + \phi \right) + \O \big( x^{-\frac13} D^2 \ln x\big)
\end{align}
as \( x\to\infty \). This expression clearly shows the leading order matching of formula \eqref{asymptotic formula 4} with the one in \eqref{asymp for B neq 0}, see also \eqref{asymptotic formula 5}.
\end{rem}
\end{tcolorbox}

Finally, the following theorem holds.

\begin{tcolorbox}[colback=white,colframe=gray]
\begin{thm}
\label{thm:5}
There exists a large enough constant \( D_5>0 \) such that for all 
\begin{align}
\label{D5}
(\varkappa x)_+ \leq \left(D_5^{-1} x\right)^{\frac13}, \quad (\varkappa x)_+ = \max\{\varkappa x,1\},
\end{align}
it holds that 
\begin{align}
\label{asymptotic formula 5}
u(x;p(\varkappa,x)) = \ln \left(\frac{1 - \sin \left( x - \kappa\ln x + \phi \right)+\O((\varkappa x)_+^3x^{-1})}{1 + \sin \left( x - \kappa\ln x + \phi \right)+\O((\varkappa x)_+^3x^{-1})} \right), 
\end{align}
where \( \kappa,\phi \) are as in \eqref{kappa phi} and the constants in the error terms \( \O(\cdot) \) are absolute.
\end{thm}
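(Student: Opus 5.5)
We propose to prove Theorem~\ref{thm:5} by redoing the nonlinear steepest-descent analysis of \cite{IMY} for the Riemann--Hilbert problem attached to the Lax pair \eqref{Lax pair sinh-Gordob PIII}. The one new feature is that the monodromy data now depend on \( x \) through \eqref{asymp regime}. The analysis should be carried out uniformly in \( \kappa=\varkappa x/\pi \), with \( \kappa \) allowed to grow up to size \( (D_5^{-1}x)^{1/3} \).

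\textbf{Setup and opening of lenses.} We first rescale the spectral variable so that the jump matrices carry the oscillatory factors \( e^{\pm \ii x\theta(\lambda)} \), where \( \theta \) has two stationary points \( \lambda=\pm1 \) (in the sinh-Gordon normalization). We then factor the jump on the real line exactly as in \cite{IMY}. The triangular factors with \( |p|^2-1=e^{2\varkappa x} \) are removed by a scalar \( g \)-type function
\[
\delta(\lambda)=\Big(\tfrac{\lambda-1}{\lambda+1}\Big)^{\ii\kappa},
\]
which absorbs the diagonal part of the jump, after which the lenses are opened.

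\textbf{Global and local parametrices.} The outer parametrix is the usual explicit diagonal solution, with no theta functions needed. Near \( \lambda=\pm1 \) we place parabolic cylinder parametrices \( D_{\ii\kappa-1/2} \) with parameter \( \nu=\ii\kappa \). These produce the \( \Gamma(\tfrac12+\ii\kappa) \) and \( 4^{-\kappa} \) factors that enter \( \phi \) in \eqref{kappa phi}, and the \( x^{\ii\kappa} \) factor gives the \( \kappa\ln x \) phase. Following the remark after Theorem~\ref{thm:4}, the "dressing method with polynomial prefactor" is used here: the outer parametrix is multiplied by a rational prefactor, so that the resulting formula is a ratio \( (1\mp\sin)\), uniformly in \( x \). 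This also avoids any exclusion set like \( \mathcal Z_\delta \).

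\textbf{Main obstacle: uniformity in \( \kappa \).} The key step is controlling the small-norm problem uniformly in \( \kappa \), since the sizes of the parabolic cylinder functions and of \( \delta \) grow with \( \kappa \). We first choose the local disk radius \( r \) so that \( \kappa \) is small compared with \( x r^2 \) and with \( \sqrt{x}\,r \). We then use the large-parameter asymptotics of \( D_{\ii\kappa-1/2}(\zeta) \), valid uniformly for \( |\zeta|\gg \kappa \). With these, the mismatch on the disk boundaries is \( \mathcal O(\kappa^2/(x r^2)+\kappa^3/x) \). On the lens boundaries, \( e^{-x\,\mathrm{dist}}\,|\delta|^{\pm2} \) stays exponentially small as long as \( \kappa\ln(1/r)\ll x r^2 \).

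Balancing these requirements yields an error \( \O(\kappa^3/x) \), small exactly when \( \kappa^3\leq D_5^{-1}x \); this explains condition \eqref{D5}. We would try first to take \( r\sim (\kappa/x)^{1/3} \) and verify each estimate with explicit \( \kappa \)-dependence. The sizes of \( \Gamma(\tfrac12\pm\ii\kappa) \) exactly cancel \( e^{\pi\kappa}\sim|p| \), via \( |\Gamma(\tfrac12+\ii\kappa)|^2=\pi/\cosh\pi\kappa \), so the prefactors stay bounded.

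\textbf{Recovering the solution.} Finally, \( u \) is recovered from the \( 1/\lambda \) coefficient (or the value at \( \lambda=0 \)) of the solution of the small-norm problem. Expanding to first order gives the sine term. The dressing prefactor then turns the first-order correction into the logarithm in \eqref{asymptotic formula 5}, with the error \( \O((\varkappa x)_+^3x^{-1}) \) inside the fraction.
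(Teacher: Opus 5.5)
Your overall architecture is the paper's: lenses along the stationary contour $\{\Im(2\varphi)=\pm1\}\cup[-1,1]$, an explicit power-function outer parametrix, parabolic-cylinder parametrices at $\pm1$ with $\nu$-uniform error bounds whose first coefficient $a_1^{(\nu)}$ is of size $\nu^2$, Bothner--Its dressing, and recovery of $u$ from $\lambda=0$. But the step you single out as the main obstacle is where the argument fails.

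\textbf{The disk radius.} Your constraints ($\kappa\ll xr^2$, $\kappa\ll\sqrt x\,r$) put no upper bound on $r$. Your tentative $r\sim(\kappa/x)^{1/3}$ does not give the theorem, for three reasons.
\begin{itemize}
\item The first correction on $\partial U_{\pm1}$ is of order $\kappa^2/(xr^2)=\kappa^{4/3}x^{-1/3}$. This is already $x^{-1/3}$ for bounded $\kappa$, not $x^{-1}$.
\item Your own condition $\kappa\ll\sqrt x\,r=x^{1/6}\kappa^{1/3}$ fails for $x^{1/4}\lesssim\kappa\lesssim x^{1/3}$, which lies inside \eqref{D5}.
\item $r\to0$ as $\kappa\to0$.
\end{itemize}
The $r$-independent $\kappa^3/x$ in your mismatch is not produced by anything in your argument.

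The missing constraint is an upper cap on the radius. The matching uses the analytic prefactor $f(\lambda;x)=\left(\frac{\lambda-1}{\lambda+1}\right)^{\nu}(x^{1/2}z_1(\lambda))^{-\nu}$. On $|\lambda-1|=\varrho$ one has $|f|^2=x^{-1/2}e^{\frac32\varkappa x}e^{-\frac{\varkappa x}{\pi}(\arg\lambda-2\arg(1+\lambda))}\cdot\O(1)$, with $\arg\lambda-2\arg(1+\lambda)=\O(\varrho^2)$. So the off-diagonal parameters $\alpha f^{-2}$, $\beta f^2$ keep their nominal sizes $x^{1/2}$ and $|\nu|x^{-1/2}$ only if $\varkappa x\,\varrho^2=\O(1)$; otherwise the mismatch picks up factors $e^{c\kappa\varrho^2}$. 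Since the error $\kappa^2/(x\varrho^2)$ wants $\varrho$ large, the forced choice is $\varrho=\frac12(\varkappa x)_+^{-1/2}$. This gives $G_V-I=\O((x\varrho^6)^{-1})=\O((\varkappa x)_+^3/x)$ on the circles. The same choice makes $\sqrt x\,\varrho\gtrsim|\nu|$, so the uniform expansion applies. It also makes $x\varrho^2\gg\varkappa x$, so the lens jumps, bounded by $\exp\{\frac12\varkappa x-\frac1{16}x\varrho^2\}$, are exponentially small. Both hold exactly under \eqref{D5}, and this is where the error rate and the range come from.

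\textbf{The outer parametrix exponent.} Your outer parametrix is off by a half. The factorization gives $S_LS_R\,e^{\varkappa x\sigma_3}$ in the upper half and $S_LS_R(-e^{-\varkappa x\sigma_3})$ in the lower half. So the residual jump on $(-1,1)$ is $-e^{-2\varkappa x\sigma_3}$, and $\delta^{\sigma_3}$ with purely imaginary exponent leaves the sign unabsorbed. You need $\left(\frac{\lambda-1}{\lambda+1}\right)^{\nu\sigma_3}$ with $\nu=\frac12+\ii\kappa$. This is consistent with your index $D_{\ii\kappa-1/2}=D_{\nu-1}$, not with ``$\nu=\ii\kappa$''.

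$\Re\nu=\frac12$ is the real reason dressing is needed. The local parametrix matches the outer one only up to a non-small triangular factor $T^{(\pm1)}(\lambda)$, which has a simple pole at $\pm1$ and an off-diagonal entry of size $\varrho^{-1}$. The ansatz $(\lambda I+B)V(\lambda)\,\mathrm{diag}(\lambda-1,\lambda+1)^{-1}$, with residue conditions at $\pm1$, removes it. Up to $\O((x\varrho^6)^{-1})$, the entries of $B$ are $\pm\frac{4+(\alpha^*)^2}{4-(\alpha^*)^2}$ and $\pm\frac{4\alpha^*}{4-(\alpha^*)^2}$, with $\alpha^*=-2\ii e^{-\ii(x-\kappa\ln x+\phi)}$ and $|\alpha^*|=2$. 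Then $P_0=-BV(0)\ii^{\sigma_3}\sigma_1$ yields $1\mp\sin$ at leading order. So the sine comes from this algebraic step, not from a first-order expansion of the small-norm solution, which contributes only the $\O((\varkappa x)_+^3/x)$ error.
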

\end{tcolorbox}

Formula \eqref{asymptotic formula 5} indeed does not have \( \varepsilon\) in front of the logarithm. This is already anticipated by the presence of \( \varepsilon \) in both \eqref{asymptotic formula 4} and \eqref{4 to 5 reduction}.

\begin{tcolorbox}[colback=white,colframe=gray]
\begin{rem}
Even though formulae \eqref{asymptotic formula 4} and \eqref{asymptotic formula 5} match to the leading order (as explained in the remark preceding Theorem~\ref{thm:5}), unfortunately the region in \eqref{D4} and \eqref{D5} do not overlap as \( D_\delta^2>D_5^{-1} \). On the other hand, Theorem~\ref{thm:5} does overlap with Theorem~\hyperref[IMY main thm]{IMY} as the former does cover every \( |p|^2 \geq 2 \) fixed (i.e., \( \varkappa x\geq0 \) fixed).
\end{rem}
\end{tcolorbox}


Acknowledgments: The authors thank Thomas Bothner, Percy Deift, Alexander Its, and Igor Krasovsky for sharing the preprint \cite{BDIK4}, and also thank them, along with Peter Miller, for helpful discussions and remarks. The first author was partially supported by a scholarship from the Japan Student Services Organization and the Hokushin Scholarship Foundation.

\section{Proof of Theorem~\ref{thm:1}}

Throughout the paper, \( \sigma_1,\sigma_2,\sigma_3 \) are the Pauli matrices
\[
\sigma_1 = \begin{pmatrix}
    0 & 1 \\ 1 & 0 
\end{pmatrix},
\quad
\sigma_2 = \begin{pmatrix}
    0 & -\ii \\ \ii & 0
\end{pmatrix},
\quad
\text{and}
\quad
\sigma_3 = \begin{pmatrix}
    1 & 0 \\ 0 & -1
\end{pmatrix}.
\]
The symbols \( \C \) and \( \overline \C \) stand for the complex plane and the extended complex plane, respectively. Moreover, we set \( \C^*:=\C\setminus\{0\} \). We shall denote by \( S^\rho \) the circle centered at the origin of radius \( \rho \). Some of our results required \( x \) to be sufficiently large. Even if no restriction is mentioned, we always assume that \( x\geq 1 \). 

\subsection{Direct and Inverse Monodromy Problem}
\label{ss:dimp}

As standard in the theory of Painlev\'e transcendents, we view \eqref{sinh-gordon PIII} as a compatibility condition of an overdetermined system of linear differential equations, a Lax pair. A Lax pair for the sinh-Gordon equation was first found by Flaschka-Newell \cite{FN}, see also \cite[Chapter 15]{FIKN}. Let \( \Psi(\lambda, x) \) be a \( 2\times 2 \) matrix function on $\C^* \times \C$ such that
\begin{align}
\begin{dcases}
    \frac{d \Psi}{d \lambda} = \left( \frac{\ii x^2}{16} \sigma_3 + \frac{x u_x}{4 \lambda} \sigma_1 + \frac{\ii}{\lambda^2}Q \right) \Psi,\\
    \frac{d \Psi}{d x} = \left( \frac{u_x}{2} \sigma_1 + \frac{\ii x \lambda}{8} \sigma_3 \right) \Psi, 
\end{dcases}
    \label{Lax pair sinh-Gordob PIII}
\end{align}
where
\begin{equation}
    \label{defn of Q}
    Q = \begin{pmatrix}
        \cosh u & - \sinh u\\
        \sinh u & - \cosh u
    \end{pmatrix}.
\end{equation}
One can readily verify that the compatibility condition $\Psi_{\lambda x} = \Psi_{x \lambda}$ is equivalent to the sinh-Gordon Painlev\'e III equation \eqref{sinh-gordon PIII}.

The first equation in \eqref{Lax pair sinh-Gordob PIII} has two singular points \( \lambda=0,\infty\). Studying its Stokes sectors and formal solutions around these singularities allows one to identify the monodromy data of the equation, see for example \cite[Section~3.1]{IMY}. As it turns out, the corresponding monodromy surface is given by
\begin{align}
\label{mon surface}
\left\{ m=(A, s^{\R}, B^{\R}) \in \C \times \R \times \R \,|~~ |A|^2 = 1 + (B^\R)^2, ~~ A - \overline{A} = \ii s^{\R} B^\R  \right\}.
\end{align}
The monodromy data \( (A, s^{\R}, B^{\R}) \) can be re-parametrized by a pair \( (p,\iota) \), where $p\in\overline\C$, \( |p|>1 \) and $\iota\in\{\pm1\}$, via
\begin{equation}
\label{p parametrization}
\begin{cases}
\displaystyle s^\R = -2\Im(p), \quad B^\R = \frac{\iota}{\sqrt{|p|^2-1}}, \quad A = \overline p B^\R, & p\neq\infty, \smallskip \\
s^\R \text{ is any}, \quad B^\R =0, \quad A = \iota, & p=\infty
\end{cases}
\end{equation}
(the inverse map is given by \( p= \overline A/B^\R \), \( \iota = \sign(B^\R)\) when \( B^\R \) is non-zero and \( p=\infty \), \( \iota = A \) otherwise). 

As we explain in the last remark of \cite[Section~3.2]{IMY}, if \( u \) is a solution of \eqref{sinh-gordon PIII} corresponding to the monodromy data \( (p,\iota) \), then the solution corresponding to \( (p,-\iota) \) is \( u+2\pi\ii \). Hence, real solutions of \eqref{sinh-gordon PIII} do not depend on \( \iota \). Moreover, it is also pointed out in \cite[Appendix~A]{IMY} that the solution corresponding to the monodromy data \( (-p,-\iota) \) is given by \( -u \).  Hence, if \( \epsilon\in\{\pm1\}\), then
\[
u(x,(p,\iota)) = \epsilon u(x,(\epsilon p,1)).
\]
It readily follows from \eqref{p parametrization} that transformations \( (p,\iota) \mapsto (p,-\iota) \) and \( (p,\iota) \mapsto (-p,-\iota) \) correspond to \( (A, s^{\R}, B^{\R}) \mapsto (-A, s^{\R}, -B^{\R}) \) and \( (A, s^{\R}, B^{\R}) \mapsto (A, -s^{\R}, -B^{\R}) \), respectively. Thus, if \( \varepsilon = \sign(\Re(p)) \) and \( \iota =\sign(B^\R) \), then
\begin{equation}
\label{symmetries of u}
u\big(x,(A, s^{\R}, B^{\R})\big) = \varepsilon u\big(x,(\iota\varepsilon A, \varepsilon s^{\R}, \iota B^{\R})\big).
\end{equation}
Observe that \( \iota B^\R = |B^\R| \) and that \( \Re(\iota\varepsilon A)>0 \). Therefore, in what follows, we study only solutions corresponding to \( B^\R>0 \) and \( \Re(A)>0 \) and recover the general case from \eqref{symmetries of u}. Since the representation of the monodromy data \( m(\varkappa,x) \) via triples \( (A, s^{\R}, B^{\R}) \) is more convenient for us, we restate \eqref{asymp regime} as follows. Given real parameters \( s^\R \) and \( \varkappa>0 \), we are interested in the asymptotic behavior of the solution \( u(x;m(\varkappa,x)) \) corresponding to the monodromy data \( m(\varkappa,x)=(A, s^{\R}, B^{\R}) \) such that
\begin{equation}
    \label{mon data}
    B^\R = e^{-\varkappa x}, \quad |A|^2 = 1 + e^{-2\varkappa x}, \quad 2\Im(A) = s^\R e^{-\varkappa x}, \quad \Re(A)>0.
\end{equation}

It is well understood in the theory of integrable systems that the solution \( \Psi(\lambda,x) \) of the first equation in \eqref{Lax pair sinh-Gordob PIII} corresponding to the monodromy data \( m \) and therefore the solution \( u(x;m)\) of \eqref{sinh-gordon PIII} can be recovered via a certain Riemann-Hilbert problem, see for example \cite[Section~3.2]{IMY}.

\begin{figure}[ht!]
    \centering
    \includegraphics[width=8cm]{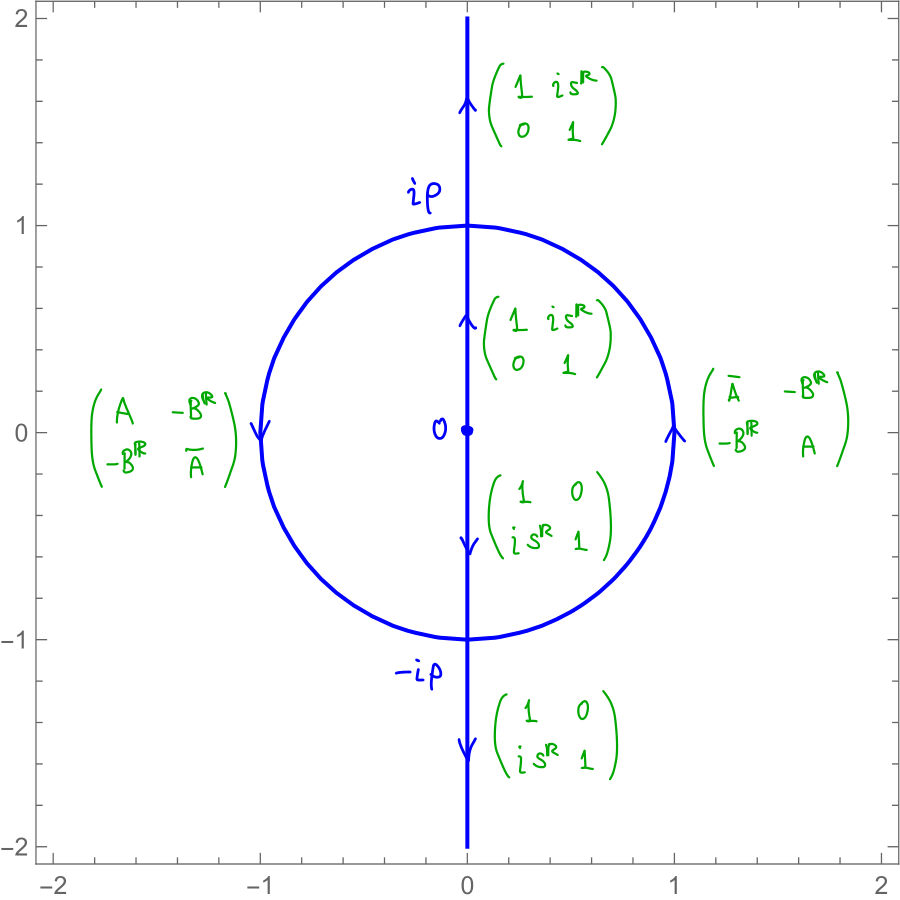}
    \caption{The contour $\Gamma_\rho$ and the jump matrices $G_{\hat{\Psi}}$ for RHP~\ref{rhp original}.}
    \label{Psi hat problem pic}
\end{figure}

\begin{RHP} 
\label{rhp original}
Find a $2 \times 2$ matrix function $\hat{\Psi}(\lambda)$ such that
\begin{enumerate}
    \item $\hat{\Psi}(\lambda)$ is analytic in $\C \setminus \Gamma_\rho$, where  $\Gamma_\rho=\ii\R\cup S^\rho$ is oriented as on Figure~\ref{Psi hat problem pic};
    \item one-sided traces $\hat{\Psi}_{\pm}(\lambda)$ defined by
    \begin{align*}
        \hat{\Psi}_{\pm}(\lambda) := \lim_{\{ \pm \text{ side of } \Gamma_\rho \} \ni \lambda' \to \lambda \in \Gamma_\rho } \hat{\Psi}(\lambda'),
    \end{align*}
    where a subscript $+$ (resp. $-$) refers to a limit to the oriented contour from its left (resp. right) side, exist a.e. on \( \Gamma_\rho \), belong to \( L^2(\Gamma_\rho) \), and satisfy
    \begin{align*}
        \hat{\Psi}_+(\lambda) = \hat{\Psi}_-(\lambda) G_{\hat{\Psi}}, \quad \lambda \in \Gamma_\rho,
    \end{align*}
    where the jump matrices $G_{\hat{\Psi}}(\lambda)$ on \( \Gamma_\rho \) are as on Figure \ref{Psi hat problem pic};
    \item it holds that
    \begin{align}
    \label{Psi hat as lambda -> 0}
        \hat{\Psi}(\lambda) = \begin{cases}
            (I + \O(1/\lambda)) e^{\frac{\ii x^2 \lambda}{16} \sigma_3} &  \text{as} \quad \lambda\to\infty, \smallskip \\
            P_0 (I + \O(\lambda)) e^{-\frac{\ii}{\lambda} \sigma_3} & \text{as} \quad \lambda\to0,
        \end{cases}
    \end{align}
    where \( P_0 := \begin{pmatrix} \cosh \frac u2 & \sinh \frac u2 \\ \sinh \frac u2 & \cosh \frac u2\end{pmatrix} \).
\end{enumerate}
\end{RHP}

That is, the monodromy data \( m \) determines the jump matrices \( G_{\hat\Psi} \) and the unique solution \( \hat\Psi(\lambda) \) of this Riemann-Hilbert problem determines the corresponding solution \( u(x;m) \) via its asymptotics at the origin.


\subsection{Renormalized Riemann-Hilbert Problem}

Take \( \rho = 4/x \) in RHP~\ref{rhp original} and apply the following gauge transformation:
\begin{align}
    Y(\lambda) = \hat{\Psi}\left( \frac{4\lambda}{x} \right) e^{-x \theta(\lambda)\sigma_3}, \quad \theta(\lambda) := \frac{\ii}{4}\left(  \lambda - \frac1\lambda \right).
    \label{defn of Y}
\end{align}
Then \( Y(\lambda) \) is the solution of the following Riemann-Hilbert problem.
\begin{RHP}
\label{RHP Y}
Find a $2 \times 2$ matrix function $Y(\lambda)$ such that
\begin{enumerate}
    \item $Y(\lambda)$ is analytic in $\C \setminus \Gamma$, where $\Gamma=\ii\R\cup S^1$ is oriented as on Figure~\ref{Y problem pic};
    \item one-sided traces $Y_\pm(\lambda)$ exist a.e. on \( \Gamma \), belong to \( L^2(\Gamma) \), and satisfy
    \begin{align*}
        Y_+(\lambda) = Y_-(\lambda) G_{Y}(\lambda), \quad \lambda \in \Gamma,
    \end{align*}
    where the jump matrices $G_{Y}(\lambda)$ on \( \Gamma \) are as on Figure~\ref{Y problem pic};
    \item it holds that
    \begin{align*}
        Y(\lambda) = \begin{cases}
            I + \O(1/\lambda)  &  \text{as} \quad \lambda\to\infty, \smallskip \\
            P_0 (I + \O(\lambda)) & \text{as} \quad \lambda\to0.
        \end{cases}
    \end{align*}
\end{enumerate}
\end{RHP}

\begin{figure}[ht!]
    \centering
    \includegraphics[width=8cm]{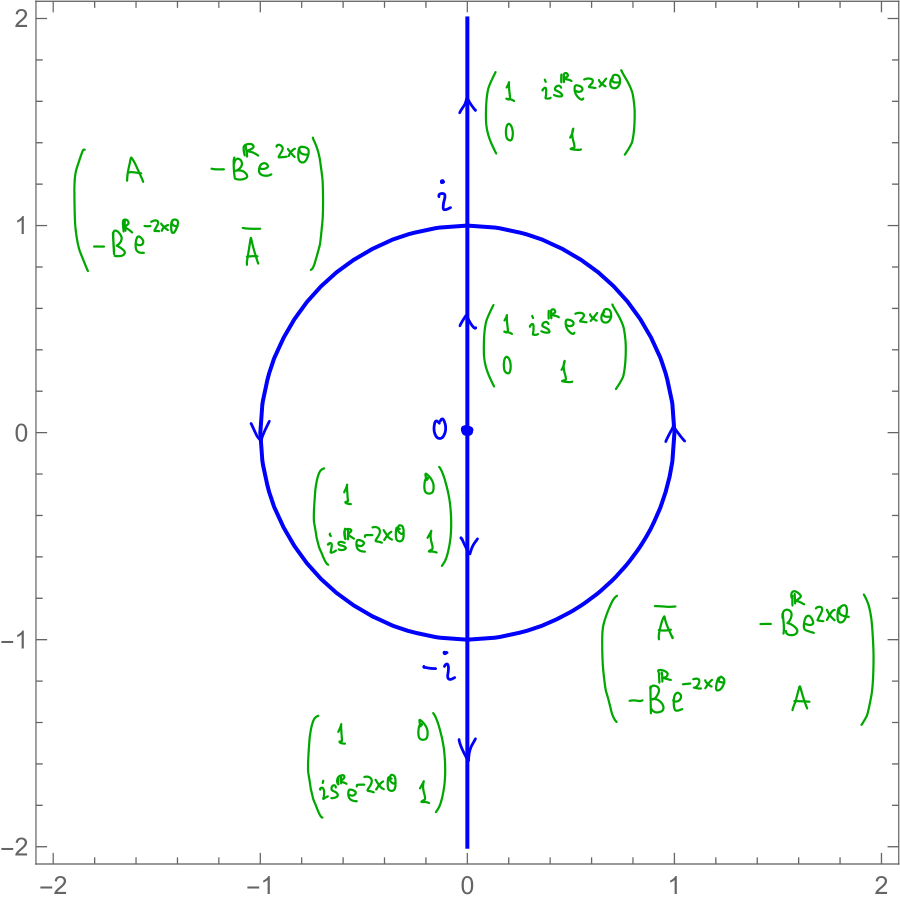}
    \caption{The jump matrices $G_Y(\lambda)$ for RHP~\ref{RHP Y}.}
    \label{Y problem pic}
\end{figure}

It holds on the imaginary axis that
\begin{equation}
\label{theta on imaginary}
4\theta(\ii y)=-(y+1/y), \quad y\in\R,
\end{equation}
while for \( \lambda\in S^1 \) we have that \( 2\theta(\lambda) =-\Im(\lambda)\) and therefore
\begin{equation}
\label{B is almost small}
B^\R e^{\pm2x\theta(\lambda)} = e^{-x(\varkappa\pm\Im(\lambda))} \leq e^{-(\varkappa-1)x}.
\end{equation}
Furthermore, recall that \( \Re(A)>0 \). It readily follows from \eqref{mon data} that
\[
\Re(A)^2 = 1 + \left(1 - \tfrac 14(s^\R)^2\right)e^{-2\varkappa x} \quad \Rightarrow \quad \left(\tfrac 14(s^\R)^2-1\right)e^{-2\varkappa x}\leq 1.
\]
Trivially, we have that
\[
|A-1| \leq |\Re(A)-1| + |\Im(A)| = \left|1 - \tfrac 14(s^\R)^2\right| \frac{e^{-2\varkappa x}}{1+\Re(A)} + \frac12 |s^\R|e^{-\varkappa x}.
\]
Hence, if \( |s^\R|\leq 2 \), then \( |A-1|< 2e^{-\varkappa x} \). On the other hand, if \( |s^\R|\geq 2 \), then
\[
|A-1| \leq \left(\sqrt{\tfrac 14(s^\R)^2-1}+\frac12 |s^\R|\right)e^{-\varkappa x} \leq |s^\R| e^{-\varkappa x}.
\]
Therefore, we can conclude that
\begin{align}
\label{A is almost 1}
|A-1| \leq 2|s^\R|_+ e^{-\varkappa x}.
\end{align}
Thus, the jump \( G_Y(\lambda) \) is close \( I \) when \eqref{D1} holds. We make this statement more precise in the next subsection.


\subsection{Asymptotic Analysis} 
\label{AAI}

Recall that \( \varkappa_1=\min\{1,\varkappa-1\}\). Throughout the paper,
\[
\|A\|_{L^2(\Gamma)}^2 := \int_{\Gamma}\|A(\lambda)\|^2|d\lambda| \quad \text{and} \quad  \|A\|_{L^\infty(\Gamma)} := \mathrm{ess}\sup_{\Gamma}\|A(\lambda)\|,
\]
where \( \|A\| \) is the Frobenius norm of a matrix \( A \).

We get from \eqref{theta on imaginary}, \eqref{B is almost small}, and \eqref{A is almost 1} that
\begin{equation}
\label{G_Y-I infinity norm}
\|G_Y-I\|_{L^\infty(\Gamma)} \leq 2|s^\R|_+ e^{-\varkappa_1x}.
\end{equation}
If \( s^\R=0\), there is no jump on the imaginary axis and the above estimate holds with \( \varkappa_1 \) replaced by \( \varkappa-1\). Observe that
\begin{equation}
\label{L2 norm estimate 1}
\frac1{2\pi}\int_{S^1} e^{\pm 4x\theta(\lambda)}|d\lambda| = \frac1{2\pi}\int_{-\pi}^\pi e^{\mp 2x\cos y}dy =  I_0(2x) \leq C_1\frac{e^{2x}}{\sqrt x},
\end{equation}
for some constant \( C_1 \), see \cite[(10.32.1)]{NIST:DLMF} and the remark right after Theorem~\ref{thm:1}, where to get the first equality, one needs to make a substitution \( \lambda \mapsto \ii\lambda \) first. Moreover, we get from \eqref{theta on imaginary} that
\begin{align}
\int_0^{\pm\ii\infty} e^{\pm 4x\theta(\lambda)}|d\lambda| & = \left(\int_0^1 + \int_1^\infty\right) e^{-x(y + \frac{1}{y})} dy = 2\int_1^\infty \frac{t e^{-2xt}}{\sqrt{t^2-1}}dt \\
& = 2e^{-2x}\int_0^\infty \frac{(y+1)e^{-2x y}}{\sqrt{y(y+2)}}dy \leq 2e^{-2x}\int_0^\infty \sqrt{\frac{y+2}{y}} e^{-2x y}dy \\
& = 2\frac{e^{-2x}}{x} \int_0^\infty \sqrt{\frac{t+2x}t} e^{-2t}dt \leq 2\frac{e^{-2x}}{\sqrt x} \int_0^\infty \sqrt{\frac{t+2}t} e^{-2t}dt,
\label{L2 norm estimate 2}
\end{align}
where to deduce the second equality we made substitutions \( y=t\pm\sqrt{t^2-1}\) depending on the range of \( y \) and used \( x\geq1 \) on the last step. It readily follows from \eqref{A is almost 1}, \eqref{L2 norm estimate 1}, and \eqref{L2 norm estimate 2} that
\begin{equation}
\label{G_Y-I L2 norm}
\| G_Y - I \|_{L^2(\Gamma)} \leq C_2 |s^\R|_+ x^{-\frac14}e^{-\varkappa_1x}
\end{equation}
for some absolute constant \( C_2 \). Again, If \( s^\R=0\), \eqref{L2 norm estimate 2} does not apply and \eqref{L2 norm estimate 1} yields that \eqref{G_Y-I L2 norm} holds with \( \varkappa_1 \) simply replaced by \( \varkappa-1\). 

Next, by the small norm theorem, see for example \cite[Theorem~8.1]{FIKN}, \eqref{G_Y-I infinity norm} implies that there exists a unique solution of RHP~\ref{RHP Y} for all $\varkappa_1 x$ large and
\begin{align}
    Y(\lambda) = I + \frac{1}{2 \pi \ii} \int_\Gamma \frac{\rho(\lambda')(G_{Y}(\lambda') - I)}{\lambda' - \lambda} d\lambda', \quad \lambda \notin \Gamma, \label{gl Y singular int eq}
\end{align}
where $\ds \rho(\lambda') = Y_-(\lambda^\prime)$ for $\lambda' \in \Gamma$. More precisely, if we denote by \( \mathcal C_\Gamma \) an operator from \( L^2(\Gamma)\) into itself that acts according to the rule
\begin{align}
\label{operator C Gamma}
\begin{aligned}
(\mathcal C_\Gamma F)(\mu) &:= \left(\mathcal C_- \left[ F(G_Y - I) \right] \right) (\mu)\\
&= \lim_{\lambda\to\mu\in \Gamma^-}\frac{1}{2 \pi \ii} \int_\Gamma \frac{F(\lambda')(G_{Y}(\lambda') - I)}{\lambda' - \lambda} d\lambda',
\end{aligned}
\end{align}
where \( \mathcal C_- \) is the Cauchy operator that sends a function into the trace on \( \Gamma^- \) of its Cauchy integral, then, by taking the limit \( \lambda\to\mu\in\Gamma^- \), we can rewrite \eqref{gl Y singular int eq} as
\[
\rho = I + \mathcal C_\Gamma \rho.
\]
In fact, the solvability of RHP~\ref{RHP Y} is equivalent to the solvability of this integral equation. Clearly, \( \|\mathcal C_\Gamma \| \leq \|\mathcal C_- \| \|G_Y-I\|_{L^\infty(\Gamma)} \)  (it is known that \( \mathcal C_- \) is a bounded operator on \( L^2(\Gamma)\), see Appendix~\ref{s:co}). Hence, according to \eqref{G_Y-I infinity norm}, we can choose \( D_1 \) in \eqref{D1} large enough so that \(\|\mathcal C_\Gamma \|\leq \tfrac12\) (the specific value \(\tfrac12\) is not important, any value in \( (0,1)\) will do). Then, using the von Neumann series, we get that
\begin{equation}
\label{von Neumann}
\rho = (\mathcal I-\mathcal C_\Gamma)^{-1}I = I + \sum_{m=1}^\infty \mathcal C_\Gamma^m I,
\end{equation}
where \( \mathcal I\) is the identity operator. In particular,
\begin{equation}
\label{gl rho_Y - I has a small norm}
\|\rho - I\|_{L^2(\Gamma)} \leq 2 \|\mathcal C_- (G_Y-I)\|_{L^2(\Gamma)} \leq C_3 |s^\R|_+ x^{-\frac14}e^{-\varkappa_1x}
\end{equation}
by \eqref{G_Y-I L2 norm} for some absolute constant \( C_3 \). 

Since \( G_{Y}(\lambda') - I \) vanishes exponentially at the origin, \( Y_+(0) = Y_-(0) =: Y(0)\) and one has that
\begin{align}
    Y(0) = I + \frac{1}{2\pi \ii} \int_\Gamma \frac{G_Y(\lambda') - I}{\lambda'} d\lambda'  + \frac{1}{2\pi \ii} \int_\Gamma \frac{(\rho(\lambda') - I)(G_Y(\lambda') - I)}{\lambda'} d\lambda.
\end{align}
Notice that the supremum norm of \( (G_Y(\lambda')-I)/\lambda'\) on \( \ii [-\tfrac12,\tfrac12]\) has an upper bound as in \eqref{G_Y-I L2 norm} because
\[
y^{-2}e^{4x\theta(\ii y)}\leq y^{-2}e^{-\frac xy} \leq 4e^{-2x} \leq 4x^{-\frac14}e^{-x}, \quad y\in\big[0,\tfrac12\big].
\]
Hence, \eqref{G_Y-I L2 norm} also holds with \( G_Y(\lambda')-I\) replaced by \( (G_Y(\lambda')-I)/\lambda'\). Thus,
\begin{align}
\label{representation Y(0)}
Y(0) = I + \frac{1}{2\pi \ii} \int_\Gamma \frac{G_Y(\lambda') - I}{\lambda'} d\lambda' +  \O\left( \frac{e^{-2\varkappa_1x}}{\sqrt x} \right)
\end{align}
by Cauchy-Schwarz inequality and \eqref{gl rho_Y - I has a small norm}, where, again, the constant in \( \O(\cdot) \) is an absolute multiple of \( |s^\R|_+ \). On the imaginary axis, we have that
\[
\frac{1}{2\pi \ii} \int_{\Gamma\cap\ii\R} \frac{G_Y(\lambda')-I}{\lambda'}d\lambda'  = \left( \frac{s^\R}{2\pi }\int_0^{\ii\infty} e^{2x\theta(\lambda^\prime)}\frac{d\lambda^\prime}{\lambda^\prime} \right) \sigma_1,
\]
where we made a substitution \( \lambda^\prime \mapsto -\lambda^\prime \) in the \((2,1)\)-entry. Using \eqref{theta on imaginary} and \cite[(10.32.9)]{NIST:DLMF} we can rewrite the above integral as
\begin{equation}
    \label{integral K0}
    \frac{s^\R}{2 \pi} \int_{0}^{\infty} e^{-\frac{x}{2}(y + \frac{1}{y})} \frac{dy}{y} = \frac{s^\R}{2\pi} \int_{-\infty}^\infty e^{-x\cosh s}ds = \frac{s^\R}\pi K_0(x).
\end{equation}
On the other hand, on the unit circle, we get that
\begin{align*}
\frac{1}{2\pi \ii} \int_{S^1} \frac{G_Y(\lambda')-I}{\lambda'}d\lambda'  & = \big(\Re(A)-1\big) I - \left(\frac{B^\R}{2\pi \ii} \int_{S^1} e^{2x\theta(\lambda^\prime)}\frac{d\lambda^\prime}{\lambda^\prime} \right) \sigma_1 \\
& = \big(\Re(A)-1\big) I - B^\R I_0(x)\sigma_1,
\end{align*}
where we again made a substitution \( \lambda^\prime \mapsto -\lambda^\prime \) in the \((2,1)\)-entry and the second equality follows from a computation identical to \eqref{L2 norm estimate 1}. Altogether, we have shown that
\begin{equation}
    Y(0) = \Re(A) I + \left( \frac{s^\R}\pi K_0(x) - B^\R I_0(x)\right)\sigma_1 + \O\left( \frac{e^{-2\varkappa_1x}}{\sqrt x} \right),
    \label{gl Y asymptotic eq 2}
\end{equation}
 where the constant in \( \O(\cdot) \) is an absolute multiple of \( |s^\R|_+ \). 
 
It immediately follows from RHP~\ref{RHP Y}(3) and the definition of \( P_0 \) in RHP~\ref{rhp original}(3) that
\begin{equation}
    Y(0) = P_0 = \begin{pmatrix}
        \cosh \frac{u}{2} & \sinh \frac{u}{2}\\
        \sinh \frac{u}{2} & \cosh \frac{u}{2}
    \end{pmatrix}. \label{gl Y asymptotic eq 3}
\end{equation}
By comparing \eqref{gl Y asymptotic eq 2} with \eqref{gl Y asymptotic eq 3}, we get from \eqref{A is almost 1} that
\[
u(x) = 2\ln \left[ 1 + \frac{s^\R}\pi K_0(x) - B^\R I_0(x) + \O\left( \frac{e^{-2\varkappa_1x}}{\sqrt x} \right) \right]
\]
for all \( \varkappa_1 x \geq D_1 + \ln|s^\R|_+ \). Since it holds that
\[
\left|\frac{s^\R}\pi K_0(x) - \varepsilon B^\R I_0(x)\right| \leq |s^\R|_+\frac{C_4}2 \frac{e^{-x}+e^{-\varkappa_1 x}}{\sqrt x} \leq |s^\R|_+C_4\frac{e^{-\varkappa_1 x}}{\sqrt x}
\]
for some absolute constant \( C_4 \), we have that
\[
u(x) = \frac{2 s^\R}\pi K_0(x) - 2 e^{-\varkappa x} I_0(x) + \O\left( \frac{e^{-2\varkappa_1x}}{\sqrt x} \right)
\]
in the considered range. If we no longer assume that \( B^\R>0 \) and \( \Re(A)>0 \) in \eqref{mon data}, then \eqref{symmetries of u} together with the above formula gives
\[
u\big(x, (A,s^\R,B^\R)\big) = \varepsilon \left( \frac{2 \varepsilon s^\R}\pi K_0(x) - 2 e^{-\varkappa x} I_0(x) \right) + \O\left( \frac{e^{-2\varkappa_1x}}{\sqrt x} \right),
\]
where the constant in \( \O(\cdot) \) is an absolute multiple of \( |s^\R|_+ \), which is exactly \eqref{asymptotic formula 1} in the considered range. Finally, notice that the remarks made right after \eqref{G_Y-I infinity norm} and \eqref{G_Y-I L2 norm} imply that when \( s^\R=0 \), the above formula holds with \( \varkappa_1 \) replaced by \( \varkappa-1\) and is valid for all \( (\varkappa-1)x\geq D_1\). This finishes the proof of Theorem~\ref{thm:1}.

\section{Proof of Theorem~\ref{thm:2}}

In this section, we continue to work with the matrix function \( Y(\lambda) \) from \eqref{defn of Y}. We keep using all the notation introduced during the proof of Theorem~\ref{thm:1}.

\subsection{Modified Riemann-Hilbert Problem}
\label{ss:3.1}

It is clear from the analysis of the previous section that the jump matrices in RHP~\ref{RHP Y}(2) are close to the identity except in the vicinity of \( \pm\ii \) when \( \varkappa=1\). To overcome this difficulty, we first isolate the parts of the jumps that cause the difficulty. To this end, let \( U_\ii \) and \( U_{-\ii} \) be disks of fixed radius \( \varrho\in(0,1)\) centered at \( \ii \) and \( -\ii \), respectively. We orient \( \partial U_\ii \) and \( \partial U_{-\ii} \) {\it clockwise}. Define \( \tilde Y(\lambda) \) to be equal to \( Y(\lambda) \) everywhere except within \( U_\ii\cap\{\Re(\lambda) < 0\} \) and \( U_{-\ii}\cap\{\Re(\lambda)>0\} \),  where we set
\begin{equation}
\label{define W tilde}
\tilde Y(\lambda) := Y(\lambda) \begin{cases}  
\begin{pmatrix} 1 & -\ii s^\R e^{2 x \theta(\lambda)} \\ 0 & 1 \end{pmatrix},
& \lambda \in U_\ii\cap\{\Re(\lambda) < 0\}, \medskip \\
\begin{pmatrix} 1 & 0 \\ -\ii s^\R e^{-2x \theta(\lambda)} & 1 \end{pmatrix},
& \lambda \in U_{-\ii}\cap\{\Re(\lambda)>0\}.
\end{cases}
\end{equation}
Clearly, this modification of \( Y(\lambda) \) moves its jump away from the imaginary axis within \( U_{\pm\ii}\) to \( \partial U_{\pm\ii} \). Moreover, using \( A-\overline A=\ii s^\R B^\R \), one can readily verify that
\[
\tilde Y_+(\lambda) = \tilde Y_-(\lambda)e^{x\theta(\lambda)\sigma_3}\begin{Bmatrix}
    \begin{pmatrix}
    \overline A & -B^\R \\ 
    -B^\R & A
    \end{pmatrix}, &\lambda \in U_\ii\cap S^1 \medskip \\
    \begin{pmatrix}
    A & -B^\R \\
    -B^\R & \overline A
    \end{pmatrix}, & \lambda \in U_{-\ii}\cap S^1 \end{Bmatrix} e^{-x\theta(\lambda)\sigma_3}.
\]
That is, the middle matrices above no longer depends on the considered half-plane. Next, using \( |A|^2=1+(B^\R)^2\), we can obtain that
\begin{align}
\begin{pmatrix}
\overline A & -B^\R \\
-B^\R & A 
\end{pmatrix} = 
\begin{pmatrix}
1/A & - B^\R \\
0 & A
\end{pmatrix}
\begin{pmatrix}
1 & 0 \\
-B^\R/A & 1
\end{pmatrix}
\end{align}
and
\begin{align}
\begin{pmatrix}
A & -B^\R \\
-B^\R & \overline A 
\end{pmatrix} =
\begin{pmatrix}
A & 0 \\
- B^\R & 1/A
\end{pmatrix}
\begin{pmatrix}
1 & -B^\R/A \\
0 & 1
\end{pmatrix}.
\end{align}
Therefore, we set
\begin{equation}
\label{define W hat}
\hat Y(\lambda) := \tilde Y(\lambda) \begin{cases} 
    \begin{pmatrix} 
    1/A & -B^\R e^{2x \theta(\lambda)} \\
    0 & A
    \end{pmatrix}, & \lambda\in U_\ii\cap \{|\lambda|>1\}, \medskip \\
    \begin{pmatrix}
    A & 0\\
    -B^\R e^{-2x \theta(\lambda)} & 1/A
    \end{pmatrix}, & \lambda\in U_{-\ii} \cap \{|\lambda|>1\},
\end{cases}
\end{equation}
and let \( \hat Y(\lambda) \) be equal to \( \tilde Y(\lambda) \) everywhere else. Clearly, \( \hat Y(\lambda) \) still satisfies RHP~\ref{RHP Y} except for RHP~\ref{RHP Y}(2), which now only holds on \( \Gamma \setminus (\overline U_\ii \cup \overline U_{-\ii}) \), while
\begin{equation}
    \label{jump Y hat on S1}
    \hat Y_+(\lambda) = \hat Y_-(\lambda) \begin{cases}
    \begin{pmatrix}
    1 & 0 \\ 
    -(B^\R/A)e^{-2x\theta(\lambda)} & 1
    \end{pmatrix}, &\lambda \in U_\ii\cap S^1 \medskip \\
    \begin{pmatrix}
    1 & -(B^\R/A)e^{2x\theta(\lambda)} \\
    0 & 1
    \end{pmatrix}, & \lambda \in U_{-\ii}\cap S^1, \end{cases}
\end{equation}
and the jump relations on \( \partial U_\ii\cup\partial U_{-\ii} \) easily follow from \eqref{define W tilde} and \eqref{define W hat}, see Figure~\ref{Y-hat problem pic}. 
\begin{figure}[ht!]
    \centering
    \includegraphics[width=12cm]{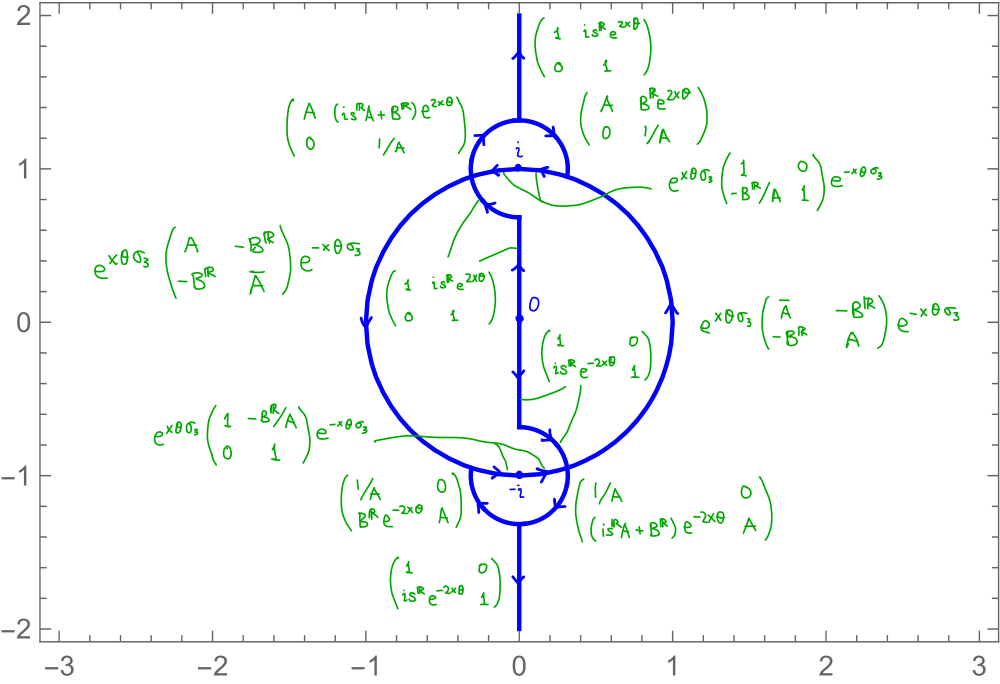}
    \caption{The contour $\hat\Gamma$ and the jump matrices \( G_{\hat Y}(\lambda) \).}
    \label{Y-hat problem pic}
\end{figure}
Thus, \( \hat Y(\lambda) \) solves the following Riemann-Hilbert problem.
\begin{RHP}
\label{RHP Y hat}
Find a $2 \times 2$ matrix function $\hat Y(\lambda)$ such that
\begin{enumerate}
    \item $\hat Y(\lambda)$ is analytic in $\C \setminus \hat\Gamma$, where contour $\hat\Gamma$ is depicted on Figure~\ref{Y-hat problem pic};
    \item one-sided traces $\hat Y_\pm(\lambda)$ exist a.e. on \( \hat\Gamma \), belong to \( L^2(\hat\Gamma) \), and satisfy
    \begin{align*}
        \hat Y_+(\lambda) = \hat Y_-(\lambda) G_{\hat Y}(\lambda), \quad \lambda \in \hat\Gamma,
    \end{align*}
    where the jump matrices $G_{\hat Y}(\lambda)$ on \( \hat\Gamma \) are as on Figure~\ref{Y-hat problem pic};
    \item it holds that
    \begin{align*}
        \hat Y(\lambda) = \begin{cases}
            I + \O(1/\lambda)  &  \text{as} \quad \lambda\to\infty, \smallskip \\
            P_0 (I + \O(\lambda)) & \text{as} \quad \lambda\to0.
        \end{cases}
    \end{align*}
\end{enumerate}
\end{RHP}

\subsection{Local Parametrices}
\label{subsec:local_par1}

In this subsection we construct matrices \( P^{(\pm \ii)}(\lambda) \) that have the same jump on \( S^1\cap U_{\pm\ii} \) as \( \hat Y(\lambda) \) and have a certain prescribed behavior on \( \partial U_{\pm\ii} \).

\subsubsection{Model Local Parametrix}

Let \( \mathrm{erfc}(\cdot)\) be the complimentary error function, see \cite[(7.2.2)]{NIST:DLMF}. Set
\begin{align}
\label{defn of He0}
He_0(z) := \begin{pmatrix}
    1 & \mathcal H_0(z) \\ 0 & 1
\end{pmatrix}, \quad z \in \C \setminus \R,
\end{align}
where \( \mathcal H_0(z) \) is a sectionally holomorphic function in \( \C\setminus\R \) given by
\begin{align}
\label{defn of H0}
\mathcal H_0(z) := \pm \tfrac12 \mathrm{erfc}(\mp \ii z)e^{-z^2}, \quad \pm\Im(z)>0.
\end{align}
It readily follows from \cite[(7.4.2)]{NIST:DLMF} that
\begin{equation}
\label{H0 jump}
\mathcal H_{0+}(s) - \mathcal H_{0-}(s) = e^{-s^2}, \quad s\in\R.
\end{equation}
Moreover, it follows from \cite[(7.12.1)]{NIST:DLMF} that \( \mathcal H_0(z) \) has the following asymptotic expansion
\begin{equation}
\label{H0 expansion}
\mathcal H_0(z) \sim \frac{-1}{2\ii \sqrt{\pi}}\sum_{k=0}^\infty \frac{(1/2)_k}{z^{2k+1}},
\end{equation}
which holds uniformly for all \( |z| \) separated away from \( 0 \). 

Furthermore, set \( \gamma_k = 2^k/k! \) for \( k\geq 0\). Given a natural number \( m>0 \), define
\[
He_m(z) = \begin{pmatrix} p_m (z) & \mathcal{H}_m(z) \\ -2\ii \sqrt{\pi} \gamma_{m-1} p_{m-1}(z) & -2\ii \sqrt{\pi} \gamma_{m-1} \mathcal{H}_{m-1}(z) \end{pmatrix}, \quad z \in \C \setminus \R,
\]
where \( p_m(z) \) is the monic Hermite orthogonal polynomial of degree \( m \):
\[
\int_\R p_m(s)s^k e^{-s^2}ds =0, \quad k\in\{0,\ldots,m-1\},
\]
(\( p_m(z)=2^{-m}H_m(z)\) in the notation of \cite[Table 18.3.1]{NIST:DLMF}) and
\begin{align}
\label{defn of Hm}
\mathcal{H}_m(z) := \frac{1}{2\pi \ii} \int_\R \frac{p_m(s) e^{-s^2}}{s - z} ds.
\end{align}
It readily follows from Sokhotski-Plemelj formulae that
\begin{equation}
\label{Hm jump}
\mathcal H_{m+}(s) - \mathcal H_{m-}(s) = p_m(s)e^{-s^2}, \quad s\in\R.
\end{equation}
Since \( p_0(z)\equiv 1\), \eqref{H0 jump} and \eqref{Hm jump} with \( m =0 \) are the same. As both expressions \eqref{defn of H0} and \eqref{defn of Hm} yield functions vanishing at infinity, one can deduce that \eqref{defn of Hm} when \( m =0 \) specializes to \eqref{defn of H0}. It is known, see \cite[Equation 18.5.13]{NIST:DLMF}, that
\begin{align}
\label{Hermite sum}
p_m(z)  = \left( \sum_{l=0}^{\left\lfloor m/2\right\rfloor} \frac{\gamma_{m-2l}}{\gamma_m}  \frac{(-1)^{l}}{l!} z^{-2l} \right) z^m.
\end{align}
Furthermore, it follows from orthogonality that
\[
\mathcal H_m(z) = -\frac{1}{2\pi \ii} \sum_{l=m}^{N-1} \frac{1}{z^{l+1}} \left( \int_\R p_m(s) \, s^l e^{-s^2} ds \right) + \frac1{z^N}\frac{1}{2\pi \ii} \int_\R \frac{s^N p_m(s) e^{-s^2}}{s - z} ds
\]
for any \( N>m\). Notice that if \( l \) and \( m \) have different parity, the integrals in the sum above are zero. Otherwise, we can compute using \cite[Equations~(12) and (13)]{BDQ} that
\begin{align}
\int_\R p_m(s) \, s^{2k+m} e^{-s^2} ds = \frac{\sqrt \pi}{\gamma_{2k+m} k!}
\end{align}
for \( k\geq0 \). Therefore, it holds that
\begin{equation}
\label{Hm expansion}
\mathcal H_m(z) \sim \frac{-1}{2\ii \sqrt{\pi}} \left(\sum_{k=0}^\infty \frac{1}{\gamma_{2k+m}k!} \frac{1}{z^{2k+1}} \right) z^{-m},
\end{equation}
uniformly for all \( |z| \) large (one can readily check that expansions \eqref{H0 expansion} and \eqref{Hm expansion} coincide when \( m=0 \)). Altogether, we see that the matrix function \( He_m(z) \) solves the following Riemann-Hilbert problem.
\begin{RHP} 
\label{RHP Hermite}
Given a non-negative integer $m$, find a $2 \times 2$ matrix function $He_m(z)$ such that
\begin{enumerate}
    \item $He_m(z)$ is analytic in $\C \setminus \R$;
    \item there exist continuous traces $He_{m\pm}(s)$ on \( \R \) that satisfy
    \begin{align}
        He_{m+}(s) = He_{m-}(s) \begin{pmatrix}
            1 & e^{-s^2} \\
            0 & 1
        \end{pmatrix}, \quad s \in \R;
    \end{align}
    \item it holds as $z \to \infty$ that
    \[
    He_m(z) = \left( I + \frac1z \begin{pmatrix}
         0 & -1/(2\ii \sqrt{\pi}\gamma_{m}) \\
        -2\ii \sqrt{\pi} \gamma_{m-1} & 0
    \end{pmatrix} + \O_m \big(z^{-2} \big) \right) z^{m\sigma_3},
    \]
    where \( \gamma_{-1}=0 \) and we point out that the coefficient matrices in parenthesis above are off diagonal next to odd powers of \( z \) and diagonal next to even powers of \( z \) by \eqref{Hermite sum} and \eqref{Hm expansion}.
\end{enumerate}
\end{RHP}

\subsubsection{Local Parametrix around \( \ii \)}

To use model parametrices \( He_m(z) \) from the previous subsection, we need local conformal maps around \( \pm \ii \). To this end, observe that
\begin{equation}
\label{theta to map}
1 \pm 2\theta(\lambda) = \pm \frac{\ii}{2\lambda}(\lambda \mp \ii)^2 =: \phi_{\pm\ii}^2(\lambda),
\end{equation}
where \( \phi_{\pm \ii}(\lambda) \) is conformal around \( \pm \ii\), in particular, in the disk \( U_{\pm\ii}\), and
\begin{equation}
\begin{cases} \label{phi branch}
\pm \phi_\ii(\lambda) > 0, & \lambda \in U_\ii\cap S^1, \quad \mp\Re(\lambda)>0, \\ 
\pm \phi_{-\ii}(\lambda) > 0, & \lambda \in U_{-\ii}\cap S^1, ~~ \pm\Re(\lambda)>0,
\end{cases}
\end{equation}
which fixes the branches of the square roots (\(\phi'_\ii(\ii)=-1/\sqrt{2}\) and \(\phi'_{-\ii}(-\ii)=1/\sqrt{2}\)). 

Recall \eqref{D2} and \eqref{mon data}. That is,
\begin{align}
\label{B thm2}
B^\R = e^{-\varkappa x} = e^{-x}x^{m+\nu}.
\end{align}
We define local parametrix around \( \ii \) by
\begin{align}
\label{P(i)}
P^{(\ii)}(\lambda) := E^{(\ii)}(\lambda) He_m\big( \sqrt x\phi_\ii(\lambda)\big) (-x^{m+\nu}/A)^{-\frac12\sigma_3}\sigma_1 , \quad \lambda\in U_\ii,
\end{align}
where \( E^{(\ii)}(\lambda) \) is an analytic matrix function in \( U_{\ii} \) given by
\begin{align}
\label{Ei}    
E^{(\ii)}(\lambda) := \sigma_1 \left(-\frac{x^\nu}A\right)^{\frac12\sigma_3} \left( \phi_\ii(\lambda) \frac{\lambda+\ii}{\lambda-\ii}\right)^{-m\sigma_3}, \quad \lambda \in U_\ii,
\end{align}
where the square roots of \( -1/A \) in \eqref{P(i)} and \eqref{Ei} must coincide. Then, one can readily show that $P^{(\ii)}(\lambda)$ satisfies the following Riemann-Hilbert problem.

\begin{RHP}
\label{RHP Y loc i}
Find a $2 \times 2$ matrix function $P^{(\ii)}(\lambda)$ such that
\begin{enumerate}
    \item $P^{(\ii)}(\lambda)$ is analytic in $U_{\ii} \setminus S^1$;
    \item one-sided traces $P^{(\ii)}_\pm(\lambda)$ are continuous on \( U_{\ii} \cap S^1 \) and satisfy
    \begin{align}
        P^{(\ii)}_+(\lambda) = P^{(\ii)}_-(\lambda) \begin{pmatrix} 1 & 0 \\ -(B^\R/A) e^{-2x\theta(\lambda)} & 1 \end{pmatrix};
    \end{align}
    \item it holds for $\lambda \in \partial U_\ii$ that
    \begin{align}
        P^{(\ii)}(\lambda) = T_i^{(\ii)}(\lambda)\big( I + \O_m\big(x^{-1}\big) \big) \left(\frac{\lambda+\ii}{\lambda-\ii}\right)^{m\sigma_3}
    \end{align}
    for \( i\in\{1,2\} \), where
\begin{align}
    T_1^{(\ii)}(\lambda) = \begin{pmatrix}
    1 & \displaystyle \frac{t^{(\ii)}_{12}(\lambda;x)}{\lambda - \ii} \smallskip \\
    \displaystyle \frac{t^{(\ii)}_{21}(\lambda;x)}{\lambda - \ii} & \displaystyle 1 + \frac{t_{21}^{(\ii)}(\lambda;x) t_{12}^{(\ii)}(\lambda;x)}{(\lambda - \ii)^2}
\end{pmatrix}
\end{align}  
and
\begin{align}
    T_2^{(\ii)}(\lambda) = \begin{pmatrix}
    \displaystyle 1 + \frac{t_{21}^{(\ii)}(\lambda;x) t_{12}^{(\ii)}(\lambda;x)}{(\lambda - \ii)^2} & \displaystyle \frac{t^{(\ii)}_{12}(\lambda;x)}{\lambda - \ii} \smallskip \\
    \displaystyle \frac{t^{(\ii)}_{21}(\lambda;x)}{\lambda - \ii} & 1
\end{pmatrix},
\end{align}  
while $t^{(\ii)}_{12}(\lambda;x)$ and $t^{(\ii)}_{21}(\lambda;x)$ are holomorphic functions on $U_\ii$ given by
\begin{align}
\label{t i is small}
\begin{cases}
    t^{(\ii)}_{12}(\lambda;x) & \displaystyle  = 2\ii \sqrt{\pi} \gamma_{m-1} A \left( \frac{\phi_\ii(\lambda)}{\lambda - \ii}\right)^{2m-1} \frac{(\lambda + \ii)^{2m}}{x^{\nu+\frac12}}, \smallskip \\
    t^{(\ii)}_{21}(\lambda;x) & \displaystyle = \frac{1}{2\ii\sqrt{\pi}} \frac{1}{\gamma_m} \frac{1}{A} \left( \frac{\phi_\ii(\lambda)}{\lambda - \ii}\right)^{-2m-1} \frac{x^{\nu-\frac12}}{(\lambda + \ii)^{2m}}.
\end{cases}
\end{align}
\end{enumerate}
\end{RHP}

We only need to explain RHP~\ref{RHP Y loc i}(3). Assuming \( \lambda\in\partial U_{\ii} \), we get from RHP~\ref{RHP Hermite}(3), \eqref{P(i)}, and \eqref{Ei} that
\begin{multline}
P^{(\ii)}(\lambda) = \Bigg( I + \frac{1}{\sqrt{x} \phi_{\ii}(\lambda)} E^{(\ii)}(\lambda) \begin{pmatrix}
         0 & -1/(2\ii \sqrt{\pi}\gamma_{m}) \\
        -2\ii \sqrt{\pi} \gamma_{m-1} & 0
    \end{pmatrix} E^{(\ii)}(\lambda)^{-1}\\
    + \O_m \big(x^{-1}\big) \Bigg) \left(\frac{\lambda+\ii}{\lambda-\ii}\right)^{m\sigma_3},
\end{multline}
where we also used the fact that \( E^{(\ii)}(\lambda) = \O( x^{\frac12|\nu|}) \) and that the next term in the expansion is diagonal. Thus, it holds that
\begin{align} 
\label{P i on the bdry}
P^{(\ii)}(\lambda) = \left( I + \frac1{\lambda-\ii}\begin{pmatrix}
    0 & t^{(\ii)}_{12}(\lambda;x) \\
    t^{(\ii)}_{21}(\lambda;x) & 0
\end{pmatrix} + \O_m\big(x^{-1}\big) \right) \left(\frac{\lambda+\ii}{\lambda-\ii}\right)^{m\sigma_3}.
\end{align}
Observe that it holds uniformly in  \( \partial U_1 \) that
\begin{align}
\label{product of ts}
t^{(\ii)}_{12}(\lambda;x)t^{(\ii)}_{21}(\lambda;x) = \frac m{2x} \left( \frac{\lambda-\ii}{\phi_\ii(\lambda)} \right)^2.
\end{align}
Thus, it only remains to notice that each \( T_i^{(\ii)}(\lambda) \) has unit determinant and
\[
T_i^{(\ii)}(\lambda)^{-1} \left( I + \frac1{\lambda-\ii}\begin{pmatrix}
    0 & t^{(\ii)}_{12}(\lambda;x) \\
    t^{(\ii)}_{21}(\lambda;x) & 0
\end{pmatrix} \right) = I + \O_m\left(\frac 1x\right).
\]

\subsubsection{Local Parametrix around \( -\ii \)}

The construction around \( -\ii \) is essentially identical. In fact, we simplify the presentation even further by setting
\begin{align}
\label{P-i}
P^{(-\ii)}(\lambda) := \sigma_1 P^{(\ii)}(-\lambda) \sigma_1, \quad \lambda\in U_{-\ii}.
\end{align}
Then, $P^{(-\ii)}(\lambda)$ satisfies the following Riemann-Hilbert problem.
\begin{RHP}
\label{RHP Y loc -i}
Find a $2 \times 2$ matrix function $P^{(-\ii)}(\lambda)$ such that
\begin{enumerate}
    \item $P^{(-\ii)}(\lambda)$ is analytic in $U_{-\ii} \setminus S^1$;
    \item one-sided traces $P^{(-\ii)}_\pm(\lambda)$ are continuous on \( U_{-\ii} \cap S^1 \) and satisfy
    \begin{align}
        P^{(-\ii)}_+(\lambda) = P^{(-\ii)}_-(\lambda) \begin{pmatrix} 1 & -(B^\R/A) e^{2x\theta(\lambda)} \\ 0 & 1 \end{pmatrix};
    \end{align}
    \item it holds for $\lambda \in \partial U_{-\ii}$ that
    \begin{align}
        P^{(-\ii)}(\lambda) = T_i^{(-\ii)}(\lambda)\big( I + \O_m\big(x^{-1}\big) \big) \left(\frac{\lambda+\ii}{\lambda-\ii}\right)^{m\sigma_3},
    \end{align}
    for \( i\in\{1,2\} \), where \( T_i^{(-\ii)}(\lambda) = \sigma_1 T_i^{(\ii)}(-\lambda) \sigma_1 \).
\end{enumerate}
\end{RHP}

\subsection{Small Norm Problem}

Recall that matrices \( P^{(\pm\ii)}(\lambda) \) are holomorphic in \( U_{\pm\ii}\), but \( T_i^{(\pm\ii)}(\lambda) \) have poles at the centers of the respective disks. Put
\begin{align}
\label{taus}
\tau_{12} = t^{(\ii)}_{12}(\ii;x) \quad \text{and} \quad \tau_{21} = t^{(\ii)}_{21}(\ii;x).
\end{align}
Observe that \( \tau_{12}\tau_{21} = mx^{-1} \) by \eqref{product of ts} and since \( |\phi_\ii^\prime(\ii)|=1/\sqrt 2\). Hence, one of these constants is less than one in absolute value when \( x>m \). We look for the solution of RHP~\ref{RHP Y hat} in the following form:
\begin{align}
\label{defn of Z hat}    
\hat Y(\lambda) =: \hat Z(\lambda) \begin{cases}
T_i^{(\pm\ii)}(\lambda)^{-1}P^{(\pm\ii)}(\lambda), & \lambda \in U_{\pm\ii}, \smallskip \\    
\displaystyle \left(\frac{\lambda+\ii}{\lambda-\ii}\right)^{m\sigma_3}, & \text{otherwise},
\end{cases}
\end{align}
where we take \( i=1 \) if \( |\tau_{12}|<1 \) and \( i=2 \) otherwise. Then, \( \hat Z(\lambda) \) must solve the following meromorphic Riemann-Hilbert problem.

\begin{RHP}
\label{RHP hat Z}
Find a $2 \times 2$ matrix function $\hat Z(\lambda)$ such that
\begin{enumerate}
    \item $\hat Z(\lambda)$ is analytic in $\C \setminus (\Gamma_Z\cup\{\pm\ii\})$, where $\Gamma_Z$ is depicted on Figure~\ref{hat Z problem pic};
    \item one-sided traces $\hat Z_{\pm}(\lambda)$ exist a.e. on \( \Gamma_Z \), belong to \( L^2(\Gamma_Z) \), and satisfy
    \[
        \hat Z_{+}(\lambda) = \hat Z_{-}(\lambda) G_Z(\lambda), 
    \]
    where the jump matrices \(G_Z(\lambda)\) are as on Figure~\ref{hat Z problem pic};
    \item it holds that \( \hat Z(\lambda) = I + \O(1/\lambda) \) as \( \lambda\to\infty \);
    \item $\hat Z(\lambda)$ has double poles at \( \ii \) and \( -\ii \) with principal parts such that the products \( \hat Z(\lambda)T_i^{(\pm\ii)}(\lambda)^{-1} \) are analytic in the respective disks \( U_{\ii} \) and \( U_{-\ii} \).
\end{enumerate}
\end{RHP}

\begin{figure}[ht!]
    \centering
    \includegraphics[width=8cm]{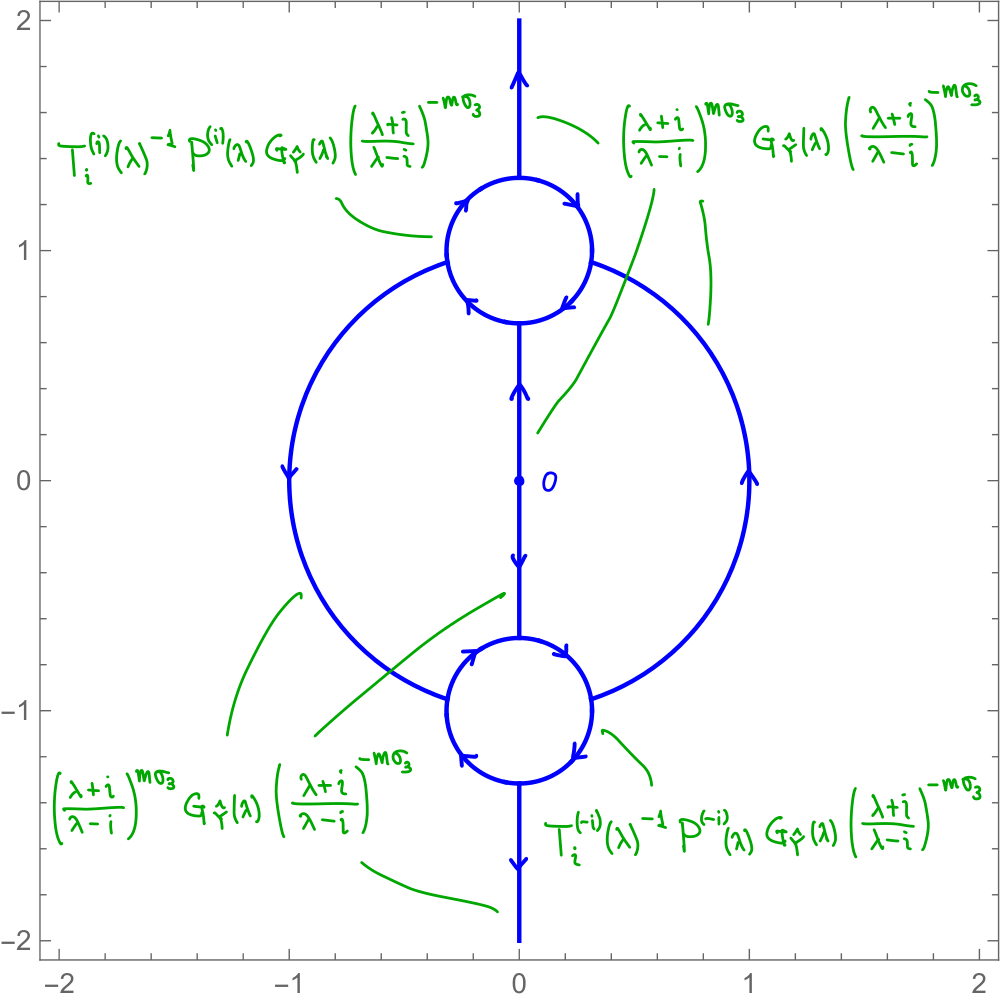}
    \caption{The contour $\Gamma_Z$ and the jump matrices $G_Z(\lambda)$. Note that $G_{\hat Y}(\lambda)$ is stated on Figure \ref{Y-hat problem pic}.}
    \label{hat Z problem pic}
\end{figure}

We look for the solution of RHP~\ref{RHP hat Z} in the form
\begin{align} 
\label{defn of Z}
    \hat Z(\lambda) = \left( I + \frac{\lambda B_1}{\lambda^2 + 1} + \frac{C_1}{\lambda^2 + 1}\right) \left( I + \frac{\lambda B_2}{\lambda^2 + 1} + \frac{C_2}{\lambda^2 + 1}\right) Z(\lambda),
\end{align}
where the matrices $B_k, C_k$, $k=1,2$, are independent of $\lambda$ and will be specified shortly, while $Z(\lambda)$ solves the following small norm Riemann-Hilbert problem.

\begin{RHP}
\label{RHP Z}
Find a $2 \times 2$ matrix function $Z(\lambda)$ such that
\begin{enumerate}
    \item $Z(\lambda)$ is analytic in $ \C \setminus \Gamma_Z$;
    \item one-sided traces $Z_{\pm}(\lambda)$ exist a.e. on \( \Gamma_Z \), belong to \( L^2(\Gamma_Z) \), and satisfy
    \[
        Z_{+}(\lambda) = Z_{-}(\lambda) G_{Z}(\lambda); 
    \]
    \item it holds that \( Z(\lambda) = I + \O(1/\lambda) \) as \( \lambda\to\infty \).
\end{enumerate}
\end{RHP}

Let us now show that RHP~\ref{RHP Z} is indeed a small norm problem and therefore has a solution for all sufficiently large \( x \) by the small norm theorem, see \eqref{gl Y singular int eq}. Since asymptotic formulae in  RHP~\ref{RHP Y loc i}(3) and RHP~\ref{RHP Y loc -i}(3) hold for each \( i\in\{1,2\} \), our choice of \( i \) does not affect the forthcoming computations (however, this choice will be important for the construction of the rational prefactor in \eqref{defn of Z}).

We start by specifying precisely the disks \( U_{\pm\ii} \). We take their radius \( \varrho \) to be \( \tfrac1{\sqrt 2}\). Then these disks intersect \( S^1 \) at points with imaginary parts \( \pm \tfrac34\). That is,
\begin{align}
\label{bound Ui}
|\Im(\lambda)| \leq \tfrac 34, \quad \lambda\in S^1\setminus(U_\ii\cup U_{-\ii})
\end{align}
(this particular choice of \( \varrho\) is motivated by the simplicity of the above bound and nothing else). It further follows from the maximum modulus principle and the triangle inequality that
\begin{equation}
\label{max on Gamma Z}
\max_{\lambda\in\Gamma_Z} \left| \frac{\lambda+\ii}{\lambda-\ii}\right|^{\pm 1} \leq \max_{\lambda\in\partial U_\ii} \left| \frac{\lambda+\ii}{\lambda-\ii}\right| \leq 1 + \frac{2}{\varrho} < e^{\frac32}.
\end{equation}

Now we are ready to estimate the deviation of the jump matrix \( G_{Z}(\lambda) \) from the identity matrix \( I \). Recall \eqref{theta on imaginary}. The form of \( G_{\hat Y}(\lambda)\) on the imaginary axis, see Figure~\ref{Y-hat problem pic}, and \eqref{max on Gamma Z} yield that
\begin{equation}
\label{G_z 1}
    \|G_{Z}-I\|_{L^\infty(\Gamma_Z\cap \ii\R)} \leq |s^\R|e^{3m-x}.
\end{equation}
Similarly, using \eqref{L2 norm estimate 2}, we obtain that
\begin{equation}
    \label{G_z 2}
    \|G_{Z}-I\|_{L^2(\Gamma_Z\cap \ii\R)} \leq C_1|s^\R| x^{-\frac14}e^{3m- x}
\end{equation}
for some absolute constant \( C_1 \) (the constants \( C_i\) appearing here are not related to identically labeled constants from the proof of Theorem~\ref{thm:1}). Meanwhile, on the unit circle, it holds that
\begin{align*}
    G_{Z}(\lambda) = \begin{pmatrix} 1+\O(|A-1|) & - x^{m+\nu} e^{-x(1+\Im(\lambda))} \big(\frac{\lambda+\ii}{\lambda-\ii} \big)^{2m} \\ -x^{m+\nu} e^{-x(1-\Im(\lambda))}  \big(\frac{\lambda-\ii}{\lambda+\ii} \big)^{2m} & 1+\O(|A-1|) \end{pmatrix},
\end{align*}
where we used \eqref{B thm2} and one needs to recall that \( 2\theta(\lambda) = -\Im(\lambda)\) for \( \lambda \in S^1 \). Thus, we get from \eqref{A is almost 1}, \eqref{bound Ui}, and \eqref{max on Gamma Z} that
\begin{equation}
    \label{G_z 3}
    \|G_{Z}-I\|_{L^\infty(\Gamma_Z\cap S^1)} \leq C_2 |s^\R|_+ x^{m+\nu} e^{3m-\frac14x}
\end{equation}
for some absolute constant \( C_2 \). Next, it readily follows from RHP~\ref{RHP Y loc i}(3) that
\[
G_{Z}(\lambda) = \left(I + \O_m \big( x^{-1} \big)\right) \left( \frac{\lambda+\ii}{\lambda-\ii} \right)^{m\sigma_3} G_{\hat Y}(\lambda) \left( \frac{\lambda-\ii}{\lambda+\ii} \right)^{m\sigma_3}, \quad \lambda\in\partial U_\ii.
\]
This means that
\[
G_{Z}(\lambda) = \left(I + \O_m \big( x^{-1} \big)\right) \begin{pmatrix}
    A & (\ii s^\R A+B^\R) e^{2x\theta(\lambda)} \big(\frac{\lambda+\ii}{\lambda-\ii}\big)^{2m} \\ 0 & 1/A \end{pmatrix}
\]
when \( \lambda-\ii \) belongs to the second quadrant (we need to replace \( \ii s^\R A \) by 0 when \( \lambda-\ii \) is in the first quadrant; \( A \) by 1 and \( B^\R \) by 0 when \( \lambda-\ii \) is in the third quadrant, and simply set the last matrix to be \( I \) when \( \lambda-\ii \) belongs to the fourth quadrant). We have that
\[
2\Re(\theta(\lambda)) = -\tfrac12\Im(\lambda)\big(1+|\lambda|^{-2}\big)<-\tfrac12\Im(\lambda)<-\tfrac12(1-\varrho)<-\tfrac17
\]
when \( \lambda \in \partial U_\ii \). Therefore, we get from \eqref{A is almost 1} that
\[
\begin{pmatrix}
    A & (\ii s^\R A+B^\R) e^{2x\theta(\lambda)} \big(\frac{\lambda+\ii}{\lambda-\ii}\big)^{2m} \\ 0 & 1/A \end{pmatrix} = I + \begin{pmatrix} \O\big( |s^\R|_+ x^{m+\nu}e^{-x}\big) & \O\big( |s^\R|_+ e^{3m-\frac17 x}\big) \\ 0 & \O\big( |s^\R|_+ x^{m+\nu}e^{-x}\big) \end{pmatrix},
\]
where the constants in \( \O(\cdot) \) above are absolute. Now, one can readily check that for sufficiently large constant \( D_2(m) \) it holds that
\[
\frac1x \geq |s^\R|_+ 
\begin{cases}
e^{3m-\frac17 x}, \\
x^{m+\nu}e^{-x},
\end{cases}
\]
whenever \( x\geq D_2(m)+8\ln|s^\R|_+ \). This estimate readily yields that
\begin{equation}
    \label{G_z 4}
    \|G_{Z}-I\|_{L^\infty(\partial U_\ii)} \leq C_3 x^{-1}
\end{equation}
for some constant \( C_3 = C_3(m) \). Clearly, an identical estimate holds on \( \partial U_{-\ii} \). By using the same reasoning that lead to \eqref{G_z 4} in \eqref{G_z 1}, \eqref{G_z 2}, \eqref{G_z 3} and gathering these four estimates together, we deduce that
\begin{align}
\label{G_Z-I estimate}
\|G_{Z}-I\|_{L^2(\Gamma_Z)\cap L^\infty(\Gamma_Z)} \leq C_4 x^{-1}
\end{align}
for some constant \( C_4 =C_4(m) \) and all \( x\geq D_2(m) + 8\ln |s^\R|_+ \). Therefore, as in \eqref{gl Y singular int eq}, the solution of RHP~\ref{RHP Z} exists for such \( x \) and has the form
\begin{align}
\label{int formula Z}
Z(\lambda) = I + \frac1{2\pi\ii} \int_{\Gamma_Z} \frac{\rho(\lambda^\prime)(G_Z(\lambda^\prime)-I)}{\lambda^\prime-\lambda}d\lambda^\prime,
\end{align}
where \( \rho(\lambda)=(\mathcal I- \mathcal C_{\Gamma_Z})^{-1}(I) \) and the operator \( \mathcal C_{\Gamma_Z} \) is defined as in \eqref{operator C Gamma} only on \( \Gamma_Z \) instead of \( \Gamma \) and with \( G_Y(\lambda) \) replaced by \( G_Z(\lambda) \).

For the use in the next subsection, let us observe that
\(
G_Z(\lambda) = \sigma_1 G_Z(-\lambda) \sigma_1,
\)
which can be checked by straightforward computation. Therefore, since \( Z(\lambda) \) is the unique solution of RHP~\ref{RHP Z} it must satisfy
\begin{equation}
    \label{symmetries of Z}
    Z(\lambda) = \sigma_1 Z(-\lambda) \sigma_1 \quad \Rightarrow \quad Z(\lambda) = \begin{pmatrix}
        z_{1}(\lambda) & z_{2}(-\lambda)\\ 
        z_{2}(\lambda) & z_{1}(-\lambda)
    \end{pmatrix}.
\end{equation}

\subsection{Rational Prefactor}
\label{rational prefactor}

It readily follows from  RHP~\ref{RHP Z} that \eqref{defn of Z} satisfies RHP~\ref{RHP hat Z}(1-3) with any matrices \( B_k,C_k\), \( k\in\{1,2\} \). In this subsection we choose \( B_k,C_k\) so that RHP~\ref{RHP hat Z}(4) is fulfilled.

Similarly to \eqref{symmetries of Z}, if there exists a unique solution of RHP~\ref{RHP hat Z},   it must possess the same symmetry. Hence, we will apriori assume that \( B_k = -\sigma_1 B_k \sigma_1 \) and \( C_k = \sigma_1 C_k \sigma_1 \). That is, we shall suppose that
\begin{align} 
\label{Bk Ck structure}
B_k = \begin{pmatrix}
    b_k & -d_k\\
    d_k & -b_k
\end{pmatrix}, \quad C_k = \begin{pmatrix}
    a_k & c_k\\
    c_k & a_k
\end{pmatrix}, \quad k\in\{1,2\}.
\end{align}
Now, we immediately see from \eqref{symmetries of Z}, \eqref{Bk Ck structure}, and relation \( T_i^{(-\ii)}(\lambda) = \sigma_1 T_i^{(\ii)}(-\lambda) \sigma_1 \) that if we can show that $\hat Z(\lambda) T_i^{(\ii)}(\lambda)^{-1}$ is analytic at $\ii$, then $\hat Z(\lambda) T_i^{(-\ii)}(\lambda)^{-1}$ is automatically analytic at \( -\ii \). To show analyticity at \( \ii \), observe first that
\begin{equation}
\label{hat Z T inverse}
\hat Z(\lambda) T_i^{(\ii)}(\lambda)^{-1} = \left( I + \frac{\lambda B_1}{\lambda^2 + 1} + \frac{C_1}{\lambda^2 + 1}\right) \left( I + \frac{\lambda B_2}{\lambda^2 + 1} + \frac{C_2}{\lambda^2 + 1}\right)Z(\lambda) \times \{*\}, 
\end{equation}
where \( \{*\} \) stands for either 
\[
\begin{pmatrix}
    1 & \displaystyle -\frac{t_{12}^{(\ii)}(\lambda;x)}{\lambda - \ii}\\
    0 & 1
\end{pmatrix} \begin{pmatrix}
    1 & 0\\
    \displaystyle -\frac{t_{21}^{(\ii)}(\lambda;x)}{\lambda - \ii} & 1
\end{pmatrix}
\quad \text{or} \quad
\begin{pmatrix}
    1 & 0\\
    \displaystyle -\frac{t_{21}^{(\ii)}(\lambda;x)}{\lambda - \ii} & 1
\end{pmatrix}\begin{pmatrix}
    1 & \displaystyle -\frac{t_{12}^{(\ii)}(\lambda;x)}{\lambda - \ii}\\
    0 & 1
\end{pmatrix} 
\]
depending on whether we are considering \( T_1^{(\ii)}(\lambda) \) or \( T_2^{(\ii)}(\lambda) \). In the former case the following lemma holds.

\begin{lem}
\label{lem:lin sys}
Recall \eqref{taus}. If \( 4(1-W_1)^2 + W_2^2 \neq 0 \), where
\[
W_1 := \tau_{12}\det\begin{pmatrix}
    z_1(\ii) & z_2(\ii)\\
    z_1'(\ii) & z_2'(\ii)
\end{pmatrix} \quad \text{and} \quad W_2 := \tau_{12}\det\begin{pmatrix}
    z_1(\ii) & z_2(\ii)\\
    z_2(\ii) & z_1(\ii)
\end{pmatrix},
\]
then there exist constant matrices $B_2$ and $C_2$ as in \eqref{Bk Ck structure} such that 
\[
Q(\lambda) := \left( I + \frac{\lambda B_2}{\lambda^2 + 1} + \frac{C_2}{\lambda^2 + 1}\right) Z(\lambda) \begin{pmatrix}
    1 & \displaystyle -\frac{t_{12}^{(\ii)}(\lambda;x)}{\lambda - \ii}\\
    0 & 1
\end{pmatrix}
\]
is analytic at $\lambda = \ii$. Moreover, we have that
\begin{equation}
\label{a2b2c2d2}
\begin{cases}
b_2 &= \displaystyle -2\tau_{12} \frac{4(1-W_1)z_1(\ii)z_2(\ii) + \ii W_2(z_1^2(\ii)+z_2^2(\ii))}{4(1 - W_1)^2 + W_2^2}, \smallskip \\
d_2 &= \displaystyle -4\tau_{12} \frac{(1-W_1)(z_1^2(\ii)+z_2^2(\ii))+\ii W_2 z_1(\ii)z_2(\ii)}{4(1 - W_1)^2 + W_2^2}, \smallskip \\
a_2 &= \displaystyle \frac{-2 W_2^2}{4(1 - W_1)^2 + W_2^2}, \quad \text{and} \quad c_2 = \frac{4\ii (1-W_1)W_2}{4(1 - W_1)^2 + W_2^2}.
\end{cases}
\end{equation}
\end{lem}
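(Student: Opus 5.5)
Since \( t^{(\ii)}_{12}(\lambda;x) \) is holomorphic in \( U_\ii \) and \( (\lambda^2+1)^{-1} = (\lambda-\ii)^{-1}(\lambda+\ii)^{-1} \), the product \( Q(\lambda) \) has at most a double pole at \( \ii \). I would expand everything in Laurent series about \( \ii \) and impose that the coefficients of \( (\lambda-\ii)^{-2} \) and \( (\lambda-\ii)^{-1} \) vanish. This gives linear equations in \( (a_2,b_2,c_2,d_2) \), which I then solve explicitly. To set this up, write
\[
R(\lambda) := I + \frac{\lambda B_2 + C_2}{\lambda^2+1} = I + \frac{R_{-1}}{\lambda-\ii} + R_0 + \O(\lambda-\ii), \quad R_{-1} = \frac{\ii B_2 + C_2}{2\ii}, \quad R_0 = \frac{B_2}{2\ii} - \frac{\ii B_2 + C_2}{(2\ii)^2}.
\]
Likewise expand \( Z(\lambda) = Z(\ii) + Z'(\ii)(\lambda-\ii) + \cdots \) and \( t^{(\ii)}_{12}(\lambda;x) = \tau_{12} + t'(\lambda-\ii)+\cdots \).

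Only the second column of \( Q \) can be singular. It equals \( R Z e_2 - \frac{t_{12}^{(\ii)}}{\lambda-\ii} R Z e_1 \), where \( e_1,e_2 \) are the standard basis vectors. The first column, \( R Z e_1 \), has a simple pole with residue \( R_{-1}Z(\ii)e_1 \), so I must impose \( R_{-1} Z(\ii) e_1 = 0 \). The second column then has no double pole automatically. Its simple pole has residue
\[
R_{-1} Z(\ii) e_2 - \tau_{12}\big( R_{-1}Z'(\ii)e_1 + (I+R_0) Z(\ii) e_1\big) = 0,
\]
because the \( t' \) term multiplies \( R_{-1}Z(\ii)e_1=0 \).

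By \eqref{symmetries of Z}, \( Z(\ii)e_1 = (z_1(\ii),z_2(\ii))^T \) and \( Z(\ii)e_2 = (z_2(-\ii), z_1(-\ii))^T \). I plan to treat the second column by using \( \det Z\equiv 1 \) (the jumps are unimodular) instead of evaluating at \( -\ii \); I expect \( \det Z \equiv 1 \) to supply the needed relations. With the parametrization \eqref{Bk Ck structure}, \( R_{-1} = \frac{1}{2\ii}\begin{pmatrix} \ii b_2+a_2 & -\ii d_2 + c_2 \\ \ii d_2 + c_2 & -\ii b_2 + a_2\end{pmatrix} \). The condition \( R_{-1}Z(\ii)e_1=0 \) then says that \( R_{-1} \) has rank at most one with kernel spanned by \( Z(\ii)e_1 \). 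That gives two linear equations; the residue condition gives two more. Altogether this is four linear equations in four unknowns.

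Next I will take the inner product of the residue equation with suitable covectors, namely \( (z_1(\ii),z_2(\ii)) \) and its \( \sigma_1 \)-twist \( (z_2(\ii),z_1(\ii)) \). This should produce the combinations \( W_1 \) (the Wronskian-type determinant with \( Z'(\ii) \)) and \( W_2 \) (the determinant of \( Z(\ii)e_1 \) against \( \sigma_1 Z(\ii)e_1 \)). The result should be a \( 2\times2 \) system whose determinant is \( 4(1-W_1)^2+W_2^2 \), and Cramer's rule should then give \eqref{a2b2c2d2}.

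The main obstacle is this elimination step. I need to choose the right projections, including how the second column of \( Z(\ii) \) enters, so that the \( 4\times4 \) system visibly reduces to a \( 2\times 2 \) system with that determinant. If this does not go through cleanly, I would fall back on solving the full \( 4\times 4 \) system symbolically and simplifying it using \( \det Z(\ii)=1 \). After that, verifying the displayed formulas is a direct substitution check.
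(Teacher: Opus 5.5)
Your proposal is correct and is essentially the paper's proof. The paper also sets the principal part of \(Q\) at \(\ii\) to zero, which gives exactly your four linear equations (in variables \(x_1,\dots,x_4\) with \(R_{-1}=\begin{pmatrix}x_1&-x_2\\ x_3&-x_4\end{pmatrix}\)), and it uses \(z_1(\ii)z_1(-\ii)-z_2(\ii)z_2(-\ii)=1\) to eliminate the values at \(-\ii\).

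For the elimination step you flagged, no projections are needed, and projecting onto \((z_1(\ii),z_2(\ii))\) and \((z_2(\ii),z_1(\ii))\) would drop an equation whenever \(z_1^2(\ii)=z_2^2(\ii)\). Instead, the kernel condition forces \(R_{-1}=\mu\,(z_2(\ii),-z_1(\ii))\) for a column vector \(\mu=(\mu_1,\mu_2)^{\mathsf T}\). Since \(R_0=\tfrac{\ii}{2}\sigma_1R_{-1}\sigma_1\), the residue equation becomes \(\big((W_1-1)I+\tfrac{\ii}{2}W_2\sigma_1\big)\mu=\tau_{12}Z(\ii)e_1\). This is the paper's \(2\times2\) system for \((x_1,x_3)=z_2(\ii)(\mu_1,\mu_2)\), and its determinant is \(\tfrac14\big(4(1-W_1)^2+W_2^2\big)\).
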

\begin{proof}
Computing the principal part of the Laurent expansion of \( Q(\lambda) \) at \( \ii \) and setting it to zero gives a system of equations
\begin{align} 
\label{linear system}
\begin{aligned}
0 & = x_1 z_1(\ii) - x_2 z_2(\ii),\\
0 & = x_3 z_1(\ii) - x_4 z_2(\ii),\\
0 & = x_1 z_2(-\ii) - x_2 z_1(-\ii) - \tau_{12} \big( z_1(\ii) + x_1 z_1'(\ii) - x_2 z_2'(\ii) + \tfrac\ii2 x_3 z_2(\ii) - \tfrac\ii2 x_4 z_1(\ii) \big),\\
0 & = x_3 z_2(-\ii) - x_4 z_1(-\ii) - \tau_{12} \big( z_2(\ii) + x_3 z_1'(\ii) - x_4 z'_2(\ii) + \tfrac\ii2 x_1 z_2(\ii) - \tfrac\ii2 x_2 z_1(\ii) \big),
\end{aligned}
\end{align}
where
\begin{align}
\label{defn x1 x2 x3 x4}
x_1 = \frac{b_2 - \ii a_2}{2}, \quad x_2 = \frac{d_2 + \ii c_2}{2},\quad x_3 = \frac{d_2 - \ii c_2}{2} \quad x_4 = \frac{b_2 + \ii a_2}{2}.
\end{align}
If \( z_2(\ii) \neq 0\), then we can express \( x_2,x_4 \) through \( x_1,x_3\), respectively, via the first two equations and plug these expressions into the last two to get 
\begin{align}
\begin{pmatrix}
 W_1 - 1 & \frac\ii2 W_2  \\ \frac\ii2 W_2 & W_1 - 1
\end{pmatrix}
\begin{pmatrix}
 x_1 \\ x_3   
\end{pmatrix} =
\begin{pmatrix}
 \tau_{12} z_1(\ii)z_2(\ii) \\ \tau_{12} z_2^2(\ii)
\end{pmatrix}
\end{align}
where we used equality \( z_1(\ii)z_1(-\ii)-z_2(\ii)z_2(-\ii)=1\) (recall that \( \det Z(\lambda) \equiv 1\)). The above system is solvable when \( (W_1 - 1)^2 + \tfrac14 W_2^2 \neq 0 \), in which case
\begin{align}
x_1 &= \tau_{12} z_2(\ii)\frac{(W_1 - 1)z_1(\ii)-\frac\ii 2W_2 z_2(\ii)}{(1-W_1)^2 + \tfrac14 W_2^2}, \smallskip \\
x_2 &= \tau_{12} z_1(\ii)\frac{(W_1 - 1)z_1(\ii)-\frac\ii 2W_2 z_2(\ii)}{(1-W_1)^2 + \tfrac14 W_2^2}, \smallskip \\
x_3 &= \tau_{12} z_2(\ii)\frac{(W_1 - 1)z_2(\ii)-\frac\ii 2W_2 z_1(\ii)}{(1-W_1)^2 + \tfrac14 W_2^2}, \smallskip \\
x_4 &= \tau_{12} z_1(\ii)\frac{(W_1 - 1)z_2(\ii)-\frac\ii 2W_2 z_1(\ii)}{(1-W_1)^2 + \tfrac14 W_2^2},
\end{align}
from which expressions for \( a_2,b_2,c_2,d_2 \) easily follow. Clearly, the case \( z_1(\ii)\neq 0 \) leads to the same equations. Moreover, \( z_1(\ii) \) and \( z_2(\ii) \) cannot be simultaneously zero because \( \det Z(\ii) = 1\).
\end{proof}

Furthermore, that
\begin{multline}
\sigma_1\left( I + \frac{\lambda B_2}{\lambda^2 + 1} + \frac{C_2}{\lambda^2 + 1}\right) Z(\lambda) \begin{pmatrix}
    1 & 0\\
    \displaystyle -\frac{t_{21}^{(\ii)}(\lambda;x)}{\lambda - \ii} & 1
\end{pmatrix}\sigma_1 =  \\ \left( I - \frac{\lambda B_2}{\lambda^2 + 1} + \frac{C_2}{\lambda^2 + 1}\right) \begin{pmatrix}
        z_{1}(-\lambda) & z_{2}(\lambda)\\ 
        z_{2}(-\lambda) & z_{1}(\lambda)
    \end{pmatrix} \begin{pmatrix}
    1 & \displaystyle -\frac{t_{21}^{(\ii)}(\lambda;x)}{\lambda - \ii} \\
    0 & 1
\end{pmatrix}.
\label{notation for Q}
\end{multline}
This immediately shows that Lemma~\ref{lem:lin sys} should be replaced by the following statement if we choose \( i=2 \) in \eqref{defn of Z hat} (of course, matrices \( \hat Z(\lambda)\) and \( Z(\lambda)\) do depend on the choice of \( i \) in \eqref{defn of Z hat}; however, since all relevant asymptotic formulae are independent of this choice, we are not indicating this dependence explicitly; in particular, \( Z(\lambda)\) in \eqref{notation for Q} is \( Z_2(\lambda)\)).

\begin{lem}
\label{lem:lin sys a}
If \( 4(1-W_1)^2 + W_2^2 \neq 0 \), where
\[
W_1 := \tau_{21}\det\begin{pmatrix}
    z_1(-\ii) & z_2(-\ii)\\
    -z_1'(-\ii) & -z_2'(-\ii)
\end{pmatrix} \quad \text{and} \quad W_2 := \tau_{21}\det\begin{pmatrix}
    z_1(-\ii) & z_2(-\ii)\\
    z_2(-\ii) & z_1(-\ii)
\end{pmatrix},
\]
then there exist constant matrices $B_2$ and $C_2$ as in \eqref{Bk Ck structure} such that 
\[
Q(\lambda) := \left( I + \frac{\lambda B_2}{\lambda^2 + 1} + \frac{C_2}{\lambda^2 + 1}\right) Z(\lambda) \begin{pmatrix}
    1 & 0 \\
    \displaystyle -\frac{t_{21}^{(\ii)}(\lambda;x)}{\lambda - \ii} & 1
\end{pmatrix}
\]
is analytic at $\lambda = \ii$. Moreover, expressions for \( a_2,c_2 \) are the same as in \eqref{a2b2c2d2} while expressions for \( b_2,d_2\) are also the same as in there except \( \tau_{12} \) needs to be replaced by \( -\tau_{21} \).
\end{lem}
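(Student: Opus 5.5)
The plan is to reduce Lemma~\ref{lem:lin sys a} to Lemma~\ref{lem:lin sys} using the conjugation identity \eqref{notation for Q}, rather than redo the Laurent expansion from scratch.

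\textbf{Step 1: conjugate by \( \sigma_1 \).} Conjugating \( Q(\lambda) \) by \( \sigma_1 \) turns the lower-triangular factor into an upper-triangular one with entry \( -t_{21}^{(\ii)}(\lambda;x)/(\lambda-\ii) \). It replaces \( B_2 \) by \( -B_2 \), keeps \( C_2 \), and replaces \( Z(\lambda) \) by
\[
\tilde Z(\lambda):=\begin{pmatrix} z_1(-\lambda) & z_2(\lambda)\\ z_2(-\lambda) & z_1(\lambda)\end{pmatrix}.
\]
Write \( \tilde z_1(\lambda):=z_1(-\lambda) \) and \( \tilde z_2(\lambda):=z_2(-\lambda) \). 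Then
\[
\tilde Z(\lambda)=\begin{pmatrix}\tilde z_1(\lambda)&\tilde z_2(-\lambda)\\ \tilde z_2(\lambda)&\tilde z_1(-\lambda)\end{pmatrix},
\]
which is exactly the structure \eqref{symmetries of Z} assumed in Lemma~\ref{lem:lin sys}. It also has unit determinant. Analyticity of \( Q \) at \( \ii \) is equivalent to analyticity of \( \sigma_1 Q\sigma_1 \) at \( \ii \).

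\textbf{Step 2: apply Lemma~\ref{lem:lin sys} verbatim.} The proof of Lemma~\ref{lem:lin sys} used only three facts:
\begin{itemize}
\item the symmetric structure of \( Z \);
\item \( \det Z\equiv1 \);
\item holomorphy of \( t_{12}^{(\ii)} \) near \( \ii \), with value \( \tau_{12} \) at \( \ii \) entering through \( z(\ii) \) and \( z'(\ii) \).
\end{itemize}
So we may apply it with \( z_k\mapsto\tilde z_k \), \( \tau_{12}\mapsto\tau_{21} \), and \( B_2\mapsto -B_2 \). One point needs checking in the Laurent expansion: the expansion of \( t(\lambda;x) \) to first order at \( \ii \) must contribute only through \( t(\ii)=\tau \). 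Looking at \eqref{linear system}, only \( \tau_{12} \) appears, with no derivative of \( t \). I would verify this by recomputing the principal part, expecting the \( t' \) contribution to be killed by the first two equations. Granting this, the substitution is legitimate.

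\textbf{Step 3: translate the quantities back.} We have \( \tilde z_k(\ii)=z_k(-\ii) \) and \( \tilde z_k'(\ii)=-z_k'(-\ii) \). Hence \( W_1 \) and \( W_2 \) of Lemma~\ref{lem:lin sys} become precisely the determinants in the statement, with \( \tau_{21} \) in front.

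Formulas \eqref{a2b2c2d2} then give the entries \( \tilde a_2,\tilde b_2,\tilde c_2,\tilde d_2 \) of \( (-B_2,C_2) \):
\begin{itemize}
\item \( a_2,c_2 \) depend only on \( W_1,W_2 \), so their expressions are unchanged.
\item The actual \( b_2,d_2 \) are the negatives of the formulas with \( \tau_{12}\to\tau_{21} \), which is the replacement \( \tau_{12}\mapsto-\tau_{21} \).
\end{itemize}
The sign flip comes from \( -\sigma_1B_2\sigma_1=B_2 \) structure-wise: under \eqref{Bk Ck structure}, conjugation sends \( (b,d)\mapsto(-b,-d) \) after accounting for the minus sign.

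The \( z_k(\pm\ii) \) in the \( b_2,d_2 \) formulas become \( z_k(-\ii) \). This is consistent with reading the statement's ``same as in there'' as including that substitution. If the statement literally keeps \( z_k(\ii) \), I would reconcile the two using \( \det Z=1 \) and the structure of \( Z \).

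\textbf{Main obstacle.} The delicate part is the sign bookkeeping under \( \sigma_1 \)-conjugation of \eqref{Bk Ck structure}. Concretely, I need to check that \( \sigma_1(-B_2)\sigma_1 \) entries map \( (b_2,d_2) \) to \( (b_2,d_2) \) up to a sign, and to confirm which sign it is. I would handle this by a direct \( 2\times2 \) computation.
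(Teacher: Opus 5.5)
Your proposal is correct and matches the paper's own argument. The paper also derives Lemma~\ref{lem:lin sys a} from the \( \sigma_1 \)-conjugation identity \eqref{notation for Q}, which puts \( Q \) in the setting of Lemma~\ref{lem:lin sys} with \( B_2\mapsto -B_2 \), \( z_k(\lambda)\mapsto z_k(-\lambda) \) and \( t_{12}^{(\ii)}\mapsto t_{21}^{(\ii)} \). Your sign bookkeeping is right: it gives \( \tau_{12}\mapsto-\tau_{21} \) only in the explicit prefactor of \( b_2,d_2 \). Your expectation that \( t'(\ii) \) drops out is also right, since it multiplies the residue of the first column of \( Q \), which the first two equations of \eqref{linear system} set to zero. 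The only thing to drop is your hedge at the end. The formulas for \( b_2,d_2 \) are meant with \( z_k(-\ii) \) in place of \( z_k(\ii) \), exactly as your substitution produces. No reconciliation with \( z_k(\ii) \) is needed, and none would hold in general.
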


For future use observe that in both cases we have that
\begin{align}
\label{1+a2+c2}
1+a_2 \pm c_2 = \frac{2(1-W_1) \pm \ii W_2}{2(1-W_1) \mp \ii W_2}.
\end{align}

Clearly, the systems of equations defining \( B_1,C_1 \) are essentially identical to the one defining \( B_2,C_2 \). Examination of the proof of Lemma~\ref{lem:lin sys} (and therefore of Lemma~\ref{lem:lin sys a}) shows that it did not use symmetry \eqref{symmetries of Z}. This symmetry was only used to justify ansatz \eqref{Bk Ck structure}. What was important in the proof is that \( \det Z(\lambda) \equiv 1\). Now, we have that
\begin{align}
\det Q(\lambda) &= \det\left( I + \frac{\lambda B_2}{\lambda^2 + 1} + \frac{C_2}{\lambda^2 + 1}\right) \\
& = 1 + \frac{2a_2+\det B_2}{\lambda^2+1} + \frac{\det C_2-\det B_2}{(\lambda^2+1)^2}.
\end{align}
One can readily verify that
\[
-2a_2 = \det C_2 = \det B_2 = \frac{4W_2^2}{(1-W_1)^2 + W_2^2}.
\]
Thus, \( \det Q(\lambda) \equiv 1 \) as desired and we immediately deduce the following.

\begin{lem}
\label{lem:lin sys2}
Let \( \tilde q_1(\lambda) \) and \( \tilde q_2(\lambda) \) be the entries of the first column of \( \sigma_1 Q(\lambda) \sigma_1 \). If \( 4(1-W_3)^2 + W_4^2 \neq 0 \), where
\[
W_3 := \tau_{21}\det\begin{pmatrix}
    \tilde q_1(\ii) & \tilde q_2(\ii)\\
    \tilde q_1'(\ii) & \tilde q_2'(\ii)
\end{pmatrix} \quad \text{and} \quad W_4 := \tau_{21}\det\begin{pmatrix}
    \tilde q_1(\ii) & \tilde q_2(\ii)\\
    \tilde q_2(\ii) & \tilde q_1(\ii)
\end{pmatrix},
\]
then there exist constant matrices $B_1$ and $C_1$ as in \eqref{Bk Ck structure} such that 
\[
\left( I + \frac{\lambda B_1}{\lambda^2 + 1} + \frac{C_1}{\lambda^2 + 1}\right) Q(\lambda) \begin{pmatrix}
    1 & 0\\
    \displaystyle -\frac{t_{21}^{(\ii)}(\lambda;x)}{\lambda - \ii} & 1
\end{pmatrix}
\]
is analytic at $\lambda = \ii$. Moreover, we have that
\begin{equation}
\label{a1b1c1d1}
\begin{cases}
b_1 &= \displaystyle 2\tau_{21} \frac{4(1-W_3)\tilde q_1(\ii)\tilde q_2(\ii) + \ii W_4(\tilde q_1^2(\ii)+\tilde q_2^2(\ii))}{4(1 - W_3)^2 + W_4^2}, \smallskip \\
d_1 &= \displaystyle 4\tau_{21} \frac{(1-W_3)(\tilde q_1^2(\ii)+\tilde q_2^2(\ii))+\ii W_4 \tilde q_1(\ii)\tilde q_2(\ii)}{4(1 - W_3)^2 + W_4^2}, \smallskip \\
a_1 &= \displaystyle \frac{-2W_4^2}{4(1-W_3)^2 + W_4^2}, \quad \text{and} \quad c_1 = \frac{4\ii (1-W_3)W_4}{4(1-W_3)^2 + W_4^2}.
\end{cases}
\end{equation}
\end{lem}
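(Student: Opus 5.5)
The plan is to reduce Lemma~\ref{lem:lin sys2} to Lemma~\ref{lem:lin sys a}, with \( Z(\lambda) \) replaced by \( Q(\lambda) \). By the remark preceding the statement, the proof of Lemma~\ref{lem:lin sys} (and hence of Lemma~\ref{lem:lin sys a}) did not use the symmetry \eqref{symmetries of Z} itself. It only used that the column structure of the relevant matrix is encoded by two functions, and that the determinant is identically one. Since we have just shown \( \det Q(\lambda)\equiv 1 \), the argument should transfer, once we check that \( Q \) can be put into the same two-function form.

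First I conjugate by \( \sigma_1 \), as in \eqref{notation for Q}. Write \( \tilde Q(\lambda):=\sigma_1Q(\lambda)\sigma_1 \) with first column \( (\tilde q_1,\tilde q_2)^T \). Conjugating the expression in the statement by \( \sigma_1 \) turns it into
\[
\left( I - \frac{\lambda \tilde B_1}{\lambda^2 + 1} + \frac{\tilde C_1}{\lambda^2 + 1}\right)\tilde Q(\lambda) \begin{pmatrix} 1 & -\frac{t_{21}^{(\ii)}(\lambda;x)}{\lambda-\ii}\\ 0 & 1\end{pmatrix}.
\]
Here I use \( \sigma_1B_1\sigma_1=-B_1 \) and \( \sigma_1C_1\sigma_1=C_1 \), which hold by \eqref{Bk Ck structure}, so that \( \tilde B_1=B_1 \) and \( \tilde C_1=C_1 \). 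This has exactly the shape handled by Lemma~\ref{lem:lin sys}, except that \( \tau_{12} \) becomes \( \tau_{21} \) and the sign of \( B \) flips.

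Next I write out the vanishing of the principal part at \( \ii \). The prefactor has a simple pole at \( \ii \) and the right factor has a simple pole, so the principal part is a double pole. Setting its coefficients to zero yields a system of four linear equations in \( x_1,\dots,x_4 \) defined as in \eqref{defn x1 x2 x3 x4}, but with \( -b_1 \) in place of \( b_2 \), \( -d_1 \) in place of \( d_2 \), and \( \tau_{21} \) in place of \( \tau_{12} \).

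The main obstacle is that \( \tilde Q \) need not have the structure \( \begin{pmatrix} z_1(\lambda) & z_2(-\lambda)\\ z_2(\lambda)& z_1(-\lambda)\end{pmatrix} \) that was used in \eqref{linear system} to write the second-column values at \( \ii \) as first-column values at \( -\ii \). I would check this structure directly. The matrix \( Z \) has the symmetry, and \( B_2,C_2 \) satisfy \eqref{Bk Ck structure}, so \( I+\frac{\lambda B_2+C_2}{\lambda^2+1} \) commutes with \( \sigma_1\,\cdot\,(-\lambda)\,\sigma_1 \). The only issue is the right triangular factor in \( Q \), which breaks the symmetry. However, \( Q \) is analytic at \( \ii \), and the equations only need the values of the second column at \( \ii \) together with \( \det\tilde Q(\ii)=1 \). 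So I would rederive \eqref{linear system} using these abstract quantities, namely the second-column entries of \( \tilde Q(\ii) \), in place of \( z_1(-\ii) \) and \( z_2(-\ii) \). The unimodularity relation \( \tilde q_1(\ii)\tilde Q_{22}(\ii)-\tilde q_2(\ii)\tilde Q_{12}(\ii)=1 \) plays the role of \( z_1(\ii)z_1(-\ii)-z_2(\ii)z_2(-\ii)=1 \).

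With that, eliminating \( x_2,x_4 \) via the first two equations gives the same \( 2\times2 \) system as in Lemma~\ref{lem:lin sys}, with \( W_1,W_2 \) replaced by \( W_3,W_4 \). Solving it and translating back to \( a_1,b_1,c_1,d_1 \) yields \eqref{a1b1c1d1}. The flipped signs of \( b_1,d_1 \) come from \( \tilde B_1 \) entering with a minus sign, combined with \( \tau_{12}\to\tau_{21} \). Finally, solvability requires exactly \( 4(1-W_3)^2+W_4^2\neq0 \).
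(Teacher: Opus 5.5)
Your proposal is correct and follows the paper's own argument. The paper likewise obtains Lemma~\ref{lem:lin sys2} by rerunning the proof of Lemma~\ref{lem:lin sys} (after the \( \sigma_1 \)-conjugation in \eqref{notation for Q}) with \( Z \) replaced by \( Q \), noting that only \( \det Q\equiv 1 \) is needed. Your explicit remark supplies the detail the paper leaves implicit: the values \( z_2(-\ii),z_1(-\ii) \) in \eqref{linear system} are just the second-column entries at \( \ii \), so \( \det\tilde Q(\ii)=1 \) plays the role of \( z_1(\ii)z_1(-\ii)-z_2(\ii)z_2(-\ii)=1 \).
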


Clearly, the previous lemma addresses the case \( i=1 \) in \eqref{defn of Z hat}. The case \( i=2 \) is then covered by the following lemma.

\begin{lem}
\label{lem:lin sys2 a}
Let \( q_1(\lambda) \) and \( q_2(\lambda) \) be the entries of the first column of \(  Q(\lambda) \). If \( 4(1-W_3)^2 + W_4^2 \neq 0 \), where
\[
W_3 := \tau_{12}\det\begin{pmatrix}
    q_1(\ii) & q_2(\ii)\\
    q_1'(\ii) & q_2'(\ii)
\end{pmatrix} \quad \text{and} \quad W_4 := \tau_{12}\det\begin{pmatrix}
    q_1(\ii) & q_2(\ii)\\
    q_2(\ii) & q_1(\ii)
\end{pmatrix},
\]
then there exist constant matrices $B_1$ and $C_1$ as in \eqref{Bk Ck structure} such that 
\[
\left( I + \frac{\lambda B_1}{\lambda^2 + 1} + \frac{C_1}{\lambda^2 + 1}\right) Q(\lambda) \begin{pmatrix}
    1 & \displaystyle -\frac{t_{12}^{(\ii)}(\lambda;x)}{\lambda - \ii} \\
    0 & 1
\end{pmatrix}
\]
is analytic at $\lambda = \ii$. Moreover, expressions for \( a_1,c_1 \) are the same as in \eqref{a1b1c1d1} while expressions for \( b_1,d_1\) are also the same as in there except \( \tau_{21} \) needs to be replaced by \( -\tau_{12} \).
\end{lem}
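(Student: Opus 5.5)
The plan is to observe that Lemma~\ref{lem:lin sys} never really used that the middle factor was the specific matrix \( Z(\lambda) \). Its proof relied on only two facts: that \( Z(\lambda) \) is analytic in a neighborhood of \( \ii \), and that \( \det Z(\lambda)\equiv1 \). So I would first restate Lemma~\ref{lem:lin sys} for an arbitrary matrix function
\[
M(\lambda)=\begin{pmatrix} m_{11}(\lambda) & m_{12}(\lambda)\\ m_{21}(\lambda) & m_{22}(\lambda)\end{pmatrix}
\]
that is analytic near \( \ii \) and has unit determinant. In that restatement \( z_1,z_2 \) are replaced by the first-column entries \( m_{11},m_{21} \). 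The values \( z_2(-\ii),z_1(-\ii) \) that appear in \eqref{linear system} are really the second-column values \( m_{12}(\ii),m_{22}(\ii) \); this is exactly what the symmetry \eqref{symmetries of Z} produced when \( M=Z \). Then I would apply the restated lemma with \( M=Q \) from Lemma~\ref{lem:lin sys a}. The right factor in the present statement is the same upper-triangular factor with \( t_{12}^{(\ii)} \) as in Lemma~\ref{lem:lin sys}, so the constant entering the computation is again \( \tau_{12} \).

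Concretely, I would proceed as follows. First, note that \( Q(\lambda) \) is analytic at \( \ii \) by Lemma~\ref{lem:lin sys a}, and that \( \det Q\equiv1 \) by the determinant computation made right before Lemma~\ref{lem:lin sys2}. Second, expand
\[
\Big(I+\tfrac{\lambda B_1+C_1}{\lambda^2+1}\Big)Q(\lambda)\begin{pmatrix}1& -t_{12}^{(\ii)}/(\lambda-\ii)\\0&1\end{pmatrix}
\]
at \( \ii \), using the ansatz \eqref{Bk Ck structure} and the variables \( x_1,\dots,x_4 \) of \eqref{defn x1 x2 x3 x4} (now built from \( a_1,b_1,c_1,d_1 \)). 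The coefficient of \( (\lambda-\ii)^{-2} \) gives the analogues of the first two equations of \eqref{linear system}, with \( q_1(\ii),q_2(\ii) \) in place of \( z_1(\ii),z_2(\ii) \). The coefficient of \( (\lambda-\ii)^{-1} \) gives the last two equations, with \( z_2(-\ii),z_1(-\ii) \) replaced by \( Q_{12}(\ii),Q_{22}(\ii) \), and with derivatives \( q_1'(\ii),q_2'(\ii) \). Here one uses \( t_{12}^{(\ii)}(\ii;x)=\tau_{12} \). The derivative of \( t_{12}^{(\ii)} \) at \( \ii \) does not enter: it multiplies the first column of the product, which already vanishes at \( \ii \) by the second-order equations. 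Third, eliminate \( x_2,x_4 \) (or \( x_1,x_3 \) when \( q_2(\ii)=0 \); the two cannot vanish simultaneously since \( \det Q(\ii)=1 \)). Using \( q_1(\ii)Q_{22}(\ii)-q_2(\ii)Q_{12}(\ii)=1 \), this produces the same \( 2\times2 \) system as in the proof of Lemma~\ref{lem:lin sys}, now with \( W_3,W_4 \) in place of \( W_1,W_2 \). Fourth, solve that system under the hypothesis \( 4(1-W_3)^2+W_4^2\neq0 \) and read off \( a_1,b_1,c_1,d_1 \). This gives \eqref{a2b2c2d2} with \( (z_1,z_2,W_1,W_2) \) replaced by \( (q_1,q_2,W_3,W_4) \). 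The result coincides with \eqref{a1b1c1d1} after substituting \( \tau_{21}\mapsto-\tau_{12} \), because \eqref{a1b1c1d1} carries the opposite overall sign to \eqref{a2b2c2d2}.

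The main obstacle is bookkeeping. I have to confirm that the \( \tfrac{\ii}{2} \) cross terms in \eqref{linear system} come solely from the ansatz \eqref{Bk Ck structure}, through the expansion of \( \lambda/(\lambda^2+1) \) and \( 1/(\lambda^2+1) \) at \( \ii \), and not from the symmetry of \( Z \). Granting that, the determinant form \( W_4=\tau_{12}\det\begin{pmatrix}q_1&q_2\\q_2&q_1\end{pmatrix}(\ii) \) arises purely from the structure of \( B_1,C_1 \). I would verify this by redoing only the residue computation at \( \ii \) for general \( M \), without any symmetry assumption. After that, the remaining algebra is literally the same as in Lemma~\ref{lem:lin sys}.
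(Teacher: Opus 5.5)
Your proposal is correct and follows the paper's own route. The paper likewise notes that the proof of Lemma~\ref{lem:lin sys} used the symmetry \eqref{symmetries of Z} only to justify the ansatz \eqref{Bk Ck structure}, and otherwise needed only analyticity at \( \ii \) and unit determinant; it then checks \( \det Q\equiv 1 \) and reads off the lemma with \( Z \) replaced by \( Q \), keeping \( \tau_{12} \). Your explicit residue check fills in the bookkeeping the paper leaves to inspection: the \( \tfrac{\ii}{2} \) terms come from the constant term \( (C_1-\ii B_1)/4 \) of the prefactor at \( \ii \), the derivative of \( t_{12}^{(\ii)} \) drops out, and \( z_2(-\ii),z_1(-\ii) \) are simply the second-column values \( Q_{12}(\ii),Q_{22}(\ii) \).
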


Similarly to \eqref{1+a2+c2} we also have that
\begin{align}
\label{1+a1+c1}
1+ a_1 \pm c_1 = \frac{2(1-W_3) \pm \ii W_4}{2(1-W_3) \mp \ii W_4}.
\end{align}

\subsection{Asymptotic Analysis}

It follows from RHP~\ref{RHP Y hat}(3), \eqref{defn of Z hat}, and \eqref{defn of Z} that
\[
P_0 = \hat Y(0) = (-1)^m\hat Z(0) = (-1)^m(I+C_1)(I+C_2)Z(0).
\]
That is, it holds by the definition of \( P_0 \) in RHP~\ref{rhp original}(3) that
\begin{align}
\begin{pmatrix}
\cosh \tfrac u2 \\
\sinh \tfrac u2
\end{pmatrix} 
 = (-1)^m
\begin{pmatrix}
(1+a_1)(1+a_2) + c_1c_2 & c_1(1+a_2) + c_2(1+a_1) \\
c_1(1+a_2) + c_2(1+a_1) & (1+a_1)(1+a_2) + c_1c_2
\end{pmatrix} 
 \begin{pmatrix}
z_1(0) \\
z_2(0)
\end{pmatrix}.
\end{align}
This means that
\[
e^u = \frac{\cosh \tfrac u2+\sinh \tfrac u2}{\cosh \tfrac u2-\sinh \tfrac u2} = \frac{(1+a_1+c_1)(1+a_2+c_2)(z_1(0)+z_2(0))}{(1+a_1-c_1)(1+a_2-c_2)(z_1(0)-z_2(0))}.
\]
Respectively, we deduce from \eqref{1+a2+c2} and \eqref{1+a1+c1} that
\begin{align}
\label{formula eu}
e^u = \frac{(2(1-W_1) + \ii W_2)^2(2(1-W_3) + \ii W_4)^2(z_1(0)+z_2(0))}{(2(1-W_1) - \ii W_2)^2(2(1-W_3) - \ii W_4)^2(z_1(0)-z_2(0))}.
\end{align}

Exactly as in \eqref{gl rho_Y - I has a small norm}, \eqref{G_Z-I estimate}  yields that
\[
\|\rho-I\|_{L^2(\Gamma_Z)} \leq C_5 x^{-1} 
\]
for some constant \( C_5=C_5(m) \). Thus, similarly to \eqref{representation Y(0)}, \eqref{int formula Z} gives
\[
Z(\pm\ii) = I + \frac1{2\pi\ii} \int_{\Gamma_Z} \frac{G_Z(\lambda^\prime)-I}{\lambda^\prime \mp \ii}d\lambda^\prime + \O_m\big(x^{-1}\big) = I + \O_m\big(x^{-1}\big),
\]
where we used \eqref{G_Z-I estimate} and Cauchy-Schwarz inequality twice. That is,
\begin{align}
\label{der k zi}
\big (z_1(\lambda)-1 \big)^{(k)}_{|\lambda=\pm\ii}, \big(z_2(\lambda)\big)^{(k)}_{|\lambda=\pm\ii} = \O_m\big( x^{-1}\big)
\end{align}
in the notation of \eqref{symmetries of Z} for any \( k\geq 0 \), where we need to use the Cauchy integral formula for derivatives when \( k>0 \). Recall also that the jump matrix \( G_Z(\lambda) \) around the origin is simply \( G_Y(\lambda)\) conjugated by a bounded function. Hence, similarly to \eqref{representation Y(0)}, we can conclude that
\begin{align}
\label{z1+-z_2}
Z(0) = I + \O_m\big( x^{-1}\big) \quad \Rightarrow \quad z_1(0) \pm z_2(0) = 1 + \O_m\big(x^{-1}\big).
\end{align}

Next, recall that \( \tau_{12}\tau_{21}=mx^{-1} \) and that we must have \( x>m \) by \eqref{G_z 2} and \eqref{G_z 3} in order for the norm \eqref{G_Z-I estimate} to be small. Hence, either \( |\tau_{12}|<1 \) or \( |\tau_{21}|<1 \) and this is what determined the choice of the index \( i \) in \eqref{defn of Z hat}. Hence, it follows from \eqref{der k zi} that 
\begin{align}
\label{W1W2}
W_1 = \O_m\big(x^{-1}\big) \quad \text{and} \quad W_2 = \tau + \O_m\big(x^{-1}\big),
\end{align}
see Lemmas~\ref{lem:lin sys} and~\ref{lem:lin sys a}, where \( \tau \) is either \( \tau_{12} \) or \( \tau_{21} \) depending on which quantity has the smallest modulus. In particular, it holds that
\begin{align}
\label{asymp first det}
2(1-W_1) \pm \ii W_2 = 2 \pm \ii \tau + \O_m(x^{-1})
\end{align}
and this quantity is at least \( \frac12 \) in absolute value for all \( x \geq D_2(m) + 8 \ln |s^\R|_+ \) for an appropriately chosen \( D_2(m) \). Respectively, it follows from Lemmas~\ref{lem:lin sys} and \ref{lem:lin sys a}, as well as \eqref{W1W2} that
\begin{align}
a_2,b_2,c_2,d_2 = \O(1).
\end{align}
Let now \( Q(\lambda) \) be as in Lemma~\ref{lem:lin sys} or Lemma~\ref{lem:lin sys a}. Similarly, to the proof of those lemmas, we can write down explicit expressions for \( \tilde q_1(\ii) \), \( \tilde q_2(\ii) \), \( q_1(\ii) \) and \( q_2(\ii) \). These expressions will involve \( a_2,b_2,c_2,d_2 \) and the values of the derivatives of \( z_i(\lambda) \), \( i\in\{1,2\} \), and \( t^{(\ii)}_{12}(\lambda;x)\) or \( t^{(\ii)}_{21}(\lambda;x)\) at \( \pm\ii \). Hence, 
\[
q_1(\ii)-1, q_2(\ii) = \O_m\big(x^{-1}\big) \quad \text{and} \quad \tilde q_1(\ii)-1, \tilde q_2(\ii) = \O_m\big(x^{-1}\big)
\]
by \eqref{der k zi}. Thus, similarly to \eqref{asymp first det} and using the notation of Lemmas~\ref{lem:lin sys2} and~\ref{lem:lin sys2 a}, we get that
\begin{align}
\label{asymp second det}
2(1-W_3) \pm \ii W_4 = 2 \pm \ii\tau^* + \O_m\big(x^{-1}\big),
\end{align}
where \( \{\tau,\tau^*\} = \{\tau_{12},\tau_{21}\} \). By plugging the estimates \eqref{asymp first det} and \eqref{asymp second det} into \eqref{formula eu} and using \eqref{z1+-z_2}, we obtain that
\[
u(x) = 2\ln\left(\frac{(2 + \ii\tau_{12})(2 + \ii\tau_{21})  + \O_m(x^{-1})}{(2 - \ii\tau_{12})(2 - \ii\tau_{21}) + \O_m(x^{-1})}\right)
\]
for all \( x\geq D_2(m) + 8\ln |s^\R|_+  \). To deduce the desires asymptotics \eqref{asymptotic formula 2} it only remains to recall \eqref{symmetries of u} and that \(\tau_{12}\tau_{21} = m x^{-1} \).

\section{Proof of Theorem~\ref{thm:3}}

The proof of Theorem~\ref{thm:3} is the same as the proof of Theorem~\ref{thm:2}. The only difference is that we are able to do a deeper analysis of the local parametrices \( P^{(\pm\ii)}(\lambda) \) introduced in Section~\ref{subsec:local_par1} when \( m=0 \). In particular, our starting point for this proof is RHP~\ref{RHP Y hat}. Recall also that in this section
\[
B^\R = e^{-\varkappa x} = x^\nu e^{-x}, \quad |\nu|\leq \tfrac12.
\]

\subsection{Local Parametrices}

Recall \eqref{defn of He0}--\eqref{H0 expansion}. For each \( n\geq 1\) we set 
\begin{equation}
    \label{definition of en}
    \mathcal H_{0,n}(z) := -\frac{1}{2\ii\sqrt{\pi}}\sum_{m=0}^{n-1} \frac{(1/2)_m}{z^{2m+1}}.
\end{equation}
It readily follows from \eqref{H0 expansion} that
\[
(\mathcal H_0 - \mathcal H_{0,n})(z) = \O_n\big(z^{-2n-1}\big) \quad \text{as} \quad z\to\infty.
\]
Observe also that this difference is holomorphic in \( \C^* \) with a pole of order \( 2n-1 \) at the origin. Recall \eqref{theta to map}. We now replace \eqref{P(i)} by
\begin{align}
\label{P+-i matrix}
P_n^{(\ii)}(\lambda) := \begin{pmatrix}
    1 & 0 \\ \displaystyle -\frac{x^\nu}{A} (\mathcal H_0-\mathcal H_{0,n})(\sqrt x\phi_\ii(\lambda)) & 1
\end{pmatrix}, \quad \lambda \in U_\ii,
\end{align}
and, similarly to \eqref{P-i}, set
\begin{align}
P_n^{(-\ii)}(\lambda) := \sigma_1 P_n^{(\ii)}(-\lambda)\sigma_1, \quad \lambda\in U_{-\ii}.
\end{align}
RHP~\ref{RHP Y loc i} and RHP~\ref{RHP Y loc -i} are now replaced by the following Riemann-Hilbert problems.

\begin{RHP}
\label{RHP Y loc +-i}
Find a $2 \times 2$ matrix function $P_n^{(\pm\ii)}(\lambda)$ such that
\begin{enumerate}
    \item $P_n^{(\pm\ii)}(\lambda)$ is analytic in $U_{\pm\ii} \setminus S^1$ and \( P_n^{(\pm\ii)}(\lambda) = \O\big((\lambda\mp \ii)^{-2n+1}\big)\) as \( \lambda\to\pm\ii \);
    \item one-sided traces $P^{(\pm\ii)}_{n\pm}(\lambda)$ are continuous on \( U_{\pm\ii} \cap (S^1\setminus\{\pm\ii\}) \) and satisfy RHP~\ref{RHP Y loc i},\ref{RHP Y loc -i}(2);
    \item it holds uniformly on \( \partial U_{\pm\ii} \) that \( \displaystyle
    P_n^{(\pm\ii)}(\lambda) = I + \O_n\big(x^{\nu-n-\frac12}\big) \).
\end{enumerate}
\end{RHP}

The reason we improve RHP~\ref{RHP Y loc i},\ref{RHP Y loc -i}(3) to RHP~\ref{RHP Y loc +-i}(3) in such a simple fashion is the triangular structure of \( He_0(z) \). It is theoretically possible to replace matrices \( T_i^{(\pm\ii)}(\lambda)\) in RHP~\ref{RHP Y loc i},\ref{RHP Y loc -i}(3) by their proper modifications to achieve a better error rate than \( x^{-1} \), but practically, we found the required computations to be too complicated except in the above case \( m=0 \).


\subsection{Small Norm Problem}

Similarly to \eqref{defn of Z hat}, we look for the solution of RHP~\ref{RHP Y hat} in the form
\begin{align}
    \hat Y(\lambda) = \hat Z_n(\lambda)
    \begin{cases}
        P_n^{(\pm\ii)}(\lambda), & \lambda \in U_{\pm\ii}, \\
        I, & \text{otherwise}.
    \end{cases}
    \label{defn of Zn hat}
\end{align}
Then, \( \hat Z_n(\lambda) \) must solve the following Riemann-Hilbert problem.

\begin{figure}[ht!]
    \centering
    \includegraphics[width=8cm]{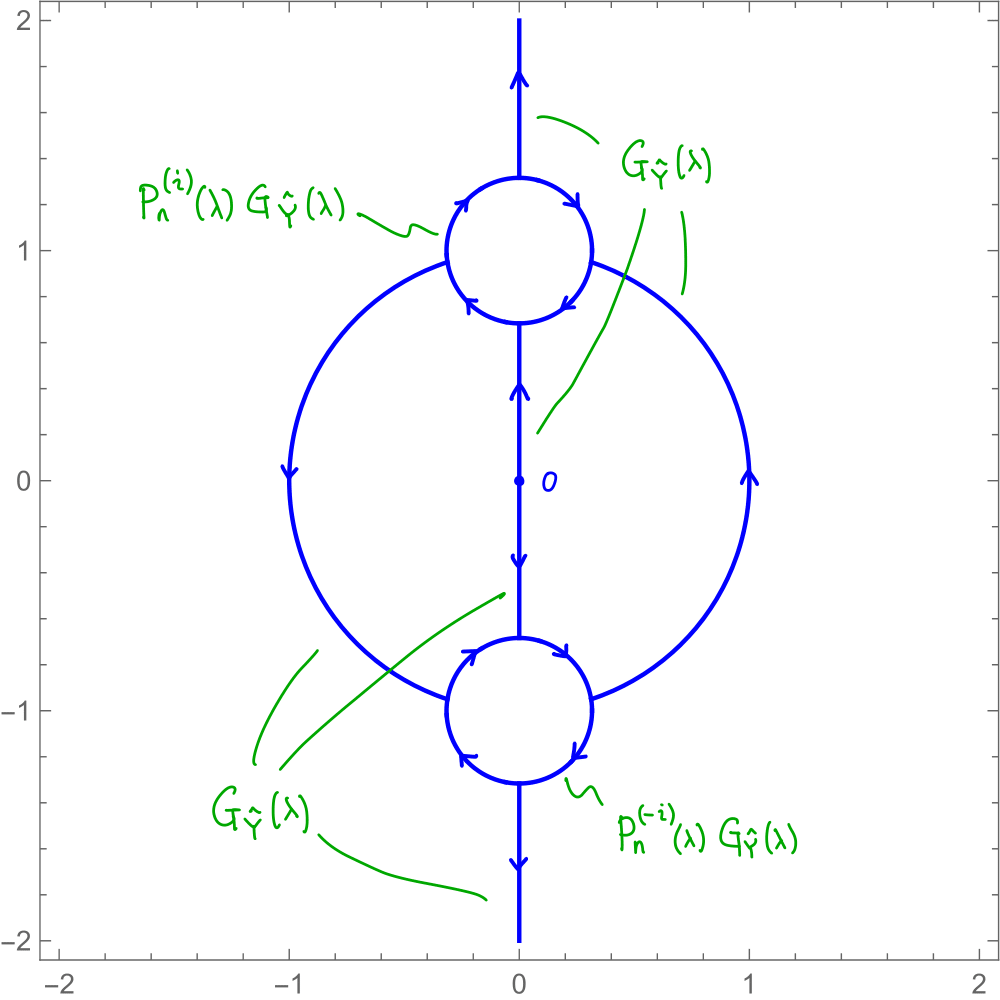}
    \caption{The contour $\Gamma_Z$ and the jump matrices $G_{\hat Z_n}(\lambda)$ for RHP~\ref{RHP Zn}. Note that $G_{\hat Y}(\lambda)$ is stated on Figure \ref{Y-hat problem pic}.}
    \label{hat Zn problem pic}
\end{figure}

\begin{RHP}
\label{RHP hat Zn}
Find a $2 \times 2$ matrix function $\hat Z_n(\lambda)$ such that
\begin{enumerate}
    \item $\hat Z_n(\lambda)$ is analytic for $\lambda \in \C \setminus (\Gamma_Z\cup\{\pm\ii\})$, where $\Gamma_Z$ is depicted on Figure~\ref{hat Zn problem pic};
    \item one-sided traces $\hat Z_{n\pm}(\lambda)$ exist a.e. on \( \Gamma_Z \), belong to \( L^2(\Gamma_Z) \), and satisfy
    \[
        \hat Z_{n+}(\lambda) = \hat Z_{n-}(\lambda) G_{\hat Z_n}(\lambda), 
    \]
    where the jump matrices \(G_{\hat Z_n}(\lambda)\) are as on Figure~\ref{hat Zn problem pic};
    \item it holds that \( \hat Z_n(\lambda) = I + \O(1/\lambda) \) as \( \lambda\to\infty \);
    \item $\hat Z_n(\lambda)$ has poles of order \( 2n-1 \) at \( \pm\ii \) with principal parts such that the products \( \hat Z_n(\lambda)P_n^{(\pm\ii)}(\lambda) \) are analytic in the respective disks \( U_{\pm\ii} \).
\end{enumerate}
\end{RHP}

We look for \( \hat Z_n(\lambda) \) in the following form:
\begin{equation}
    \label{matrix Zn}
    \hat Z_n(\lambda) = B_n(\lambda) Z_n(\lambda)D^{1-2n}(\lambda), \quad D(\lambda) := \begin{pmatrix}
        \lambda-\ii & 0 \\ 0 & \lambda+\ii \end{pmatrix},
\end{equation}
where \( B_n(\lambda) \) is a matrix polynomial of degree \( 2n-1 \) chosen so that \(B_n(\lambda)=D^{2n-1}(\lambda)  + \O(\lambda^{2n-2}) \) as \( \lambda\to\infty \) and RHP~\ref{RHP hat Zn}(4) is  while \( Z_n(\lambda) \) must solve the following Riemann-Hilbert problem.

\begin{RHP}
\label{RHP Zn}
Find a $2 \times 2$ matrix function $Z_n(\lambda)$ such that
\begin{enumerate}
    \item $Z_n(\lambda)$ is analytic for $\lambda \in \C \setminus \Gamma_Z$;
    \item one-sided traces $Z_{n\pm}(\lambda)$ exist a.e. on \( \Gamma_Z \), belong to \( L^2(\Gamma_Z) \), and satisfy
    \[
        Z_{n+}(\lambda) = Z_{n-}(\lambda) D^{1-2n}(\lambda) G_{\hat Z_n}(\lambda)D^{2n-1}(\lambda);
    \]
    \item it holds that \( Z_n(\lambda) = I + \O(1/\lambda) \) as \( \lambda\to\infty \).
\end{enumerate}
\end{RHP}

As in the case of RHP~\ref{RHP Z}, we show that RHP~\ref{RHP Zn} is a small norm problem and therefore has a solution for all sufficiently large \( x \) by the small norm theorem. To this end, notice that
\begin{align}
G_{Z_n}(\lambda) &:=  D^{1-2n}(\lambda) G_{\hat Z_n}(\lambda)D^{2n-1}(\lambda) \\
& = \left( \frac{\lambda+\ii}{\lambda-\ii} \right)^{(n-\frac12)\sigma_3} G_{\hat Z_n}(\lambda)\left( \frac{\lambda+\ii}{\lambda-\ii} \right)^{-(n-\frac12)\sigma_3}.
\end{align}
Hence, the jump matrices \( G_{Z}(\lambda) \) and \( G_{Z_n}(\lambda) \) are the same on \( (\ii\R\cup S^1 )\cap \Gamma_Z \) upon replacing \( m \) by \( n-\tfrac12 \), see Figure~\ref{hat Z problem pic}. Therefore, estimates \eqref{G_z 1}, \eqref{G_z 2}, and \eqref{G_z 3} remains valid upon replacing \( m \) by \( n-\tfrac12 \) in the exponentials and changing \( x^{m+\nu} \) to \( x^\nu \). Moreover, we can repeat all the steps leading to \eqref{G_z 4} to get that
\[
\|G_{Z_n}-I\|_{L^\infty(\partial U_{\pm\ii})} \leq cx^{\nu-n-\frac12},
\]
for some constant \( c=c(n) \). Hence, as in \eqref{G_Z-I estimate}, we see that the above estimate holds for \( L^2(\Gamma_Z)\cap L^\infty(\Gamma_Z) \)-norm as well, perhaps with a modified constant \( c \). Therefore, the solution of RHP~\ref{RHP Zn} exists for all \( x\geq D_3(n)+8\ln |s^\R|_+ \) and has the form as in \eqref{int formula Z} with \( G_Z(\lambda^\prime) \) replaced by \( G_{Z_n}(\lambda^\prime) \). Finally, as in the case of \eqref{symmetries of Z}, one can readily verify that
\begin{align}
    Z_n(\lambda) = \sigma_1 Z_n(-\lambda) \sigma_1 \quad \Rightarrow \quad Z_n(\lambda) = \begin{pmatrix}
        z_{n,1}(\lambda) & z_{n,2}(-\lambda) \\ z_{n,2}(\lambda) & z_{n,1}(-\lambda)
    \end{pmatrix}. 
    \label{Zn sigma1 symmetry}
\end{align}


\subsection{Polynomial Prefactor} 
\label{Polynomial Prefactor}

In this subsection, we construct a polynomial matrix function \( B_n(\lambda) \) such that \( \hat Z_n(\lambda)\) given by \eqref{matrix Zn} solves RHP~\ref{RHP hat Zn}. Clearly, RHP~\ref{RHP hat Zn}(1,2) are immediately satisfied. RHP~\ref{RHP hat Zn}(3) follows from our requirement that \(B_n(\lambda)=D^{2n-1}(\lambda)  + \O(\lambda^{2n-2}) \) as \( \lambda\to\infty \). Hence, we only need to satisfy RHP~\ref{RHP hat Zn}(4). It can be rewritten as
\begin{equation}
    \label{condition on analyticity 1}
    B_n(\lambda) Z_n(\lambda) D^{1-2n}(\lambda) \begin{pmatrix}
    1 & 0 \\ \displaystyle \frac{x^\nu}{A} \mathcal H_{0,n}(\sqrt x\phi_\ii(\lambda)) & 1
\end{pmatrix}
\end{equation}
must be analytic in \( U_\ii \) and
\begin{equation}
    \label{condition on analyticity 2}
    B_n(\lambda) Z_n(\lambda) D^{1-2n}(\lambda)\begin{pmatrix}
    1 & \displaystyle \frac{x^\nu}{A} \mathcal H_{0,n}(\sqrt x\phi_{-\ii}(\lambda)) \\
    \displaystyle 0 & 1
\end{pmatrix}
\end{equation}
must be analytic in \( U_{-\ii}\). Similarly to Section~\ref{rational prefactor}, we assume that
\begin{align}
    B_n (\lambda) = -\sigma_1 B_n(-\lambda) \sigma_1
    \label{Bn symmetry}
\end{align}
and observe that \eqref{condition on analyticity 2} is automatically satisfied when \eqref{condition on analyticity 1} is due to \eqref{Zn sigma1 symmetry}.

Since $B_n (\lambda)$ is a matrix polynomial of degree \( 2n-1 \) and \(B_n(\lambda)=D^{2n-1}(\lambda)  + \O(\lambda^{2n-2}) \) as $\lambda \to \infty$, it admits the following representation in terms of $D(\lambda)$:
\begin{align}
\begin{aligned}
    B_n(\lambda) &= D^{2n-1}(\lambda) + \sum_{m=1}^{2n-1} B_{n,m} D^{2n-1-m}(\lambda)\\
    &= \left( I + \sum_{m=1}^{2n-1} B_{n,m} D^{-m}(\lambda) \right) D^{2n-1}(\lambda),
    \label{Bn as a polynomial}
\end{aligned}
\end{align}
where the coefficient matrices $B_{n,m}$ are constant in $\lambda$ and, due to \eqref{Bn symmetry}, satisfy
\begin{align}
\label{coeff Bm}
    B_{n,m} = (-1)^m \sigma_1 B_{n,m} \sigma_1 \quad \Rightarrow \quad
        B_{n,m} = \begin{pmatrix}
        b_{n,m} & (-1)^m d_{n,m} \\
        d_{n,m} & (-1)^m b_{n,m}
    \end{pmatrix}.
\end{align}
Recalling \eqref{Zn sigma1 symmetry}, we can readily compute that the analyticity of the \((1,1)\)-entry of the product in \eqref{condition on analyticity 1} is equivalent to the equality of the principal parts of
\begin{equation}
\label{lhs product}
\left( \sum_{m=1}^{2n-1} \frac{b_{n,m}}{(\lambda-\ii)^m} \right) \left( z_{n,1}(\lambda) + z_{n,2}(-\lambda)\left( \frac{\lambda-\ii}{\lambda+\ii} \right)^{2n-1} \frac{x^\nu}{A} \mathcal H_{0,n}(\sqrt x\phi_\ii(\lambda)) \right)
\end{equation}
and
\begin{equation}
\label{rhs product}
\left(\sum_{m=1}^{2n-1} \frac{(-1)^md_{n,m}}{(\lambda+\ii)^m} \right) \left( -z_{n,1}(-\lambda) \frac{x^\nu}{A} \mathcal H_{0,n}(\sqrt x\phi_\ii(\lambda)) - \left(\frac{\lambda+\ii}{\lambda-\ii}\right)^{2n-1}z_{n,2}(\lambda) \right).
\end{equation}
Let \( T_n \) be the following upper triangular Toeplitz matrix:
\begin{align}
\label{definiion Tn}
    T_n := [t_{n, j - i}]_{i,j=1}^{2n-1}
\end{align}
with \( i \) being the row index and \( j \) being the column one, where \( \sum_{m=0}^{\infty} t_{n,m}(\lambda-\ii)^m \) is the Taylor expansion of the second factor in \eqref{lhs product} and $t_{n,m} = 0$ if $m < 0$.
Write \( \vec b_n=(b_{n,1},\ldots,b_{n,2n-1})^\mathsf{T}\). Then, the principal part of \eqref{lhs product} can be written as
\[
\sum_{m=1}^{2n-1} \frac{(T_n\vec b_n)_m}{(\lambda-\ii)^m},
\]
where \( (T_n\vec b_n)_m \) is the \( m \)-th entry of the vector \( T_n \vec b_n \). Moreover,
\[
\sum_{m=1}^{2n-1} \frac{(-1)^md_{n,m}}{(\lambda+\ii)^m} = \sum_{k=1}^{2n-1} (L_n\vec d_n)_k (\lambda-\ii)^{k-1} + \O\big((\lambda-\ii)^{2n-1}\big),
\]
where we set \( \vec d_n=(d_{n,1},\ldots,d_{n,2n-1})^\mathsf{T}\) and \( L_n \) was defined in \eqref{definition Ln}. Further, let $H_n$ be the following triangular Hankel matrix:
\begin{align}
H_n := [h_{n,i+j-1}]_{i,j=1}^{2n-1} \label{defn of Hn}
\end{align}
with \( i \) being the row index and \( j \) being the column one, where \(\sum_{m=1}^{2n-1} h_{n,m}(\lambda-\ii)^{-m}\) is the principal part of the second factor in \eqref{rhs product} and \( h_{n,m}=0 \) if \( m\geq 2n \). Therefore, the principal part of \eqref{rhs product} is given by
\[
\sum_{m=1}^{2n-1} \frac{(H_n L_n\vec d_n)_m}{(\lambda-\ii)^m}.
\]
That is, we arrive at the equation \( T_n \vec b_n = H_n L_n \vec d_n\). 
Similarly, the analyticity of the \((2,1)\)-entry of the product in \eqref{condition on analyticity 1} is equivalent to the equality of the principal parts of
\[
\left( \sum_{m=1}^{2n-1} \frac{d_{n,m}}{(\lambda-\ii)^m} \right) \left( z_{n,1}(\lambda) + z_{n,2}(-\lambda)\left( \frac{\lambda-\ii}{\lambda+\ii} \right)^{2n-1} \frac{x^\nu}{A} \mathcal H_{0,n}(\sqrt x\phi_\ii(\lambda)) \right)
\]
and
\[
\left(1+\sum_{m=1}^{2n-1} \frac{(-1)^mb_{n,m}}{(\lambda+\ii)^m} \right) \left( -z_{n,1}(-\lambda) \frac{x^\nu}{A} \mathcal H_{0,n}(\sqrt x\phi_\ii(\lambda)) - \left(\frac{\lambda+\ii}{\lambda-\ii}\right)^{2n-1}z_{n,2}(\lambda)\right).
\]
As in the case of \( (1,1) \)-entry, we can rewrite the above condition as a linear system. Altogether, we arrive at the equations
\[
\begin{cases}
    T_n \vec b_n &= H_n L_n \vec d_n, \\
    T_n \vec d_n & = H_n(\vec e_1 + L_n \vec b_n),
\end{cases}
\]
where \( \vec e_1=(1,0,\ldots,0)^{\mathsf T}\). The above linear system can be equivalently rewritten as
\[
    (T_n-H_nL_n) (\vec d_n + \vec b_n) = H_n \vec e_1 = (T_n+H_nL_n) (\vec d_n - \vec b_n).
\]
Thus, it is solvable if and only if the determinants of the matrices \( T_n\pm H_nL_n \) are non-zero. For future use, let us also point out that it follows from \eqref{definition Ln}, \eqref{definiion Tn}, \eqref{defn of Hn}, and Proposition~\ref{prop:det} used with \( \ell=\ii \) and \( H=\pm H_n \) that
\begin{equation}
    \label{prefactor linear system}
    1 + \sum_{m=1}^{2n-1} \ii^m (b_{n,m} \pm d_{n,m}) = \frac{\det(T_n \pm H_nL_n)}{\det(T_n \mp H_nL_n)}.
\end{equation}


\subsection{Asymptotic Analysis}

It follows from RHP~\ref{RHP Y hat}(3), \eqref{defn of Zn hat}, and \eqref{matrix Zn} that
\[
P_0 = \hat Y(0) = \hat Z_n(0) = \ii^{2n-1} B_n(0)Z_n(0) \sigma_3.
\]
We get from the top line of \eqref{Bn as a polynomial} that
\[
\ii^{2n-1} B_n(0) = \left( I + \sum_{m=1}^{2n-1}\ii^m B_{n,m} \sigma_3^m \right)\sigma_3.
\]
Thus, it holds by the definition of \( P_0 \) in RHP~\ref{rhp original}(3) that
\begin{align}
\begin{pmatrix}
\cosh \tfrac u2 \\
\sinh \tfrac u2
\end{pmatrix} 
 = 
\begin{pmatrix}
1+\sum_{m=1}^{2n-1} \ii^m b_{n,m}  & \sum_{m=1}^{2n-1} \ii^m d_{n,m} \smallskip \\
\sum_{m=1}^{2n-1} \ii^m d_{n,m} & 1+\sum_{m=1}^{2n-1} \ii^m b_{n,m}
\end{pmatrix} 
 \begin{pmatrix}
z_1(0) \\
-z_2(0)
\end{pmatrix}.
\end{align}
This means that
\[
e^u = \frac{\cosh \tfrac u2+\sinh \tfrac u2}{\cosh \tfrac u2-\sinh \tfrac u2} = \frac{(1 + \sum_{m=1}^{2n-1} \ii^m (b_{n,m}+d_{n,m}))(z_1(0)-z_2(0))}{(1 + \sum_{m=1}^{2n-1} \ii^m (b_{n,m}-d_{n,m}))(z_1(0)+z_2(0))}.
\]
Therefore, we get from \eqref{prefactor linear system} that
\begin{align}
\label{formula for u part 3}
u(x) = 2\ln\left( \frac{\det(T_n + H_nL_n)}{\det(T_n - H_nL_n)} \right) + \ln \left(\frac{z_1(0)-z_2(0)}{z_1(0)+z_2(0)}\right).    
\end{align}
Now, exactly as in \eqref{z1+-z_2} we can reach the conclusion that
\[
z_1(0) \pm z_2(0) = 1 + \O_n\big(x^{\nu-n-\frac12}\big).
\]
Hence, we now need to analyze the asymptotic behavior of the above determinants.

Recall \eqref{definition Cn}. Let us show that
\begin{equation}
    \label{expansion of en}
\mathcal H_{0,n}(\sqrt x\phi_\ii(\lambda)) = -\sum_{m=1}^{2n-1} \frac{c_{n,m}}{(\lambda-\ii)^m} + \text{holomorphic part}.
\end{equation}
Indeed, it follows from \eqref{theta to map} that
\[
\frac1{\phi_\ii^{2m+1}(\lambda)} = \frac{2^{m+\frac12}}{(\lambda-\ii)^{2m+1}} (-\ii\lambda)^{m+\frac12},
\]
where one needs to recall that we have chosen the branch of the square root such that \( (-\ii\lambda)^{\frac12}_{|\ii} = -1 \). It holds that
\begin{align}
\left((-\ii\lambda)^{m+\frac12} \right)^{(l)}_{|\ii} = (-\ii)^l l! \binom{m+\frac12}l (-\ii\lambda)^{m-l+\frac12}_{|\ii} = -(-\ii)^l l! \binom{m+\frac12}l.
\end{align}
Therefore, we get from \eqref{definition of en} that
\[
\mathcal H_{0,n}(\sqrt x\phi_\ii(\lambda)) = \sum_{m=0}^{n-1} \frac{(1/2)_m}{\sqrt{2\pi x}}\left(\frac2x\right)^m \sum_{l=0}^\infty \binom{m+\frac12}l\frac{(-\ii)^{l+1}}{(\lambda-\ii)^{2m+1-l}},
\]
which, after setting \( 2m+1-l=k \), gives \eqref{definition Cn}.

It now follows from \eqref{A is almost 1} that
\begin{equation}
    \label{expansion of en 2}
\frac{x^\nu}{A}\mathcal H_{0,n}(\sqrt x\phi_\ii(\lambda)) = - x^\nu\sum_{m=1}^{2n-1} \frac{ c_{n,m} (1 + \O(x^\nu e^{-x}))}{(\lambda-\ii)^m} + \text{hol. part}.
\end{equation}
Thus, we see from \eqref{definition Cn} that the Laurent coefficients above are of order \( x^{\nu-\tfrac12} \). As \( |\nu|\leq\tfrac12 \), they are always bounded.

Recall that \( t_{n,m} \) are the Taylor coefficients of the second factor in \eqref{lhs product} and \( h_{n,m} \) were defined as the Laurent coefficients of the principal part of the second factor in \eqref{rhs product}. Similarly to \eqref{der k zi}, we observe that
\[
(z_{n,1}(\lambda)-1)^{(k)}_{|\lambda=\pm\ii}, (z_{n,2}(\lambda))^{(k)}_{|\lambda=\pm\ii} = \O_n\big(x^{\nu-n-\frac12}\big)
\]
for any \( k\geq 0 \). Hence, it holds that
\begin{align}
    t_{n,m} &= \delta_{m,0} + \O_n\big( x^{\nu-n-\frac12} \big), \\
    h_{n,m} &= x^\nu c_{n,m} + \O_n \big( x^{\nu-n-\frac12} \big),
\end{align}
where \( \delta_{i,j} \) is the Kronecker symbol. Thus, we can write
\[
T_n \pm H_nL_n = I \pm  x^\nu C_nL_n + \O_n\big( x^{\nu-n-\frac12}\big).
\]
Denote the entries of \( C_nL_n \) by \( m_{n;i,j} \). Then,
\begin{align}
D_n^\pm & :=  \det(T_n \pm H_nL_n) = \det\left( I \pm x^\nu C_nL_n + \O_n\big( x^{\nu-n-\frac12}\big) \right) \\
& = \sum_\sigma (-1)^{\sign(\sigma)} \prod_{i=1}^{2n-1}\left( \delta_{i,\sigma(i)} \pm x^\nu m_{n;i,\sigma(i)} + \O_n\big( x^{\nu-n-\frac12} \big) \right),
\end{align}
where the sum is taken over all permutations of \( \{1,2,\ldots,2n-1\} \). In the next paragraph we explain that \( m_{n;i,j} = \O(x^{-\frac12}) \). As \( \nu\leq \tfrac12 \), this gives that
\begin{align*}
    D_n^\pm & = \sum_\sigma (-1)^{\sign(\sigma)} \prod_{i=1}^{2n-1}\big( \delta_{i,\sigma(i)} \pm x^\nu m_{n;i,\sigma(i)} \big) + \O_n\big( x^{\nu-n-\frac12} \big) \\
    & = \det(I \pm x^\nu C_nL_n) + \O_n\big( x^{\nu-n-\frac12} \big) = P_n(\pm x^\nu) + \O_n\big( x^{\nu-n-\frac12} \big).
\end{align*}
The next paragraph shows that \( P_n(\pm x^\nu) = \O_n(1)\) as \( x\to\infty \). This finishes the proof of \eqref{asymptotic formula 3} in view of \eqref{formula for u part 3}.

It only remains to estimate the size of the coefficients of \( P_n(t) \). We have that
\[
P_n(t) = \sum_\sigma (-1)^{\sign(\sigma)} \prod_{i=1}^{2n-1}\big( \delta_{i,\sigma(i)} - t m_{n;i,\sigma(i)} \big).
\]
Hence, \( p_{n,l} \), the coefficients next to \( t^l \), is a sum (with coefficients \( \pm1 \)) of products of the form \( \prod_{i\in S} m_{n;i,\sigma(i)} \), where the cardinality of \( S \subset \{1,2,\ldots,2n-1\} \) is  \( l \).  One can readily see from \eqref{definition Cn} that
\[
c_{n,k} = \O\big(x^{-\frac12}\big) \quad \text{and} \quad c_{n,2k},c_{n,2k+1} = \O\big(x^{-k-\frac12}\big)
\]
for \( k\in\{1,2,\ldots,n-1\} \). Therefore, it follows from the triangular form of \( C_n \) that
\[
m_{n;1,j} = \O\big(x^{-\frac12}\big) \quad \text{and} \quad m_{n;2k,j},m_{n;2k+1,j} = \O\big(x^{-k-\frac12}\big)
\]
independently of \( j \). Then, the worst asymptotic order of \( \prod_{i\in S} m_{n;i,\sigma(i)} \) is obtained by choosing \( S = \{1,2,\ldots,l\} \). This, in particular, implies that
\[
p_{n,1}=\O(x^{-\frac12}), \quad p_{n,2k} = \O\big(p_{n,2k-1} x^{-k-\frac12}\big) \; \text{and} \; p_{n,2k+1} = \O\big(p_{n,2k-1} x^{-2k-1}\big),
\]
from which the desired claim easily follows.

\section{Proof of Theorem~\ref{thm:4}}

Similarly to \eqref{defn of Y}, we take \( \rho = 4/x \) in RHP~\ref{rhp original} and apply the gauge transformation:
\begin{align}
    X(\lambda) := e^{x\ell\sigma_3} \hat{\Psi}\left( \frac{4\lambda}{x} \right) \begin{Bmatrix} I, & |\lambda|>1 \\ \sigma_1\sigma_3, & |\lambda|<1 \end{Bmatrix} e^{-x g(\lambda)\sigma_3},
    \label{defn of X}
\end{align}
for yet to be determined function \( g(\lambda) \) such that \( g(\lambda) =\tfrac\ii4\lambda + \ell + \O(\lambda^{-1})\) as \( \lambda\to\infty \).

Recall \eqref{kappa as elliptic}, \eqref{Elliptic integrals}, and \eqref{ks}. Throughout the proof, it will be useful for us to use the following constants:
\begin{align}
\label{Kalpha}
\mathcal K_\alpha := \KK^2 \kk^3 (\kk^\prime )^6, \quad \mathcal K_\alpha^* := \KK \sqrt{\kk} (\kk^\prime)^2, \quad \text{and} \quad \mathcal K_\alpha^{**} := \mathcal K_\alpha^*(\kk\kk^\prime)^{-\frac13}.    
\end{align}

\subsection{The $g$-function}

In what follows, we denote by $\Gamma(a_1,a_2) \subset S^1$ the subarc of the unit circle $S^1$ with endpoints $a_1$ and $a_2$ oriented from $a_1$ to $a_2$.

\subsubsection{Definition of \( a \)}

We need two preparatory steps before we can define and discuss properties of the desired function \( g(\lambda) \). Here, we make the first one. 

\begin{lem}
\label{lem:varkappa}
The function
\[
\widehat\varkappa(t) = \frac{1}{2} \int_0^{t} \sqrt{2(\cos 2\theta - \cos 2t)} \, d\theta
\]
is continuous and increasing on $( 0, \tfrac{\pi}{2})$ with range $( 0,1)$. In particular, the inverse function is also increasing and continuous with domain $( 0,1)$ and range $( 0, \tfrac{\pi}{2})$. Moreover,
\begin{align}
    \widehat\varkappa(t) = E(\sin t) - \cos^2 t \, K(\sin t) > \frac\pi4 \sin^2 t,
\end{align}
where \( K(\cdot) \) and \( E(\cdot) \) are the complete elliptic integrals of the first and second kind.
\end{lem}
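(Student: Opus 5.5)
The plan is to reduce $\widehat\varkappa(t)$ to a single integral over $[0,1]$ whose integrand depends on $t$ only through $\kk=\sin t$. From that integral we can read off the elliptic representation, the monotonicity, the range, and the lower bound.

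First, use $\cos 2\theta-\cos 2t = 2(\sin^2 t-\sin^2\theta)$. This gives
\[
\widehat\varkappa(t)=\int_0^t\sqrt{\sin^2 t-\sin^2\theta}\,d\theta .
\]
Next, set $k:=\sin t\in(0,1)$ and substitute $\sin\theta = k y$, $y\in[0,1]$, so that $d\theta = k\,dy/\sqrt{1-k^2y^2}$. We obtain
\[
\widehat\varkappa(t)=\int_0^1 \frac{k^2\sqrt{1-y^2}}{\sqrt{1-k^2y^2}}\,dy .
\]
Write $k^2\sqrt{1-y^2}=k^2(1-y^2)/\sqrt{1-y^2}$ and use $k^2(1-y^2)=(1-k^2y^2)-(1-k^2)$. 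The integral then splits as $E(k)-(1-k^2)K(k)$, which is $E(\sin t)-\cos^2 t\,K(\sin t)$ by \eqref{Elliptic integrals}.

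For the qualitative claims we work directly with the $y$-integral. For each fixed $y\in(0,1)$ the integrand $k^2\sqrt{1-y^2}/\sqrt{1-k^2y^2}$ is continuous and strictly increasing in $k\in(0,1)$. It is also dominated by $\sqrt{1-y^2}/\sqrt{1-y^2}=1$. Dominated convergence therefore gives continuity in $k$, strict monotonicity follows by integrating the pointwise strict inequality, and composing with the increasing continuous map $t\mapsto\sin t$ on $(0,\tfrac\pi2)$ yields that $\widehat\varkappa$ is continuous and increasing. As $t\to0^+$ the integrand tends to $0$, and as $t\to\tfrac\pi2^-$ it tends to $1$ pointwise. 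By dominated (or monotone) convergence the limits are $0$ and $1$, so the range is $(0,1)$. The statements about the inverse function are then standard consequences of strict monotonicity and continuity.

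Finally, since $1/\sqrt{1-k^2y^2}>1$ for $y\in(0,1]$, we get
\[
\widehat\varkappa(t)>k^2\int_0^1\sqrt{1-y^2}\,dy=\frac\pi4\sin^2 t .
\]
There is no real obstacle here. The only point requiring care is the algebraic identity producing $E-\cos^2 t\,K$ and justifying the endpoint limit at $t=\tfrac\pi2$, where $K(k)$ blows up. We avoid that by taking the limit in the combined integrand rather than in $E$ and $K$ separately.
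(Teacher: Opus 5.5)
Your proof is correct and uses the same reduction as the paper. Both rely on the identity $\cos2\theta-\cos2t=2(\sin^2t-\sin^2\theta)$, the substitution $\sin\theta=y\sin t$, and the splitting $k^2(1-y^2)=(1-k^2y^2)-(1-k^2)$ that produces $E(\sin t)-\cos^2t\,K(\sin t)$.

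The differences are only in the surrounding claims. The paper declares continuity, monotonicity and the endpoint values obvious. From the original $\theta$-form they are: both the upper limit and the integrand increase with $t$, and $\widehat\varkappa(\tfrac\pi2)=\int_0^{\pi/2}\cos\theta\,d\theta=1$. You instead derive these facts from the $y$-integral by dominated convergence, which also avoids handling the blow-up of $K$ at $k=1$ separately.

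For the lower bound, the paper quotes DLMF (19.9.7) to get the sharper estimate $\widehat\varkappa(t)>\frac\pi2\sin^2t/\sqrt{4-\sin^2t}$ and then weakens it. Your pointwise comparison $1/\sqrt{1-k^2y^2}>1$ gives exactly $\frac\pi4\sin^2t$ with no external input. Only this weaker bound is used later, in \eqref{kappa V estimate}, so nothing is lost by your route.
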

\begin{proof}
The first claim is obvious. It is also easy to see that
\[
\lim_{t \to 0^+} \widehat \varkappa(t) = 0 \quad \text{and} \quad \lim_{t \to \frac{\pi}{2}^-} \widehat \varkappa(t) = 1.
\]
Hence, the inverse function has the desired properties. Using $\cos 2t = 1 - 2 \sin^2t$ gives
\begin{align}
    \widehat\varkappa(t) = \int_0^t \sqrt{\sin^2 t  - \sin^2 \theta} \, d \theta.
\end{align}
Thus, if we let $k=\sin t$ and $y = \frac{\sin \theta}{\sin t}$, then
\[
\widehat\varkappa(t) = k^2 \int_0^1 \frac{\sqrt{1 - y^2}}{\sqrt{1 - k^2y^2}} \, dy = E(\sin t) - \cos^2 t \, K(\sin t). 
\]
Finally, estimate \cite[Equation~(19.9.7)]{NIST:DLMF} gives
\[
\widehat\varkappa(t) > \frac\pi2 \frac{\sin^2 t}{\sqrt{4-\sin^2t}},
\]
which readily yields the last lower bound of the lemma.
\end{proof}

It follows from Lemma~\ref{lem:varkappa} that, given \( \varkappa\in(0,1)\), there exists unique \( \alpha\in(0,\tfrac\pi2) \) satisfying \eqref{defn of alpha}, i.e., \( \varkappa=\widehat\varkappa(\alpha) \). Moreover, \eqref{kappa as elliptic} indeed takes place. Further, set
\begin{equation}
\label{defn of V}
a := e^{\ii \alpha}, \quad V(\varkappa) := \frac{1}{4\pi} \int_{\alpha}^{\pi - \alpha} \sqrt{2(\cos 2\alpha - \cos 2\theta)} \, d\theta.
\end{equation}
Below, we often simply write \( V \) for \( V(\varkappa)\). Observe that
\begin{align*}
\pi V & = \frac{1}{2\pi} \int_{\alpha}^{\pi/2} \sqrt{2(\cos 2\alpha - \cos 2\theta)} \, d\theta = \frac1\pi \int_\alpha^{\pi/2} \sqrt{\cos^2 \alpha  - \cos^2 \theta} \, d \theta \\
& = E(\cos \alpha) - \sin^2 \alpha \, K(\cos \alpha),
\end{align*}
where we used the substitution  $y = \frac{\cos \theta}{\cos \alpha}$. 

Recall our notation \eqref{ks}. Notice that we can view \( \varkappa \) as a function of  \( \kk \), which immediately gives \( \pi V=\varkappa(\kk^\prime) \in(0,1)\). Moreover, we get from the last estimate of Lemma~\ref{lem:varkappa} that
\begin{align}
\label{kappa V estimate}
(\kk\kk^\prime)^2 < 2\varkappa \pi V. 
\end{align}
 Furthermore, Legendre's relation \cite[Equation~(19.7.1)]{NIST:DLMF} gives us
\begin{align}
\label{Legendre's relation}    
\varkappa \KK^\prime + \pi V \KK = \frac\pi2 \quad \Rightarrow  \quad xV = \frac{x-2\kappa \KK^\prime}{2\KK}.
\end{align}

\subsubsection{Definition of \( \sqrt f \)}

Here, we make the second preparatory step. Let \( f(\mu) = (\mu^2 - a^2)(\mu^2 - \overline{a}^2) \) and define 
\begin{align} \label{defn of sqrt of f}
    \sqrt{f(\mu)} = \sqrt{(\mu^2 - a^2)(\mu^2 - \overline{a}^2)} = \sqrt{\mu^4-2\Re(a^2)\mu^2+1}
\end{align}
as a function holomorphic away from
\[
\Delta := \Gamma (-\overline{a}, -a) \cup \Gamma (\overline{a},a)
\]
and such that $\sqrt{f(\mu)} < 0$ for $\mu \in (-1, 1)$ and $\sqrt{f(\mu)} > 0$ for $\mu \in (1, \infty) \cup (-\infty, -1)$, see Figure~\ref{fig:cut}.
\begin{figure}[ht!]
    \centering
    \includegraphics[width=0.5\linewidth]{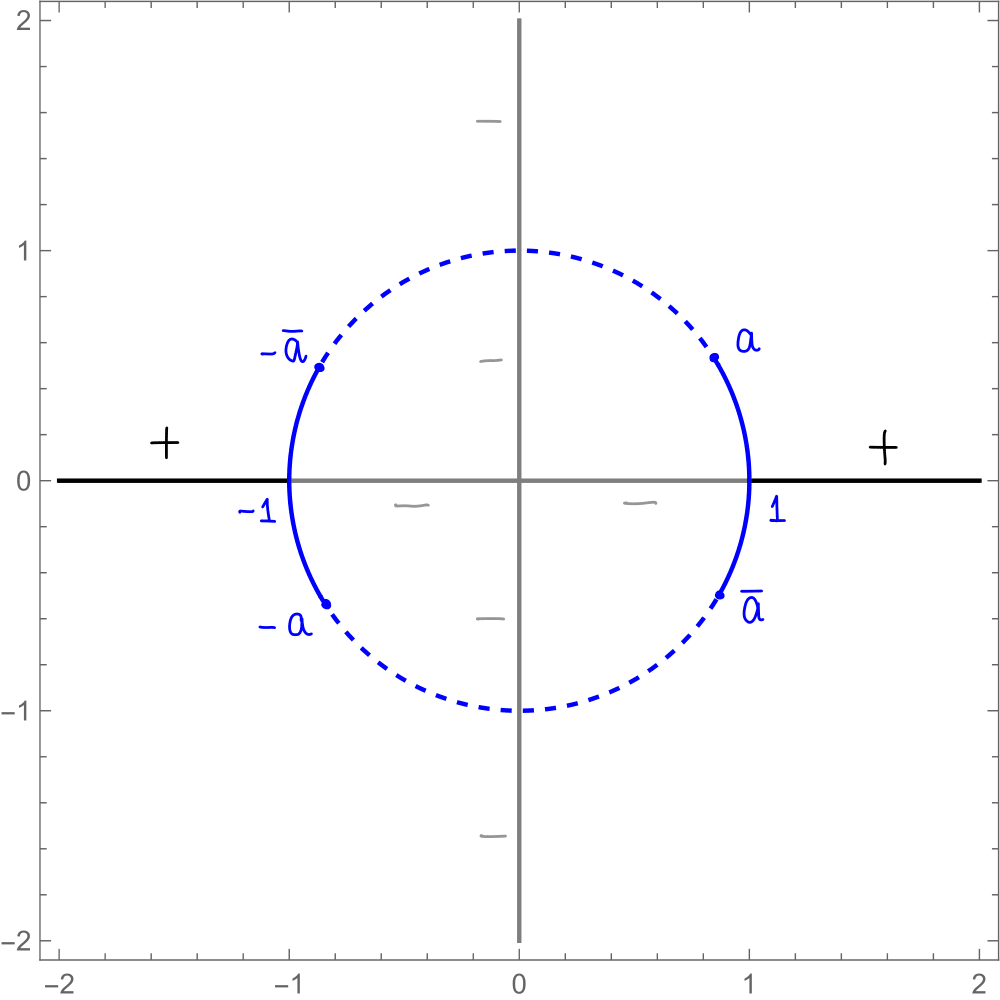}
    \caption{The branch cut for $\sqrt{f(\mu)}$ and its signs on the axes.}
    \label{fig:cut}
\end{figure}
In particular, it holds that
\begin{equation}
    \label{f_series}
    \sqrt{f(\mu)} = \mu^2 \left(1 - \Re (a^2)\mu^{-2} + \O\left(\mu^{-4} \right) \right), \quad \mu \to \infty.
\end{equation}
Observe also that \( \sqrt{f(\mu)} \) is negative on the imaginary axis and
\begin{equation}
\label{f_symmetries}
\sqrt{f(\mu)} = \sqrt{f(-\mu)} = - \mu^2\sqrt{f(\mu^{-1})} = \overline{\sqrt{f(\overline \mu)}}.
\end{equation}
Next, it clearly holds that
\[
f(\mu) = 2\mu^2\big( \Re(\mu^2)-\Re(a^2)\big), \quad \mu\in S^1.
\]
Therefore, the trace of \( \sqrt{f(\mu)} \) on \( S^1 \) from within the unit disk satisfies
\begin{equation}
\label{f_trace}
\sqrt{f(\mu)}_+ = \sqrt{2|\Re(\mu^2)-\Re(a^2)|}
\begin{cases}[r]
-\mu, & \mu\in \Gamma(\overline a,a),  \\
\ii\mu, & \mu\in \Gamma(a,-\overline a), \\
\mu, & \mu\in\Gamma(-\overline a,-a), \\
-\ii\mu, & \mu\in\Gamma(-a,\overline a),
\end{cases}
\end{equation}
where the root on the right-hand side above is arithmetic. Of course, it also holds that
\begin{equation}
\label{f_jump}
\sqrt{f(\mu)}_- = \sqrt{f(\mu)}_+
\begin{cases}[r]
-1, & \mu\in \Delta, \smallskip \\
1, & \mu\in S^1\setminus\Delta.
\end{cases}
\end{equation}

\subsubsection{Definition of \( g \)}

We are now ready to define the \( g \)-function used in \eqref{defn of X}. We set
\begin{align}
\label{defn of g}
    g(\lambda) := \frac{\ii}{4} \int_{1}^{\lambda} \frac{\sqrt{f(\mu)}}{\mu^2} \, d \mu, \quad \lambda \in \C^* \setminus S^1,
\end{align}
where the path of integration does not cross the unit circle and can wind arbitrarily around the origin or infinity because the integrated differential has double poles with zero residues at these two points by \eqref{f_series} and \eqref{f_symmetries}. 

If we formally put \( a=\ii \) in \eqref{defn of g}, which corresponds to \( \varkappa=1\), we will get that
\begin{align}
\label{g theta}
g(\lambda) = \begin{cases}[r]
    \theta(\lambda), & |\lambda|>1, \\
    -\theta(\lambda), & |\lambda|<1.
\end{cases}
\end{align}
In this case, if we multiply \( X(\lambda) \)  by \( \sigma_3\sigma_1 \) in the unit disk, then it will be equal to \( Y(\lambda)\), see \eqref{defn of Y} and \eqref{defn of X}. On the other hand, if we formally put \( a=1 \), which corresponds to \( \varkappa=0 \), then we get
\[
g(\lambda) = \frac\ii 4 \left(\lambda+\frac1\lambda \right) - \frac\ii2,
\]
which, up to trivial summand \( -\frac\ii2\), coincides with the function  \( \varphi(\lambda) \) from \eqref{defn W} further below, which is used to transform \( \hat\Psi(\lambda)\) to study the case \( \varkappa \to0 \) as \( x\to\infty \).

It readily follows from \eqref{f_symmetries} that
\begin{equation}
\label{g_symmetry}
g(\lambda) = g(\lambda^{-1}) = -\overline{g(\overline \lambda)}, \quad \lambda\in\C^*\setminus S^1.
\end{equation}
Moreover, we get from \eqref{f_series} that
\begin{equation}
\label{g_series}
g(\lambda) = \frac{\ii\lambda}{4} + \ell + \frac{\ii}{4}\frac{\Re (a^2)}{\lambda} + \O\left(\frac{1}{\lambda^3} \right), \quad \lambda \to \infty
\end{equation}
(this also gives the first three terms of the Laurent series at \( 0 \) due to \eqref{g_symmetry}), where
\[
\ell = -\frac{\ii}{4} + \frac{\ii}{4} \int_1^{\infty} \left( \frac{\sqrt{f(\mu)}}{\mu^2} - 1 \right) d\mu \in \ii \big(-\tfrac12,0\big)
\]
(the last conclusion follows from the fact that \( \sqrt{f(\mu)}>0 \) is a decreasing function of \( \Re(a^2)\in(-1,1) \) for each \( \mu\in(1,\infty) \)). The definition of \( g(\lambda) \) readily yields
\[
g_-(-1) = g(\lambda) + g(-\lambda) = \left( g(\lambda) -\frac{\ii \lambda}4 \right) + \left( g(-\lambda) +\frac{\ii \lambda}4 \right) = 2\ell,
\]
where we used \eqref{g_series} and independence from \( \lambda \) to get the last equality. Notice that the first symmetry in \eqref{g_symmetry} gives that \( g_+(-1)=g_-(-1)\). Thus,
\begin{equation}
\label{g_symmetry2}
g(\lambda) = 2\ell - g(-\lambda), \quad \lambda\in\C^*\setminus S^1.
\end{equation}

\subsubsection{Traces of \( g \) on \( S^1 \)}

In what follows, it will be important for us to know the relations between the boundary values of \( g(\lambda) \) from both sides of the unit circle. To this end, let
\begin{equation}
\label{PiOmega}
\begin{cases}
\Pi(\lambda) & := g_+(\lambda) - g_-(\lambda), \\
\Omega(\lambda) & := \ii (g_+(\lambda) + g_-(\lambda)),
\end{cases} \quad \lambda\in S^1.
\end{equation}
Recall Lemma~\ref{lem:varkappa}. Extend \( \widehat\varkappa(t) \) to the interval \(( -\tfrac{\pi}{2}, \tfrac{\pi}{2})\) as an odd function. Then, it readily follows from \eqref{f_trace} and \eqref{f_jump} that
\[
\Pi(\lambda) = \frac\ii2 \int_1^\lambda \frac{\sqrt{f(\mu)}_+}{\mu^2} \, d \mu = \frac{1}{2} \int_0^{\Arg \lambda} \sqrt{2(\cos 2\theta - \cos 2\alpha)} \, d\theta = \widehat\varkappa(\Arg\lambda)
\]
for \( \lambda\in \Gamma(\overline a,a) \), where \( \Arg(\lambda)\in(-\pi,\pi] \), the root is arithmetic, and we used $\mu = e^{\ii \theta}$ and $a = e^{\ii \alpha}$. In particular, \( \Pi(a)=\varkappa \) and \( \Pi(\overline a)=-\varkappa \) by the very definition \eqref{defn of alpha}. Using \eqref{g_symmetry2} we conclude that
\begin{equation}
\label{Pi}
\Pi(\lambda) =
\begin{cases}[r]
\widehat\varkappa(\Arg\lambda), & \lambda\in \Gamma(\overline a,a),  \\
\varkappa, & \lambda\in \Gamma(a,-\overline a), \\
-\widehat\varkappa(\Arg(-\lambda)), & \lambda\in\Gamma(-\overline a,-a), \\
-\varkappa, & \lambda\in\Gamma(-a,\overline a).
\end{cases}
\end{equation}
Observe that \( \Pi(\lambda) \) is real-valued. Further, let
\[
\widehat V(t) := \frac1{4\pi}\int_\alpha^t \sqrt{2(\cos2\alpha-\cos2\theta)}\,d\theta, \quad t\in(\alpha,\pi-\alpha).
\]
Clearly, \( \widehat V(t) \) is an increasing function of \( t \). Notice that \( V=\widehat V(\pi-\alpha)\), see \eqref{defn of V}. We readily get from \eqref{f_trace} and \eqref{f_jump} that \( \Omega(\lambda) =0 \) on \( \Gamma(\overline a,a) \) and that
\[
\Omega(\lambda) = -\frac12 \int_a^\lambda \frac{\sqrt{f(\mu)}}{\mu^2}\,d\mu = 2\pi \widehat V(\Arg\lambda)
\]
for \( \lambda \in \Gamma(a,-\overline a) \). Extend \( \widehat V(t) \) to \( (\alpha-\pi,-\alpha) \) as an even function. Then, it follows from \eqref{g_symmetry} that
\begin{equation}
\label{Omega}
\Omega(\lambda) = 2\pi
\begin{cases}[r]
0, & \lambda\in \Gamma(\overline a,a),  \\
V, & \lambda\in\Gamma(-\overline a,-a), \\
\widehat V(\Arg\lambda), & \lambda\in\Gamma(a,-\overline a)\cup \Gamma(-a,\overline a).
\end{cases}
\end{equation}
As in the case of \( \Pi(\lambda) \), the function \( \Omega(\lambda) \) is real-valued.

Relations \eqref{PiOmega}, \eqref{Pi}, and \eqref{Omega} show that \( 2g(\lambda)\) maps the unit disk onto the complement of the rectangle with vertices \( \varkappa,\varkappa-2\pi\ii V,-\varkappa-2\pi\ii V,-\varkappa\) with the unit circle being mapped into the boundary of this rectangle in one-to-one fashion. It then follows from the Argument Principle that \( g(\lambda) \) is univalent in  the unit disk and therefore in the complement of the closed unit disk as well (recall \eqref{g_symmetry}). 

\begin{lem}
\label{lem:V_ell_connection}
It holds that \(2\ii\ell = \pi V \).
\end{lem}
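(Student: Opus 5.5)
The identity should follow by evaluating the function \( \Omega(\lambda) \) from \eqref{PiOmega} at the single point \( \lambda=-1 \), in two different ways. We do not need to compute \( \ell \) or \( V \) as integrals separately. The first evaluation uses what was already derived just before \eqref{g_symmetry2}: \( g_-(-1)=2\ell \), and \( g_+(-1)=g_-(-1) \) by the first symmetry in \eqref{g_symmetry}. The second uses the explicit form \eqref{Omega} of \( \Omega \) on the arc containing \( -1 \).

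The steps, in order, are as follows.

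\emph{Step 1.} Check that \( -1\in\Gamma(-\overline a,-a) \). Since \( a=e^{\ii\alpha} \) with \( \alpha\in(0,\tfrac\pi2) \), we have \( -\overline a=e^{\ii(\pi-\alpha)} \) and \( -a=e^{\ii(\pi+\alpha)} \). The counterclockwise arc between them passes through \( e^{\ii\pi}=-1 \). Moreover, \( -1 \) is an interior point of the branch cut \( \Delta \), not an endpoint, so both traces of \( g \) are well defined there.

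\emph{Step 2.} Compute \( \Omega(-1) \) from the definition:
\[
\Omega(-1)=\ii\big(g_+(-1)+g_-(-1)\big)=\ii\cdot 4\ell .
\]

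\emph{Step 3.} Read off from \eqref{Omega} that \( \Omega(-1)=2\pi V \). Equating with Step 2 gives \( 4\ii\ell=2\pi V \), that is, \( 2\ii\ell=\pi V \).

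For completeness I would re-justify the constant value of \( \Omega \) on \( \Gamma(-\overline a,-a) \), since this is where the content really lies.

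\emph{Constancy.} On \( \Delta \) the traces satisfy \( \sqrt{f}_-=-\sqrt{f}_+ \) by \eqref{f_jump}. Hence
\[
\frac{d}{d\lambda}\big(g_++g_-\big)=\frac{\ii}{4}\,\frac{\sqrt f_++\sqrt f_-}{\lambda^2}=0
\]
along \( \Gamma(-\overline a,-a) \), so \( \Omega \) is constant on this arc.

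\emph{Value.} At the endpoint \( -\overline a \), \( \Omega \) is continuous because \( \sqrt f \) vanishes there. Approaching from \( \Gamma(a,-\overline a) \), it takes the value \( 2\pi\widehat V(\pi-\alpha)=2\pi V \), by \eqref{defn of V} and the formula for \( \Omega \) on \( \Gamma(a,-\overline a) \).

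The main (mild) obstacle is keeping the orientation and sign conventions straight. The \( + \) side is the interior of the disk. The claims \( g_-(-1)=2\ell \) and \( g_+(-1)=g_-(-1) \) rely on path-independence of \eqref{defn of g}, which holds because the residues at \( 0 \) and \( \infty \) vanish, and on \( g(\lambda)=g(1/\lambda) \). If any sign there is doubtful, I would double-check via the relation \( g(\lambda)+g(-\lambda)=2\ell \) in \eqref{g_symmetry2}: taking \( \lambda\to-1 \) from inside and from outside gives \( g_+(-1)+g_+(1)=2\ell \), and \( g_\pm(1)=0 \) by the normalization of \eqref{defn of g}, with \( 1\in\Gamma(\overline a,a) \) where \( \Omega=0 \) and \( \Pi=0 \).

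As a final sanity check, the result should be consistent with \( \ell\in\ii(-\tfrac12,0) \) and \( \pi V\in(0,1) \): indeed \( 2\ii\ell\in(0,1) \).
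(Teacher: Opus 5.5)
Your argument is correct, and it reaches the identity by a shorter route than the paper.

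The paper works globally. It first writes \(\ell=\frac1{2\pi}\int_{S^1}g_-\,|d\lambda|\), a mean-value identity coming from \eqref{g_series}. It then substitutes \(g_-=-\frac12(\Pi+\ii\Omega)\) and integrates the piecewise formulas \eqref{Pi} and \eqref{Omega} over all four arcs. Finally, it needs an exchange of integration order plus a reflection argument to show \(\int_\alpha^{\pi-\alpha}\widehat V(t)\,dt=(\frac\pi2-\alpha)V\).

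You instead evaluate \(\Omega\) at the single point \(-1\) in two ways:
\begin{itemize}
\item From the symmetry side, \(g_\pm(-1)=2\ell\). This was already established before \eqref{g_symmetry2} and is consistent with \(\Pi(-1)=0\). It gives \(\Omega(-1)=4\ii\ell\).
\item Continuing along the upper half of the circle gives \(\Omega(-1)=2\pi V\). You start from \(1\), where \(g_\pm=0\) and so \(\Omega=0\) on \(\Gamma(\overline a,a)\), then pass through \(a\) and \(-\overline a\).
\end{itemize}

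Your explicit re-derivation of the value \(2\pi V\) on \(\Gamma(-\overline a,-a)\) is worth keeping. It uses constancy of \(g_++g_-\) on the cut together with continuity at \(-\overline a\). This makes clear the value is not obtained from \eqref{g_symmetry2}, which would only give \(\Omega(\lambda)=4\ii\ell-\Omega(-\lambda)=4\ii\ell\) there and would make the argument circular.

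What your route buys is economy. It involves no integration and uses \(\Omega\) only on the upper arcs from \(1\) to \(-1\). The paper's averaging uses all of \eqref{Pi}--\eqref{Omega} but never needs the point value \(g_\pm(-1)=2\ell\).
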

\begin{proof}
It readily follows from \eqref{g_series}, \eqref{PiOmega}, \eqref{Pi}, and \eqref{Omega} that
\[
\ell = \frac1{2\pi}\int_{S^1} g_-(\lambda)|d\lambda| = -\frac1{4\pi}\int_{S^1}(\Pi(\lambda)+\ii\Omega(\lambda))|d\lambda| = -\ii\alpha V - \ii\int_\alpha^{\pi-\alpha} \widehat V(t)dt.
\]
Exchanging the order of integration gives
\begin{align}
\int_\alpha^{\pi-\alpha} \widehat V(t)dt & = \frac1{4\pi}\int_\alpha^{\pi-\alpha} (\pi-\alpha-\theta)\sqrt{2(\cos2\alpha-\cos2\theta)}\,d\theta \\
& = \frac1{4\pi}\int_\alpha^{\pi-\alpha} \left(\frac\pi2-\alpha\right)\sqrt{2(\cos2\alpha-\cos2\theta)}\,d\theta = \left(\frac\pi2-\alpha\right)V,
\end{align}
where the second equality follows by symmetry (integrals of functions that are odd with respect to the midpoint of the interval of integration are zero). The claim now follows from the above two equalities. 
\end{proof}

\subsubsection{Estimates of \( \Re(g)\)}

We close this subsection with one more observation that will become useful later. 

\begin{lem}
\label{lem:Re_g}
It holds that
\[
2\Re(g(\lambda)) \geq \varkappa + s({\Arg\lambda}) \sinh^2\big(\tfrac12\ln|\lambda|\big), \quad \begin{cases}
|\lambda| \leq 1, & \Arg(\lambda)\in[\alpha,\pi-\alpha], \smallskip \\
|\lambda| \geq 1, & \Arg(\lambda)\in[-\pi+\alpha,-\alpha],
\end{cases}
\]
and
\[
2\Re(g(\lambda)) \leq -\varkappa - s({\Arg\lambda}) \sinh^2\big(\tfrac12\ln|\lambda|\big), \quad \begin{cases}
|\lambda| \geq 1, & \Arg(\lambda)\in[\alpha,\pi-\alpha], \smallskip \\
|\lambda| \leq 1, & \Arg(\lambda)\in[-\pi+\alpha,-\alpha],
\end{cases}
\]
where
\[
s(\theta):= \frac1{2^{1/4}}\begin{cases}[r]
\sin(2\theta), & |\theta|\in[0,\tfrac\pi4]\cup[\tfrac{3\pi}4,\pi], \smallskip \\
1, & |\theta|\in[\tfrac\pi4,\tfrac{3\pi}4].
\end{cases}
\]
\end{lem}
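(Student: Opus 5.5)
The plan is to start from the exact boundary values of \( \Re(g) \) on \( S^1 \) and integrate along rays \( \lambda=e^{t+\ii\theta} \) with \( \theta=\Arg\lambda \) fixed. Two ingredients are needed: \( 2\Re g \) equals \( \pm\varkappa \) on the relevant arcs, and a lower bound on its radial derivative.

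\textbf{Boundary values.} By \eqref{PiOmega} we have \( g_\pm=\frac12(\pm\Pi-\ii\Omega) \). Since \( \Pi \) and \( \Omega \) are real, \( 2\Re g_\pm=\pm\Pi \). By \eqref{Pi}, on \( \Gamma(a,-\overline a) \) (that is, \( \Arg\lambda\in[\alpha,\pi-\alpha] \)) this gives \( 2\Re g=\varkappa \) from inside the disk and \( -\varkappa \) from outside. On \( \Gamma(-a,\overline a) \) the signs are reversed. This already gives the four inequalities at \( |\lambda|=1 \), where \( \sinh^2(\tfrac12\ln|\lambda|)=0 \).

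\textbf{Reduction by symmetry.} I only need to treat \( |\lambda|\le1 \) with \( \theta\in[\alpha,\pi-\alpha] \), and by \( \theta\mapsto\pi-\theta \) only \( \theta\in[\alpha,\tfrac\pi2] \). The other cases then follow from the symmetries. The relation \( g(\lambda)=g(\lambda^{-1}) \) in \eqref{g_symmetry} handles the exterior, using that \( \sinh^2(\tfrac12\ln|\lambda|) \) is even in \( \ln|\lambda| \). The relation \( g(\lambda)=2\ell-g(-\lambda) \) in \eqref{g_symmetry2}, with \( \Re\ell=0 \) by Lemma~\ref{lem:V_ell_connection}, handles the arcs with negative argument and flips the sign of \( 2\Re g \). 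The relation \( g(\lambda)=-\overline{g(\overline\lambda)} \) handles \( \theta\mapsto\pi-\theta \). I also use that \( s \) is even and \( s(\pi-\theta)=s(\theta) \).

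\textbf{Radial derivative.} Differentiating along the ray gives
\[
\frac{d}{dt}\Re g(e^{t+\ii\theta})=\Re\big(\lambda g'(\lambda)\big)=-\tfrac14\Im\Big(\frac{\sqrt{f(\lambda)}}{\lambda}\Big).
\]
Here \( (\sqrt{f(\lambda)}/\lambda)^2=w:=\lambda^2+\lambda^{-2}-2\cos2\alpha \), so
\[
w=2(\cosh2t\cos2\theta-\cos2\alpha)+2\ii\sinh2t\sin2\theta .
\]
The branch is fixed by continuity from the trace \eqref{f_trace}: on \( \Gamma(a,-\overline a) \), \( \sqrt{f}_+/\lambda=\ii\sqrt{2|\cdots|} \). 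I then show that along the ray, for \( t<0 \), the quantity \( \mp\Im(\sqrt f/\lambda) \) is positive in the appropriate direction, so that \( 2\Re g \) increases as \( t \) decreases from \( 0 \).

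\textbf{Main obstacle: bounding the imaginary part of the square root.} I use
\[
|\Im\sqrt w|=\sqrt{\tfrac12(|w|-\Re w)} .
\]
For \( \theta\in[\tfrac\pi4,\tfrac\pi2] \), I have \( \cos2\theta\le0 \) and \( \cos2\theta\le\cos2\alpha \), so \( \Re w\le0 \). Then \( |\Im\sqrt w|\ge\sqrt{|w|/2} \). Next,
\[
|w|\ge2\max\{\cosh2t|\cos2\theta|,\,|\sinh2t|\sin2\theta\}\ge\sqrt2\,|\sinh 2t|,
\]
since \( \cos^2+\sin^2=1 \) forces one of the two coefficients to be at least \( 1/\sqrt2 \). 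This gives the constant \( 2^{-1/4} \).

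For \( \theta\in[\alpha,\tfrac\pi4] \), \( \Re w \) may be positive. There I use instead \( |\Im\sqrt w|=|\Im w|/(2|\Re\sqrt w|) \). Bounding \( |\Re\sqrt w|\le|w|^{1/2} \) and comparing the two pieces of \( w \), I expect to get \( |\Im\sqrt w|\gtrsim\sin2\theta\,\sqrt{|\sinh 2t|} \). This is where the factor \( \sin2\theta \) in \( s(\theta) \) comes from. This case needs the most care to get exactly the constant \( 2^{-1/4} \), and it is where I expect to spend the effort.

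\textbf{Integration.} In both cases the derivative bound has the form
\[
\Big|\frac{d}{dt}\,2\Re g\Big|\ge\tfrac12 s(\theta)\sqrt{2\sinh|2t|\,/\sqrt2}\cdots\ \ge\ \tfrac12 s(\theta)\sinh|t| .
\]
The last step uses \( \sqrt{2\sinh|t|\cosh t}\ge\sinh|t| \). Finally, \( \tfrac12\sinh|t| \) is exactly the derivative of \( \sinh^2(t/2) \), so integrating from \( 0 \) to \( t \) and adding the boundary value \( \varkappa \) yields
\[
2\Re g\ge\varkappa+s(\theta)\sinh^2\big(\tfrac12\ln|\lambda|\big),
\]
as claimed.
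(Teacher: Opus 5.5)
Your plan is the paper's proof in differential form. It uses the same boundary value $2\Re g=\varkappa$ from inside on $\Gamma(a,-\overline a)$, the same ray through $e^{\ii\theta}$, the branch $\sqrt{f(\lambda)}/\lambda=-\sqrt w$ with the principal root, the identity $2(\Im\sqrt w)^2=|w|-\Re w$, the split at $\theta=\tfrac\pi4$, and the symmetries. One small correction on the symmetries: $g(\lambda)=-\overline{g(\overline\lambda)}$ alone gives $\theta\mapsto-\theta$ with a sign change, and it is its combination with \eqref{g_symmetry2} that gives $\Re g(-\overline\lambda)=\Re g(\lambda)$, i.e. $\theta\mapsto\pi-\theta$. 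Since $-\frac{d}{dt}2\Re g=\tfrac12|\Im\sqrt w|$ and $\int_t^0\tfrac12\sinh|\tau|\,d\tau=\sinh^2(t/2)$, everything reduces to showing $|\Im\sqrt w|\ge s(\theta)\sinh|t|$.

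\textbf{The gap.} The step that fails is your bound for $\theta\in[\tfrac\pi4,\tfrac\pi2]$. The inequality $|w|\ge2\cosh(2t)|\cos2\theta|$, and with it $|w|\ge\sqrt2|\sinh2t|$, tacitly assumes $\cos2\alpha\ge0$. For $\alpha>\tfrac\pi4$ you only have $|\Re w|=2(\cosh(2t)|\cos2\theta|-|\cos2\alpha|)$, which can be small. For example, at $\theta=\tfrac\pi2$ one gets $|w|=4\sinh^2t+4\cos^2\alpha$. This is $\O(t^2)$ once $\cos\alpha\le|t|$, whereas $\sqrt2|\sinh2t|\approx2\sqrt2|t|$. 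The lemma is asserted for all $\alpha\in(0,\tfrac\pi2)$, so this step does not go through as written, even though the target inequality is still true there.

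\textbf{The repair.} Do not discard $-\Re w\ge0$. Because $\cos2\theta\le0$ and $\cos2\theta\le\cos2\alpha$, you have $-\Re w=2(\cos2\alpha-\cosh(2t)\cos2\theta)\ge2(\cosh2t-1)|\cos2\theta|=4\sinh^2t\,|\cos2\theta|$. You also have $|\Im w|=2|\sinh2t|\sin2\theta\ge4\sinh^2t\,\sin2\theta$. Hence $|w|-\Re w\ge|\Im w|-\Re w\ge4\sinh^2t$, i.e. $|\Im\sqrt w|\ge\sqrt2\sinh|t|$. This is exactly the paper's estimate $2q^2\ge v-u\ge(r^{-1}-r)^2$ with $r=|\lambda|$.

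\textbf{The case $\theta\in[\alpha,\tfrac\pi4]$.} You left this case open, and your method does work there. Aim directly at $\sinh|t|$ rather than $\sqrt{|\sinh2t|}$, since the bound $|\Re\sqrt w|\le|w|^{1/2}$ only produces something of order $|t|$ for small $t$. From $0\le\cos2\theta\le\cos2\alpha$ one checks $|w|\le2\cosh2t$. Then $|\Im\sqrt w|=|\Im w|/(2|\Re\sqrt w|)\ge\sin2\theta\,|\sinh2t|/\sqrt{2\cosh2t}\ge\sin2\theta\,\sinh|t|$. The paper does the same with the cruder bound $|w|\le\sqrt2(r+r^{-1})^2$, which is where its constant $2^{-1/4}$ comes from.
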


\begin{proof}
Let \( \lambda = |\lambda|e^{\ii\theta}\), where \(|\lambda|\leq 1\) and \( \theta\in[\alpha,\tfrac\pi2]\). It follows from \eqref{PiOmega} and \eqref{Pi} that
\[
2\Re(g(\lambda)) = \varkappa + \Re\left( \frac\ii 2\int_{e^{\ii\theta}}^\lambda \frac{\sqrt{f(\mu)}}{\mu^2} d\mu\right).
\]
When \( \mu = te^{\ii\theta}\) with \(0<t<1\), we have that
\begin{align*}
f(\mu) &= e^{2\ii\theta} t^2 \big( e^{2\ii\theta} t^2 - 2\Re\left( e^{2\ii \alpha} \right) + e^{-2\ii\theta} t^{-2} \big) \\
& = e^{2\ii\theta} t^2 \big(\cos(2\theta)(t^2+t^{-2})-2\cos(2\alpha)-\ii\sin(2\theta)(t^{-2}-t^2) \big).
\end{align*}
Notice that the imaginary part of the term in parentheses is negative. Recall also that \( \sqrt{f(\mu)}\) is holomorphic in the unit disk and is negative on the imaginary axis, i.e., when \( \theta=\tfrac\pi2\). Hence,
\[
\mu^{-1}\sqrt{f(\mu)} = -\sqrt{\cos(2\theta)(t^2+t^{-2})-2\cos(2\alpha)-\ii\sin(2\theta)(t^{-2}-t^2)},
\]
where the branch of the square root is now principal. In particular, the above value lies in the second quadrant and therefore its imaginary part is non-negative. Thus,
\[
2\Re(g(\lambda)) = \varkappa - \Im\left( \int_{|\lambda|}^1 \sqrt{\cos(2\theta)(t^2+t^{-2})-2\cos(2\alpha)-\ii\sin(2\theta)(t^{-2}-t^2)}\frac{dt}{2t}\right).
\]
Observe that if \( v,q>0\), then
\[
-\sqrt{u-\ii v} = p + \ii q \quad \Rightarrow \quad 2q^2 = \sqrt{u^2+v^2}-u \geq \begin{cases} v-u, \\ \displaystyle \frac{v^2}{2\sqrt{u^2+v^2}}, \end{cases}
\]
where all the square roots are principal. Let
\[
u=\cos(2\theta)(t^2+t^{-2})-2\cos(2\alpha) \quad \text{and} \quad v=\sin(2\theta)(t^{-2}-t^2).
\]
When \( \theta\in[\tfrac\pi4,\tfrac\pi2]\), in which case \( \cos(2\theta)\leq\min\{\cos(2\alpha),0\}\), we have that
\begin{align*}
2 q^2 &\geq v-u = \sin(2\theta)(t^{-2}-t^2)-\cos(2\theta)(t^2+t^{-2})+ 2\cos(2\alpha) \\ &\geq (\sin(2\theta)-\cos(2\theta))(t^{-1}-t)^2  + 2(\cos(2\alpha) - \cos(2\theta)) \geq (t^{-1}-t)^2
\end{align*}
for \( t\in[0,1] \). On the other hand, when \( \theta\in[\alpha,\tfrac\pi4] \) (if this set is non-empty), we have
\[
u^2 + v^2 \leq \big( t^2+t^{-2} + 2 \big)^2 + \big( t^{-2} -t^2 \big)^2 \leq 2\big( t+ t^{-1} \big)^4
\]
and therefore
\[
2q^2 \geq \frac{v^2}{2\sqrt2(t+t^{-1})^2} = \frac{\sin^2(2\theta)}{2\sqrt2}(t^{-1}-t)^2
\]
for \( t\in[0,1] \). Altogether, it holds that
\[
2\Re(g(\lambda)) \geq \varkappa + \frac{s(\theta)}{4}\int_{|\lambda|}^1 \left( \frac1{t^2}-1\right)dt = \varkappa + s(\theta) \sinh^2\big(\tfrac12\ln|\lambda|\big).
\]
This finishes the proof of the lemma when \( |\lambda|\leq 1\) and \( \Arg\lambda\in[\alpha,\tfrac\pi2] \). Since
\[
\Re(g(-\overline\lambda)) = \Re(g(\lambda))
\]
by \eqref{g_symmetry} and \eqref{g_symmetry2} as \( \ell \) is purely imaginary, we can extend the above estimate to \( \Arg\lambda\in[\alpha,\pi-\alpha] \). The other three regions are again obtained via symmetry relations \eqref{g_symmetry} and \eqref{g_symmetry2}.
\end{proof}

\subsection{$X$-RHP and its Deformation}

Let \( X(\lambda) \) be defined by \eqref{defn of X}. It solves the following Riemann-Hilbert problem.

\begin{RHP} 
\label{RHP X}
Find a $2 \times 2$ matrix function $X(\lambda)$ such that
\begin{enumerate}
    \item $X(\lambda)$ is analytic for $\lambda \in \C \setminus \Gamma$, where the oriented contour $\Gamma$ consists of the imaginary axis and the unit circle \( S^1 \) oriented as on Figure~\ref{X problem pic}; 
    \item one-sided traces $X_\pm(\lambda)$ exist a.e. on \( \Gamma \), belong to \( L^2(\Gamma) \), and satisfy
    \begin{align*}
        X_+(\lambda) = X_-(\lambda) G_X(\lambda), \quad \lambda \in \Gamma,
    \end{align*}
    where the jump matrices $G_X(\lambda)$ on \( \Gamma \) are as on Figure~\ref{X problem pic};
    \item it holds that\footnote{Recall \eqref{g_symmetry} and \eqref{g_series}.}
    \begin{align}
    \label{X normalization}
        X(\lambda) = \begin{cases}
            I + \O(1/\lambda)  &  \text{as} \quad \lambda \to\infty, \smallskip \\
            e^{x \ell \sigma_3} P_0 \sigma_1 \sigma_3 e^{-x \ell \sigma_3} (I + \O(\lambda)) & \text{as} \quad \lambda \to 0.
        \end{cases}
    \end{align}
\end{enumerate}
\end{RHP}

\begin{figure}[ht!]
    \centering
    \includegraphics[width=8cm]{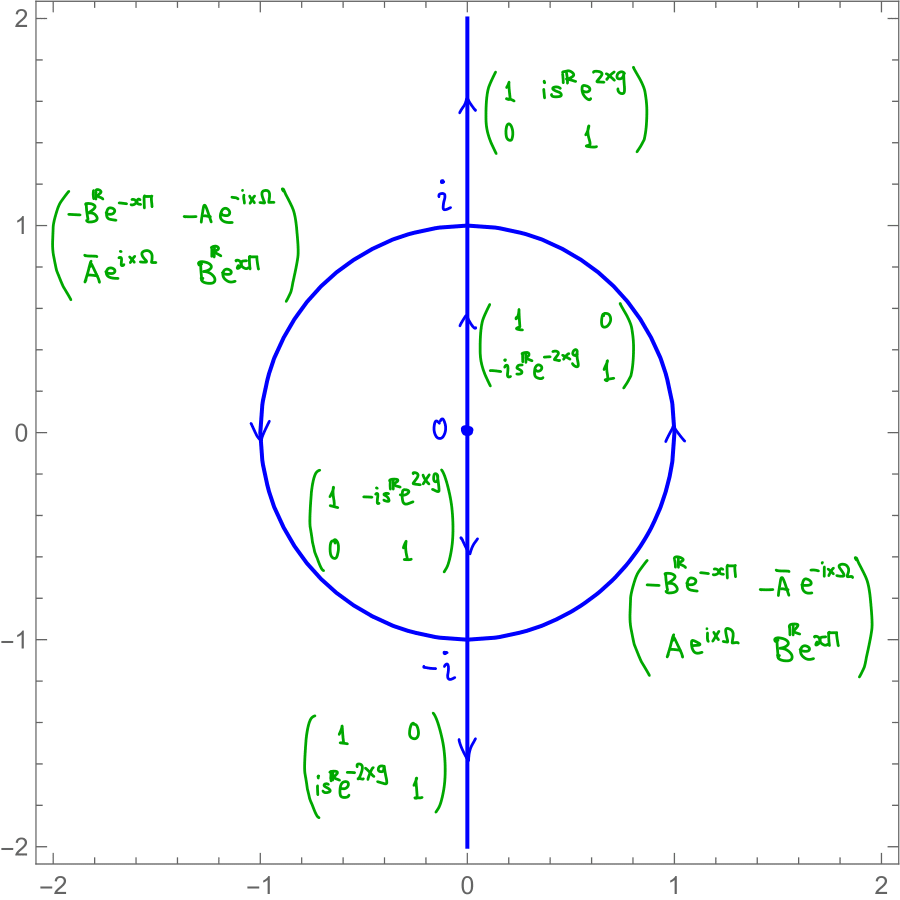}
    \caption{The jump matrices $G_X(\lambda)$ for RHP~\ref{RHP X}.}
    \label{X problem pic}
\end{figure}

Notice that on the imaginary axis away from \( \pm \ii \) the jump matrix \( G_X(\lambda)\) is exponentially close to the identity matrix as \( x\to\infty \) due to Lemma~\ref{lem:Re_g}. Furthermore, it holds on \( \Gamma(\overline a,a)\) that
\begin{equation}
\label{cut1}
G_X(\lambda) = \begin{pmatrix}
        -B^\R e^{-x \Pi(\lambda)} & -\overline A\\
        A & B^\R e^{x \Pi(\lambda)} 
    \end{pmatrix} =
    \begin{pmatrix}
        -e^{-x (\varkappa+\Pi(\lambda))} & -1+\O(|s^\R|_+e^{-\varkappa x}) \\ 1+\O(|s^\R|_+e^{-\varkappa x}) & e^{-x (\varkappa-\Pi(\lambda))} 
    \end{pmatrix},
\end{equation}
where we used \eqref{Omega} as well as \eqref{A is almost 1} to get the last equality. Observe that the diagonal entries are also exponentially close to zero away from \( \overline a,a\)  since \( |\Pi(\lambda)|<\varkappa \) by \eqref{Pi} and Lemma~\ref{lem:varkappa}. Similarly, it holds on \( \Gamma(-\overline a,-a)\) that
\begin{align}
G_X(\lambda) &= \begin{pmatrix}
        -e^{-x (\varkappa+\Pi(\lambda))} & - Ae^{-2\pi\ii xV}\\
        \overline Ae^{2\pi\ii xV} & e^{-x (\varkappa-\Pi(\lambda))} 
    \end{pmatrix} \\ &=
    \begin{pmatrix}
        -e^{-x (\varkappa+\Pi(\lambda))} & -e^{-2\pi\ii xV}+\O(|s^\R|_+e^{-\varkappa x}) \\ e^{2\pi\ii xV} + \O(|s^\R|_+e^{-\varkappa x}) & e^{-x (\varkappa-\Pi(\lambda))} 
    \end{pmatrix},
    \label{cut2}
\end{align}
where we used \eqref{Omega} and \eqref{A is almost 1}. Again, the diagonal entries are exponentially close to zero away from \( -\overline a,-a\). To discuss \( G_X(\lambda) \) on \( \Gamma(a,-\overline a)\) and \( \Gamma(-a,\overline a)\), first observe that
\[
\begin{dcases}
\Pi(\lambda)=\varkappa \quad \text{and} \quad -\ii\Omega(\lambda) = 2g_+(\lambda)-\varkappa = 2g_-(\lambda) + \varkappa, & \lambda \in \Gamma(a, -\overline a), \\
\Pi(\lambda)=-\varkappa \quad \text{and} \quad -\ii\Omega(\lambda) = 2g_+(\lambda)+\varkappa = 2g_-(\lambda)-\varkappa, & \lambda \in \Gamma(-a, \overline a),
\end{dcases}
\]
by \eqref{PiOmega}, \eqref{Pi}, and \eqref{Omega}. Therefore,
\begin{align}
G_X(\lambda) &= \begin{pmatrix}
        -(B^\R)^2 & - \overline A e^{-\ii x \Omega(\lambda)}\\
        A e^{\ii x \Omega(\lambda)} & 1 
    \end{pmatrix} \\
& =   \begin{pmatrix}
        1 & - \overline A e^{x(\varkappa+2g_-(\lambda))}\\
        0 & 1 
    \end{pmatrix}
    \begin{pmatrix}
        1 & 0\\
        A e^{x(\varkappa-2g_+(\lambda))} & 1 
    \end{pmatrix} =: S_{L_1} (\lambda) S_{R_1} (\lambda)  
\label{defn of SL1 and SU1}
\end{align}
for $\lambda \in \Gamma (a, \ii)$, where we used the identity \( |A|^2 = 1 + (B^\R)^2\). Quite similarly, it holds that
\begin{equation}    
\label{defn of SL2 and SU2}
G_X(\lambda) = \begin{pmatrix}
        1 & - A e^{x(\varkappa+2g_-(\lambda))} \\
        0 & 1 
    \end{pmatrix}
    \begin{pmatrix}
        1 & 0\\
        \overline A e^{x(\varkappa-2g_+(\lambda))} & 1 
    \end{pmatrix}  =: S_{L_2} (\lambda) S_{R_2} (\lambda)
\end{equation}
for $\lambda \in \Gamma (\ii,-\overline a)$. Next, we have that
\begin{align}
    G_X(\lambda) & = \begin{pmatrix}
        -1 & - A e^{-\ii x \Omega(\lambda)}\\
        \overline A e^{\ii x \Omega(\lambda)} & (B^\R)^2 
    \end{pmatrix}\\
    &= \begin{pmatrix}
        1 & 0\\
        -\overline A e^{x(\varkappa - 2g_-(\lambda))} & 1 
    \end{pmatrix}
    \begin{pmatrix}
        -1 & 0\\
        0 & -1
    \end{pmatrix}
    \begin{pmatrix}
        1 & A e^{x(\varkappa + 2g_+(\lambda))}\\
        0 & 1 
    \end{pmatrix} \\
    &=: S_{L_3}(\lambda) (-I) \, S_{R_3}(\lambda)
    \label{defn of SL3 and SU3}
\end{align}
for $\lambda \in \Gamma (-a, -\ii)$ and finally
\begin{align}
    G_X(\lambda)     &= \begin{pmatrix}
        1 & 0\\
        -A e^{x(\varkappa - 2g_-(\lambda))} & 1 
    \end{pmatrix}
    \begin{pmatrix}
        -1 & 0\\
        0 & -1
    \end{pmatrix}
    \begin{pmatrix}
        1 & \overline A e^{x(\varkappa + 2g_+(\lambda))}\\
        0 & 1 
    \end{pmatrix} \\
    &=: S_{L_4}(\lambda) (-I) \, S_{R_4}(\lambda)
    \label{defn of SL4 and SU4}
\end{align}
$\lambda \in \Gamma (-\ii, \bar a)$. In \eqref{defn of SL1 and SU1}, \eqref{defn of SL2 and SU2}, \eqref{defn of SL3 and SU3}, and \eqref{defn of SL4 and SU4}, all upper and lower triangular matrices have fast oscillating off-diagonal entries as $x \to \infty$.

Let $\gamma_i^{in}$ and $\gamma_i^{out}$, $i \in \{1, 2, 3, 4\}$, be arcs as depicted on Figure~\ref{tilde X problem pic}. That is, we assume that $\gamma_1^{in}\cup \gamma_2^{in}$ and $\gamma_3^{in}\cup \gamma_4^{in}$ are piecewise-smooth arcs that belong to the unit disk (except for the endpoints) oriented from \( a \) to \( -\overline a \) and \( -a \) to \( \overline a \), respectively, while $\gamma_1^{out}\cup\gamma_2^{out}$ and $\gamma_3^{out}\cup\gamma_4^{out}$ are piecewise-smooth arcs that belong to the exterior of the closed unit disk (again, except for the endpoints) oriented  from \( a \) to \( -\overline a \) and \( -a \) to \( \overline a \), respectively, and
\[
\begin{cases}
\gamma_1^{in},\gamma_1^{out} \subset \big\{\lambda:\Arg\lambda\in[\alpha,\tfrac\pi2]\big\}, & \gamma_3^{in},\gamma_3^{out} \subset \big\{\lambda:\Arg\lambda\in[-\pi+\alpha,-\tfrac\pi2]\big\}, \smallskip \\
\gamma_2^{in},\gamma_2^{out} \subset \big\{\lambda:\Arg\lambda\in[\tfrac\pi2,\pi-\alpha]\big\}, & \gamma_4^{in},\gamma_4^{out} \subset \big\{\lambda:\Arg\lambda\in[-\tfrac\pi2,-\alpha]\big\}.
\end{cases}
\]
In fact, we shall specify these arcs exactly when discussing local parametrices further below. Moreover, let $\mathcal{R}_i^{in}$ (resp. $\mathcal{R}_i^{out}$) be domains bordered by the unit circle, imaginary axis, and $\gamma_i^{in}$ (resp. $\gamma_i^{out}$), $i \in \{1, 2, 3, 4\}$, see Figure~\ref{tilde X problem pic}.

\begin{figure}[ht!]
    \centering
    \includegraphics[width=0.9\linewidth]{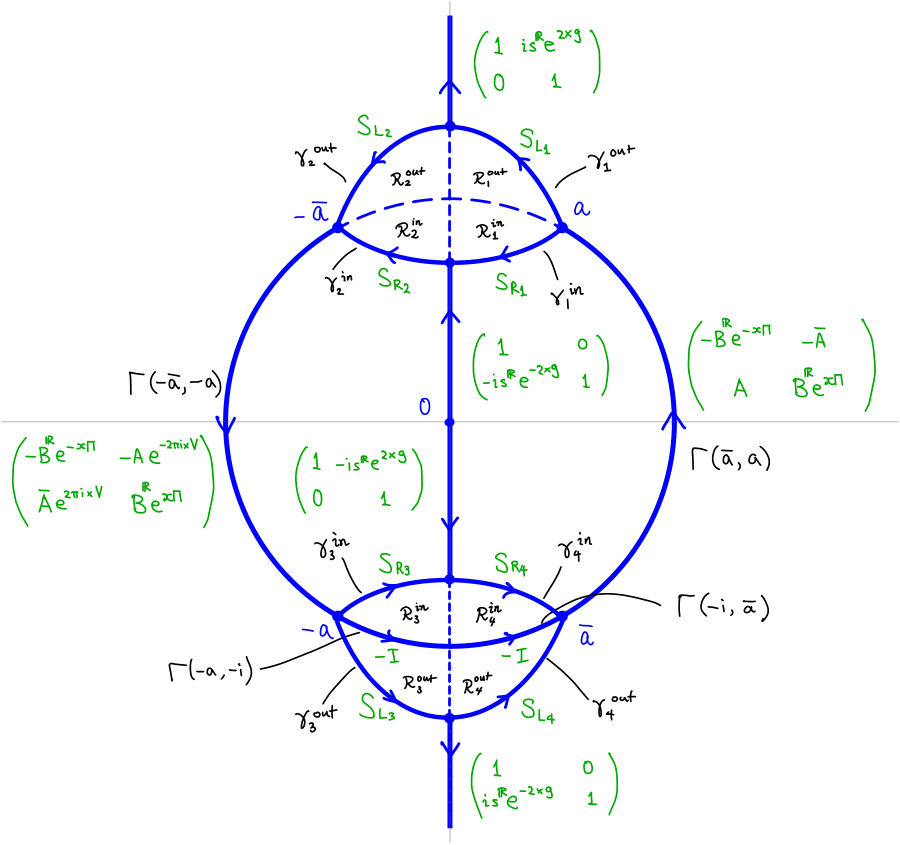}
    \caption{The jump $G_{\tilde X}(\lambda)$ and the contour \( \tilde \Gamma \) for RHP~\ref{RHP tilde X}.}
    \label{tilde X problem pic}
\end{figure}

Notice that each \( S_{R_i}(\lambda) \) naturally extends as a holomorphic matrix function into \( \mathcal{R}_i^{in} \) and each  \( S_{L_i}(\lambda) \) into \( \mathcal{R}_i^{out} \). Hence, we can define a sectionally holomorphic function $\tilde X(\lambda)$ by
\begin{align}
\tilde X(\lambda) := 
\begin{dcases}
X(\lambda) S_{L_i}, (\lambda) & \lambda \in \mathcal{R}_i^{out}, \quad i\in\{1,2,3,4\}, \\
X(\lambda) S_{R_i}^{-1}, (\lambda) & \lambda \in \mathcal{R}_i^{in}, \quad i\in\{1,2,3,4\},\\
X(\lambda), & \text{otherwise}.
\end{dcases}
\end{align}
Observe that
\begin{align}
\label{SL factorization}
S_{L_1}^{-1}(\lambda) \begin{pmatrix} 1 & \ii s^\R e^{2xg(\lambda)} \\ 0 & 1 \end{pmatrix} S_{L_2}(\lambda) = \begin{pmatrix} 1 &  \big(\ii s^\R + (\overline A-A)e^{\varkappa x}\big) e^{2xg(\lambda)} \\ 0 & 1 \end{pmatrix} = I
\end{align}
for \( \lambda \in \partial \mathcal R_1^{out} \cap \partial \mathcal R_2^{out} \), where we used \eqref{mon data}. Similarly,
\begin{equation}
\label{SU factorization}
S_{R_1}(\lambda) \begin{pmatrix} 1 & 0 \\ -\ii s^\R e^{-2xg(\lambda)} & 1 \end{pmatrix} S_{R_2}^{-1}(\lambda) = \begin{pmatrix} 1 & 0 \\ \big(-\ii s^\R + (A-\overline A)e^{\varkappa x}\big) e^{-2xg(\lambda)} & 1 \end{pmatrix} = I
\end{equation}
for \( \lambda \in \partial \mathcal R_1^{in} \cap \partial \mathcal R_2^{in} \). Moreover, the computations showing that \( \tilde X(\lambda) \) has no jump on \( \partial \mathcal R_3^{in} \cap \partial \mathcal R_4^{in} \) and \( \partial \mathcal R_3^{out} \cap \partial \mathcal R_4^{out} \) are absolutely identical. Thus, $\tilde X(\lambda)$ satisfies the following Riemann-Hilbert problem.
\begin{RHP} \label{RHP tilde X}
Find a $2 \times 2$ matrix function $\tilde X(\lambda)$ such that
\begin{enumerate}
    \item $\tilde X(\lambda)$ is analytic for $\lambda \in \C \setminus \tilde \Gamma$, where \( \tilde \Gamma \) is depicted on Figure~\ref{tilde X problem pic};
    \item one-sided traces $\tilde X_\pm(\lambda)$ exist a.e. on \( \tilde \Gamma \), belong to \( L^2(\tilde \Gamma) \), and satisfy
    \begin{align*}
        \tilde X_+(\lambda) = \tilde X_-(\lambda) G_{\tilde X}(\lambda), \quad \lambda \in \tilde \Gamma,
    \end{align*}
    where the jump matrices $G_{\tilde X}(\lambda)$ on \( \tilde \Gamma \) are as on Figure~\ref{tilde X problem pic};
    \item it holds that
    \begin{align}
    \label{Z normalization}
        \tilde X(\lambda) = \begin{cases}
            I + \O(1/\lambda)  &  \text{as} \quad \lambda \to\infty, \smallskip \\
            e^{x \ell \sigma_3} P_0 \sigma_1 \sigma_3 e^{-x \ell \sigma_3} (I + \O(\lambda)) & \text{as} \quad \lambda \to 0.
        \end{cases}
    \end{align}
\end{enumerate}
\end{RHP}

One can readily see from Lemma~\ref{lem:Re_g} that the jump matrix \( G_{\tilde X}(\lambda) \) is exponentially close to the identity on the arcs \( \gamma_i^{in},\gamma_i^{out}\) away from their endpoints.

\subsection{Global Parametrix}

If we replace \( G_{\tilde X}(\lambda) \) by \( I \) everywhere it is close to \( I \) asymptotically as \(x\to\infty\), we shall arrive at the following Riemann-Hilbert problem:
\begin{RHP}\label{RHP N}
Find a $2 \times 2$ matrix function $N(\lambda)$ such that
\begin{enumerate}
    \item $N(\lambda)$ is analytic for $\lambda \in \C \setminus \Gamma(-\overline a,a)$;
    \item one-sided traces $N_\pm(\lambda)$ exist a.e. on \( \Gamma(-\overline a,a) \), belong to \( L^2(\Gamma(- \overline a,a)) \), and satisfy
    \begin{equation}
    N_+(\lambda) = N_-(\lambda) 
    \begin{cases}[r]
    \begin{pmatrix}
        0 & - e^{-2\pi \ii x V}\\
        e^{2\pi \ii x V} & 0
    \end{pmatrix}, & \lambda \in \Gamma(-\overline a, -a),\\
    -I, & \lambda \in \Gamma(-a, \overline a),\\
    \begin{pmatrix}
        0 & -1 \\
        1 & 0
    \end{pmatrix}, & \lambda \in \Gamma(\overline a, a);
    \end{cases}
    \label{N jump}
    \end{equation}
    \item it holds that \( N(\lambda) =  I + \O(1/\lambda)  \)  as \( \lambda \to \infty \).
\end{enumerate}
\end{RHP}

This type of Riemann-Hilbert problem is well-known, see for example \cite{DKMVZ2, DKMVZ1}. We present its solution in full as the details of the construction are important to the final result and are always problem-specific. Moreover, in the process we obtain very technical asymptotic estimates of \( N(\lambda) \) needed for our proof.


\subsubsection{Riemann Surface}

Consider an elliptic curve
\begin{align}
    \S = \{ \z := (z, w) \, | \, w^2 = f(z) \}, \quad f(z) = (z^2 - a^2)(z^2 - \overline{a}^2),
\end{align}
which is a Riemann surface of genus one. We realize $\S$ as a two-sheeted ramified covering of the Riemann sphere $\overline{\C}$ consisting of two copies of $\overline{\C}\setminus\Delta$, $\Delta = \Gamma(-\overline{a}, -a) \cup \Gamma(\overline{a}, a)$, which are glued crosswise along the corresponding cuts. Let
\begin{align}
    \pi, w: \S \to \overline{\C}, \quad \pi: \z \mapsto z, \quad w: \z \mapsto w. 
\end{align}
We call $\pi$ the canonical projection and often denote points in $\S$ as a preimage $\z = \pi^{-1}(z)$ of $z \in \overline{\C}$ under $\pi$. The function $w(\z)$ is meromorphic on $\S$ with simple zeros at the four ramification points $\{\pm \a, \, \pm \overline{\a} \}$ and double poles at \( \infty^{(0)} \) and \( \infty^{(1)} \), the points on top of infinity in each of the sheets $\overline{\C} \setminus \Delta$. We label these sheets by $\S^{(0)}$ and $\S^{(1)}$ so that 
\begin{align}
    w(\z) = (-1)^k z^2 + \O(1), \quad \z \to \infty^{(k)} \in \S^{(k)}.
    \label{fix of branch}
\end{align}
In particular, we shall write \( z^{(k)}= \pi^{-1}(z)\cap \S^{(k)} \), \( k\in\{0,1\} \). 

\begin{figure}[ht!]
    \centering
    \includegraphics[width=0.5\linewidth]{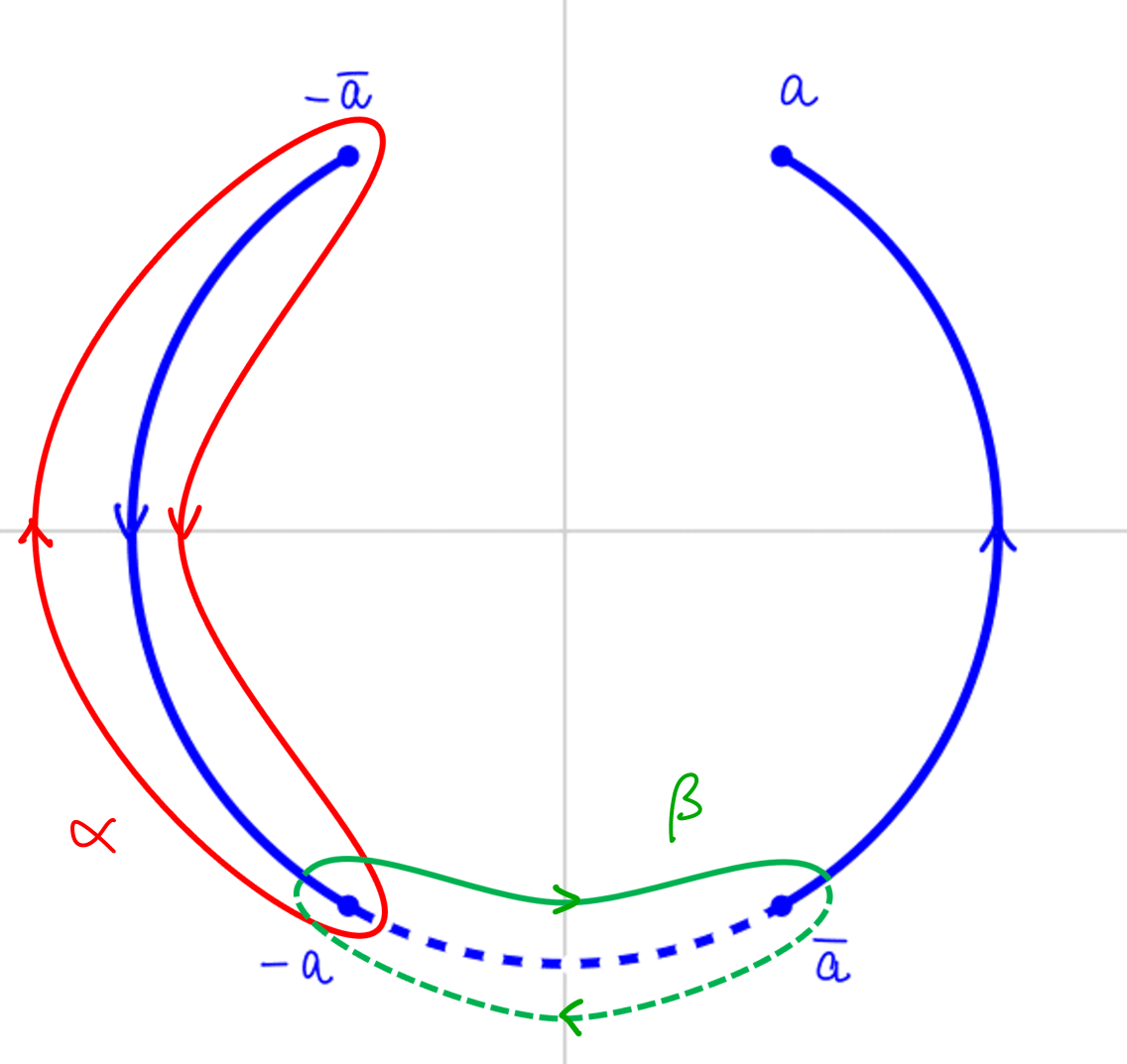}
    \caption{Homology basis for $\S$ depicted on $\S^{(0)}$ (which is identified with $\overline{\C} \setminus \Delta$).}
    \label{homology basis}
\end{figure}

We fix a homology basis, $\boldsymbol{\alpha}$-cycle and $\boldsymbol{\beta}$-cycle, on $\S$ as indicated in Figure \ref{homology basis}. Opening $\S$ via the cut along a homology basis will give the planar picture, see Figure \ref{planer model}, and we denote the resulting cut-surface by $\S_{\boldsymbol{\alpha, \beta}} := \S \setminus \{ \boldsymbol{\alpha} \cup \boldsymbol{\beta} \}$.
\begin{figure}[ht!]
    \centering
    \includegraphics[width=0.7\linewidth]{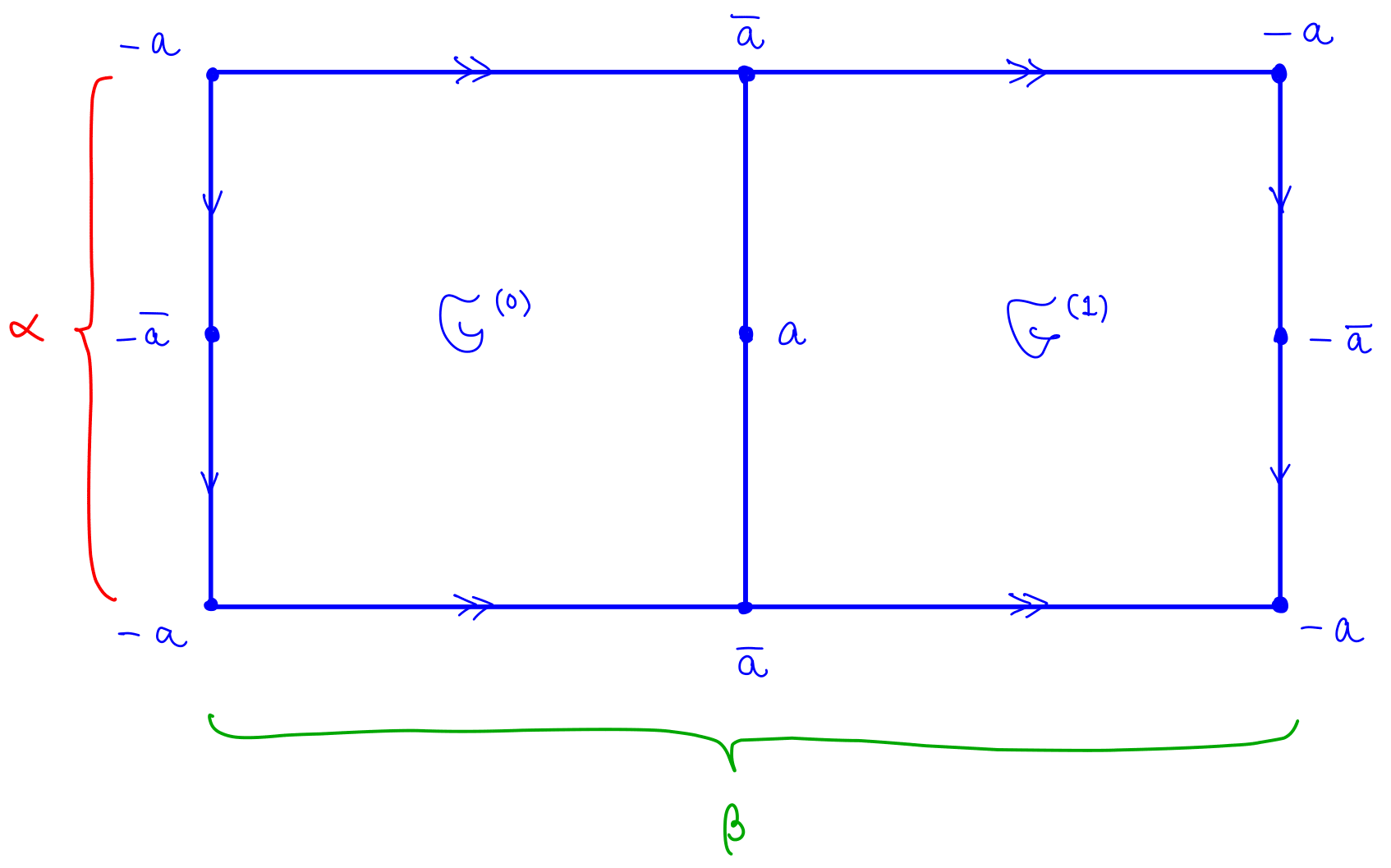}
    \caption{Planar model of $\S$ and the $\boldsymbol{\alpha}$- and $\boldsymbol{\beta}$-cycles.}
    \label{planer model}
\end{figure}


\subsubsection{Abel-Jacobi map}

Recall that the space of holomorphic differentials on any elliptic curve is one-dimensional, and we denote the unique (up to normalization) holomorphic differential on \( \S \) by
\begin{align}
    \omega = \left( \oint_{\boldsymbol{\alpha}} \frac{dz}{w(\z)} \right)^{-1} \frac{dz}{w(\z)}
    \label{holomorphic differential}
\end{align}
with normalized periods
\begin{align}
\label{periods}
    \oint_{\boldsymbol{\alpha}} \omega = 1 \quad \text{and} \quad \oint_{\boldsymbol{\beta}} \omega =: \tau.
\end{align}
According to our choice of \( \boldsymbol\alpha\) and the normalization of \( w\), we have that
\[
\oint_{\boldsymbol{\alpha}} \frac{dz}{w(\z)} = 2\int_{\Gamma(-\overline a,-a)} \frac{d\mu}{\sqrt{f(\mu)}_+} = 2\ii \int_{\pi-\alpha}^{\pi+\alpha} \frac{d\theta}{\sqrt{2(\cos(2\theta)-\cos(2\alpha))}} = 2\ii \KK,
\]
recall \eqref{Elliptic integrals} and the notation introduced in \eqref{ks}, and
\begin{align*}
\oint_{\boldsymbol{\beta}} \frac{dz}{w(\z)} &= 2\int_{\Gamma(-a,\overline a)} \frac{d\mu}{\sqrt{f(\mu)}_+} = -2 \int_{\pi+\alpha}^{2\pi-\alpha} \frac{d\theta}{\sqrt{2(\cos(2\alpha)-\cos(2\theta))}} \\
& = -2\int_0^{\tfrac\pi2-\alpha} \frac{d\theta}{\sqrt{\cos^2\alpha-\sin^2\theta}} = -2
\KK^\prime,
\end{align*}
where we used \eqref{f_trace}, see also Lemma~\ref{lem:varkappa} and its proof. In particular,
\begin{equation}
\label{defn of tau}
\tau = \ii \frac{\KK^\prime}{\KK}.
\end{equation}

By \eqref{holomorphic differential}, the elliptic integral of the first kind
\begin{align}
    \aj(\z) = \int_{\a}^{\z} \omega, \quad \z \in \S,
    \label{Abel-type map}
\end{align}
where $\a = \pi^{-1}(a)$, defines a locally single-valued but globally additively multi-valued holomorphic function. More precisely, if we restrict the path of integration to \( \S_{\boldsymbol\alpha,\boldsymbol\beta} \), then \( \aj(\z)\) is a well-defined function and it follows from \eqref{periods} that
\begin{align}
    \aj_+ (\z) - \aj_- (\z) = \begin{dcases}
        -\tau, & \z \in \boldsymbol{\alpha} \setminus \{-\a\},\\
        1, & \z \in \boldsymbol{\beta} \setminus \{-\a\}.
    \end{dcases}
    \label{AJ+ - AJ-}
\end{align}
The above relations also means that if we let \( \Lambda := \Z+\tau\Z \), the Abel-Jacobi map from \( \S \) into the Jacobi variety $\C/\Lambda$, given by
\begin{align}
    \mathfrak a: \S \to \C/\Lambda, \quad \z \mapsto \mathfrak a(\z) = \left[ \aj(\z) \right] = \left[ \int_{\a}^{\z} \omega \right],
    \label{Abel map}
\end{align}
is well-defined, where \( [s] = [s+l+\tau m:l,m\in\Z] \). Since $\S$ is a genus one Riemann surface, \eqref{Abel map} is a biholomorphism, see \cite[Theorem 21.10]{forster} or a more recent book \cite[Chapter 10]{takebe}. 

By considering \( \aj(\z)\) as a function in \( \S_{\boldsymbol\alpha,\boldsymbol\beta} \) only, it readily follows from its definition that
\begin{equation}
\label{AJ_symmetry}
\aj\big(z^{(0)}\big) + \aj\big(z^{(1)}\big) = 0, \quad z\notin \Gamma(-\overline a, a),
\end{equation}
and from \eqref{fix of branch} and \eqref{Abel-type map} that
\begin{align}
    \aj\big(z^{(0)}\big) = \aj\big(\infty^{(0)}\big) - \frac1{2\ii \KK}\frac1z + \O \left(\frac1{z^3} \right), \quad z \to \infty.
\end{align}
To compute \( \aj\big(\infty^{(0)}\big) \), observe that
\[
\aj\big(\infty^{(0)}\big) = \int_\gamma \omega = \int_{-\gamma}\omega + \frac1{2\ii \KK} \int_{\Gamma(a,-a)}\frac{d\mu}{\sqrt{f(\mu)}_-}
\]
where \( \gamma \) is any arc that connects \( \a\) to \( \infty^{(0)}\) within \(\S_{\boldsymbol\alpha,\boldsymbol\beta}^{(0)}\). Since \( \sqrt{f(\mu)} \) is an even function, see \eqref{f_symmetries}, it holds that \( \int_\gamma \omega + \int_{-\gamma}\omega =0 \) and therefore
\begin{equation}
\label{AJ_infinity}
\aj(\infty^{(0)}) = \frac1{4\ii\KK}\left(\int_{\Gamma(-\overline a,-a)}-\int_{\Gamma(-a,\overline a)} \right)\frac{d\mu}{\sqrt{f(\mu)}_-} = -\frac14 - \frac\tau4,    
\end{equation}
where we used evenness once more. Quite in the same way we also get that
\begin{equation}
\label{AJ_zero}
\aj(0^{(0)}) = \frac1{4\ii\KK}\left(\int_{\Gamma(-\overline a,-a)}-\int_{\Gamma(-a,\overline a)} \right)\frac{d\mu}{\sqrt{f(\mu)}_+} = \frac14 - \frac\tau4.
\end{equation}

\begin{figure}[ht!]
    \centering
    \includegraphics[width=0.6\linewidth]{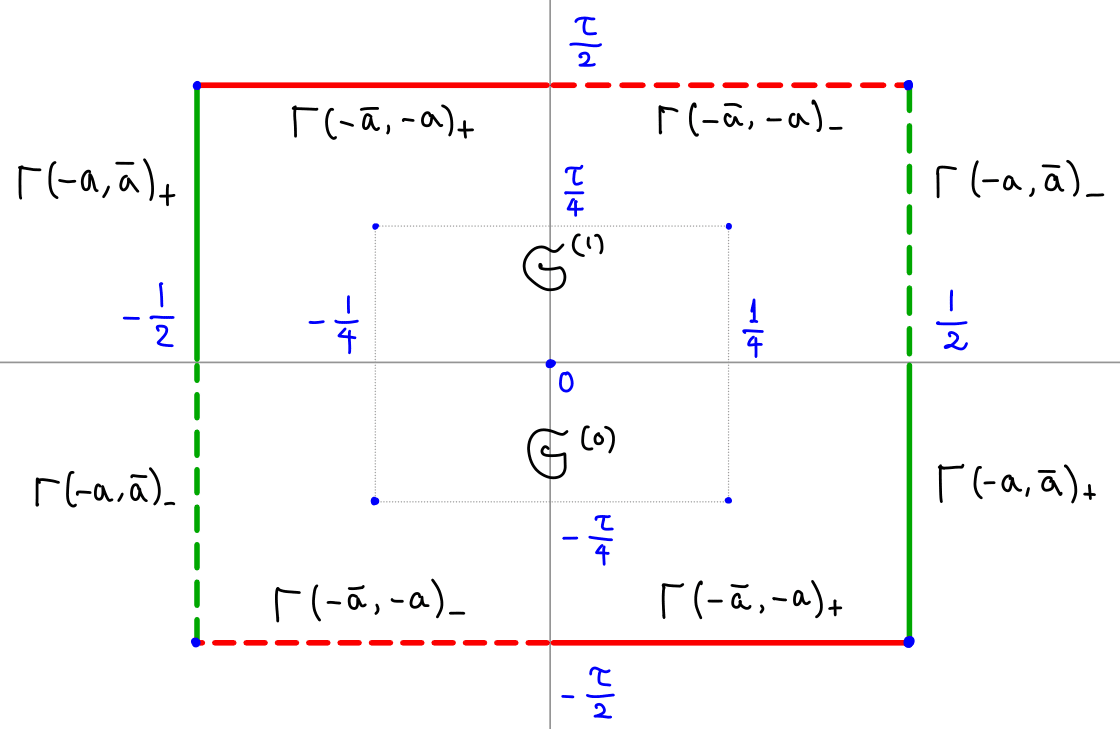}
    \caption{Image of $\S_{\boldsymbol\alpha,\boldsymbol\beta}$ under the map $\aj$. The black labels $\Gamma(-\Bar{a},\bar{a})_{\pm}$ and $\S^{(0,1)}$ denote the domain of $\aj$.}
    \label{image of abel jacobi}
\end{figure}

To close this subsection, we show that
\[
\aj(\S_{\boldsymbol\alpha,\boldsymbol\beta}) = \big\{z~|~2|\Re(z)|<1,~2|\Im(z)|<|\tau|\big\},
\]
see Figure~\ref{image of abel jacobi}. Indeed, for \( z=e^{\ii t} \in \Gamma(a,-\overline a)\), using \eqref{f_trace}, we get that
\[
\aj(z^{(0)}) = \frac1{2\ii \KK} \int_{\alpha}^t \frac{d\theta}{\sqrt{2(\cos(2\alpha)-\cos(2\theta))}}
\]
which is a negative imaginary number and is equal to \( -\tfrac\tau2 \) when \( z=-\overline a \) (this is a similar computation to the one done to derive \eqref{defn of tau}). 
When \( z = e^{\ii t} \in \Gamma(-\overline a,-a)_\pm \), we have that
\[
\aj(z^{(0)}) = -\frac\tau2 \pm \frac1{2\KK} \int_{\pi-\alpha}^t\frac{d\theta}{\sqrt{2(\cos(2\theta)-\cos(2\alpha))}}
\]
again, by \eqref{f_trace}, where the last term above is positive and reaches the value \( \tfrac 12 \) when \( z=-a \) (\(t=\pi+\alpha\)). That is, the images of \( \Gamma(-\overline a,-a)_+ \) and \( \Gamma(-\overline a,-a)_- \) are the bottom horizontal lines (solid red and dashed red, respectively) on Figure~\ref{image of abel jacobi}. Proceeding in the same fashion, one can check that \( \Gamma(-a,\overline a)_\pm \) are mapped into vertical segments \( \pm\tfrac12+[-\tfrac\tau2,0]\) (solid and dashed green lines in the lower half-plane on Figure~\ref{image of abel jacobi}) and that \( \Gamma(\overline a,a)_\pm \) are mapped into horizontal directed segments \( [\pm\tfrac12,0] \). Symmetry \eqref{AJ_symmetry} then finishes the description of \( \aj(\S_{\boldsymbol\alpha,\boldsymbol\beta}) \).


\subsubsection{Riemann Theta Function}

As standard, to describe the solution of RHP~\ref{RHP N}, we shall need the Riemann theta function associated with the period \( \tau \). It is defined by
\begin{align}
    \vartheta(z) := \sum_{k \in \Z} e^{\pi \ii k^2 \tau + 2 \pi \ii k z} = 1 + 2 \sum_{k = 1}^{\infty} e^{\pi \ii k^2 \tau} \cos (2 \pi k z),
    \label{theta func}
\end{align}
and is an entire function of $z \in\C$. It is well-known that \( \vartheta(z) \) is quasi-periodic:
\begin{align}
    &\vartheta(z + n + m \tau) = e^{-\pi \ii m^2 \tau - 2 \pi \ii m z} \vartheta(z), \quad z \in \C, \quad n, m \in \Z, \label{theta quasi periodic}
\end{align}
and that $\vartheta (z) = 0$ if and only if $[z] = \left[ \frac{1 + \tau}{2} \right]$. 

Recall, see \cite{McKMoll}, that in McKean and Moll’s notation \( \vartheta_3(z \,|\, \tau)=\vartheta(z)\) and
\begin{align} \label{other theta functions}
\begin{cases}
    \displaystyle \vartheta_0(z \,|\, \tau) \equiv \vartheta_4(z \,|\, \tau) =\vartheta_3\left( z+\tfrac{1}{2} \,\big|\, \tau \right), \smallskip \\
    \displaystyle \vartheta_1(z \,|\, \tau) = -\ii e^{\frac{\pi \ii}{4}\tau + \pi \ii z } \vartheta_3\left( z + \tfrac{1 + \tau}{2} \,\big|\, \tau \right), \smallskip \\
    \displaystyle \vartheta_2(z \,|\, \tau) =  e^{\frac{\pi \ii}{4}\tau + \pi \ii z } \vartheta_3\left( z + \tfrac{\tau}{2} \,\big|\, \tau \right).
\end{cases}
\end{align}

To solve RHP~\ref{RHP N}, we shall need the following functions
\begin{align} 
\label{defn of Theta_k(z)}
\Theta_k (\z) = e^{\pi \ii \aj(\z)} \frac{\vartheta(\aj(\z) + x V + \tfrac{\tau}{2} - \aj(\infty^{(k)}) - \tfrac{1 + \tau}{2})}{\vartheta(\aj(\z) - \aj(\infty^{(k)}) - \tfrac{1 + \tau}{2})},
\end{align}
\( \z\in\S_{\boldsymbol\alpha,\boldsymbol\beta} \), \( k\in\{0,1\} \). Clearly, \(\Theta_k (\z) \) is a meromorphic function in the domain of its definition with a single and simple pole at \( \infty^{(k)} \) (at any pole \( \z_* \) it must hold \( \mathfrak a(\z_*) = \mathfrak a(\infty^{(k)}) \); as \( \mathfrak a \) is biholomorphic, \( \z_*=\infty^{(k)}\)). It readily follows from \eqref{AJ+ - AJ-} and \eqref{theta quasi periodic} that
\[
\begin{cases}
\Theta_{k+} (\z) = e^{2\pi\ii xV}\Theta_{k-} (\z), & \z\in\boldsymbol\alpha\setminus\{\a\}, \\
\Theta_{k+} (\z) = -\Theta_{k-} (\z), & \z\in\boldsymbol\beta\setminus\{\a\}.
\end{cases}
\]
Let \( \Theta_{k,l}(z):=\Theta_k(z^{(l)})\), \( k,l\in\{0,1\}\), denote the pull-backs of the functions \( \Theta_k (\z) \) to the complex plane. Then, \( \Theta_{0,1}(z) \) and \(\Theta_{1,0}(z)\) are analytic in \( \overline\C\setminus \Gamma(-\overline a,a)\) while \( \Theta_{0,0}(z) \) and \(\Theta_{1,1}(z)\) are analytic in \( \C\setminus \Gamma(-\overline a,a)\), each having a simple pole at infinity. Of course, all functions have continuous traces on both sides of \( \Gamma(-\overline a,a) \). Moreover,
\begin{equation}
\label{theta jumps}
\begin{cases}
\Theta_{k,0\pm}(z) = e^{2\pi\ii xV}\Theta_{k,1\mp}(z), & z\in \Gamma(-\overline a,-a), \\
\Theta_{k,l+}(z) = -\Theta_{k,l-}(z), & z\in \Gamma(-a,\overline a), \\
\Theta_{k,0\pm}(z) = \Theta_{k,1\mp}(z), & z\in \Gamma(\overline a,a).
\end{cases}    
\end{equation}

We already know that \( \Theta_{0,0}(z) \) and \(\Theta_{1,1}(z)\) have simple poles at infinity. On the other hand, we get from \eqref{AJ_symmetry}, \eqref{AJ_infinity}, and \eqref{theta quasi periodic} that
\begin{equation}
\begin{cases}
\label{thetas at infinity}
\Theta_{1,0}(\infty) &= \displaystyle e^{\frac{\pi\ii}4(3\tau-1)+2\pi\ii xV} \frac{\vartheta(xV+\tfrac\tau2)}{\vartheta(0)} = e^{\frac{\pi\ii}4(2\tau-1)+\pi\ii xV} \frac{\vartheta_2(xV)}{\vartheta_3(0)}, \smallskip \\
\Theta_{0,1}(\infty) &= \displaystyle e^{\frac{\pi\ii}4(\tau+1)} \frac{\vartheta(xV+\tfrac\tau2)}{\vartheta(0)} = e^{\frac{\pi\ii}4-\pi\ii xV} \frac{\vartheta_2(xV)}{\vartheta_3(0)},
\end{cases}
\end{equation}
where we used \eqref{other theta functions} on the last steps. It is clear that the above values vanish simultaneously, and this happens if and only if \( xV=n+\tfrac12 \). Recall \eqref{Z delta}. It follows from the second part of \eqref{Legendre's relation} that
\[
\mathcal Z_\delta = \left\{(\varkappa,x)\,|\,|xV-n-\tfrac12|<\delta,\;n\in\Z\right\}.
\]
Later on, we shall make use of the following lemma.

\begin{lem} 
\label{lem: Theta at infinity bounds}
Set \( q := e^{\pi\ii\tau}=e^{-\pi|\tau|} \). Then, for each \( \delta\in(0,\tfrac12) \) and all \( (\varkappa,x)\notin \mathcal Z_\delta\) it holds that
\[
|\Theta_{0,1}(\infty)|, q^{-\frac12}|\Theta_{1,0}(\infty)| \geq \frac{4\delta}\pi \sqrt{\kk \kk^\prime} \KK.
\]
\end{lem}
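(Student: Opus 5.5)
Both quantities in the statement reduce to the single ratio \( |\vartheta_2(xV\,|\,\tau)|/\vartheta_3(0\,|\,\tau) \). The plan is to bound \( |\vartheta_2| \) from below using Jacobi's product formulas, and then to identify the constants through the classical theta-constant formulas for the modulus \( \kk \).

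\emph{Step 1 (reduction).} By \eqref{defn of tau}, \( \tau=\ii\KK^\prime/\KK \) is purely imaginary, so \( q=e^{\pi\ii\tau}\in(0,1) \). Also \( V \) and \( x \) are real. Hence, in \eqref{thetas at infinity}, the prefactor \( e^{\frac{\pi\ii}4-\pi\ii xV} \) has modulus one, while \( e^{\frac{\pi\ii}4(2\tau-1)+\pi\ii xV} \) has modulus \( e^{-\pi|\tau|/2}=q^{1/2} \). Therefore
\[
|\Theta_{0,1}(\infty)| = q^{-\frac12}|\Theta_{1,0}(\infty)| = \frac{|\vartheta_2(xV\,|\,\tau)|}{\vartheta_3(0\,|\,\tau)},
\]
and it suffices to bound the right-hand side from below.

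\emph{Step 2 (product formula).} We use Jacobi's triple product representation
\[
\vartheta_2(z\,|\,\tau)=2q^{1/4}\cos(\pi z)\prod_{n\ge1}(1-q^{2n})(1+2q^{2n}\cos2\pi z+q^{4n}).
\]
For real \( z \), each factor satisfies \( 1+2q^{2n}\cos 2\pi z+q^{4n}\ge(1-q^{2n})^2 \). This gives
\[
|\vartheta_2(z)|\ge 2q^{1/4}|\cos\pi z|\prod_{n\ge1}(1-q^{2n})^3=\frac{|\cos\pi z|}{\pi}\,\vartheta_1^\prime(0).
\]
By Jacobi's identity \( \vartheta_1^\prime(0)=\pi\vartheta_2(0)\vartheta_3(0)\vartheta_4(0) \), we conclude that
\[
\frac{|\vartheta_2(z)|}{\vartheta_3(0)}\ge|\cos\pi z|\,\vartheta_2(0)\vartheta_4(0).
\]

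\emph{Step 3 (constants and distance to zeros).} With nome \( q=e^{-\pi\KK^\prime/\KK} \), the classical formulas read \( \vartheta_2(0)^2=2\kk\KK/\pi \) and \( \vartheta_4(0)^2=2\kk^\prime\KK/\pi \), so \( \vartheta_2(0)\vartheta_4(0)=\frac{2}{\pi}\KK\sqrt{\kk\kk^\prime} \). For the cosine factor, recall that \( (\varkappa,x)\notin\mathcal Z_\delta \) means \( |xV-n-\tfrac12|\ge\delta \) for every integer \( n \), by the reformulation of \( \mathcal Z_\delta \) obtained from \eqref{Legendre's relation}. Choose \( n \) with \( t:=|xV-n-\tfrac12|\le\tfrac12 \). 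Then \( |\cos\pi xV|=\sin(\pi t) \). Since \( \sin(\pi t)\ge 2t \) on \( [0,\tfrac12] \), we get \( |\cos\pi xV|\ge2\delta \). Combining this with Steps 1 and 2 yields exactly the bound \( \frac{4\delta}{\pi}\sqrt{\kk\kk^\prime}\,\KK \).

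The only point requiring care is that the modulus attached to the nome \( e^{\pi\ii\tau} \) in the theta-constant formulas is precisely \( \kk=\sin\alpha \), with \( \kk^\prime \) its complement. This holds because \( \tau=\ii\KK^\prime/\KK \) with \( \KK=K(\kk) \) and \( \KK^\prime=K(\kk^\prime) \), which is the standard normalization. Everything else is a direct application of classical identities.
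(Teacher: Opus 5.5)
Your proof is correct and is essentially the paper's argument. Both use Jacobi's product formula, the factorwise bound \(1+2q^{2n}\cos 2\pi z+q^{4n}\ge (1-q^{2n})^2\) for real \(z\), the classical theta-constant identification of \(\frac{2}{\pi}\sqrt{\kk\kk^\prime}\,\KK\), and \(\sin(\pi t)\ge 2t\) on \([0,\tfrac12]\).

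The only difference is cosmetic. The paper writes \(xV=n+\tfrac12+u\), uses \(\vartheta_2(u+\tfrac12)=-\vartheta_1(u)\), divides by the product for \(\vartheta_3(0)\), and quotes Whittaker--Watson for \(\prod_{n\ge1}\big(\frac{1-q^{2n}}{1+q^{2n-1}}\big)^2=\frac{q^{-1/4}}{\pi}\sqrt{\kk\kk^\prime}\,\KK\). You instead stay with \(\vartheta_2\) and \(\cos(\pi xV)\), and cancel \(\vartheta_3(0)\) via \(\vartheta_1^\prime(0)=\pi\vartheta_2(0)\vartheta_3(0)\vartheta_4(0)\).
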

\begin{proof}
When \( xV \not\in \mathcal Z_\delta \),  \( xV=n+\tfrac12+u\) for some \( n\in\Z \) and \( \delta\leq |u|\leq\tfrac12 \).  We get from \eqref{other theta functions} and \cite[Equations (20.5.1) and (20.5.3)]{NIST:DLMF} that
\[
\frac{\vartheta_2(u+\tfrac12)}{\vartheta_3(0)} = - \frac{\vartheta_1(u)}{\vartheta_3(0)} = -2 q^{\frac14} \sin(\pi u) \prod_{n=1}^\infty \frac{1-2\cos(2\pi u)q^{2n}+q^{4n}}{1+2q^{2n-1}+q ^{4n-2}}.
\]
Since \( u \) is real, it clearly holds that
\[
\prod_{n=1}^\infty \frac{1-2\cos(2\pi u)q^{2n}+q^{4n}}{1+2q^{2n-1}+q ^{4n-2}} \geq \prod_{n=1}^\infty \left( \frac{1-q^{2n}}{1+q^{2n-1}} \right)^2 = \frac{q^{-\frac14}}{\pi} \sqrt{\kk \kk^\prime} \KK,
\]
see \cite[Example~21.10]{WittakerWatson}. Since \( \sin(\pi|u|)\geq 2|u|\) when \( |u|\leq \tfrac12\) and \(|u|\geq\delta\), the claim of the lemma follows.
\end{proof}

Next, we deduce from \eqref{AJ_infinity}, \eqref{AJ_zero}, \eqref{theta quasi periodic}, and \eqref{other theta functions} that
\begin{equation}
\label{Theta of zero}
\begin{cases}
\Theta_{1,0}(0) &= \displaystyle e^{\frac{\pi\ii}4(1+3\tau)+2\pi\ii xV} \frac{\vartheta(xV+\tfrac{1+\tau}2)}{\vartheta(\tfrac12)} = e^{\frac{\pi\ii}4(3+2\tau)+\pi\ii xV} \frac{\vartheta_1(xV)}{\vartheta_0(0)}, \smallskip \\
\Theta_{0,0}(0) &= \displaystyle e^{\frac{\pi\ii}4(1-\tau)} \frac{\vartheta(xV)}{\vartheta(\tfrac\tau2)} = e^{\frac{\pi\ii}4} \frac{\vartheta_3(xV)}{\vartheta_2(0)}.
\end{cases}
\end{equation}

Finally, we estimate \( \Theta_{k,l}(z)\) on \( \Gamma(-\overline a,a) \) (recall that these functions are analytic in \( \C\setminus\Gamma(-\overline a,a) \), two analytic even at infinity and two having a simple pole there).

\begin{lem}
\label{lem:theta bounds}    
As in Lemma~\ref{lem: Theta at infinity bounds}, let \( q = e^{-\pi|\tau|} \). It holds for \( z\in\Gamma(-\overline a,a)_\pm \) that
\begin{align}
\label{01bound}
|\Theta_{0,0}(z)|,|\Theta_{0,1}(z)|,q^{-\frac12}|\Theta_{1,0}(z)|,q^{-\frac12}|\Theta_{1,1}(z)| \leq \sqrt{\frac8{\kk^\prime}}.
\end{align}
\end{lem}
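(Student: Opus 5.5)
The plan is to reduce everything to explicit one–variable estimates of Jacobi theta functions on the boundary of the fundamental rectangle $\aj(\S_{\boldsymbol\alpha,\boldsymbol\beta})$. I will use the description of $\aj$ on the cut obtained above, see Figure~\ref{image of abel jacobi}. For $z\in\Gamma(-\overline a,a)_\pm$ the point $\aj(z^{(l)})$ lies on one of the following segments:
\begin{itemize}
\item the horizontal segments $\Im w=\mp\tfrac{|\tau|}{2}$, $|\Re w|\le\tfrac12$, which are the images of $\Gamma(-\overline a,-a)_\pm$;
\item the vertical segments $\Re w=\pm\tfrac12$, $|\Im w|\le\tfrac{|\tau|}2$, which are the images of $\Gamma(-a,\overline a)_\pm$;
\item the real segment $[-\tfrac12,\tfrac12]$, which is the image of $\Gamma(\overline a,a)_\pm$.
\end{itemize}
By \eqref{AJ_symmetry}, the sheet $l=1$ is the reflection $w\mapsto-w$ of the sheet $l=0$.

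By \eqref{AJ_infinity}, the denominators in \eqref{defn of Theta_k(z)} are $\vartheta(\aj-\tfrac14-\tfrac\tau4)$ for $k=0$ and $\vartheta(\aj-\tfrac34-\tfrac{3\tau}4)$ for $k=1$. The numerators are the same expressions shifted by the real number $xV$ and by $\tfrac\tau2$. On each segment I will first use the quasi-periodicity \eqref{theta quasi periodic} to move both arguments into the strip $|\Im w|\le\tfrac{|\tau|}2$. I will keep track of the resulting exponential factors together with $|e^{\pi\ii\aj}|=e^{-\pi\Im\aj}$.

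This bookkeeping should show two things. First, the total exponential prefactor in front of the ratio is at most $1$ for $k=0$, and at most $q^{1/2}$ for $k=1$. The extra $q^{1/2}$ comes from the additional $\tfrac\tau2$ in $\aj(\infty^{(1)})$. This explains the factor $q^{-1/2}$ in \eqref{01bound}. Second, after the reduction, the denominator argument $w_0=\aj-\aj(\infty^{(k)})-\tfrac{1+\tau}2$ (reduced) satisfies one of two conditions. Either it lies on a line $\Re w_0=\pm\tfrac14$ (mod $1$), or it has $|\Im w_0|\in\{\tfrac{|\tau|}4,\tfrac{3|\tau|}4\}$ reduced to $\tfrac{|\tau|}{4}$.

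In both cases $w_0$ stays uniformly away from the zero set $\tfrac{1+\tau}2+\Lambda$. This is consistent with the fact that $\Theta_k$ has its only pole at $\infty^{(k)}\notin$ the cut.

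Next I will use the Jacobi triple product
\[
\vartheta(w)=\prod_{n\ge1}(1-q^{2n})\bigl(1+q^{2n-1}e^{2\pi\ii w}\bigr)\bigl(1+q^{2n-1}e^{-2\pi\ii w}\bigr).
\]
For the numerator with $|\Im w|\le\tfrac{|\tau|}2$, each factor $|1+q^{2n-1}e^{\pm2\pi\ii w}|$ is at most $1+q^{2n-2}$ for $n=1$ and at most $1+q^{2n-1}$ otherwise. For the denominator I will bound each factor from below in the two cases. When $\Re w_0=\pm\tfrac14$, the quantities $e^{\pm2\pi\ii w_0}$ are purely imaginary, so $|1+q^{2n-1}e^{\pm2\pi\ii w_0}|\ge1$. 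When $|\Im w_0|=\tfrac{|\tau|}4$, we get $|1+q^{2n-1}e^{\pm 2\pi\ii w_0}|\ge 1-q^{2n-1\mp\frac12}$.

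The goal is to bound the ratio by
\[
\sqrt8\prod_{n\ge1}\frac{(1+q^{2n-1})^2}{(1-q^{2n-1})^2}=\sqrt8\,\frac{\vartheta_3(0)}{\vartheta_4(0)}.
\]
Then I will invoke the classical identity $\kk^\prime=\vartheta_4^2(0)/\vartheta_3^2(0)$ for the nome $q=e^{-\pi\KK^\prime/\KK}$, which is consistent with \eqref{defn of tau}. This gives exactly $\sqrt{8/\kk^\prime}$.

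The main obstacle is the second case of the denominator, $|\Im w_0|=\tfrac{|\tau|}4$, occurring on the horizontal pieces. There the $n=1$ factor $1-q^{1/2}$ is not matched by a factor $1-q$ from $\vartheta_4(0)$, and a naive termwise comparison loses. I expect the constant $\sqrt8$ to absorb this loss, using $q$ bounded away from $1$ only through the exact factor $|e^{\pi\ii\aj}|$ left over from the quasi-periodic reduction. If that fails, I will treat this factor by pairing it with the $n=1$ numerator factor, using $(1+q^{1/2})/(1-q^{1/2})\le (1+q^{1/2})^2/(1-q)$.
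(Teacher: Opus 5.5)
\paragraph{The prefactor claim fails on the vertical sides.}
Your plan breaks at the claim that, after the quasi-periodic reduction, the exponential prefactor is at most $1$ for $k=0$ (and at most $q^{1/2}$ for $k=1$). This is false on the vertical sides $\Re w=\pm\frac12$, i.e.\ for $z\in\Gamma(-a,\overline a)$, in the cases $\Theta_{0,0}$ and, symmetrically, $q^{-\frac12}\Theta_{1,1}$.

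Write $\aj(z^{(0)})=\pm\frac12-s\tau$ with $s\in[0,\frac12]$. The factor $|e^{\pi\ii\aj}|=q^{-s}$ is only partly offset by reducing the denominator. What survives is $q^{-\min\{s,\frac12-s\}}$, which equals $q^{-1/4}$ at the midpoint $z=-\ii$. This is exactly your first case $\Re w_0=\pm\frac14$: the denominator factors are $\ge1$, but they supply no decay.

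The loss is genuine, not an artifact of the bookkeeping. At $z=-\ii$ one has $u=\frac14$ in the paper's notation, and $\vartheta_2(\frac14)=\sqrt2\,q^{\frac14}\prod_{n}(1-q^{2n})(1+q^{4n})$. Hence
\[
|\Theta_{0,0}(-\ii)|=\frac{|\vartheta_3(\frac14+xV)|}{|\vartheta_2(\frac14)|}\geq\frac{q^{-\frac14}}{\sqrt2}\prod_{n\ge1}\frac{(1-q^{2n-1})^2}{1+q^{4n}} .
\]
The right-hand side behaves like $\sqrt{2/\kk}\to\infty$ as $\kk\to0$, whereas $\sqrt{8/\kk^\prime}\to\sqrt8$. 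Numerically the inequality already fails at $\kk=\frac15$, for every value of $xV$. So the stated bound for $\Theta_{0,0}$ and $q^{-\frac12}\Theta_{1,1}$ on $\Gamma(-a,\overline a)$ cannot be proved.

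The paper's own argument breaks at the same place. It uses $|\cos(\pi u)|^2=\frac12\cosh(2\pi|y\tau|)\ge\frac14q^{-\frac12}$ for $|y|\le\frac14$, but this holds only at $|y|=\frac14$. With the correct bound $|\cos(\pi u)|^2\ge\frac12$, the same computation gives $|\Theta_{0,0}|,\ q^{-\frac12}|\Theta_{1,1}|\le q^{-\frac14}/\sqrt{\kk^\prime}$ on $\Gamma(-a,\overline a)$.

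What does hold as stated:
\begin{itemize}
\item the bound for $\Theta_{0,1}$ and $q^{-\frac12}\Theta_{1,0}$ on all of $\Gamma(-\overline a,a)_\pm$;
\item the bound for $\Theta_{0,0}$ and $q^{-\frac12}\Theta_{1,1}$ on $\Delta_\pm$, which needs no computation because $|\Theta_{k,0\pm}|=|\Theta_{k,1\mp}|$ by \eqref{theta jumps}.
\end{itemize}

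\paragraph{The obstacle you flag on the horizontal pieces is also unresolved.}
Even where the inequality is true, the step you identify is not settled. On the horizontal sides and on the real segment, the surviving prefactor is exactly $1$ (resp.\ $q^{1/2}$ for $k=1$). So there is no spare power of $q$ to absorb the factor $(1-q^{1/2})^{-1}$.

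Nor is $q$ bounded away from $1$: $q\to1$ as $\kk^\prime\to0$, which is precisely the regime where $\sqrt{8/\kk^\prime}$ must be sharp. Your fallback $(1+q^{1/2})/(1-q^{1/2})\le(1+q^{1/2})^2/(1-q)$ is an identity and gains nothing.

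The paper does not compare termwise with nome-$q$ thetas. It collects the factors into
\[
\prod_{m\ge1}\frac{1+q^{m-1/2}}{1-q^{m-1/2}}=\big(\vartheta_3(0|\tfrac\tau2)/\vartheta_0(0|\tfrac\tau2)\big)^{1/2},
\]
applies Landen's transformation to get $\sqrt{(1+\kk)/\kk^\prime}\le\sqrt{2/\kk^\prime}$, and has the harmless factor $2/(1+q^{1/2})\le2$ in front.

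Separately, your numerator estimate is off. For $|\Im w|\le\frac{|\tau|}2$, one factor in each pair $|1+q^{2n-1}e^{\pm2\pi\ii w}|$ can be as large as $1+q^{2n-2}$ for every $n$, not only for $n=1$.
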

\begin{proof}
As in \eqref{thetas at infinity}, we get from \eqref{defn of Theta_k(z)} combined with \eqref{AJ_infinity} and \eqref{other theta functions} that
\[
\Theta_{0,1}(z) = e^{\pi \ii \aj(z^{(1)})} \frac{\vartheta(u + x V + \tfrac{\tau}{2})}{\vartheta(u)} = e^{\frac{\pi\ii}4-\pi\ii xV} \frac{\vartheta_2(u + x V)}{\vartheta_3(u)},
\]
where \( u = \aj(z^{(1)})  - \tfrac{1 + \tau}{4} \). Thus, we obtain from \cite[Equations~(20.5.2-3)]{NIST:DLMF} that
\[
|\Theta_{0,1}(z)| = 2q^{\frac14}|\cos(\pi u+\pi xV)| \prod_{n=1}^\infty \left|\frac{1+q^{2n}\cos(2\pi u+ 2\pi xV) + q^{4n}}{1+q^{2n-1}\cos(2\pi u)+q^{4n-2}}\right|.
\]
We have explained around Figure~\ref{image of abel jacobi} that when \( z\in\Gamma(-\overline a,a)_\pm \), \(u \) belongs to the boundary of the rectangle with vertices \( -\tfrac34\pm\tfrac\tau 4,\tfrac14\pm\tfrac\tau4\). In particular, it holds that
\begin{align}
\label{bound for cosines}
|\cos(2\pi u)|,|\cos(2\pi u+2\pi xV)| \leq \cosh\left(\frac{\pi|\tau|}2\right) = \frac{q^{\frac12}+q^{-\frac12}}2
\end{align}
and similarly \( |\cos(\pi u+\pi xV)|\leq q^{-\frac14} \). Therefore, we get for \( z\in\Gamma(-\overline a,a)_\pm \) that
\begin{align*}
|\Theta_{0,1}(z)| & \leq 2\prod_{n=1}^\infty \frac{(1+q^{2n-\frac12})(1+q^{2n+\frac12})}{(1-q^{2n-1-\frac12})(1-q^{2n-1+\frac12})} \\
&= \frac{2}{1+q^{\frac12}} \prod_{n=1}^\infty \frac{1+q^{\frac12(2n-1)}}{1-q^{\frac12(2n-1)}} = \frac{2}{1+q^{\frac12}} \left(\frac{\vartheta_3(0|\tfrac\tau2)}{\vartheta_0(0|\tfrac\tau2)} \right)^{\frac12},
\end{align*}
where we used \cite[Equations~(20.5.3-4)]{NIST:DLMF} on the last step. Formulae \cite[Equations~(20.7.5),~(20.7.12), (20.7.14), and (20.9.1)]{NIST:DLMF} now give
\begin{equation}
\label{connect moduli}
\frac{\vartheta_3(0|\tfrac\tau2)}{\vartheta_0(0|\tfrac\tau2)} = \left( \frac{\vartheta_3^4(0|\tfrac\tau2)}{\vartheta_3^4(0|\tfrac\tau2)-\vartheta_2^4(0|\tfrac\tau2)} \right)^{\frac14} = \left( \frac{\vartheta_3^2(0)+\vartheta_2^2(0)}{\vartheta_3^2(0)-\vartheta_2^2(0)} \right)^{\frac12} = \frac{1+\kk}{\kk^\prime}
\end{equation}
(recall that \( \kk = \sin\alpha, \kk^\prime = \cos\alpha\) are the elliptic moduli corresponding to the nome \( q \)). Clearly, the last two displays prove \eqref{01bound}  for \( \Theta_{0,1}(z) \).

Next, the top and bottom relations in \eqref{theta jumps} immediately tell us that \( |\Theta_{0,0}(z)| \) satisfies \eqref{01bound} on \( \Delta_\pm \). Thus, we only need to estimate it on \( \Gamma(-a,\overline a) \). Due to \( 1 \)-periodicity, it is enough to consider \( z\) for which \( \aj(z^{(0)}) = \tfrac12 + (y-\tfrac14)\tau\), \( |y|\leq \tfrac14 \) (the lower-half plane part of the right vertical line on Figure~\ref{image of abel jacobi}). Notice that
\[
\Theta_{0,0}(z) = e^{\pi \ii \aj(z^{(0)})} \frac{\vartheta(u + x V)}{\vartheta(u-\tfrac\tau2)} = e^{\pi \ii (\aj(z^{(0)})-2u)} \frac{\vartheta(u + x V)}{\vartheta(u+\tfrac\tau2)} = e^{\frac{\pi \ii}4} \frac{\vartheta_3(u + x V)}{\vartheta_2(u)},
\]
where \( u = \aj(z^{(0)})  - \tfrac{1 + \tau}{4} + \tfrac\tau2 =  \tfrac14 + y\tau \) and we used \eqref{theta quasi periodic}, \eqref{other theta functions}. Thus,
\[
|\Theta_{0,0}(z)| = \frac{q^{-\frac14}}{2|\cos(\pi u)|}\prod_{n=1}^\infty \left|\frac{1+q^{2n-1}\cos(2\pi u+2\pi xV)+q^{4n-2}}{1+q^{2n}\cos(2\pi u) + q^{4n}}\right|.
\]
Observe that we still can use \eqref{bound for cosines}. On the other hand, since \( \Re(u)=\tfrac14 \), we get that
\[
|\cos(\pi u)|^2 = \frac12 \cosh(2\pi|y\tau|) \geq \frac14 e^{2\pi|y\tau|} \geq \frac14 q^{-\frac12}.
\]
Hence, it holds that
\[
|\Theta_{0,0}(z)| \leq \prod_{n=1}^\infty\frac{(1+q^{2n-1-\frac12})(1+q^{2n-1+\frac12})}{(1-q^{2n-\frac12})(1-q^{2n+\frac12})} = (1-q^{\frac12})\prod_{n=1}^\infty \frac{1+q^{\frac12(2n-1)}}{1-q^{\frac12(2n-1)}},
\]
where now we can utilize \eqref{connect moduli} to deduce \eqref{01bound}.

Let \( u = \aj(z^{(0)})+\tfrac{1+\tau}4\). Then, we get from \eqref{defn of Theta_k(z)}, \eqref{theta quasi periodic}, and \eqref{other theta functions} that
\begin{align*}
\Theta_{1,0}(z) &= e^{\pi \ii \aj(z^{(0)})} \frac{\vartheta(u + x V + \tfrac{\tau}{2}-1-\tau)}{\vartheta(u-1-\tau)} = e^{\pi \ii (\aj(z^{(0)})+\tau+2xV)} \frac{\vartheta(u + x V + \tfrac{\tau}{2})}{\vartheta(u)} \\
& = q^{\frac12} e^{-\frac{\pi \ii}4+\pi\ii xV} \frac{\vartheta_2(u + x V)}{\vartheta_3(u)}.
\end{align*}
Since \( u \) lies on the boundary of the rectangle with vertices \( -\tfrac14\pm\tfrac\tau4,\tfrac34\pm\tfrac\tau4 \), which belongs to the same horizontal strip as the rectangle in the first part of the proof, the estimate of \( q^{-\frac12}|\Theta_{1,0}(z)| \) follows exactly as the one of \( |\Theta_{0,1}(z)| \).

The validity of \eqref{01bound} for \( q^{-\frac12}|\Theta_{1,1}(z)| \) on \( \Delta_\pm \) is a consequence of the top and bottom relations in \eqref{theta jumps}. On \( \Gamma(-a,\overline a)_\pm \), due to \( 1 \)-periodicity, it is enough to consider \(z\) for which \( \aj(z^{(1)}) = \tfrac12 + (y+\tfrac14)\tau\), \( |y|\leq \tfrac14 \). Then,
\[
\Theta_{1,1}(z) = e^{\pi \ii (\aj(z^{(1)})-2u)} \frac{\vartheta(u + x V)}{\vartheta(u+\tfrac\tau2)} = q^{\frac12} e^{\frac{3\pi\ii}{4}} \frac{\vartheta_3(u + x V)}{\vartheta_2(u)}
\]
by \eqref{theta quasi periodic} and \eqref{other theta functions}, where \( u = \aj(z^{(1)})  - \tfrac{3(1 + \tau)}{4} + \tfrac\tau2 =  -\tfrac14 + y\tau \). We now can proceed exactly as in the case of \( \Theta_{0,0}(z) \).
\end{proof}


\subsubsection{Solution of ${N}$-RHP}

Let \( \gamma(\lambda) \) be a function holomorphic in $\overline{\C} \setminus \Delta$ and given by
\begin{align}
\label{defn of gamma}
    \gamma(\lambda) := \left( \frac{(\lambda - a)(\lambda + a)}{(z - \overline{a})(\lambda + \overline{a})} \right)^{\frac{1}{4}},
\end{align}
where the branch is fixed so that $\gamma (\infty) = 1$. Using any continuous determination of the argument of \( \gamma^4(\lambda) \) on the imaginary axis and the normalization $\gamma (\infty) = 1$, one can check that \( \gamma(0)=a\). We further can see that \( \gamma_+(\lambda)=\ii\gamma_-(\lambda)\), \( \lambda\in \Delta\).

Next, define holomorphic functions
\begin{align} \label{defn of A(z) and B(z)}
    A(\lambda) := \frac{\gamma(\lambda) + \gamma^{-1}(\lambda)}{2} \quad \text{and} \quad B(\lambda) := -\frac{\gamma(\lambda) - \gamma^{-1}(\lambda)}{2\ii}
\end{align}
for $\lambda\in \C \setminus \Delta$.  The functions $A(\lambda)$ and $B(\lambda)$ satisfy
\begin{itemize}
    \item \( A(\lambda)B(\lambda) \neq 0 \) for all $\lambda\in \C \setminus \Delta$;
    \item $A(\infty) = 1$ and $B(\infty) = 0$, where $\infty$ is the zero of order $2$;
    \item $A(0)=\cos\alpha=\kk^\prime$ and $B(0)=-\sin\alpha=-\kk$;
    \item $A_+(\lambda) = B_-(\lambda)$ and $A_-(\lambda) = -B_+(\lambda)$ for $\lambda \in \Delta$.
\end{itemize}
Set
\begin{equation}
\label{defn of M}
M(\lambda) := \begin{pmatrix} \Theta_{1,0}(\lambda) A(\lambda) & \Theta_{1,1}(\lambda)B(\lambda) \smallskip \\ -\Theta_{0,0}(\lambda) B(\lambda) & \Theta_{0,1}(\lambda) A(\lambda) \end{pmatrix}, \quad \lambda\in\overline\C\setminus\Gamma(-\overline{a}, a).
\end{equation}
Clearly, this is a holomorphic matrix function in the domain of its definition (the simple poles of \( \Theta_{0,0}(\lambda),\Theta_{1,1}(\lambda)\) at infinity are canceled by the double zero of \( B(\lambda)\) there). Hence, it satisfies RHP~\ref{RHP N}(1). The fact that it satisfies  RHP~\ref{RHP N}(2) can be readily checked using \eqref{theta jumps}. We further notice that
\begin{align}
    M(\infty) = \begin{pmatrix}
    \Theta_{1,0}(\infty) & 0\\
    0 & \Theta_{0,1}(\infty)
    \end{pmatrix},
    \label{defn of M-hat(infty)}
\end{align}
which has non-zero diagonal entries when \( (\varkappa,x)\not\in\mathcal Z_\delta\) for any \( \delta>0 \), see Lemma~\ref{lem: Theta at infinity bounds}. Thus, for each pair \( (\varkappa,x)\not\in\mathcal Z_\delta\), RHP~\ref{RHP N} is solved by
\begin{align}
    N(\lambda) := [M(\infty)]^{-1} M(\lambda), \quad \lambda \in \overline\C \setminus \Gamma(-\overline{a}, a).
    \label{construction of the soln of the N-hat problem}
\end{align}
The following lemma holds.

\begin{lem}
\label{lem:estimate of N}    
For any \( \epsilon\in(0,1) \) such that \( \{|\lambda-a|=\epsilon^4\} \) belongs to the first quadrant and for all \( (\varkappa,x)\not\in \mathcal Z_\delta\), it holds that
\[
|N_{i,j}(\lambda)| \leq \frac{6\pi}{\delta\epsilon} \frac{(\kk \kk^\prime)^{\frac14}}{\mathcal K_\alpha^*}, \quad i,j\in\{1,2\},
\]
when \( \dist(\lambda,\{\pm a,\pm\overline a\})\geq \epsilon^4 \), where \( N_{i,j}(\lambda) \) is the \( (i,j)\)-th entry of \( N(\lambda) \) and \( \mathcal K_\alpha^* \) was introduced in \eqref{Kalpha}. 
\end{lem}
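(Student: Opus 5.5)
Since \( N(\lambda) = [M(\infty)]^{-1}M(\lambda) \) and \( M(\infty) \) is diagonal, I would write the entries as
\[
N_{1j}(\lambda) = \frac{M_{1j}(\lambda)}{\Theta_{1,0}(\infty)}, \qquad N_{2j}(\lambda) = \frac{M_{2j}(\lambda)}{\Theta_{0,1}(\infty)}.
\]
The estimate then splits into three ingredients: a lower bound for the denominators, an upper bound for the theta factors \( \Theta_{k,l}(\lambda) \) in the numerators, and an upper bound for \( |A(\lambda)|, |B(\lambda)| \) away from the branch points.

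For the denominators, Lemma~\ref{lem: Theta at infinity bounds} gives \( |\Theta_{0,1}(\infty)| \ge \tfrac{4\delta}{\pi}\sqrt{\kk\kk^\prime}\,\KK \) and \( |\Theta_{1,0}(\infty)| \ge q^{\frac12}\tfrac{4\delta}{\pi}\sqrt{\kk\kk^\prime}\,\KK \). The factor \( q^{\frac12} \) in the second bound is exactly compensated by the \( q^{-\frac12} \) weighting of \( \Theta_{1,0} \) and \( \Theta_{1,1} \) in Lemma~\ref{lem:theta bounds}. For the numerators, I would extend the bounds of Lemma~\ref{lem:theta bounds} from \( \Gamma(-\overline a,a)_\pm \) to all of \( \overline\C\setminus\Gamma(-\overline a,a) \) by the maximum modulus principle. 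This is immediate for \( \Theta_{0,1} \) and \( \Theta_{1,0} \), which are bounded at infinity. For \( \Theta_{0,0} \) and \( \Theta_{1,1} \), which have a simple pole at infinity, I would not bound them alone. Instead I would apply the maximum principle to the products \( \Theta_{0,0}B \) and \( \Theta_{1,1}B \), which are holomorphic at infinity. I may also apply it to whole entries \( \Theta A \) and \( \Theta B \) directly, provided I control \( A \) and \( B \) on the boundary.

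The boundary control comes from \( |\gamma| \). Write \( |A|,|B| \le \tfrac12(|\gamma|+|\gamma|^{-1}) \) with \( \gamma^4 = \tfrac{(\lambda-a)(\lambda+a)}{(\lambda-\overline a)(\lambda+\overline a)} \). Since the numerators are not analytic near the branch points, I would work on the domain \( \{\dist(\lambda,\{\pm a,\pm\overline a\})>\epsilon^4\}\setminus\Gamma(-\overline a,a) \), whose boundary consists of the cut traces and four small circles. On the cut, the theta bounds of Lemma~\ref{lem:theta bounds} hold, and \( |\gamma|^{\pm4} \) is a ratio of distances to the branch points. I would estimate this ratio using \( \dist\ge\epsilon^4 \) for the nearby point and \( \lesssim 1 \) for the others, giving \( |\gamma|^{\pm1}\lesssim \epsilon^{-1} \). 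A crude constant such as \( |\gamma|^{\pm1}\le 2^{1/2}\epsilon^{-1} \) should come out. On the small circles, the theta functions are bounded by continuity from the cut traces (up to harmless constants), so the same bound applies.

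Collecting the three ingredients gives
\[
|N_{ij}| \lesssim \frac{\sqrt{8/\kk^\prime}\,\epsilon^{-1}}{\tfrac{4\delta}{\pi}\sqrt{\kk\kk^\prime}\,\KK}.
\]
Since \( \mathcal K_\alpha^* = \KK\sqrt{\kk}\,(\kk^\prime)^2 \), this equals \( C\,\tfrac{\pi}{\delta\epsilon}\tfrac{(\kk^\prime)^{1/2}\cdots}{\mathcal K_\alpha^*} \). I expect the main obstacle to be the bookkeeping of the powers of \( \kk \) and \( \kk^\prime \): matching them exactly to \( (\kk\kk^\prime)^{1/4}/\mathcal K_\alpha^* \) with the constant \( 6 \). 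In particular, I must check whether the distance between \( a \) and \( \overline a \) (of size \( \sim\kk \)) and between \( a \) and \( -\overline a \) (of size \( \sim\kk^\prime \)) enters the estimate of \( |\gamma| \). I suspect it does, and that this is the source of the extra \( (\kk\kk^\prime)^{1/4} \) factor. Accordingly, I would bound \( |\gamma|^{\pm4} \) by \( \tfrac{|\lambda-a||\lambda+a|}{|\lambda-\overline a||\lambda+\overline a|} \) carefully, using \( |a\mp\overline a| = 2\kk \) and \( 2\kk^\prime \), and trade these against \( \epsilon^4 \).
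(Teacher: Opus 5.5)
Your treatment of the diagonal entries \(N_{11},N_{22}\) works once you carry out the refined bound on \(|\gamma|^{\pm1}\) you sketch at the end. The off-diagonal entries \(N_{12}=\Theta_{1,1}B/\Theta_{1,0}(\infty)\) and \(N_{21}=-\Theta_{0,0}B/\Theta_{0,1}(\infty)\), however, are not handled by either version of the maximum principle you propose. On all of \(\overline\C\setminus\Gamma(-\overline a,a)\) the products \(\Theta_{0,0}B\), \(\Theta_{1,1}B\) are holomorphic, but their boundary values are unbounded. Indeed, \(|B|=\frac12|\gamma-\gamma^{-1}|\) blows up like \(|\lambda\mp a|^{-1/4}\), \(|\lambda\mp\overline a|^{-1/4}\) at the branch points, and \(\Theta_{0,0}(a)\neq0\). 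On the punctured domain \(\{\dist(\lambda,\{\pm a,\pm\overline a\})>\epsilon^4\}\setminus\Gamma(-\overline a,a)\) you instead need \(|\Theta_{0,0}|\) and \(|\Theta_{1,1}|\) on the four small circles. These lie off the cut, where Lemma~\ref{lem:theta bounds} says nothing. Because these two functions have a pole at infinity, the maximum principle does not carry the cut bound to them. ``Continuity up to harmless constants'' is not an estimate uniform in \(\alpha\), \(x\), \(\epsilon\), and the cut bound genuinely fails to propagate off the cut. For example, \(\lambda=\ii\) corresponds to \(u=-\frac14\) in the notation of the proof of Lemma~\ref{lem:theta bounds}. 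Hence \(|\Theta_{0,0}(\ii)|=|\vartheta_3(xV-\frac14)|/|\vartheta_2(-\frac14)|\) is of order \(q^{-1/4}\sim\kk^{-1/2}\) as \(\kk\to0\), whereas the cut bound stays \(\sqrt{8/\kk^\prime}\).

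The missing idea is to factor out \(A\) rather than pair the theta functions with \(B\). \(A\) has no zeros in \(\C\setminus\Delta\). The ratio \(B/A=\ii(\gamma^2-1)/(\gamma^2+1)\) stays bounded up to the branch points (the singularities of \(\gamma^{\pm1}\) cancel), has a double zero at infinity, and satisfies \(|B/A|\le 3/\kk^\prime\) on \(\Delta\). Consequently \(\Theta_{1,0}/\Theta_{1,0}(\infty)\), \(\Theta_{0,1}/\Theta_{0,1}(\infty)\), \((B/A)\Theta_{1,1}/\Theta_{1,0}(\infty)\) and \((B/A)\Theta_{0,0}/\Theta_{0,1}(\infty)\) are bounded holomorphic functions on \(\overline\C\setminus\Gamma(-\overline a,a)\). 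The maximum principle then applies on that whole domain with no small circles. Lemmas~\ref{lem: Theta at infinity bounds} and~\ref{lem:theta bounds} bound all four by \(\frac3{\kk^\prime}\cdot\frac{\pi}{\delta\KK\sqrt{2\kk}\,\kk^\prime}\); at \(\lambda=\ii\), the small factor \(|B/A|=\tan\frac\alpha2\) is exactly what tames \(\Theta_{0,0}\). All dependence on \(\epsilon\) then sits in the single factor \(|A|\le\frac12(|\gamma|+|\gamma|^{-1})\). This is bounded by the maximum principle for the rational functions \(\gamma^{\mp4}\) outside the disks of radius \(\epsilon^4\) about \(\pm a\) (respectively \(\pm\overline a\)). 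There the first-quadrant hypothesis and \(|a-\overline a|=2\kk\), \(|a+\overline a|=2\kk^\prime\) give \(\epsilon|\gamma|^{\pm1}\le2\sqrt2(\kk\kk^\prime)^{1/4}\). So you guessed the source of \((\kk\kk^\prime)^{1/4}\) correctly; your crude \(|\gamma|^{\pm1}\le\sqrt2\,\epsilon^{-1}\) would not suffice, since \((\kk\kk^\prime)^{1/4}\to0\) as \(\kk\to0\). The price \(3/\kk^\prime\) paid for \(B/A\) is what turns \(\kk^\prime\) into \((\kk^\prime)^2\) in \(\mathcal K_\alpha^*\) and yields the constant \(6=2\sqrt2\cdot3/\sqrt2\).
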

\begin{proof}
We start with preliminary estimates of \( \gamma(\lambda) \). Assume that \( \alpha\in(0,\tfrac\pi4]\). When \( |\lambda-a|=\epsilon^4\), we get that
\[
\left| \frac{\lambda^2-a^2}{\lambda^2-\overline a^2} \right| \geq \frac{2-|\lambda-a|}{|a+\overline a|+|\lambda-a|} \left| \frac{\lambda-a}{\lambda-\overline a} \right| \geq \frac13\left| \frac{\lambda-a}{\lambda-\overline a} \right| \geq \frac{\epsilon^4}{24 \kk},
\]
where the last inequality follows from the Koebe Distortion Theorem (also known as Koebe \(\tfrac14\)-Theorem), see \cite[Theorem~1.3]{Pommerenke}, and the trivial observation that \( (\lambda-a)/(\lambda-\overline a) \) is conformal in \( \C\setminus\{\overline a\} \) and its derivative at \( a \) is equal to \( (a-\overline a)^{-1}\). An absolutely analogous estimate holds on the circle \( |\lambda+a|=\epsilon^4\). Thus, we get from the maximum modulus principle that
\[
\epsilon |\gamma(\lambda)|^{-1} \leq (24\kk)^{\frac14} \leq (24\sqrt 2 \kk \kk^\prime)^{\frac14} < 2\sqrt 2 (\kk \kk^\prime)^{\frac14}
\]
on \( \dist(\lambda,\{\pm a\})\geq \epsilon^4\), where we used the restriction \( \sqrt 2\kk^\prime = \sqrt 2\cos\alpha\geq 1 \). When \( \alpha\in[\tfrac\pi4,\tfrac\pi2)\) and \( |\lambda-a|=\epsilon^4\), we similarly have that
\[
\left| \frac{\lambda^2-a^2}{\lambda^2-\overline a^2} \right| \geq \frac{2-|\lambda-a|}{|a-\overline a|+|\lambda-a|} \left| \frac{\lambda-a}{\lambda+\overline a} \right| \geq \frac13\left| \frac{\lambda-a}{\lambda+\overline a} \right| \geq \frac{\epsilon^4}{24 \kk^\prime}.
\]
As this estimate is also valid on \( |\lambda+a|=\epsilon^4\), we see that
\[
\epsilon |\gamma(\lambda)|^{-1} \leq 2\sqrt 2 (\kk \kk^\prime)^{\frac14} 
\]
for \( \dist(\lambda,\{\pm a\})\geq \epsilon^4\) in this case. Clearly, we can deduce exactly the same estimate of \( \epsilon |\gamma(\lambda)| \) on the set \( \dist(\lambda,\{\pm \overline a\})\geq \epsilon^4\). Hence, it holds that
\[
\epsilon|A(\lambda)| \leq 2\sqrt2 (\kk \kk^\prime)^{\frac14}, \quad  \dist(\lambda,\{\pm a,\pm\overline a\})\geq \epsilon^4.
\]

Next, we factor \( A(\lambda) \) out of \( M(\lambda) \) and are left with estimating the functions
\[
\frac{\Theta_{1,0}(\lambda)}{\Theta_{1,0}(\infty)}, \;\; \frac{\Theta_{0,1}(\lambda)}{\Theta_{0,1}(\infty)}, \;\; \frac{B(\lambda)}{A(\lambda)}\frac{\Theta_{1,1}(\lambda)}{\Theta_{1,0}(\infty)}, \;\; \text{and} \;\; \frac{B(\lambda)}{A(\lambda)}\frac{\Theta_{0,0}(\lambda)}{\Theta_{0,1}(\infty)}.
\]
Notice that all these functions are analytic in \( \overline \C\setminus \Gamma(-\overline a,a)\). Hence, we can estimate them using the maximum modulus principle. It readily follows from Lemmas~\ref{lem: Theta at infinity bounds} and~\ref{lem:theta bounds} that these functions are bounded above in modulus by
\[
\max\left\{1,\max_\Delta\left|\frac BA\right|\right\} \frac1\delta\frac{\pi}{\KK} \frac1{\sqrt{2\kk} \kk^\prime}
\]
when \( (\varkappa,x)\not\in \mathcal Z_\delta\). Thus, we only need to estimate \( (B/A)(\lambda) \) on \( \Delta \). It holds that
\[
\frac{B(\lambda)}{A(\lambda)} = \ii \frac{\gamma^4(\lambda)-2\gamma^2(\lambda)+1}{\gamma^4(\lambda)-1} = -\frac{\lambda^2-\cos(2\alpha)-2\sqrt{f(\lambda)}}{\sin(2\alpha)}.
\]
When \( \lambda \in \Delta \), we have that \( \lambda^2=\cos(2t)+\ii \sin(2t)\) with \( |t|\leq \alpha\) and therefore 
\[
|\lambda^2-\cos(2\alpha)| = \sqrt{1-2\cos(2t)\cos(2\alpha)+\cos^2(2\alpha)} \leq \sqrt{1-\cos^2(2\alpha)} = \sin(2\alpha).
\]
Moreover, we get from \eqref{f_trace} that
\[
|\sqrt {f(\lambda)}| = \sqrt{2(\cos(2t)-\cos(2\alpha))} \leq \sqrt{2(1-\cos(2\alpha))} = 2\sin\alpha.
\]
Hence, the triangle inequality implies that
\[
\left|\frac{B(\lambda)}{A(\lambda)}\right| \leq \frac{\sin(2\alpha)+2\sin\alpha}{\sin(2\alpha)} \leq \frac3{\cos\alpha} = \frac3{\kk^\prime},
\]
which finishes the proof of the lemma.
\end{proof}

\subsection{Local Parametrices}

Although we have observed that all the jump matrices in RHP~\ref{RHP tilde X} converge to the ones in RHP~\ref{RHP N} as $x \to \infty$, the convergence is not uniform around \( \pm a,\pm\overline a \). To resolve this issue, one needs to construct local parametrices in some neighborhoods of these points that approximately solve RHP~\ref{RHP tilde X} locally within the neighborhoods and match the global parametrix on the boundary.

\subsubsection{Airy Parametrix}

The construction relies on the now classical Airy model parametrix. It was introduced (in a slightly different form) in \cite{DKMVZ2, DKMVZ1} and has been used ever since in countless publications. Up to a trivial prefactor and slightly different choice of the rays, we follow \cite{BDIK1}. Let
\begin{align}
\label{airy 1}
\tilde\Phi_{\Ai}(\zeta) := \sqrt{2 \pi} e^{\ii\frac{\pi}{6} } \begin{pmatrix}
        \Ai(\zeta) & \Ai(e^{-\ii\frac{2\pi}{3}} \zeta) \\
        -\ii\Ai'(\zeta) & -\ii\,e^{-\ii\frac{2\pi}{3}} \Ai'(e^{-\ii\frac{2\pi}{3}} \zeta) 
    \end{pmatrix} e^{- \ii\frac{\pi}{6} \sigma_3} \sigma_1,
\end{align}
which is an entire matrix function. It holds that
\[
\tilde\Phi_{\Ai}(\zeta) = \sqrt 2\begin{pmatrix} 1 & 0 \\ 0 & \ii \end{pmatrix}A_0(\zeta),
\]
where \( A_0(\zeta)\) is the matrix from \cite[Equation~(3.29)]{BDIK1}. Set
\[
\Phi_{\Ai}(\zeta) := \tilde\Phi_{\Ai}(\zeta) \begin{dcases}
I, & \arg\zeta\in\left(0,\tfrac{3\pi}4\right), \smallskip \\
\begin{pmatrix}
    1 & -1 \\ 0 & 1
\end{pmatrix}, & \arg\zeta\in\left(\tfrac{3\pi}4,\tfrac{5\pi}4\right), \smallskip \\
\begin{pmatrix}
    1 & -1 \\ 0 & 1
\end{pmatrix} 
\begin{pmatrix}
    1 & 0 \\ 1 & 1
\end{pmatrix}
, & \arg\zeta\in\left(\tfrac{5\pi}4,2\pi\right).
\end{dcases}
\]
It is a sectionally holomorphic matrix function whose traces satisfy
\begin{equation}
\label{Airy jumps}
    \Phi_{\Ai +}(\zeta) = \Phi_{\Ai -}(\zeta)
    \begin{dcases}
    \begin{pmatrix}
        1 & 1 \\
        0 & 1
    \end{pmatrix}, & \arg \zeta = \tfrac{3\pi}{4},\\
    \begin{pmatrix}
        1 & 0 \\
        -1 & 1
    \end{pmatrix}, & \arg \zeta = \tfrac{5\pi}{4},\\
    \begin{pmatrix}
        1 & 1 \\
        -1 & 0
    \end{pmatrix}, & \arg \zeta = 0,
    \end{dcases}
\end{equation}
where the orientation of the rays is depicted on Figure \ref{Airy figure} (in \cite{BDIK1} the chosen angles are \( \tfrac{2\pi}3\) and \( \tfrac{4\pi}3\)). 
\begin{figure}[ht!]
    \centering
    \includegraphics[width=0.4\linewidth]{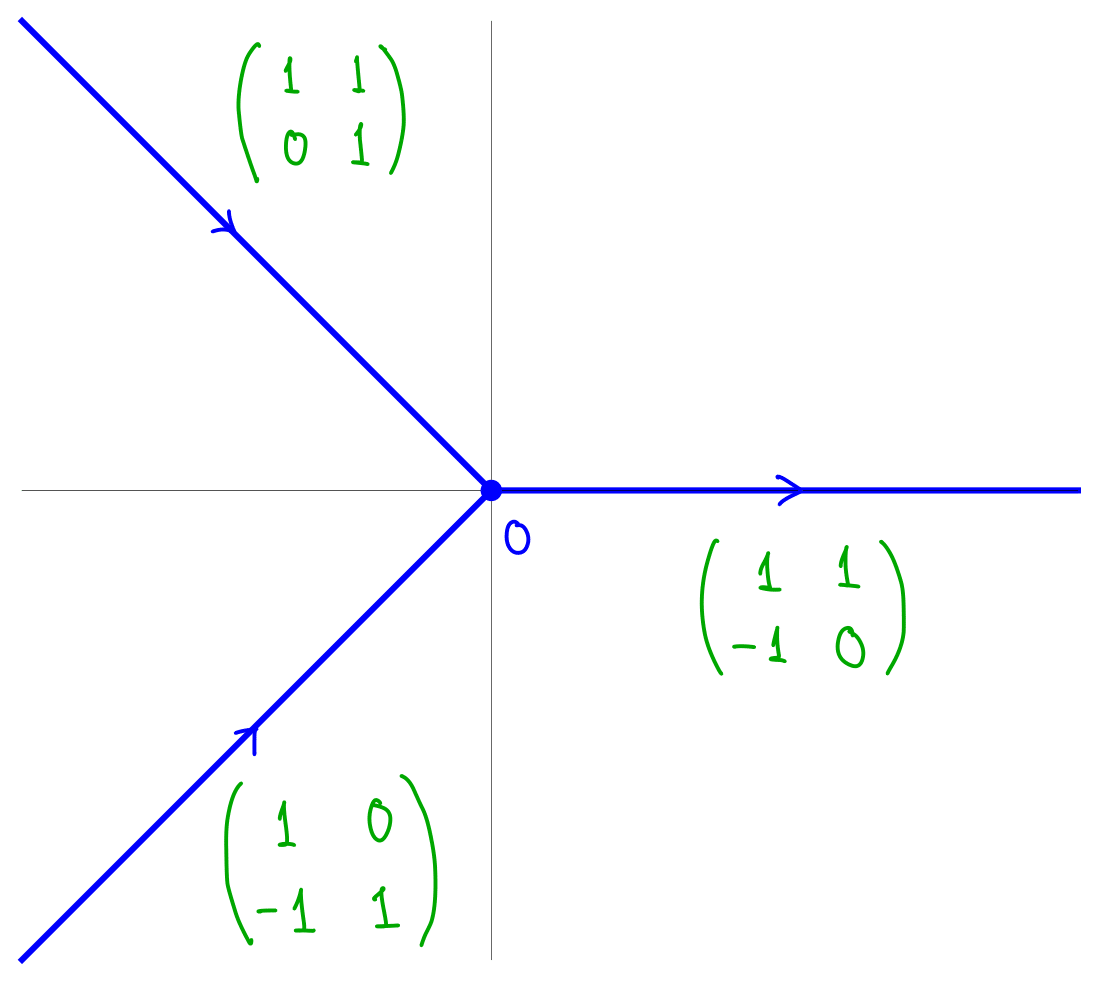}
    \caption{The jump matrices and the jump contour for $\Phi_{\Ai}(\zeta)$.}
    \label{Airy figure}
\end{figure}
Moreover, it holds that
\begin{align}
    \Phi_{\Ai}(\zeta) = \frac{\zeta^{-\frac14\sigma_3}}{\sqrt{2}} \left( \begin{pmatrix}
        1 & \ii \\ \ii & 1 \end{pmatrix} + \O\left( \zeta^{-\frac32} \right) \right) \sigma_1 \, e^{\frac{2}{3}\zeta^{\frac32} \sigma_3}
    \label{airy asymp}
\end{align}
uniformly as $\zeta \to \infty$, where \( \arg(\zeta^\gamma)= \gamma\arg\zeta \) for \( \arg\zeta\in(0,2\pi) \) (unlike similar asymptotic formula stated in \cite{DKMVZ2, DKMVZ1} where the roots are principal).

\subsubsection{Conformal Map}
\label{sss:conf map}

To use Airy parametrix from the previous subsection, we need to introduce a conformal map of a neighborhood of \( a \) (resp. \( \overline a,-a,-\overline a\)) to the \( \zeta\)-plane where \( \Phi_{\Ai}(\zeta)\) is defined. In this section, we provide a detailed construction around \( a \), understanding that the other points can be handled similarly.

Let \( \mathcal Q_1:=\{\lambda~|~\Re(\lambda),\Im(\lambda)>0\} \) be the first quadrant. For \( \lambda \in \mathcal Q_1 \setminus \Gamma(1,a) \), define
\begin{align}
\label{defn of h1}
    h_1 (\lambda) := \frac{\ii}{4} \int_a^{\lambda} \frac{\sqrt{f(\mu)}}{\mu^2} d \mu,
\end{align}
which is an analytic function in the domain of its definition. It readily follows from \eqref{f_trace} and \eqref{f_jump} that
\begin{align}
h_{1-}(\lambda) = -h_{1+}(\lambda)>0, \quad \lambda \in \Gamma(1,a). 
    \label{h_1+ and h_1-}
\end{align}
Moreover,  we obtain from \eqref{defn of g}, \eqref{PiOmega}, \eqref{Pi}, and \eqref{Omega} that
\begin{equation}
\label{g and ha}
2h_1(\lambda) = 2g(\lambda) + \begin{cases}[r]
-\varkappa, &  |\lambda|<1, \\
\varkappa, & |\lambda|>1,
\end{cases} \quad \lambda\in\mathcal Q_1.
\end{equation}

Consider \(\lambda\in\mathcal Q_1\cap \{|\lambda|<1\} \) (shaded region on the left panel of Figure~\ref{fig:step 1}). It follows from \eqref{g and ha} and the discussion after \eqref{Omega} that \( \tfrac32 xh_1(\lambda)\) bijectively maps this region into the shaded region depicted on the right panel of Figure~\ref{fig:step 1}:
\begin{figure}[ht!]
    \centering
    \includegraphics[width=0.4\linewidth]{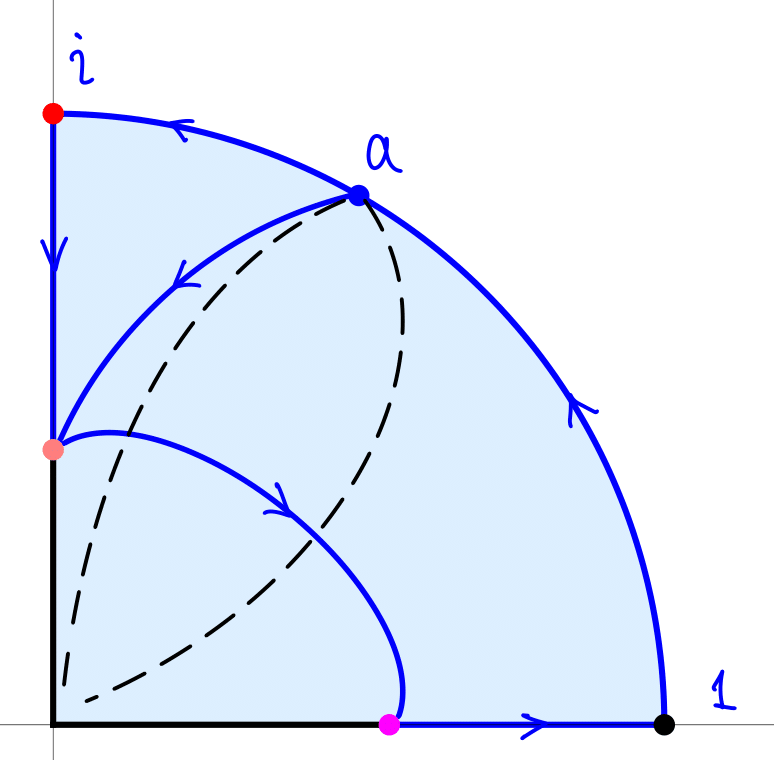}\qquad
    \includegraphics[width=0.4\linewidth]{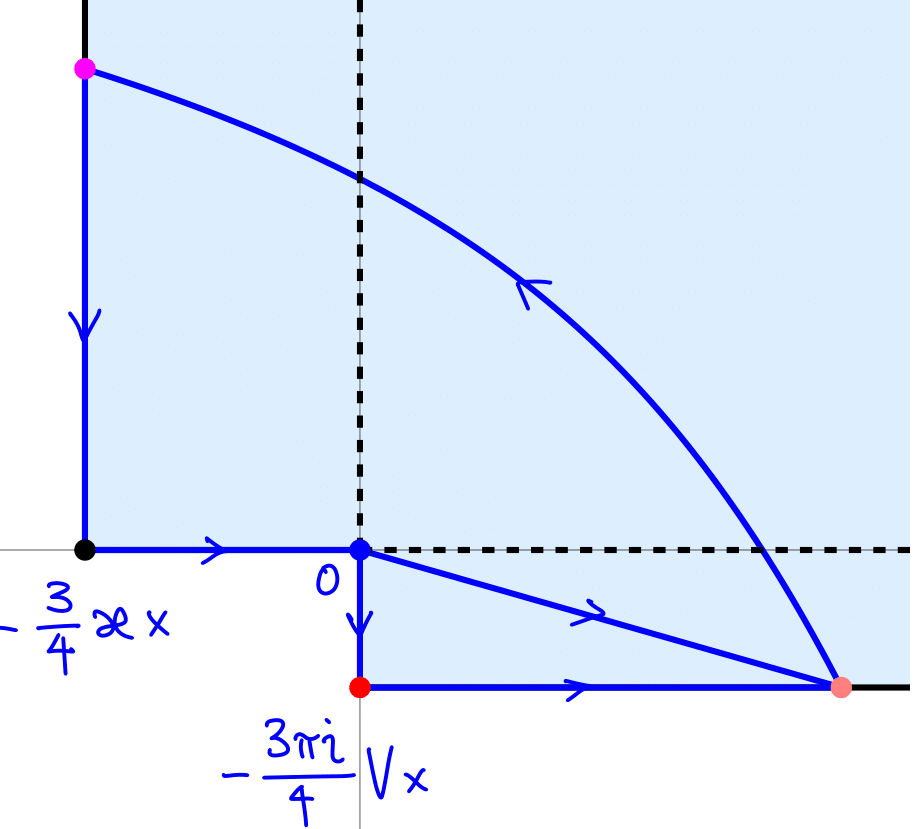}
    \caption{The interior of the unit disk in the first quadrant (left panel) and its image under \( \tfrac32 x h_1(\lambda) \) (right panel).}
    \label{fig:step 1}
\end{figure}
the arc \( \Gamma(1,a) \) is mapped into the horizontal segment \( (-\tfrac34 \varkappa x,0) \); the arc \( \Gamma(a,\ii)\) is mapped into the vertical segment \( (0,-\tfrac34 \pi\ii Vx)\); the segment \( \ii(1,0) \) is mapped into the horizontal half line \( -\tfrac34 \pi\ii Vx +(0,\infty)\); and the segment \( (0,1) \) is mapped into vertical half line \( -\tfrac34 \varkappa x + \ii (\infty,0) \). 

Next, select the cubic root of \( h_1(\lambda)\) according to the rule 
\begin{equation}
\label{arg h1}
\arg\big(h_1^{\frac13}(\lambda)\big) = \frac13 \arg(h_1(\lambda)) + \frac{2\pi}3, \quad \arg(h_1(\lambda))\in\left(-\frac\pi2,\pi\right).    
\end{equation}
In this case the map \( (\tfrac32 xh_1(\lambda))^{\frac13} \) takes \( \mathcal Q_1\cap \{|\lambda|<1\} \) bijectively onto the shaded region depicted on the left panel of Figure~\ref{fig:step 2} (the dashed lines on this figure are rays with angles \( \tfrac{2\pi}3 \) and \( \tfrac{5\pi}6 \)).
\begin{figure}[ht!]
    \centering
    \includegraphics[width=0.4\linewidth]{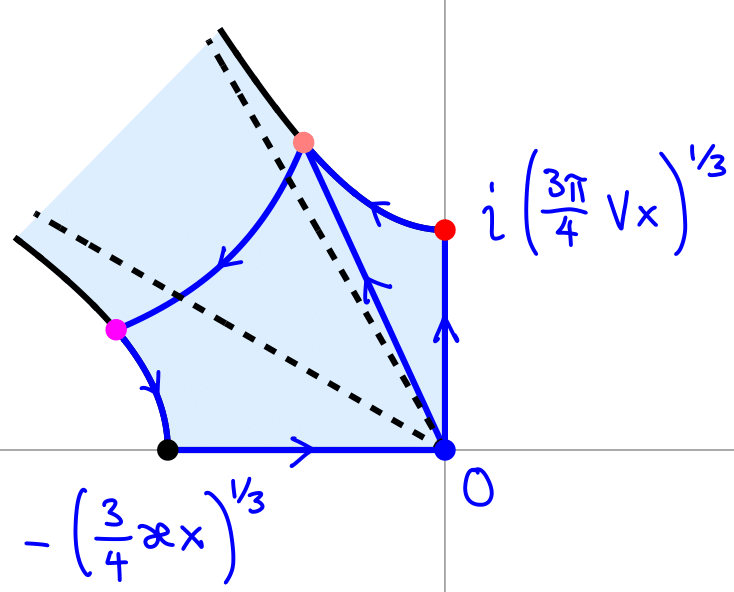}\qquad
    \includegraphics[width=0.4\linewidth]{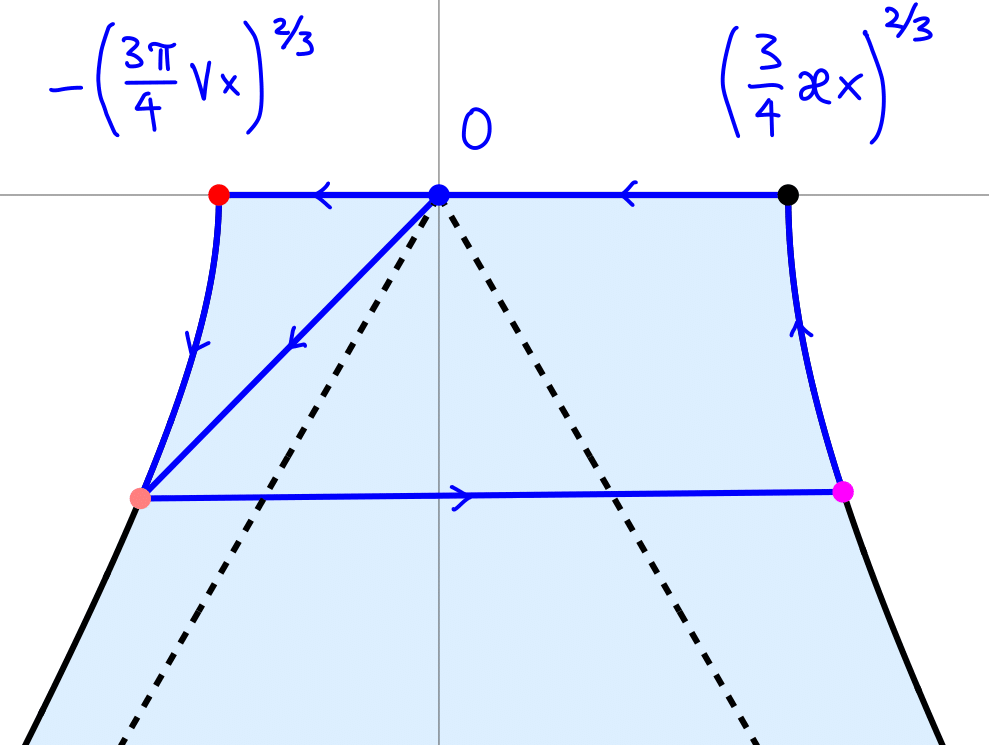}
    \caption{Images of \( \mathcal Q_1\cap \{|\lambda|<1\} \) under the maps \( (\frac32xh_1(\lambda))^{\frac13} \) (left panel) and \( \big((\tfrac32 xh_1(\lambda))^{\frac13}\big)^2 \) (right panel).}
    \label{fig:step 2}
\end{figure}
Now, we define
\begin{equation}
\label{defn of zeta 1}    
\zeta_1(\lambda) : = \big((\tfrac32 xh_1(\lambda))^{\frac13}\big)^2,    
\end{equation}
which maps \( \mathcal Q_1\cap \{|\lambda|<1\} \) conformally into the shaded region depicted on the right panel of Figure~\ref{fig:step 2} (the dashed lines are rays with angles \( \tfrac{4\pi}3 \) and \( \tfrac{5\pi}3 \)). 

Performing the same analysis in \( \mathcal Q_1 \cap \{|\lambda|>1\} \) and/or with the help of symmetry 
\[
g(\lambda)= -\overline{g(1/\overline \lambda)}
\]
see \eqref{g_symmetry}, we see that \( \zeta_1(\lambda) \) in \eqref{defn of zeta 1} extends to a conformal map of the entire first quadrant (in \eqref{arg h1} the argument of \( h_1(\lambda)\) needs to be selected in \((-2\pi,-\tfrac\pi2)\) so that it is continuous across \( \Gamma(a,\ii)\)). The set \( \zeta_1(\mathcal Q_1) \) is a strip domain consisting of the shaded region depicted on the right panel of Figure~\ref{fig:step 2}, its reflection across the real axis, and the open real interval bordering both regions. 

We now select the arc \( \gamma_1^{in},\gamma_1^{out} \) in the following way:
\[
\zeta_1(\gamma_1^{out}) = \{\arg\zeta=\tfrac{3\pi}4\} \cap \zeta_1(\mathcal Q_1) \quad \text{and} \quad \zeta_1(\gamma_1^{in}) = \{\arg\zeta=\tfrac{5\pi}4\} \cap \zeta_1(\mathcal Q_1)
\]
(notice that this definition is independent of \( x \) and that these arcs are reflections of each other across the unit circle). Further, let \( \zeta_*\neq 0 \) be the point where the ray \( \{\arg\zeta=\tfrac{3\pi}4\} \) intersects \( \zeta_1(\mathcal Q_1) \) (orange point on the right panel of Figure~\ref{fig:step 2} is \( \overline \zeta_*\)). We select domain \( \mathcal D_1 \) so that
\[
\zeta_1(\mathcal D_1) = \zeta_1(\mathcal Q_1) \cap \{-\Im(\zeta_*)<\Im(\zeta)<\Im(\zeta_*)\}.
\]
Again, this domain is independent of \( x \) and is symmetric with respect to the reflection across \( S^1 \). The boundary of \( \mathcal D_1\cap\{|\lambda|<1\} \) is depicted on the left panel of Figure~\ref{fig:step 1} (curvilinear rectangle with vertices \( 1, \ii \), orange dot on the vertical axis, and pink dot on the horizontal axis). The right panel of Figure~\ref{fig:step 1} as well as the left panel of Figure~\ref{fig:step 2} show how it transforms under the corresponding maps.

\begin{lem}
\label{lem:koebe}
It holds that
\[
|\zeta_1(\lambda)|^{\frac32} \geq \frac{3x}8 (\kk \kk^\prime)^2  \quad \text{and} \quad |\lambda| \geq d, \quad \lambda\in \partial \mathcal D_1,
\]
for some \( d>0 \) independent of \( \varkappa \) and \( x \).
\end{lem}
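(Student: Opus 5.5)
The plan is to read both claims off the geometry of the map \(\tfrac32 x h_1(\lambda)\) described above, and then pass to \(\zeta_1=\big((\tfrac32xh_1)^{1/3}\big)^2\), using \(|\zeta_1(\lambda)|^{3/2}=\tfrac32x|h_1(\lambda)|\). The first claim is therefore equivalent to
\[
|h_1(\lambda)|\ge \tfrac14(\kk\kk^\prime)^2,\quad \lambda\in\partial\mathcal D_1 .
\]

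\emph{Pieces of \(\partial\mathcal D_1\) on \(\partial\mathcal Q_1\).} By the symmetry \(g(\lambda)=-\overline{g(1/\overline\lambda)}\) it suffices to work in \(\{|\lambda|<1\}\). There \eqref{g and ha} and the description after \eqref{Omega} (Figure~\ref{fig:step 1}) show two things. The segment \((0,1)\) is mapped by \(h_1\) into the vertical line \(\Re h_1=-\varkappa/2\). The imaginary segment \(\ii(0,1)\) is mapped into the horizontal line \(\Im h_1=-\pi V/2\). Hence \(|h_1|\ge\tfrac12\min\{\varkappa,\pi V\}\) on these pieces. Since \(\varkappa<1\) and \(\pi V<1\), the estimate \eqref{kappa V estimate} gives \(\min\{\varkappa,\pi V\}>\tfrac12(\kk\kk^\prime)^2\). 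This yields the desired bound \(|h_1|\ge\tfrac14(\kk\kk^\prime)^2\) there, and consequently \(|\zeta_1|^{3/2}\ge\tfrac38x(\kk\kk^\prime)^2\).

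\emph{Horizontal pieces \(\Im\zeta=\pm\Im\zeta_*\).} The endpoint \(\zeta_*\) lies on \(\zeta_1(\partial\mathcal Q_1)\), so the previous step gives \(|\zeta_*|^{3/2}\ge\tfrac38x(\kk\kk^\prime)^2\). The segment \(\{\Im\zeta=\Im\zeta_*\}\cap\zeta_1(\mathcal Q_1)\) runs from \(\zeta_*\), which lies on the ray \(\arg\zeta=\tfrac{3\pi}4\), further to the left (the strip \(\zeta_1(\mathcal Q_1)\) opens along the negative real axis). Along it \(|\zeta|^2=(\Re\zeta)^2+(\Im\zeta_*)^2\) increases, so \(|\zeta|\ge|\zeta_*|\) on the whole segment. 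The same argument applies to the reflected segment through \(\overline\zeta_*\). This completes the first claim.

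\emph{The bound \(|\lambda|\ge d\).} The defining condition \(|\Im\zeta|<\Im\zeta_*\) is homogeneous in \(x\): it reads \(|\Im(h_1^{2/3})|<\Im(h_1(\lambda_*)^{2/3})\), where \(\lambda_*=\zeta_1^{-1}(\zeta_*)\). So \(\mathcal D_1\) depends only on \(\alpha\). Near \(\lambda=0\), \eqref{g_series} together with \eqref{g_symmetry} gives \(h_1(\lambda)=-\frac{\ii}{4\lambda}+\O(1)\), with the \(\O(1)\) uniform in \(\alpha\) since \(|\ell|,\varkappa,V\le1\). For \(\lambda\in\mathcal Q_1\) we have \(\arg(1/\lambda)\in(-\tfrac\pi2,0)\), so \(h_1\) lies near the quadrant \(\arg h_1\in(-\pi,-\tfrac\pi2)\) with \(|h_1|\sim 1/(4|\lambda|)\). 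Under the branch \eqref{arg h1}, \(h_1^{2/3}\) then has argument in a fixed closed sector strictly inside \((\pi,2\pi)\) or \((0,\pi)\), bounded away from the real axis except near the two edges. I would show that \(|\Im(h_1^{2/3})|\ge c|\lambda|^{-2/3}\) for \(|\lambda|<d\), except in thin angular sectors adjacent to \((0,1)\) and \(\ii(0,1)\); those sectors are themselves excluded from \(\mathcal D_1\) because there \(|\Im h_1|\) or \(|\Re h_1|\) is comparable to \(|h_1|\). Combining this with an upper bound \(\Im(h_1(\lambda_*)^{2/3})\le C\) uniform in \(\alpha\) gives \(\mathcal D_1\cap\{|\lambda|<d\}=\emptyset\).

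\emph{Main obstacle.} The hardest step is the uniformity in \(\alpha\in(0,\tfrac\pi2)\), a non-compact range. As \(\alpha\to0\) or \(\alpha\to\tfrac\pi2\), the point \(a\) approaches \(1\) or \(\ii\) and the strip degenerates. I would first treat \(\alpha\) in a compact subinterval by continuity and compactness. Then I would handle the two endpoint regimes separately using the explicit limits of \(g\) (namely \(\theta\) or \(\tfrac\ii4(\lambda+\lambda^{-1})-\tfrac\ii2\)), for which \(\lambda_*\) and the curves \(\zeta_1(\gamma_1^{in/out})\) can be located explicitly.
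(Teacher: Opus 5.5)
Your treatment of the pieces of $\partial\mathcal D_1$ on the coordinate axes is correct, and a bit more direct than the paper's comparison with the vertical rays $\ell_1,\ell_2$. The step for the horizontal pieces, however, fails.

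\textbf{The horizontal pieces.} The point $\zeta_*$ is the image of the endpoint of $\gamma_1^{out}$ on $\ii(1,\infty)$. So it lies on the \emph{left} boundary curve of $\zeta_1(\mathcal Q_1)$, which starts at $-(\tfrac34\pi Vx)^{2/3}$ and runs to infinity along $\arg\zeta=\tfrac{2\pi}3$. The domain $\zeta_1(\mathcal Q_1)$ is a vertical strip-like region through the real interval $\big(-(\tfrac34\pi Vx)^{2/3},(\tfrac34\varkappa x)^{2/3}\big)$; it does not open along the negative real axis.

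Hence the segment $\{\Im\zeta=\Im\zeta_*\}\cap\zeta_1(\mathcal Q_1)$ runs from $\zeta_*$ to the \emph{right} and crosses $\Re\zeta=0$. There $|\zeta|=\Im\zeta_*=|\zeta_*|/\sqrt2<|\zeta_*|$. So $|\zeta|\ge|\zeta_*|$ is false, and only $|\zeta|\ge\Im\zeta_*$ holds. Combined with your bound $|\zeta_*|^{3/2}\ge\tfrac34\pi Vx$, this gives only $(\Im\zeta_*)^{3/2}\ge 2^{-3/4}\tfrac34\pi Vx$. Since $\pi V/(\kk\kk^\prime)^2\to\pi/4$ as $\alpha\to\tfrac\pi2$, this recovers neither the constant $\tfrac38$ nor the radius $R(\varkappa,x)$ used in Lemma~\ref{lem:koebe2}.

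What is missing is a lower bound on $\Im\zeta_*=|\Re\zeta_*|$ itself. There are two ways to get it:
\begin{itemize}
\item The paper obtains $\Im\zeta_*\ge(\tfrac34\pi Vx)^{2/3}$ by showing that the image of the imaginary axis lies to the left of the line $\Re\zeta=-(\tfrac34\pi Vx)^{2/3}$.
\item Alternatively, compute $\zeta_*$ exactly. The map $\tfrac32xh_1$ sends $\ii(1,\infty)$ into the line $\Im w=-\tfrac34\pi Vx$, and $\arg\zeta_*=\tfrac{3\pi}4$ corresponds to $\arg w=-\tfrac{7\pi}8$. Hence $|\zeta_*|^{3/2}=\tfrac{3\pi Vx}{4\sin(\pi/8)}$, and so $(\Im\zeta_*)^{3/2}=\tfrac{3\pi Vx}{2^{11/4}\sin(\pi/8)}>\tfrac34\pi Vx$.
\end{itemize}

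\textbf{The bound $|\lambda|\ge d$.} Your route differs from the paper's and can be made to work, but the sign is wrong. From $g(\lambda)=g(1/\lambda)$ and \eqref{g_series} one gets $h_1(\lambda)=\frac{\ii}{4\lambda}+\O(1)$ uniformly in $\alpha$. For $\lambda\in\mathcal Q_1$ this lies near the \emph{first} quadrant, as the branch rule \eqref{arg h1} requires. Consequently $\arg\zeta_1$ stays near $[\tfrac{4\pi}3,\tfrac{5\pi}3]$, and $|\Im\zeta_1(\lambda)|\ge c\,x^{2/3}|\lambda|^{-2/3}$ on all of $\mathcal Q_1\cap\{|\lambda|<d\}$. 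There are no exceptional thin sectors to discard, and the reason you gave for discarding them would not hold anyway.

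Combine this with the uniform bound $x^{-2/3}\Im\zeta_*\le C$, which follows from the explicit formula above. This gives $\mathcal D_1\cap\{|\lambda|<d\}=\emptyset$, with no compactness or endpoint analysis in $\alpha$ needed.

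The paper proceeds differently. It bounds $|\zeta_1|^{3/2}\le xd_*$ on all of $\partial\mathcal D_1$, which also requires locating the right end of the horizontal segments via an auxiliary ray. It then applies the Koebe distortion theorem to $(2g-\varkappa)^{-1}$, which makes the bound uniform in $\alpha$ automatically. Your version trades that step for the uniform expansion of $h_1$ at the origin.
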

\begin{proof}
Consider vertical rays
\[
\ell_1 := \left\{-(\tfrac{3\pi}4 Vx)^{\frac23}(1+\ii y)~|~y>0\right\} \quad \text{and} \quad  
\ell_2:=\left\{(\tfrac34 \varkappa x)^{\frac23}(1-\ii y)~|~y>0 \right\}.
\]
Let \( \ell_1^* \) and \( \ell_2^* \) be their corresponding preimages under the map \( z\mapsto z^{2/3} \), where the cubic root is the same as in \eqref{arg h1}. It is an exercise in elementary complex analysis to show that
\[
\Im(z) > -\tfrac{3\pi}{4} Vx, \;\; z\in\ell_1^*, \quad \text{and} \quad \Re(z) > -\tfrac34 \varkappa x, \;\; z\in\ell_2^*,
\]
from which we can conclude that the curves \( \ell_1^* \) and \( \ell_2^* \) belong to the shaded region on the right panel of Figure~\ref{fig:step 1}. Respectively, \( \ell_1,\ell_2 \subset \zeta_1(\mathcal Q_1)\). It now readily follows, see Figure~\ref{fig:box}, that
\[
|\zeta_1(\lambda)|^{\frac32} \geq \frac34x \min\{\pi V,\varkappa\}, \quad \lambda \in \partial \mathcal{D}_1.
\]
Since \( \varkappa,\pi V\in(0,1)\), the first claim of the lemma follows from \eqref{kappa V estimate}. 

To prove the second claim, we first explain why there exists \( d_*>0\), independent of \( \varkappa \) and \( x \), for which
\begin{equation}
\label{zeta 1 upper bound}
|\zeta_1(\lambda)|^{\frac32} \leq xd_*, \quad \lambda \in \partial \mathcal{D}_1.
\end{equation}
Indeed, since the domains on the right panels of Figures~\ref{fig:step 1} and~\ref{fig:step 2} are connected by an appropriate branch of the map \( z\mapsto z^{\frac23} \), one can see from the both figures that the furthest points from the origin on \( \partial \zeta_1(\mathcal D_1)\) are the points lying on the horizontal line \( \{\zeta|\Im(\zeta)=-\Im(\zeta_*)\} \) (orange and pink dots on Figures~\ref{fig:step 2} and~\ref{fig:box}), one of them (the orange point) being \( \overline\zeta_* \). It can be readily seen from the right panel of Figure~\ref{fig:step 1} that
\[
|\zeta_*|^{\frac32} = \frac{3\pi Vx}4\left|\cot\left(\frac\pi 8\right) +\ii\right|.
\]
Moreover, as in the first part of the proof, a tedious but straightforward computation shows that the ray \( \{(3\varkappa x/4)^{\frac23} + re^{\frac53\pi\ii}|r>0\}\) lies outside of \( \zeta_1(\mathcal D_1)\) (this is the dash-dot ray on Figure~\ref{fig:box}). Hence, the second point of interest is no further from the origin than the point of intersection of this ray and the horizontal line \( \{\zeta|\Im(\zeta)=-\Im(\zeta_*)\} \). Thus, it is no further than
\[
\left[|\Im(\zeta_*)|^2 + \left( \left( \frac34 \varkappa x\right)^{\frac23} + \cot\left(\frac\pi3 \right) |\Im(\zeta_*)| \right)^2 \right]^{\frac12}.
\]
The last two displays clearly show the validity of \eqref{zeta 1 upper bound}. Now, we get from \eqref{zeta 1 upper bound}, \eqref{defn of zeta 1}, and \eqref{g and ha} that
\[
\frac43 d_* \geq |2h_1(\lambda)| = |2g(\lambda)-\varkappa| \geq \frac{(1-|\lambda|)^2}{2|\lambda|},
\]
where the last inequality follows from the Koebe distortion theorem \cite[Theorem~1.3]{Pommerenke} (\( (2g(\lambda)-\varkappa)^{-1}\) is univalent in \( \{|\lambda|<1\} \) with derivative \( -2\ii \) at the origin, see \eqref{g_symmetry} and \eqref{g_series}). The last display yields the second claim of the lemma.
\end{proof}

\begin{figure}[ht!]
    \centering
    \includegraphics[width=0.5\linewidth]{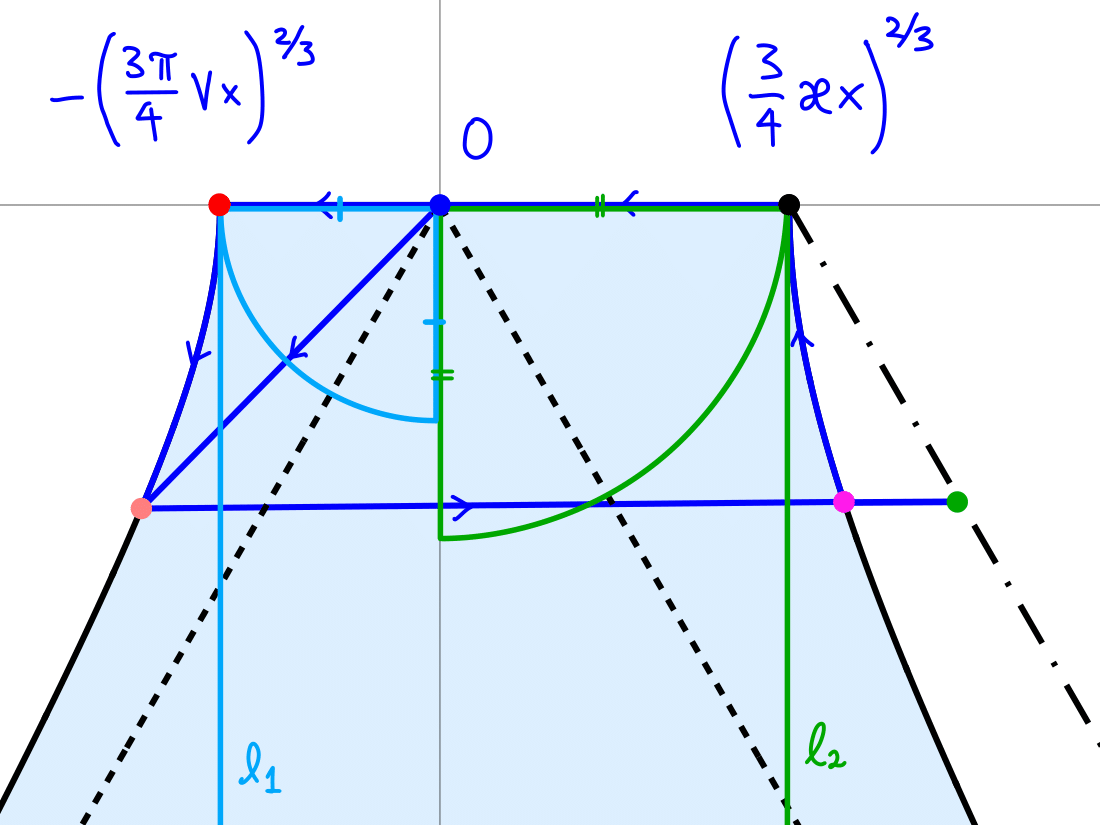}
    \caption{The vertical rays $\ell_1, \ell_2$ and the distance $|\zeta_1(\lambda)|$, \( \lambda \in \partial \mathcal{D}_1 \), which is always greater than $(\tfrac{3}{4}  \pi V x)^{\frac23}$ in the depicted case.}
    \label{fig:box}
\end{figure}

Recall \eqref{Kalpha}. We shall also need the following lemma.

\begin{lem}
\label{lem:koebe2}
Let \( R(\varkappa,x) := (\tfrac x4 (\kk \kk^\prime)^2)^{\frac23}\) and \( 0<R_* \leq R(\varkappa,x)\). It holds for all \( (\varkappa,x)\not\in \mathcal Z_\delta\) that
\[
N(\lambda) = \O\left( \frac{x^{\frac16}}{\delta \mathcal K_\alpha^{**}} \right)
\]
for \( \lambda \in \mathcal D_1 \setminus \zeta_1^{-1}\big(\{|z|< R_*\}\big) \), where \( \O(\cdot) \) is independent of \( \varkappa,x\), and \(\delta \). Moreover,
\[
N(\lambda) = \O\left( \frac{1}{\delta\mathcal K_\alpha^*}\right)
\]
for \( \lambda\in \overline{\mathcal Q_1}\setminus\mathcal D_1\) and \( (\varkappa,x)\not\in \mathcal Z_\delta\), where again \( \O(\cdot) \) is independent of \( \varkappa,x\), and \(\delta \).
\end{lem}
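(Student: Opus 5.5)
The plan is to reduce both claims to Lemma~\ref{lem:estimate of N}, which bounds $N(\lambda)$ by $\frac{6\pi}{\delta\epsilon}(\kk\kk^\prime)^{\frac14}/\mathcal K_\alpha^*$ whenever $\dist(\lambda,\{\pm a,\pm\overline a\})\geq\epsilon^4$. It then suffices to find, in each region, how close $\lambda$ can come to the branch points, and in particular to $a$, the only branch point in $\overline{\mathcal Q_1}$.

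\emph{Second claim.} By Lemma~\ref{lem:koebe}, $\partial\mathcal D_1$ stays at distance at least $d$ from the origin, and $\mathcal D_1$ is a fixed ($x$-independent) neighborhood of $a$ inside $\mathcal Q_1$. I expect to show $\dist(a,\overline{\mathcal Q_1}\setminus\mathcal D_1)\geq c\,\kk\kk^\prime$ for an absolute $c>0$. To do so, use the univalence of $2g-\varkappa$ in the unit disk together with Koebe distortion, as in the proof of Lemma~\ref{lem:koebe}. Combined with the lower bound $|\zeta_1|^{\frac32}\geq\frac{3x}{8}(\kk\kk^\prime)^2$ on $\partial\mathcal D_1$, i.e. $|h_1|\gtrsim(\kk\kk^\prime)^2$ there, and with $h_1(\lambda)\sim c_a(\lambda-a)^{3/2}$ where $|c_a|\asymp\sqrt{|f'(a)|}\asymp\sqrt{\kk\kk^\prime}$, this gives $|\lambda-a|\gtrsim\kk\kk^\prime$ on $\partial\mathcal D_1$. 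The distances to $-\overline a$ and $\overline a$ are at least $\kk^\prime$ and $\kk$ by geometry.

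Choose $\epsilon^4=c\,\kk\kk^\prime$. If the circle $\{|\lambda-a|=\epsilon^4\}$ leaves $\mathcal Q_1$, shrink $c$; since $\kk\kk^\prime\leq\min\{\kk,\kk^\prime\}$, this is harmless. Lemma~\ref{lem:estimate of N} then gives $N=\O\big((\kk\kk^\prime)^{\frac14-\frac14}/(\delta\mathcal K_\alpha^*)\big)=\O(1/(\delta\mathcal K_\alpha^*))$, which is the second claim.

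\emph{First claim.} For $\lambda\in\mathcal D_1$ with $|\zeta_1(\lambda)|\geq R_*$, the smallest distance from $a$ is attained near $|\zeta_1|=R_*\leq R$. Inverting $\zeta_1\approx(\tfrac32 x c_a)^{2/3}(\lambda-a)$ near $a$ gives
\[
|\lambda-a|\gtrsim R_*\big(x\sqrt{\kk\kk^\prime}\big)^{-\frac23}.
\]
This must hold with absolute constants throughout the whole disk $|\zeta_1|\leq R$, not only infinitesimally. For that I will apply Koebe's distortion theorem to $\zeta_1$, which is conformal on $\mathcal D_1$ and on a disk about $a$ of radius $\asymp\kk\kk^\prime$ (the distance to the other branch points). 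The choice $R=(\tfrac x4(\kk\kk^\prime)^2)^{2/3}$ is exactly what keeps the preimage of $\{|\zeta_1|<R\}$ inside a fixed fraction of that disk, which makes the distortion constants absolute.

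Plugging $\epsilon^4\asymp R_*x^{-2/3}(\kk\kk^\prime)^{-1/3}$ into Lemma~\ref{lem:estimate of N} gives a bound of order $x^{\frac16}(\kk\kk^\prime)^{\frac14+\frac1{12}}/(\delta\mathcal K_\alpha^*)$, up to the $R_*^{-1/4}$ factor. This is $\O\big(x^{\frac16}/(\delta\mathcal K_\alpha^{**})\big)$ after recalling $\mathcal K_\alpha^{**}=\mathcal K_\alpha^*(\kk\kk^\prime)^{-\frac13}$ and $\kk\kk^\prime\leq1$. The stray factor $R_*^{-1/4}$ has to be absorbed, so the region $|\zeta_1|<R_*$ is to be read with $R_*$ treated as an absolute threshold. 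If the exponent bookkeeping does not come out clean, I will instead run a maximum modulus argument on the annulus $R_*\leq|\zeta_1|\leq R$ using the boundedness on $\partial\mathcal D_1$.

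The main obstacle is the uniform-in-$\alpha$ Koebe comparison between $\zeta_1(\lambda)$ and $\lambda-a$. As $\alpha\to0$ or $\tfrac\pi2$, branch points coalesce and all constants must be tracked through $\kk\kk^\prime$.
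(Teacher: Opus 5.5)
Your proposal is correct and follows essentially the paper's route: both bounds are reduced to Lemma~\ref{lem:estimate of N} with $\epsilon^4\asymp R_*x^{-2/3}(\kk\kk^\prime)^{-1/3}$ and $\epsilon^4\asymp\kk\kk^\prime$, via Koebe's theorem and $|\zeta_1^\prime(a)|=(\tfrac12x^2\kk\kk^\prime)^{1/3}$. The $R_*^{-1/4}$ factor is tacitly absorbed in the paper as well, since $R_*$ is later fixed as a large constant. The only difference is that the paper applies the Koebe $\tfrac14$-theorem to $\zeta_1^{-1}$ on $\{|z|<R_*\}\subset\zeta_1(\mathcal D_1)$, an inclusion guaranteed by Lemma~\ref{lem:koebe}. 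That gives $|\lambda-a|\geq R_*/(4|\zeta_1^\prime(a)|)$ in one line and, with $R$ in place of $R_*$, exactly $\kk\kk^\prime/8$ for the second claim, so the uniformity issue you flag as the main obstacle does not arise.
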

\begin{proof}
It follows from the first claim of Lemma~\ref{lem:koebe} that \( \zeta_1(\mathcal D_1) \)  contains a disk of radius \( R(\varkappa,x) \) centered at the origin. Hence, we can apply the Koebe Distortion Theorem \cite[Theorem~1.3]{Pommerenke} to \( \zeta_1^{-1}(z) \) on the disk of radius \( R_* \) centered at the origin, which gives
\[
\min_{|\zeta_1(\lambda)|=R_*} |\lambda-a| \geq \frac{R_*}{4|\zeta_1^\prime(a)|}.
\]
It can be readily computed using \eqref{defn of zeta 1}, \eqref{defn of h1}, and the definition of \( f(\mu) \) that
\[
|\zeta_1^\prime(a)| = \left( \frac12 {x^2} \kk \kk^\prime \right)^{\frac13}.
\]
Therefore,
\[
\min_{|\zeta_1(\lambda)|=R_*}|\lambda-a| \geq \frac{x^{-\frac23}R_* }{4(\kk \kk^\prime)^{\frac13}}.
\]
Hence, the lower bound above can serve as \( \epsilon^4 \) in Lemma~\ref{lem:estimate of N}, which gives the first claim of the lemma. The second claim is obtained analogously by replacing \( R_* \) with \( R(\varkappa,x) \) and observing that
\[
\min_{\lambda\in\partial\mathcal D_1}|\lambda-a| \geq \min_{|\zeta_1(\lambda)|=R(\varkappa,x)} |\lambda-a| \quad \text{and} \quad \frac{R(\varkappa,x)}{4|\zeta_1^\prime(a)|} = \frac18 \kk \kk^\prime. \qedhere
\]
\end{proof}

\subsubsection{Local Parametrix Near $a$}

Let \( \zeta_1(\lambda) \) be as in the previous subsection. Selecting roots of this map needs to be done in accordance with the rule stated after \eqref{airy asymp}. Since \( \Gamma(1,a)_-\) (as approached from outside of the unit disk) is mapped into \( (0,\infty)_+ \) (as approached from upper half-plane), it must hold that \( \zeta_{1-}^\gamma(\lambda)>0 \) on \( \Gamma(1,a)\). Hence, 
\begin{align}
\label{zeta1 h1}
\zeta_1^{\frac32}(\lambda)=\frac 32 xh_1(\lambda)
\end{align}
by \eqref{h_1+ and h_1-} and \( \zeta_1^{\frac14}(\lambda)\) is holomorphic in \( \mathcal D_1\setminus\Gamma(1,a) \) satisfying
\begin{equation}
\label{quarter root 1}
\zeta_{1-}^{\frac14}(\lambda) = -\ii \zeta_{1+}^{\frac14}(\lambda)\geq 0, \quad \lambda \in \Gamma(1,a).
\end{equation}

Define
\[
E_1(\lambda) := \frac1{\sqrt 2} N(\lambda) \sigma_1 \begin{pmatrix} 1 & -\ii \\ -\ii & 1 \end{pmatrix} \zeta_1^{\frac14\sigma_3}(\lambda),\quad \lambda \in \mathcal D_1.
\]
RHP~\ref{RHP N}(2) shows that it holds on \( \Gamma(1,a) \) that
\begin{equation}
\label{E+a=E-a}
E_{1+}(\lambda) = \frac1{\sqrt 2} N_-(\lambda) \begin{pmatrix} 0 & -1 \\ 1 & 0 \end{pmatrix} \sigma_1 \begin{pmatrix} 1 & -\ii \\ -\ii & 1 \end{pmatrix} \ii^{\sigma_3} \zeta_{1-}^{\frac14\sigma_3}(\lambda) = E_{1-}(\lambda).
\end{equation}
It is easy to see from \eqref{defn of gamma} and \eqref{defn of M} that \( N(\lambda) \) can have at most a quarter root singularity at \( a \). Hence, \( E_1(\lambda) \) can have at most a square root singularity there. Combining this with \eqref{E+a=E-a} means that \( E_1(\lambda) \) is analytic in \( \mathcal D_1 \). Fix \( 0<R_*<R(\varkappa,x)\). It readily follows from the first estimate of Lemma~\ref{lem:koebe2} and the maximum modulus principle for analytic functions that
\begin{equation}
\label{est E1}    
E_1(\lambda) = \O\left( \frac{x^{\frac16}}{\delta \mathcal K_\alpha^{**}}\right), \quad |\zeta_1(\lambda)|\leq R_*,
\end{equation}
for all \( (\varkappa,x)\not\in \mathcal Z_\delta\), where \( \O(\cdot) \) is independent of \( \varkappa,x\), and \(\delta \). 

Set
\begin{equation}
\label{defn of H1}
H_1(\lambda) := E_1(\lambda) \Phi_{\Ai}(\zeta_1(\lambda))e^{-\frac23\zeta_1^{\frac32}(\lambda)\sigma_3}, \quad \lambda \in \mathcal D_1\setminus\tilde \Gamma.
\end{equation}
Then,  \( H_1(\lambda) \) solves the following Riemann-Hilbert problem (this is not as much a formulation of a Riemann-Hilbert problem as an enumeration of properties of \( H_1(\lambda) \)).
\begin{RHP}
\label{RHP H1}
Find a $2 \times 2$ matrix function $H_1(\lambda)$ such that
\begin{enumerate}
    \item \( H_1(\lambda) \) is analytic in \( \lambda \in \mathcal D_1\setminus \tilde \Gamma\), where $\tilde \Gamma$ is depicted on Figure~\ref{tilde X problem pic};
    \item \( H_1(\lambda) \) has bounded traces on \( \tilde \Gamma \cap \mathcal D_1 \) that satisfy
    \[
    H_{1+}(\lambda) = H_{1-}(\lambda) \begin{dcases}
        \begin{pmatrix}
            0 & -1 \\ 1 &  e^{x(\Pi(\lambda)-\varkappa)}
        \end{pmatrix}, & \lambda\in \Gamma(1,a), \\
        \begin{pmatrix}
            1 & 0 \\ e^{x(\varkappa-2g(\lambda))} & 1 
        \end{pmatrix}, & \lambda\in\gamma_1^{in}, \\
        \begin{pmatrix}
            1 & -e^{x(\varkappa+2g(\lambda))}\\ 0 & 1 
        \end{pmatrix}, & \lambda\in\gamma_1^{out};
    \end{dcases}
    \]
    \item it holds for all \( (\varkappa,x)\notin \mathcal Z_\delta\) satisfying \eqref{D4} and uniformly on \( \partial \mathcal D_1\) that
    \[
    H_1(\lambda) = \left( I + \O\left(\frac1{\delta^2x\mathcal K_\alpha}\right) \right) N(\lambda),
    \]
    where \( \O(\cdot) \) is independent of \( \varkappa, x \), and \( \delta \);
    \item it holds for all \( (\varkappa,x)\notin \mathcal Z_\delta\) satisfying \eqref{D4} and uniformly in \( \mathcal D_1\) that
    \[
    H_1(\lambda) = \O\left( \frac{x^{\frac16}}{\delta \mathcal K_\alpha^{**}}\right),
    \]
    where \( \O(\cdot) \) is independent of \( \varkappa, x \), and \( \delta \). 
\end{enumerate}
\end{RHP}

Indeed, RHP~\ref{RHP H1}(1) is straightforward. RHP~\ref{RHP H1}(2) follows from \eqref{Airy jumps} and \eqref{g and ha} (when the orientation of the rays is reversed, the jump matrices in \eqref{Airy jumps} must be inverted). Since \( \kk,\kk^\prime<1\) and \( \kk^\prime \KK < E(\kk) \leq \tfrac\pi2\), see \cite[Equation~(19.9.8)]{NIST:DLMF}, it holds that
\[
x (\kk\kk^\prime)^2 \geq \frac 4{\pi^2} x\mathcal K_\alpha \geq c_0\frac 4{\pi^2} D_\delta \geq c_0 D_4,
\]
where we used Lemma~\ref{lem:K alpha} from Appendix~\ref{app:elliptic} further below for the penultimate estimate (\( c_0\) is an absolute constant). Thus,  with a proper choice of \( D_4 \), we see from the first claim of Lemma~\ref{lem:koebe} that \eqref{airy asymp} is applicable. Hence, we get from the definition of \( E_1(\lambda) \) and Lemma~\ref{lem:koebe} that 
\[
H_1(\lambda) =  \left( I + N(\lambda)\, \O\left(\frac1{x(\kk \kk^\prime)^2}\right)N^{-1}(\lambda) \right) N(\lambda), \quad \lambda\in\partial \mathcal D_1.
\]
Thus, RHP~\ref{RHP H1}(3) is a consequence of \eqref{kappa V estimate} and the second claim of Lemma~\ref{lem:koebe2}. To see the validity of RHP~\ref{RHP H1}(4), we first consider \( \lambda \) such that \( |\zeta_1(\lambda)|\geq R_*\). Choosing \( D_4 \) in \eqref{D4} large enough, we can select \( R_* \) sufficiently large so that
\[
\Phi_{\Ai}(\zeta) = \zeta^{-\frac14\sigma_3} \O( 1)\, e^{\frac{2}{3}\zeta^{\frac32} \sigma_3}, \quad |\zeta|>R_*,
\]
see again \eqref{airy asymp}. The desired claim now follows from the definition of \( E_1(\lambda) \) and the first claim of Lemma~\ref{lem:koebe2}. On the other hand, when \( |\zeta_1(\lambda)|\leq R_*\), we have that 
\[
\Phi_{\Ai}(\zeta),  e^{-\frac23\zeta^{\frac32}\sigma_3} = \O(1), \quad |\zeta|\leq R_*,
\]
which, combined with \eqref{est E1}, finishes the demonstration of RHP~\ref{RHP H1}(4).

\subsubsection{Local Parametrix Near $-\overline{a}$}

The remaining three constructions are extremely similar to the previous one. Let \( \mathcal Q_2:=\{\lambda~|~-\Re(\lambda),\Im(\lambda)>0\} \) denote the second quadrant. Set
\begin{align}
    h_2 (\lambda) := \frac{\ii}{4} \int_{{-\bar{a}}}^{\lambda} \frac{\sqrt{f(\mu)}}{\mu^2} d \mu, \quad \lambda\in \mathcal Q_2 \setminus \Gamma(-\overline a,-1).
\end{align}
It follows from \eqref{f_trace} and \eqref{f_jump} that
\begin{align}
h_{2-}(\lambda) = -h_{2+}(\lambda)>0, \quad \lambda \in \Gamma(-\overline a,1),
    \label{h_2+ and h_2-}
\end{align}
and from \eqref{defn of g}, \eqref{PiOmega}, \eqref{Pi}, and \eqref{Omega} that
\begin{equation}
\label{g and h - bar a}
2h_2(\lambda) = 2g(\lambda) + \begin{cases}[r]
-\varkappa+2\pi\ii V, &  |\lambda|<1, \\
\varkappa+2\pi\ii V, & |\lambda|>1,
\end{cases} \quad \lambda \in \mathcal Q_2.
\end{equation}
Moreover, we get from \eqref{g_symmetry}, \eqref{g_symmetry2}, and Lemma~\ref{lem:V_ell_connection} that
\[
g(\lambda) = -\overline{g(\overline\lambda)} = \overline{g(-\overline\lambda)} - \pi \ii V.
\]
Thus, we see from \eqref{g and ha} and \eqref{g and h - bar a} that \( h_2(\lambda) = \overline{h_1(-\overline\lambda)}\). Respectively, the desired conformal map on \( \mathcal Q_2 \) is given by
\begin{align}
    \zeta_2(\lambda) := \left( -\frac{3}{2} \, x \, h_2(\lambda) \right)^{\frac23} = \overline{\zeta_1(-\overline\lambda )}.
    \label{defn of zeta 2}
\end{align}
Strictly speaking, there is no reason to put \( - \) sign in front of \( h_2(\lambda) \) in the above formula since its effect is erased by squaring. However, since \( \zeta_2(\lambda) \) takes the positive side of \( \Gamma(-\overline a,1) \) into the positive side of \( (0,\infty)\), unlike in the case of \( \zeta_1(\lambda)\), we now need to have that \( \zeta_{2+}^\gamma(\lambda)>0 \) on \( \Gamma(-\overline a,1) \). This, in view of \eqref{h_2+ and h_2-}, implies that
\[
\zeta_2^{\frac32}(\lambda)=-\frac 32 xh_2(\lambda),
\]
which we emphasize by having the sign \( - \) in \eqref{defn of zeta 2}. Furthermore, it holds that
\begin{equation}
\label{quarter root 2}
\zeta_{2+}^{\frac14}(\lambda) = -\ii \zeta_{2-}^{\frac14}(\lambda)\geq 0, \quad \lambda \in \Gamma(-\overline a,1).
\end{equation}
We now choose the arcs \( \gamma_2^{in},\gamma_2^{out}\) and domain \(\mathcal D_2\) to be the reflections across the vertical axis of the arcs \( \gamma_1^{in},\gamma_1^{out}\) and domain \(\mathcal D_1\), respectively. Naturally, the claims of Lemma~\ref{lem:koebe} remain valid for \( \zeta_2(\lambda)\) on \( \partial \mathcal D_2\) and of Lemma~\ref{lem:koebe2} for \( N(\lambda) \) in \( \mathcal D_2 \setminus \zeta_2^{-1}\big(\{|z|< R_*\}\big) \) and \( \overline {\mathcal Q_2}\setminus \mathcal D_2\). 

Define
\[
E_2(\lambda) := \frac1{\sqrt2}N(\lambda) e^{-\pi\ii xV\sigma_3} \begin{pmatrix}
        1 & -\ii\\
        -\ii & 1
    \end{pmatrix}
    \zeta_2^{\frac{1}{4}\sigma_3}(\lambda),
\]
which is an analytic matrix function in \( \mathcal D_2 \setminus \Gamma(-\overline a,-1)\). We get from RHP~\ref{RHP N}(2) that
\[
E_{2+}(\lambda) = \frac1{\sqrt2}N_-(\lambda) e^{-\pi\ii xV\sigma_3} \begin{pmatrix}
        0 & -1\\
        1 & 0
    \end{pmatrix}  \begin{pmatrix}
        1 & -\ii\\
        -\ii & 1
    \end{pmatrix} (-\ii)^{\sigma_3} \zeta_{2-}^{\frac14\sigma_3}(\lambda) = E_{2-}(\lambda)
\]
for \( \lambda \in \Gamma(-\overline a,-1)\) and therefore \( E_2(\lambda) \) is analytic in \( \mathcal D_2 \) (again, the singularity at \( -\overline a \) is removable). Moreover, the analogue of the first claim of Lemma~\ref{lem:koebe2} now gives
\begin{equation}
\label{est E2}    
E_2(\lambda) = \O\left( \frac{x^{\frac16}}{\delta \mathcal K_\alpha^{**}}\right), \quad |\zeta_2(\lambda)|\leq R_*,
\end{equation}
for all \( (\varkappa,x)\not\in \mathcal Z_\delta\), where \( \O(\cdot) \) is independent of \( \varkappa,x\), and \(\delta \). 

Set
\begin{equation}
\label{defn of H2}
H_2(\lambda) := E_2(\lambda) \Phi_{\Ai}(\zeta_2(\lambda)) \sigma_1 e^{\frac23 \zeta_2^{\frac32}(\lambda) \sigma_3} e^{\pi\ii xV\sigma_3}, \quad \lambda \in \mathcal D_2\setminus \tilde\Gamma.
\end{equation}
Then,  \( H_2(\lambda) \) solves the following Riemann-Hilbert problem.
\begin{RHP}
\label{RHP H2}
Find a $2 \times 2$ matrix function $H_2(\lambda)$ such that
\begin{enumerate}
    \item \( H_2(\lambda) \) is analytic in \( \mathcal D_2\setminus \tilde \Gamma\);
    \item \( H_2(\lambda) \) has bounded traces on \( \tilde \Gamma \cap \mathcal D_2 \) that satisfy
    \[
    H_{2+}(\lambda) = H_{2-}(\lambda) \begin{dcases}
        \begin{pmatrix}
            0 & -e^{-2\pi\ii xV} \\ e^{2\pi\ii xV} & e^{x(\Pi(\lambda)-\varkappa)}
        \end{pmatrix}, & \lambda\in \Gamma(-\overline a,-1), \\
        \begin{pmatrix}
            1 & 0 \\ e^{x(\varkappa-2g(\lambda))} & 1 
        \end{pmatrix}, & \lambda\in\gamma_2^{in}, \\
        \begin{pmatrix}
            1 & -e^{x(\varkappa+2g(\lambda))} \\ 0 & 1 
        \end{pmatrix}, & \lambda\in\gamma_2^{out};
    \end{dcases}
    \]
    \item it holds for all \((\varkappa,x)\notin \mathcal Z_\delta\) satisfying \eqref{D4} and uniformly on \( \partial \mathcal D_2 \) that
    \[
    H_2(\lambda) = \left( I + \O\left(\frac1{\delta^2 x \mathcal K_\alpha}\right) \right) N(\lambda),
    \]
    where \( \O(\cdot) \) is independent of \( \varkappa,x\), and \( \delta \);
    \item it holds for all \((\varkappa,x)\notin \mathcal Z_\delta\) satisfying \eqref{D4} and uniformly on \( \mathcal D_2 \) that
    \[
    H_2(\lambda) = \O\left( \frac{x^{\frac16}}{\delta \mathcal K_\alpha^{**}} \right),
    \]
    where \( \O(\cdot) \) is independent of \( \varkappa,x\), and \( \delta \).
\end{enumerate}
\end{RHP}

As before, RHP~\ref{RHP H2}(1) is straightforward; RHP~\ref{RHP H2}(2) follows from \eqref{Airy jumps} and \eqref{g and h - bar a}; RHP~\ref{RHP H2}(3) is a consequence of \eqref{airy asymp}, analogues of Lemmas~\ref{lem:koebe} and~\ref{lem:koebe2} for \( \zeta_2(\lambda) \), and the fact that \( e^{\pi\ii xV} \) is a unimodular number; and RHP~\ref{RHP H2}(4) is deduced from \eqref{airy asymp}, an analogue of Lemma~\ref{lem:koebe2}, and \eqref{est E2}.


\subsubsection{Local Parametrix Near $-a$}

Denote by \( \mathcal Q_3:=\{\lambda~|~\Re(\lambda),\Im(\lambda)<0\}\) the third quadrant. Set
\begin{align}
    h_3 (\lambda) := \frac{\ii}{4} \int_{-a}^{\lambda} \frac{\sqrt{f(\mu)}}{\mu^2} d \mu, \quad \lambda\in \mathcal Q_3 \setminus \Gamma(-1,- a).
\end{align}
In this case, it follows from \eqref{f_trace} and \eqref{f_jump} that
\begin{equation}
\label{h_3+ and h_3-}    
h_{3+}(\lambda) = -h_{3-}(\lambda)>0, \quad \lambda \in \Gamma(-1,- a).
\end{equation}
Moreover, we get from \eqref{defn of g}, \eqref{PiOmega}, \eqref{Pi}, and \eqref{Omega} that
\begin{equation}
\label{g and h - a}
2h_3(\lambda) = 2g(\lambda) + \begin{cases}[r]
\varkappa+2\pi\ii V, & |\lambda|<1, \\
-\varkappa+2\pi\ii V, & |\lambda|>1,
\end{cases} \quad \lambda \in \mathcal Q_3.
\end{equation}
Symmetry relation \eqref{g_symmetry2}, Lemma~\ref{lem:V_ell_connection}, and \eqref{g and ha} now give
\[
g(\lambda) = - g(-\lambda) - \pi\ii V \quad \Rightarrow \quad h_3(\lambda) = -h_1(-\lambda).
\]
Hence, we have that
\begin{align}
    \zeta_3(\lambda) := \left( -\frac{3}{2} \, x \, h_3(\lambda) \right)^{\frac23} = \zeta_1(-\lambda).
    \label{defn of zeta 3}
\end{align}
Since \( \zeta_3(\lambda) \) maps the negative side of \( \Gamma(-1,-a) \) into the positive side of \( (0,\infty)\), it must hold that \( \zeta_{3-}^\gamma(\lambda)>0 \) on \( \Gamma(-1,-a) \). Given \eqref{h_3+ and h_3-}, we see that
\[
\zeta_3^{\frac32}(\lambda)=-\frac 32 xh_3(\lambda)
\]
and that \( \zeta_3^{\frac14}(\lambda) \) satisfies \eqref{quarter root 1} on \( \Gamma(-1,-a) \). The arcs \( \gamma_3^{in},\gamma_3^{out}\) and domain \(\mathcal D_3 \) are now chosen to be the reflections across the origin of the arcs \( \gamma_1^{in},\gamma_1^{out}\) and domain \(\mathcal D_1\), respectively. The claims of Lemma~\ref{lem:koebe} remain valid for \( \zeta_3(\lambda)\) on \( \partial \mathcal D_3\) and of Lemma~\ref{lem:koebe2} remain valid for \( N(\lambda) \) in \( \mathcal D_3 \setminus \zeta_3^{-1}\big(\{|z|< R_*\}\big) \) and \( \overline {\mathcal Q_3}\setminus \mathcal D_3\).

Define
\[
E_3(\lambda) := \frac1{\sqrt2}N(\lambda) e^{-\pi\ii xV\sigma_3} J(\lambda) \sigma_3 \begin{pmatrix}
        1 & -\ii\\
        -\ii & 1
    \end{pmatrix}
    \zeta_3^{\frac{1}{4}\sigma_3}(\lambda),
\]
which is an analytic matrix function in \( \mathcal D_1\setminus S^1\), where
\[
J(\lambda) := \begin{cases}[r] -\ii^{\sigma_3}, & |\lambda|<1, \\ \ii^{\sigma_3}, & |\lambda|>1. \end{cases} 
\]
It holds that
\[
E_{3+}(\lambda) = \frac1{\sqrt2}N_-(\lambda)(-I) e^{-\pi\ii xV\sigma_3} J_-(\lambda)(-I) \sigma_3 \begin{pmatrix}
        1 & -\ii\\
        -\ii & 1
    \end{pmatrix}
    \zeta_3^{\frac{1}{4}\sigma_3}(\lambda) = E_{3-}(\lambda)
\]
on \( \Gamma(-a,-\ii) \) and that
\[
E_{3+}(\lambda) = \frac1{\sqrt2}N_-(\lambda) e^{-\pi\ii xV\sigma_3} J_-(\lambda) \begin{pmatrix} 0 & -1 \\ 1 & 0 \end{pmatrix} \sigma_3 \begin{pmatrix}
        1 & -\ii\\
        -\ii & 1
    \end{pmatrix}
    \ii^{\sigma_3} \zeta_{3-}^{\frac{1}{4}\sigma_3}(\lambda) = E_{3-}(\lambda)
\]
on \( \Gamma(-1,-a) \). Hence, as in the previous three cases \( E_3(\lambda) \) is analytic in the disk of its definition. Again, the analogue of the first claim of Lemma~\ref{lem:koebe2} now gives
\begin{equation}
\label{est E3}    
E_3(\lambda) = \O\left( \frac{x^{\frac16}}{\delta \mathcal K_\alpha^{**}}\right), \quad |\zeta_3(\lambda)|\leq R_*,
\end{equation}
for all \( (\varkappa,x)\not\in \mathcal Z_\delta\), where \( \O(\cdot) \) is independent of \( \varkappa,x\), and \(\delta \). 

Put
\begin{equation}
\label{defn of H3}
H_3(\lambda) := E_3(\lambda) \Phi_{\Ai}(\zeta_3(\lambda)) J(\lambda) \sigma_1 \sigma_3 e^{\frac23\zeta_3^{\frac32}(\lambda)\sigma_3} e^{\pi\ii xV\sigma_3}, \quad \lambda \in \mathcal D_3\setminus \tilde\Gamma.
\end{equation}
Then, \( H_3(\lambda) \) solves the following Riemann-Hilbert problem.
\begin{RHP}
\label{RHP H3}
Find a $2 \times 2$ matrix function $H_3(\lambda)$ such that
\begin{enumerate}
    \item \( H_3(\lambda) \) is analytic in \( \mathcal D_3\setminus \tilde \Gamma\);
    \item \( H_3(\lambda) \) has bounded traces on \( \tilde \Gamma \cap \mathcal D_3 \) that satisfy
    \[
    H_{3+}(\lambda) = H_{3-}(\lambda) \begin{dcases}
        -I, & \lambda\in\Gamma(-a,-\ii), \\
        \begin{pmatrix}
            -e^{-x(\Pi(\lambda)+\varkappa)} & -e^{-2\pi\ii xV} \\ e^{2\pi\ii xV} & 0 
        \end{pmatrix}, & \lambda\in \Gamma(-1,-a), \\
        \begin{pmatrix}
            1 & e^{x(\varkappa+2g(\lambda))} \\ 0 & 1 
        \end{pmatrix}, & \lambda\in\gamma_3^{in}, \\
        \begin{pmatrix}
            1 & 0 \\ -e^{x(\varkappa-2g(\lambda))} & 1 
        \end{pmatrix}, & \lambda\in\gamma_3^{out};
    \end{dcases}
    \]
    \item it holds for all \((\varkappa,x)\notin \mathcal Z_\delta\) satisfying \eqref{D4} and uniformly on \( \partial \mathcal D_3 \) that
    \[
    H_3(\lambda) = \left( I + \O\left(\frac1{\delta^2 x \mathcal K_\alpha}\right) \right) N(\lambda),
    \]
    where \( \O(\cdot) \) is independent of \( \varkappa,x\), and \( \delta \);
    \item it holds for all \((\varkappa,x)\notin \mathcal Z_\delta\) satisfying \eqref{D4} and uniformly on \( \mathcal D_3 \) that
    \[
    H_3(\lambda) = \O\left( \frac{x^{\frac16}}{\delta \mathcal K_\alpha^{**}} \right),
    \]
    where \( \O(\cdot) \) is independent of \( \varkappa,x\), and \( \delta \).
\end{enumerate}
\end{RHP}

RHP~\ref{RHP H3}(1) is immediate; RHP~\ref{RHP H3}(2) follows from \eqref{Airy jumps} and \eqref{g and h - a} (because of the reverse orientation of the rays, the jump matrices in \eqref{Airy jumps} must be inverted); RHP~\ref{RHP H3}(3) is a consequence of \eqref{airy asymp}, the analogues of Lemmas~\ref{lem:koebe} and~\ref{lem:koebe2} for \( \zeta_3(\lambda) \), and the fact that \( e^{\pi\ii xV} \) is a unimodular number; and finally RHP~\ref{RHP H3}(4) is deduced from \eqref{airy asymp}, an analogue of Lemma~\ref{lem:koebe2}, and \eqref{est E3} exactly as before.


\subsubsection{Local Parametrix Near $\overline{a}$}

Finally, let \( \mathcal Q_4:=\{\lambda~|~\Re(\lambda),-\Im(\lambda)>0\}\) be the fourth quadrant. Set
\begin{align}
    h_4 (\lambda) := \frac{\ii}{4} \int_{\bar a}^{\lambda} \frac{\sqrt{f(\mu)}}{\mu^2} d \mu, \quad \lambda \in \mathcal Q_4\setminus\Gamma(\overline a,1).
\end{align}
In this case we get from \eqref{f_trace} and \eqref{f_jump} that
\begin{equation}
\label{h_4+ and h_4-}    
h_{4+}(\lambda) = -h_{4-}(\lambda)>0, \quad \lambda \in \Gamma(\overline a,1),
\end{equation}
and from \eqref{defn of g}, \eqref{PiOmega}, \eqref{Pi}, and \eqref{Omega} that
\begin{equation}
\label{g and h bar a}
2h_4(\lambda) = 2g(\lambda) + \begin{cases}[r]
\varkappa, & |\lambda|<1, \\
-\varkappa, & |\lambda|>1,
\end{cases} \quad \lambda \in \mathcal Q_4.
\end{equation}
Symmetry relations \eqref{g_symmetry} and \eqref{g and ha} yield
\[
g(\lambda) = - \overline{g(\overline\lambda)} \quad \Rightarrow \quad h_4(\lambda) = -\overline{h_1(\overline\lambda)}.
\]
Hence, we have that
\begin{align}
    \zeta_4(\lambda) := \left( \frac{3}{2} \, x \, h_4(\lambda) \right)^{\frac23} = \overline{\zeta_1(\overline\lambda)}.
    \label{defn of zeta 4}
\end{align}
Since \( \zeta_4(\lambda) \) maps the positive side of \( \Gamma(\overline a,1) \) into the positive side of \( (0,\infty)\), it must hold that \( \zeta_{4+}^\gamma(\lambda)>0 \) on \( \Gamma(\overline a,1) \). Thus, we get from \eqref{h_4+ and h_4-} that
\[
\zeta_4^{3/2}(\lambda) = \frac 32 xh_4(\lambda)
\]
and that \( \zeta_4^{\frac14}(\lambda) \) satisfies \eqref{quarter root 2} on \( \Gamma(\overline a,1) \). The arcs \( \gamma_4^{in},\gamma_4^{out}\) and domain \(\mathcal D_4 \) are chosen as the reflections across the horizontal axis of the arcs \( \gamma_1^{in},\gamma_1^{out}\) and domain \(\mathcal D_1\), respectively. As usual, The claims of Lemma~\ref{lem:koebe} still hold for \( \zeta_4(\lambda)\) on \( \partial \mathcal D_4 \) and of Lemma~\ref{lem:koebe2} hold for \( N(\lambda) \) in \( \mathcal D_4 \setminus \zeta_4^{-1}\big(\{|z|< R_*\}\big) \) and \( \overline {\mathcal Q_4}\setminus \mathcal D_4\).

Recall the definition of \( J(\lambda) \) in the previous subsection. Define
\[
E_4(\lambda) := \frac1{\sqrt2}N(\lambda) J^{-1}(\lambda) \sigma_3 \sigma_1 \begin{pmatrix}
        1 & -\ii\\
        -\ii & 1
    \end{pmatrix}
    \zeta_4^{\frac{1}{4}\sigma_3}(\lambda),
\]
which is an analytic matrix function in \( \mathcal D_4\setminus S^1\). We get from RHP~\ref{RHP N}(2) that
\[
E_{4+}(\lambda) = \frac1{\sqrt 2} N_-(\lambda) (-I) J_-^{-1}(\lambda)(-I) \sigma_3 \sigma_1 \begin{pmatrix} 1 & -\ii \\ -\ii & 1 \end{pmatrix} \zeta_4^{\frac14\sigma_3}(\lambda) = E_{4-}(\lambda)
\]
on \( \Gamma(-\ii,\overline a) \) and
\begin{align*}
E_{4+}(\lambda) &= \frac1{\sqrt 2} N_-(\lambda) \begin{pmatrix} 0 & -1 \\ 1 & 0 \end{pmatrix} J_+^{-1}(\lambda)\sigma_3 \sigma_1 \begin{pmatrix} 1 & -\ii \\ -\ii & 1 \end{pmatrix} (-\ii)^{\sigma_3} \zeta_{4-}^{\frac14\sigma_3}(\lambda) \\
& = \frac1{\sqrt 2} N_-(\lambda)J_-^{-1}(\lambda) \begin{pmatrix} 0 & -1 \\ 1 & 0 \end{pmatrix} \begin{pmatrix} -1 & \ii \\ \ii & -1 \end{pmatrix} \zeta_{4-}^{\frac14\sigma_3}(\lambda) = E_{4-}(\lambda)
\end{align*}
on \( \Gamma(\overline a,1) \). Hence, as in all the previous subsections, we can conclude that \( E_4(\lambda) \) is analytic in \( \mathcal D_4 \) and that the analogue of the first claim of Lemma~\ref{lem:koebe2} gives
\begin{equation}
\label{est E4}    
E_4(\lambda) = \O\left( \frac{x^{\frac16}}{\delta \mathcal K_\alpha^{**}}\right), \quad |\zeta_4(\lambda)|\leq R_*,
\end{equation}
for all \( (\varkappa,x)\not\in \mathcal Z_\delta\), where \( \O(\cdot) \) is independent of \( \varkappa,x\), and \(\delta \).

Set
\begin{equation}
\label{defn of H4}
H_4(\lambda) := E_4(\lambda) \Phi_{\Ai}(\zeta_4(\lambda)) J(\lambda) \sigma_3 e^{-\frac23\zeta_4^{\frac32}(\lambda)\sigma_3}
\end{equation}
for \( \lambda \in \mathcal D_4 \setminus\tilde \Gamma \). Then,  \( H_4(\lambda) \) solves the following Riemann-Hilbert problem.
\begin{RHP}
\label{RHP H4}
Find a $2 \times 2$ matrix function $H_4(\lambda)$ such that
\begin{enumerate}
    \item \( H_4(\lambda) \) is analytic in \( \mathcal D_4\setminus \tilde \Gamma\);
    \item \( H_4(\lambda) \) has bounded traces on \( \tilde \Gamma \cap \mathcal D_4 \) that satisfy
    \[
    H_{4+}(\lambda) = H_{4-}(\lambda) \begin{dcases}
        -I, & \lambda\in \Gamma(-\ii,\overline a), \\
        \begin{pmatrix}
            -e^{-x(\Pi(\lambda)+\varkappa)} & -1 \\ 1 & 0
        \end{pmatrix}, & \lambda\in \Gamma(\overline a,1), \\
        \begin{pmatrix}
            1 & e^{x(\varkappa+2g(\lambda))} \\ 0 & 1 
        \end{pmatrix}, & \lambda\in\gamma_4^{in}, \\
        \begin{pmatrix}
            1 & 0 \\ -e^{x(\varkappa-2g(\lambda))} & 1 
        \end{pmatrix}, & \lambda\in\gamma_4^{out};
    \end{dcases}
    \]
    \item it holds for all \((\varkappa,x)\notin \mathcal Z_\delta\) satisfying \eqref{D4} and uniformly on \( \partial \mathcal D_4 \) that
    \[
    H_4(\lambda) = \left( I + \O\left(\frac1{\delta^2 x \mathcal K_\alpha}\right) \right) N(\lambda),
    \]
    where \( \O(\cdot) \) is independent of \( \varkappa,x\), and \( \delta \);
    \item it holds for all \((\varkappa,x)\notin \mathcal Z_\delta\) satisfying \eqref{D4} and uniformly on \( \mathcal D_4 \) that
    \[
    H_4(\lambda) = \O\left( \frac{x^{\frac16}}{\delta\mathcal K_\alpha^{**}} \right),
    \]
    where \( \O(\cdot) \) is independent of \( \varkappa,x\), and \( \delta \).
\end{enumerate}
\end{RHP}

Again, RHP~\ref{RHP H4}(1) is straightforward; RHP~\ref{RHP H4}(2) follows from \eqref{Airy jumps} and \eqref{g and h bar a}; RHP~\ref{RHP H4}(3) is a consequence of \eqref{airy asymp}, analogues of Lemmas~\ref{lem:koebe} and~\ref{lem:koebe2} for \( \zeta_4(\lambda) \); RHP~\ref{RHP H4}(4) is a consequence of \eqref{airy asymp}, an analogue of Lemma~\ref{lem:koebe2}, and \eqref{est E4}.

\subsection{Asymptotic Analysis}

Let \( \mathcal D:=\mathcal D_1\cup\mathcal D_2\cup\mathcal D_3\cup \mathcal D_4\) and put \( H(\lambda):= H_i(\lambda) \), \( \lambda \in \mathcal D_i \setminus \tilde \Gamma \), \( i\in\{1,2,3,4\} \). We look for the solution of RHP~\ref{RHP tilde X} in the following form:
\begin{align}
    \tilde X(\lambda) =: R(\lambda) \begin{cases}[r]
    H(\lambda), &\lambda \in \mathcal D\setminus\tilde\Gamma, \\
    N(\lambda), & \lambda \in \C \setminus (\tilde \Gamma \cup \mathcal D).
    \end{cases}
    \label{defn of R}
\end{align}
Then, the error function $R(\lambda)$ must solve the following Riemann-Hilbert problem.
\begin{RHP}
\label{RHP R}
Find a $2 \times 2$ matrix function $R(\lambda)$ such that
\begin{enumerate}
    \item $R(\lambda)$ is analytic in $\C \setminus \Gamma_R$, where $\Gamma_R:=\tilde \Gamma \cup \partial \mathcal D$, see Figure \ref{R problem pic};
    \item one-sided traces $R_\pm(\lambda)$ exist a.e. on \( \Gamma_R \), belong to \( L^2(\Gamma_R) \), and satisfy
    \begin{align}
    \begin{aligned}
    R_+(\lambda) &= R_-(\lambda) G_R(\lambda), \quad \lambda \in \Gamma_{R},
    \end{aligned}
    \end{align}
    where
    \begin{align}
    G_R(\lambda) = \begin{cases}[r] 
        N(\lambda) H^{-1}(\lambda), & \lambda \in \partial \mathcal D\setminus (\ii\R\cup\R), \smallskip \\ 
        H_-(\lambda) H_+^{-1}(\lambda), & \lambda \in \partial \mathcal D\cap (\ii\R\cup\R), \smallskip \\
        H_-(\lambda) G_{\tilde X}(\lambda) H_+^{-1}(\lambda), & \lambda \in \tilde \Gamma \cap \mathcal D, \smallskip \\
        N_-(\lambda)G_{\tilde X}(\lambda) N_+^{-1}(\lambda), & \lambda \in \ii\R \setminus \partial \mathcal D;
    \end{cases}
    \label{R jump}
    \end{align}
    \item it holds that \( R(\lambda) =  I + \O(1/\lambda)  \)  as \( \lambda \to \infty \).
\end{enumerate}
\end{RHP}

\begin{figure}[ht!]
    \centering
    \includegraphics[width=.9\linewidth]{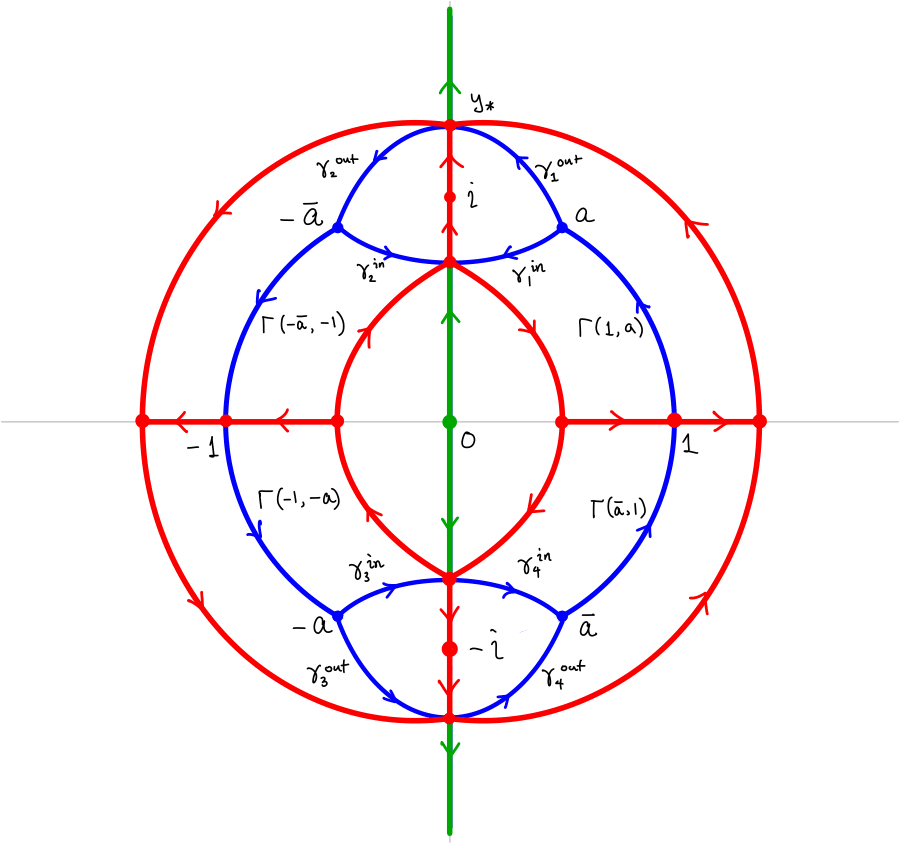}
    \caption{The contour $\Gamma_{R}$ for RHP \ref{R problem pic}.}
    \label{R problem pic}
\end{figure}

As in all the previous cases, we now discuss the size of \( G_R(\lambda) - I\) to show that RHP~\ref{RHP R} is a small norm problem. In what follows, we assume without mentioning that \( (\varkappa,x)\notin \mathcal Z_\delta\) for some \( \delta\in(0,\tfrac12)\) and that \eqref{D4} holds. All \( \O(\cdot) \) terms appearing below are independent of \( \varkappa,x \), and \( \delta \). 

It readily follows from RHP~\ref{RHP H1}(3),  RHP~\ref{RHP H2}(3),  RHP~\ref{RHP H3}(3), and RHP~\ref{RHP H4}(3) that 
 \begin{equation}
\label{G_R est 1}
G_R(\lambda) = I + \O\left(\frac1{\delta^2 x\mathcal K_\alpha}\right), \quad \lambda \in \partial \mathcal D
\end{equation}
(this is an estimate on the red parts of the contour on Figure~\ref{R problem pic}). 

Next, let \( \ii y_*\) be the common endpoint of the arcs \( \gamma_1^{out}\) and \( \gamma_2^{out}\). Then, according to Lemma~\ref{lem:Re_g} and the second claim of Lemma~\ref{lem:koebe2} together with its analogues in \( \mathcal Q_2, \mathcal Q_3 \), and \( \mathcal Q_4 \), we have on \(  \ii[ y_*,\infty) \) that
\begin{align}
G_R(\lambda) &= I  + e^{2xg(\lambda)} N(\lambda) \begin{pmatrix} 0 & \ii s^\R \\ 0 & 0 \end{pmatrix} N^{-1}(\lambda) \\
& = I + \O\left( |s^\R| \delta^{-2} (\mathcal K_\alpha^*)^{-2} e^{-x \left( \varkappa + \frac1{2} \sinh^2(\frac12\ln|\lambda|) \right)} \right).
\label{G_R est 2a}
\end{align}
In fact, one can easily see from Lemma~\ref{lem:Re_g} and the form of \( G_{\tilde X}(\lambda) \) on the imaginary axis that the above estimate remains valid on \( \ii \R \setminus \partial \mathcal D\). In particular, we have that
\begin{equation}
\label{G_R est 2}
\|G_R-I\|_{L^2(\ii\R\setminus \partial \mathcal D)\cap L^\infty(\ii\R\setminus \partial \mathcal D)} = \O\left(|s^\R| \delta^{-2}(\mathcal K_\alpha^*)^{-2} e^{-\varkappa x}\right)
\end{equation}
(this is an estimate on the green parts of the contour on Figure~\ref{R problem pic}). 

Next, it follows from \eqref{cut1}, RHP~\ref{RHP H1}(2,4) and RHP~\ref{RHP H4}(2,4) that
\begin{align}
G_R(\lambda) &= I  + H_-(\lambda) \begin{pmatrix} \O(|s^\R|_+e^{-\varkappa x}) & -e^{-x(\varkappa+\Pi(\lambda))} \\ e^{-x(\varkappa-\Pi(\lambda))}\O(|s^\R|_+e^{-\varkappa x}) & \O(|s^\R|_+e^{-\varkappa x}) \end{pmatrix} H_-^{-1}(\lambda) \\
& = I + \O\left( |s^\R|_+ \delta^{-2}(\mathcal K_\alpha^{**})^{-2} x^{\frac13} e^{-\varkappa x} \right),
\label{G_R est 3}
\end{align}
on \( \Gamma(\overline a,a) \), where we used the fact that \( -\varkappa \leq \Pi(\lambda)\leq \varkappa \) on the considered arc by \eqref{Pi}. An estimate on \( \Gamma(-\overline a,-a)\) is absolutely analogous (we also need to use unimodularity of \( e^{2\pi\ii xV}\)). On the other hand, on \( \gamma_1^{out}\) we get from the definition of \( S_{L_1}(\lambda) \) in \eqref{defn of SL1 and SU1} and RHP~\ref{RHP H1}(2,4) that
\begin{align}
G_R(\lambda) &= I  + e^{x(\varkappa+2g(\lambda))} H_{1-}(\lambda) \begin{pmatrix} 0 & 1-\overline A \\ 0 & 0 \end{pmatrix} H_{1-}^{-1}(\lambda) \\
& = I + \O\left(|s^\R|_+\delta^{-2}(\mathcal K_\alpha^{**})^{-2} x^{\frac13} e^{-\varkappa x}\right),
\label{G_R est 4}
\end{align}
where we used \eqref{A is almost 1} and the fact that \( 2\Re(g(\lambda))\leq -\varkappa \) on the considered set by Lemma~\ref{lem:Re_g}. Again, one can straightforwardly verify that the same estimate holds on all \( \gamma_i^{out} \) and \( \gamma_i^{in}\), \( i\in\{1,2,3,4\} \) (this finishes an estimate on the blue parts of the lens on Figure~\ref{R problem pic}). Combining \eqref{G_R est 1}, \eqref{G_R est 2}, \eqref{G_R est 3}, and \eqref{G_R est 4} gives 
\begin{equation}
\label{G_R est}
\|G_R-I\|_{L^2(\Gamma_R)\cap L^\infty(\Gamma_R)} = \O\left(\frac{1}{\delta^2 x\mathcal K_\alpha}\right),
\end{equation}
where we also used Lemma~\ref{lem:K alpha} of Appendix~\ref{app:elliptic}. Lemma~\ref{lem:K alpha} also shows that
\[
\delta^2 x\mathcal K_\alpha \geq c_0 \delta^2 D_\delta = c_0 D_4
\]
for some absolute constant \( c_0 \). Hence, as in the first part of this paper, we get from \eqref{G_R est} and the small norm theorem that there exists a unique solution $R(\lambda)$ of the RHP~\ref{RHP R} if we choose \( D_4 \) large enough. There is a technical difficulty here in that \( \Gamma_R \) is not a single contour but a family of contours that depend on \( (\varkappa,x) \). Therefore, to see that a single constant \( D_4 \) suffices, we need uniform boundedness of the norms of the corresponding Cauchy operators. This fact is explained in Appendix~\ref{s:co}. Then,
\[
R(\lambda) = I + \frac{1}{2 \pi \ii} \int_{\Gamma_R} \frac{\rho(\lambda')(G_R(\lambda') - I)}{\lambda' - \lambda} d\lambda', \quad \lambda \notin \Gamma_R,
\]
where $\ds \rho(\lambda') = R_-(\lambda^\prime)$ for $\lambda' \in \Gamma_R$, see \eqref{gl Y singular int eq}. Exactly as in \eqref{gl rho_Y - I has a small norm}, we get in this case that
\[
\|\rho - I\|_{L^2(\Gamma_R)} = \O\left(\frac{1}{\delta^2 x\mathcal K_\alpha}\right). 
\]
Observe that \( G_R(\lambda')-I \) vanishes exponentially at the origin by \eqref{G_R est 2a} and Lemma~\ref{lem:Re_g}. Hence, we get exactly as in the case of \eqref{representation Y(0)} that \( R(0)\) is well defined and is expressed as 
\[
R(0) = I + \frac{1}{2\pi \ii} \int_{\Gamma_{R}} \frac{G_{R}(\lambda') - I}{\lambda'} d\lambda' + \frac{1}{2\pi \ii} \int_{\Gamma_{R}} \frac{(\rho(\lambda') - I)(G_R (\lambda') - I)}{\lambda'} d\lambda'.
\]
Using the exponential vanishing of \( G_R(\lambda')-I \) at the origin once more, we see that estimate \eqref{G_R est 2} remains valid for \( (G_R(\lambda')-I)/\lambda' \). Moreover, since \( \partial \mathcal D\) is separated away from the origin independently of \( \varkappa \) and \( x \), see Lemma~\ref{lem:koebe}, estimates \eqref{G_R est 1}, \eqref{G_R est 3}, and \eqref{G_R est 4} remain true for \( (G_R(\lambda')-I)/\lambda' \) as well. Thus, we get from the Cauchy-Schwarz inequality that
\begin{equation}
\label{error rate}
R(0) = I + o(1), \quad o(1) = \O\left(\frac{1}{\delta^2 x\mathcal K_\alpha}\right).
\end{equation}

To finish the proof of Theorem~\ref{thm:4}, we observe from RHP~\ref{RHP tilde X}(3) and \eqref{defn of R} that
\[
P_0\sigma_1\sigma_3 = e^{-x\ell\sigma_3} R(0) N(0)e^{x\ell\sigma_3}.
\]
Recall that \( \ell \) is purely imaginary, see Lemma~\ref{lem:V_ell_connection}, and therefore \( e^{x\ell} \) is unimodular. We get from the definition of \( P_0 \) in RHP~\ref{rhp original}(3), \eqref{defn of M}, and \eqref{defn of M-hat(infty)} that
\[
\begin{pmatrix}
\sinh\tfrac u2 & * \\ \cosh \tfrac u2 & *    
\end{pmatrix} = \left(I + o(1)\right)
\begin{pmatrix}
\Theta_{1,0}^{-1}(\infty) & 0 \\ 0 & \Theta_{0,1}^{-1}(\infty)    
\end{pmatrix}
\begin{pmatrix}
\kk^\prime \, \Theta_{1,0}(0) & * \\ e^{2x\ell} \kk \, \Theta_{0,0}(0) & *    
\end{pmatrix},
\]
where, until the end of this subsection, the error rate \( o(1) \) is as in \eqref{error rate}. Hence, we get that
\[
e^{\pm\frac u2} = (1 + o(1)) e^{2x\ell} \kk \frac{\Theta_{0,0}(0)}{\Theta_{0,1}(\infty)} \pm (1 + o(1)) \kk^\prime \frac{\Theta_{1,0}(0)}{\Theta_{1,0}(\infty)},
\]
which, with the help of Lemma~\ref{lem:V_ell_connection}, \eqref{thetas at infinity}, and \eqref{Theta of zero}, can be rewritten as
\[
e^{\pm\frac u2} = (1+o(1)) \kk \frac{\vartheta_3(0) \vartheta_3(xV)}{\vartheta_2(0) \vartheta_2(xV)} \mp (1 + o(1)) \kk^\prime \frac{\vartheta_3(0) \vartheta_1(xV)}{\vartheta_0(0) \vartheta_2(xV)}.
\]
Writing \( e^{u} \) as the ratio of \( e^{\frac u2} \) and \( e^{-\frac u2} \) gives
\[
e^u = \frac{1 + o(1) - (1 + o(1)) \frac{\kk^\prime}{\kk} \frac{\vartheta_2(0) \vartheta_1(xV)}{\vartheta_0(0) \vartheta_3(xV)} }{1 + o(1) + (1 + o(1)) \frac{\kk^\prime}{\kk}  \frac{\vartheta_2(0) \vartheta_1(xV)}{\vartheta_0(0) \vartheta_3(xV)} }.  
\]
Since \( \kk,\kk^\prime \) are elliptic moduli corresponding to the nome \( q=e^{\pi\ii \tau}=e^{-\pi \KK^\prime/\KK} \), see \eqref{defn of tau}, we get from \cite[Equations~(22.2.2) and (22.2.7)]{NIST:DLMF} that
\[
\kk = \frac{\vartheta_2^2(0)}{\vartheta_3^2(0)} \quad \text{and} \quad \sd(2xV\KK,\kk) = \frac{\vartheta_3^2(0)}{\vartheta_0(0)\vartheta_2(0)} \frac{\vartheta_1(xV)}{\vartheta_3(xV)}.
\]
Thus, it holds that
\[
\frac{\kk^\prime}{\kk} \frac{\vartheta_2(0) \vartheta_1(xV)}{\vartheta_0(0) \vartheta_3(xV)} = \kk^\prime \sd(2xV\KK,\kk).
\]
Since \( \kk^\prime|\sd(\cdot,\kk)|\leq 1 \) on the real line, it follows from \eqref{Legendre's relation} that 
\[
e^u = \frac{1 - \kk^\prime\sd(2xV\KK,\kk) + o(1)}{1 + \kk^\prime\sd(2xV\KK,\kk) + o(1)} = \frac{1 - \kk^\prime\sd(x-2\kappa \KK,\kk) + o(1)}{1 + \kk^\prime\sd(x-2\kappa \KK,\kk) + o(1)}.
\]
Asymptotic formula \eqref{asymptotic formula 4} is now a consequence of \eqref{symmetries of u} and \eqref{error rate}.


\section{Proof of Theorem~\ref{thm:5}}

The proof of Theorem~\ref{thm:5} closely follows the proof of Theorem~\hyperref[IMY main thm]{IMY} given in \cite{IMY}. However, as compared to \cite{IMY}, we use a slightly different jump factorization, which leads to a simpler construction of the global parametrix. On the other hand, since the monodromy parameter \( B^\R \) is now a function of \( x \), we need a more precise analysis of the local parametrices. In that, we follow the line of arguments developed in \cite{Bothner17}. 

Similarly to all the previous sections, we start with a gauge transformation of RHP~\ref{rhp original}. Again, we take \( \rho=4/x \) and now set
\begin{align}
\label{defn W}
    W(\lambda) := \hat{\Psi}\left( \frac{4\lambda}{x} \right) \begin{Bmatrix}
        I, & |\lambda| > 1 \\
        \sigma_1 \sigma_3, & |\lambda| < 1
    \end{Bmatrix} e^{-x \varphi(\lambda)\sigma_3}, \quad \varphi(\lambda) := \frac{\ii}4\left(\lambda + \frac{1}{\lambda} \right).
\end{align}
As we have pointed out before, if we take \( \varkappa=0 \) in \eqref{defn of g}, then \( \varphi(\lambda) = g(\lambda)+\frac\ii2 \).

\subsection{Deformed Riemann-Hilbert Problem}

As expected, the matrix function \( W(\lambda) \) defined above solves a Riemann-Hilbert problem that is extremely similar to RHP~\ref{RHP X}:
\begin{RHP}
\label{RHP W}
Find a $2 \times 2$ matrix function $W(\lambda)$ such that
\begin{enumerate}
\item $W(\lambda)$ is analytic in $\C \setminus \Gamma$, where, as before, \( \Gamma = \ii \R \cup S^1 \), see Figure~\ref{W problem pic};
    \item one-sided traces $W_\pm(\lambda)$ exist a.e. on \( \Gamma \), belong to \( L^2(\Gamma) \), and satisfy
    \begin{align*}
        W_+(\lambda) = W_-(\lambda) G_{W}(\lambda), \quad \lambda \in \Gamma,
    \end{align*}
    where the jump matrices $G_{W}(\lambda)$ on \( \Gamma \) are as on Figure~\ref{W problem pic};
    \item it holds that
    \begin{align}
    \label{W asymptotics lambda -> 0}
        W(\lambda) = \begin{cases}
            I + \O(1/\lambda)  &  \text{as} \quad \lambda\to\infty, \smallskip \\
            P_0 \, \sigma_1 \sigma_3 (I + \O(\lambda)) & \text{as} \quad \lambda\to0.
        \end{cases}
    \end{align}
\end{enumerate}
\end{RHP}

\begin{figure}[ht!]
    \centering
    \includegraphics[width=8cm]{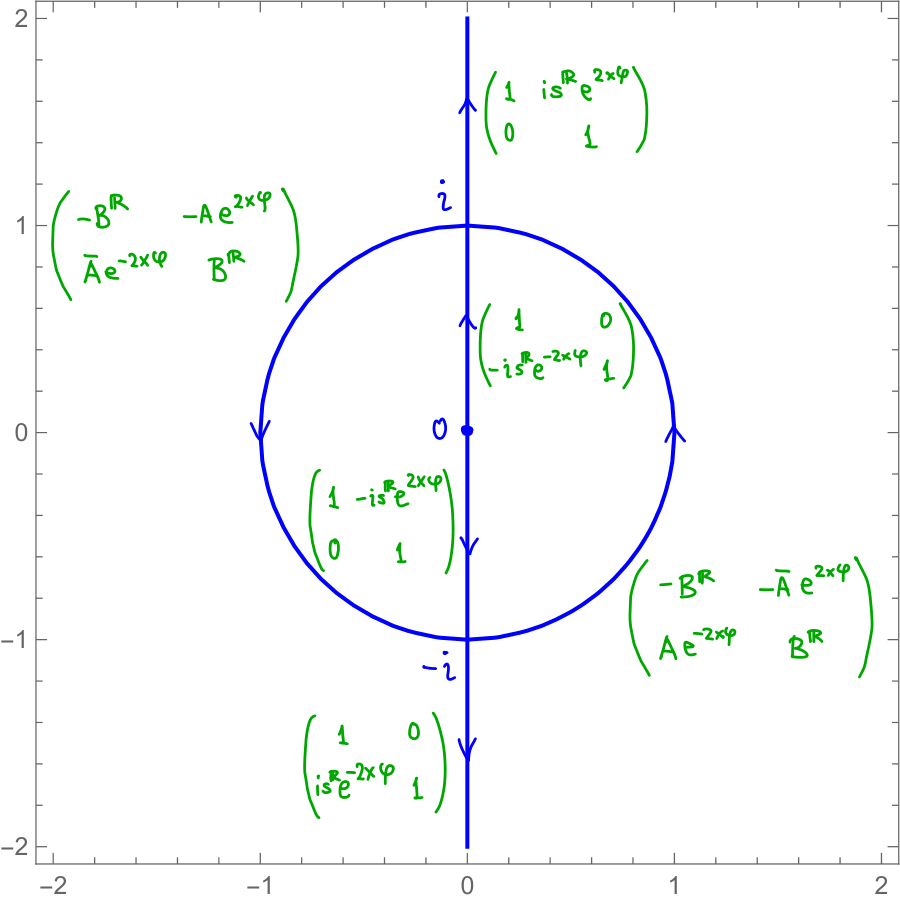}
    \caption{The jump matrices $G_{W}(\lambda)$ for RHP~\ref{RHP W}.}
    \label{W problem pic}
\end{figure}

We now proceed by deforming RHP~\ref{RHP W} using certain factorizations of the jump matrices comprising \( G_W(\lambda) \) that are quite similar to \eqref{defn of SL1 and SU1}--\eqref{defn of SL4 and SU4}. However, since \( \varphi(\lambda) \) is a rational function and therefore, unlike \( g(\lambda) \), has no branch cut, the functions \( \Pi(\lambda)\) and \( \Omega(\lambda) \) from \eqref{PiOmega} are now simply \( 0 \) and \( 2\ii\varphi(\lambda)\), respectively. This necessitates an appearance of a diagonal matrix \( e^{\varkappa x\sigma_3}\) in the formulae below, where one needs to recall that \( B^\R = e^{-\varkappa x}\). More precisely, we have that
\begin{align}
G_W(\lambda) & = \begin{pmatrix}
        -B^\R & -\overline{A} e^{2x\varphi(\lambda)}\\
        A e^{-2x\varphi(\lambda)} & B^\R
    \end{pmatrix} \\
    & = \begin{pmatrix}
        1 & -\overline{A} e^{x(\varkappa+2\varphi(\lambda))} \\
        0 & 1
    \end{pmatrix} \begin{pmatrix}
        1 & 0 \\
        Ae^{-x(\varkappa+2\varphi(\lambda))} & 1
    \end{pmatrix} e^{\varkappa x\sigma_3} =: S_{L_1}^*(\lambda)S_{R_1}^*(\lambda)e^{\varkappa x\sigma_3}
\label{SL1 star}    
\end{align}
for \( \lambda\in S^1\cap \mathcal Q_1\), where, as before, \( \mathcal Q_i \) is the \( i \)-th quadrant. In a similar way we get that
\begin{align}
G_W(\lambda) & = \begin{pmatrix}
        -B^\R & -A e^{2x\varphi(\lambda)}\\
        \overline A e^{-2x\varphi(\lambda)} & B^\R
    \end{pmatrix} \\
& = \begin{pmatrix}
        1 & -A e^{x(\varkappa+2\varphi(\lambda))} \\
        0 & 1
    \end{pmatrix} \begin{pmatrix}
        1 & 0 \\
        \overline Ae^{-x(\varkappa+2\varphi(\lambda))} & 1
    \end{pmatrix} e^{\varkappa x\sigma_3} =: S_{L_2}^*(\lambda)S_{R_2}^*(\lambda)e^{\varkappa x\sigma_3}
\end{align}
for \( \lambda\in S^1\cap \mathcal Q_2 \). As in \eqref{defn of SL3 and SU3}--\eqref{defn of SL4 and SU4}, we change the order of the triangular matrices in the lower half-plane. This way we obtain that
\begin{align}
G_W(\lambda) &= \begin{pmatrix}
        1 & 0 \\
        -\overline Ae^{x(\varkappa-2\varphi(\lambda))} & 1
    \end{pmatrix} \begin{pmatrix}
        1 & A e^{-x(\varkappa-2\varphi(\lambda))} \\
        0 & 1
    \end{pmatrix}  \big(-e^{-\varkappa x\sigma_3}\big) \\
    & =: S_{L_3}^*(\lambda)S_{R_3}^*(\lambda)\big(-e^{-\varkappa x\sigma_3}\big)
\end{align}
for \( \lambda\in S^1\cap \mathcal Q_3 \) and
\begin{align}
G_W(\lambda) &= \begin{pmatrix}
        1 & 0 \\
        -A e^{x(\varkappa-2\varphi(\lambda))} & 1
    \end{pmatrix} \begin{pmatrix}
        1 & \overline A e^{-x(\varkappa-2\varphi(\lambda))} \\
        0 & 1
    \end{pmatrix}  \big(-e^{-\varkappa x\sigma_3}\big) \\
    & =: S_{L_4}^*(\lambda)S_{R_4}^*(\lambda)\big(-e^{-\varkappa x\sigma_3}\big)
\end{align}
for \( \lambda\in S^1\cap \mathcal Q_4 \). Moreover, similarly to \eqref{SL factorization} and \eqref{SU factorization}, it follows from \eqref{mon data} that
\begin{align}
S_{L_1}^*(\lambda)^{-1} \begin{pmatrix} 1 & \ii s^\R e^{2x\varphi(\lambda)} \\ 0 & 1 \end{pmatrix} S_{L_2}^*(\lambda) = I
\end{align}
for \( \lambda\in\ii\R \) with \( \Im(\lambda)>1\) and
\begin{align}
S_{L_3}^*(\lambda)^{-1} \begin{pmatrix} 1 & 0 \\ \ii s^\R e^{-2x\varphi(\lambda)} & 1 \end{pmatrix} S_{L_4}^*(\lambda) = I
\end{align}
for \( \lambda\in\ii\R \) with \( \Im(\lambda)<-1\), as well as
\begin{align}
S_{R_1}^*(\lambda)e^{\varkappa x\sigma_3} \begin{pmatrix} 1 & 0 \\ -\ii s^\R e^{-2x\varphi(\lambda)} & 1 \end{pmatrix}e^{-\varkappa x\sigma_3} S_{R_2}^*(\lambda)^{-1} = I
\end{align}
for \( \lambda\in\ii\R \) with \( \Im(\lambda)\in(0,1)\) and
\begin{align}
S_{R_3}^*(\lambda)e^{-\varkappa x\sigma_3} \begin{pmatrix} 1 & -\ii s^\R e^{2x\varphi(\lambda)} \\ 0 & 1 \end{pmatrix}e^{\varkappa x\sigma_3} S_{R_4}^*(\lambda)^{-1} = I
\end{align}
for \( \lambda\in\ii\R \) with \( \Im(\lambda)\in(-1,0)\).

\begin{figure}[ht!]
\centering
\includegraphics[width=6cm]{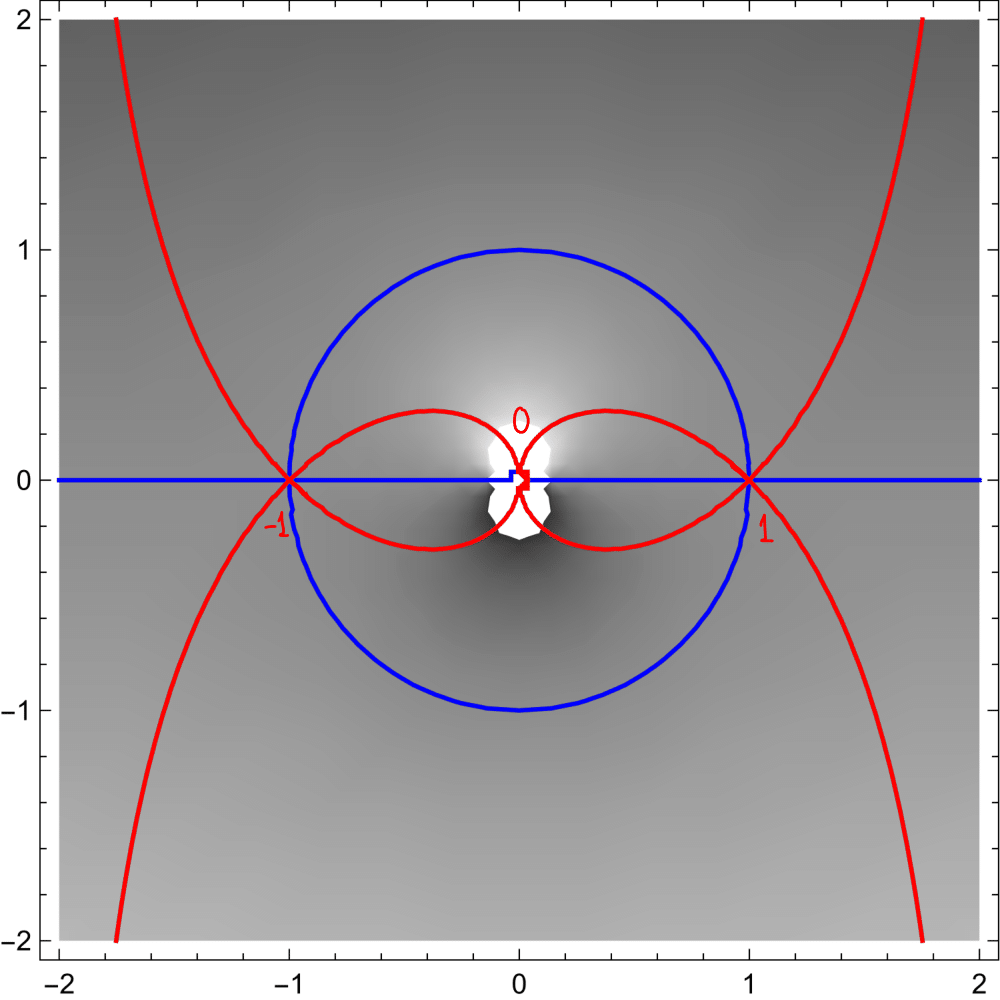}
\includegraphics[width=1cm]{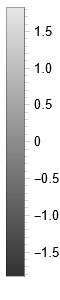}
\caption{The zero level curve $\Re(2\varphi(\lambda)) = 0$ (blue), the sign of $\Re(2\varphi(\lambda))$ (gray), and the stationary contour $\Im(2\varphi) = \pm 1$ (red).}
\label{Re 2 phi}
\end{figure}

The deformation of RHP~\ref{RHP W} is done in the regions determined by the stationary curves of \( \varphi(\lambda) \). More precisely, Let $\lambda = \xi + \ii \eta$, where $\xi, \, \eta \in \R$. Then,
\begin{align}
\label{real part varphi}
    \Re (2 \varphi (\lambda)) = \frac{\eta \left( 1 - \xi^2 - \eta^2 \right)}{2 \left( \xi^2 + \eta^2 \right)} \quad \text{and} \quad \Im (2 \varphi (\lambda)) = \frac{\xi \left( 1 + \xi^2 + \eta^2 \right)}{2 \left( \xi^2 + \eta^2 \right)}.
\end{align}
Clearly, \( \varphi(\lambda) \) is purely imaginary on the unit circle and the stationary points of $2\varphi(\lambda)$ are $\lambda = \pm 1$ with $2 \varphi ( \pm 1 ) = \pm \ii$. The stationary contour $\{\Im (2 \varphi(\lambda)) = \pm 1\}$ is shown on Figure~\ref{Re 2 phi}. We set
\[
\Gamma^* := \{\Im (2 \varphi(\lambda)) = \pm 1\} \cup [-1,1],
\]
whose orientation is shown on Figure~\ref{W tilde problem pic}. 

\begin{figure}[ht!]
\centering
\includegraphics[width=8cm]{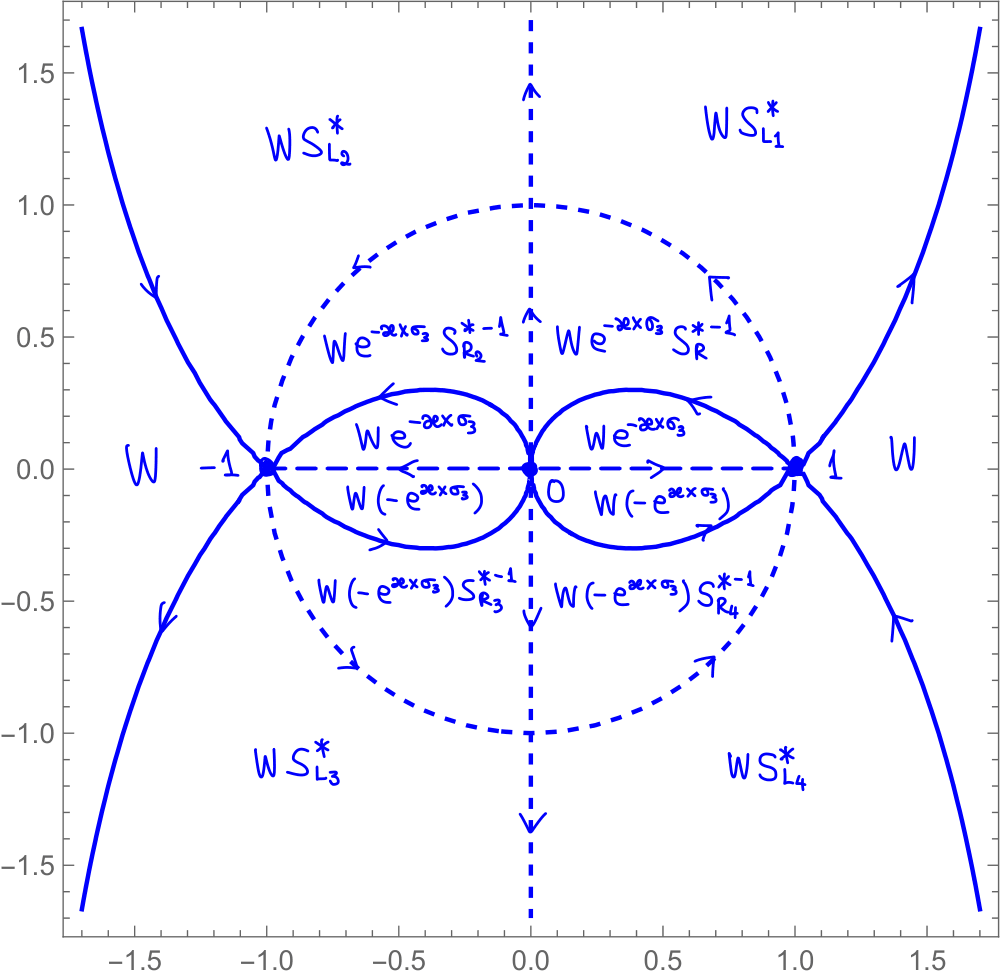}
\caption{Definition of the matrix $W^*(\lambda)$. The solid contour is the stationary contour of \( 2\varphi(\lambda)\), see Figure~\ref{Re 2 phi}.}
\label{defn of W tilde}
\end{figure}

Let the matrix function \( W^*(\lambda)\) be defined by Figure~\ref{defn of W tilde}. Then, \( W^*(\lambda) \) is the solution of the following Riemann-Hilbert problem.
\begin{RHP} 
\label{W tilde RH problem}
Find a $2 \times 2$ matrix function $W^*(\lambda)$ such that
\begin{enumerate}
    \item $W^*(\lambda)$ is analytic in $\C \setminus \Gamma^*$;
    \item one-sided traces $W_\pm^*(\lambda)$ exist a.e. on \( \Gamma^* \), belong to \( L^2(\Gamma^*) \), and satisfy
    \begin{align*}
        W^*_+(\lambda) = W^*_-(\lambda) G_{W^*}(\lambda), \quad \lambda \in \Gamma^*,
    \end{align*}
    where the jump matrices $G_{W^*}(\lambda)$ are as on Figure~\ref{W tilde problem pic};
    \item it holds that
    \begin{align}
    \label{W tilde near zero}
            W^*(\lambda) = \begin{cases}
            I + \O(1/\lambda)  &  \text{as} \quad \lambda\to\infty, \smallskip \\
            \pm P_0 \, \sigma_1\sigma_3 (I + \O(\lambda))e^{\mp\varkappa x\sigma_3} & \text{as} \quad \lambda\to0, \;\; \pm\Im(\lambda)>0.
        \end{cases}
    \end{align}
\end{enumerate}
\end{RHP}

\begin{figure}[ht!]
\centering
\includegraphics[width=8cm]{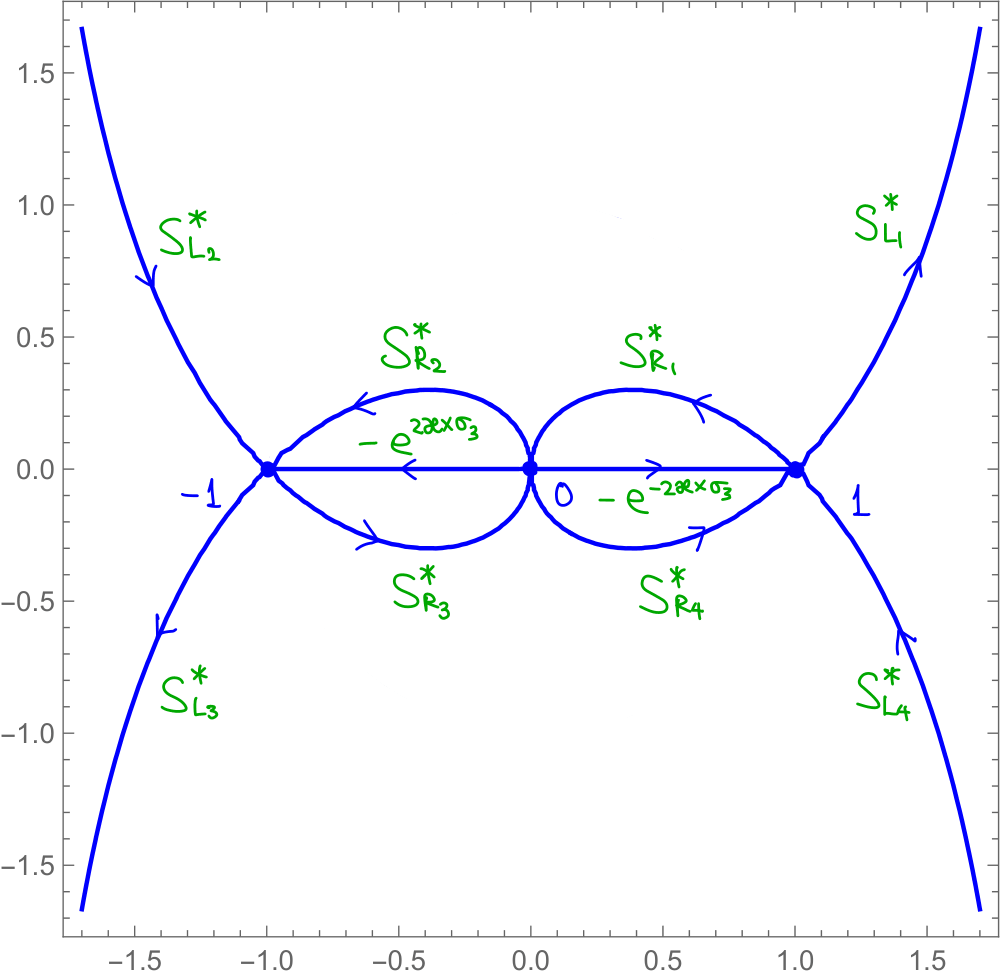}
\caption{ The contour $\Gamma^*$ and the jump matrices $G_{W^*}(\lambda)$ for RHP~\ref{W tilde RH problem}.}
\label{W tilde problem pic}
\end{figure}

RHP~\ref{W tilde RH problem}(2) readily follows from the factorizations obtained above and RHP~\ref{RHP W}(2). The limit at \( 0 \) in RHP~\ref{W tilde RH problem}(3) is simply the result of the definition of \( W^*(\lambda) \) around the origin and of RHP~\ref{RHP W}(3). The limit at infinity in RHP~\ref{W tilde RH problem}(3) is the consequence of the fact that \( S_{L_i}(\lambda)\) uniformly converges to the identity matrix as \( \lambda\to\infty \) within the domain where \( W^*(\lambda) := W(\lambda)S_{L_i}(\lambda)\) because \( \Re(\varphi(\lambda))=-\tfrac12\eta+o(1) \) as \( \lambda=\xi+\ii\eta\to\infty\).

Lastly, we point out that one can directly verify that
\begin{align} 
\label{Wtilde jump symmetry}
G_{W^*}(\lambda) = \sigma_1 G_{W^*}(-\lambda) \sigma_1, \quad \lambda\in\Gamma^*
\end{align}
(the reason we oriented \( (-1,0)\) from right to left is so that \eqref{Wtilde jump symmetry} holds everywhere on \( \Gamma^*\)).


\subsection{Global Parametrix}

It is not hard to convince oneself using \eqref{real part varphi} that the jump matrix \( G_{W^*}(\lambda)\) is approaching identity as \( x\to\infty \) everywhere on \( \Gamma^*\setminus[-1,1]\). Therefore, the leading order behavior of \( W^*(\lambda)\) is determined by a matrix function that is analytic in \( \overline \C\setminus[-1,1]\), is equal to \( I \) at infinity and whose boundary values on \( (-1,1) \) are related by the jump matrix \( - e^{-2\varkappa x\sigma_3} \) when \( (-1,1) \) has a single orientation from left to right. Clearly, the matrix function
\begin{align} 
\label{defn of Pgl}
\left( \frac{\lambda - 1}{\lambda + 1} \right)^{\nu \sigma_3}, \quad \nu = \frac{1}{2} + \ii\frac{\varkappa x}{\pi},
\end{align}
possesses all these properties, where we take the principal branch of the power function (i.e., this matrix function is positive on the real line way from \( [-1,1] \)). Observe also that
\begin{align} 
\label{Pgl asymp 0}
\lim_{\lambda\to0,\pm\Im(\lambda)>0} \left( \frac{\lambda - 1}{\lambda + 1} \right)^{\nu \sigma_3} = e^{\pm\pi\ii\nu \sigma_3}.
\end{align}

\subsection{Local Parametrices}
\label{ss:6.3}

The jump matrix \(G_{W^*}(\lambda) \) is uniformly close to the identity on \( \Gamma^*\setminus[-1,1] \) only away from \( \pm1 \). Therefore, we separately solve RHP~\ref{W tilde RH problem} locally around \( \pm1 \).  To this end, we denote by \( U_{\pm1} \) disks centered at \( \pm1 \) and, hopefully without causing any ambiguity, we denote their common radius by \( \varrho \) (we also used \( \varrho \) as a common radius of disks \( U_{\pm\ii}\) in Section~\ref{ss:3.1}). In fact, we set
\begin{align}
\label{radius rho}
\varrho = \tfrac12 (\varkappa x)_+^{-\frac12}, \quad (\varkappa x)_+=\max\{1,\varkappa x\}
\end{align}
(the constant \(\frac12\) is there simply to ensure the bound \(\varrho \leq \frac12\) for all choices \( (\varkappa,x)\)). By now, the construction of a local parametrix using parabolic cylinder functions is rather standard. A detailed explanation of this construction in our setting can be found in \cite{IMY}. Below, we only provide the most salient points.

\subsubsection{Model Local Parametrix}

Let \( Pc(z)\) be a \( 2\times2 \) matrix function that solves the following Riemann-Hilbert problem:
\begin{RHP}
\label{parabolic cylinder rhp}
Find a $2 \times 2$ matrix function $Pc(z)$ such that
\begin{enumerate}
    \item $Pc(z)$ is analytic in $\C \setminus \Gamma_{Pc}$, where $\Gamma_{Pc}$ is a contour depicted in Figure~\ref{pic for the K problem};
    \item $Pc(z)$ has continuous traces on $\Gamma_{Pc}\setminus\{0\}$ that satisfy
    \begin{align*}
        Pc_+(z) = Pc_-(z) G_{Pc}(z), \quad z \in \Gamma_{Pc}\setminus\{0\},
    \end{align*}
    where the jump matrices comprising $G_{Pc}(z)$ are as on Figure \ref{pic for the K problem}, and
    \[
    s_1 := A e^{-\ii x+\varkappa x}, \quad  s_2 := -\overline{A} e^{\ii x+\varkappa x};
    \]
    \item it holds that \( Pc(z) = (I + \O(1/z)) z^{-\nu \sigma_3} \) as \( z\to\infty \).
\end{enumerate}
\end{RHP}
\begin{figure}[htbp]
    \centering
    \includegraphics[width=7cm]{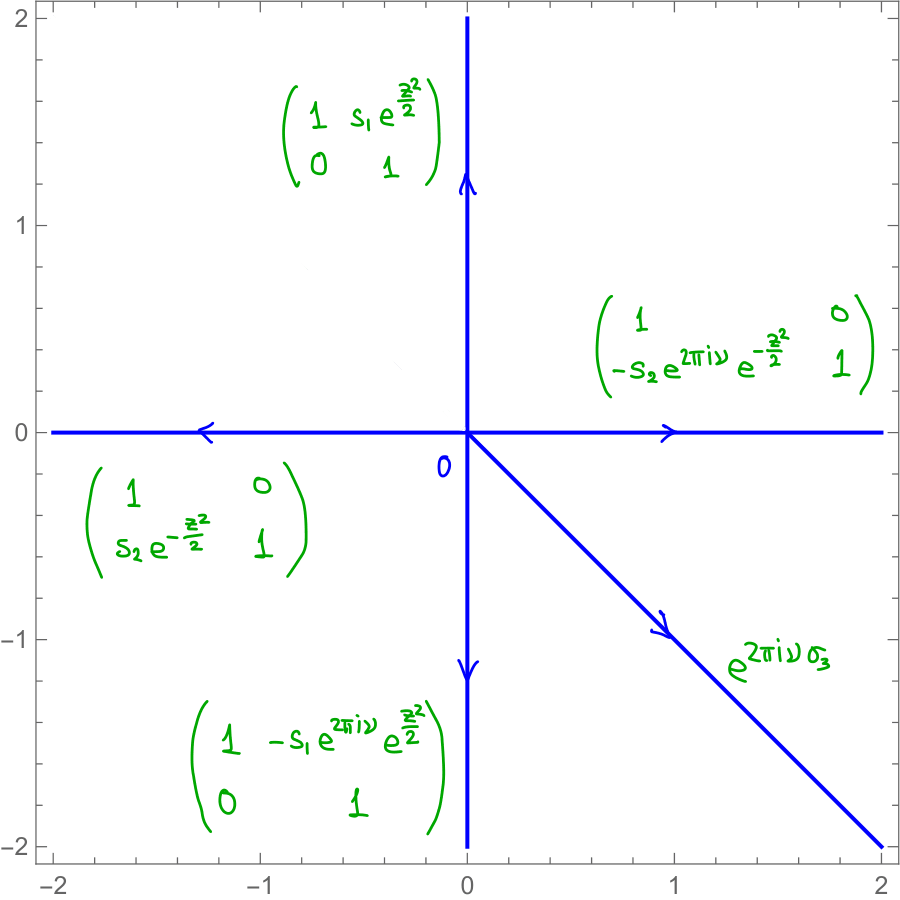}
    \caption{Contour $\Gamma_{Pc}$, which consists of the coordinate axes and the ray \( \arg(z) = -\pi /4 \), and the jump matrices $G_{Pc}(z)$ for RHP~\ref{parabolic cylinder rhp}.}
    \label{pic for the K problem}
\end{figure}

RHP~\ref{parabolic cylinder rhp} can be explicitly solved with the help of parabolic cylinder functions, see for example \cite[Section~4.2.3]{IMY} (and recall that \( B^\R=e^{-\varkappa x}\)). Its explicit form is also given further below in Appendix~\ref{app:PCF}.  There, we explain that the asymptotics of \( Pc(z) \) in RHP~\ref{parabolic cylinder rhp}(3) can be improved to the following formula:
\begin{equation}
\label{matrix K}
\begin{pmatrix}
\displaystyle \sum_{s=0}^{n-1} \frac{a_s^{(-\nu)}}{(-z^2)^s} + \O_n\left(\frac{a_n^{(-\nu)}}{z^{2n}}\right) &  \displaystyle -\frac\alpha z\sum_{s=0}^{n-1} \frac{a_s^{(\nu-1)}}{z^{2s}} + \alpha\O_n\left(\frac{a_n^{(\nu-1)}}{z^{2n+1}}\right) \medskip \\
 \displaystyle \frac \beta z\sum_{s=0}^{n-1}  \frac{a_s^{(-\nu-1)}}{(-z^2)^s} + \beta \O_n\left(\frac{a_n^{(-\nu-1)}}{z^{2n+1}}\right) & \displaystyle \sum_{s=0}^{n-1} \frac{a_s^{(\nu)}}{z^{2s}} + \O_n\left(\frac{a_n^{(\nu)}}{z^{2n}}\right)
\end{pmatrix}
z^{-\nu\sigma_3}
\end{equation}
for \( |z|\geq \max\{5,|\nu+1|\}\) and any non-negative integer \( n \) (in fact, we can and will make different choices for different entries of this matrix), where \( \O_n(\cdot) \) terms are independent of \( \nu \) and \( z \),
\begin{align} 
\label{defn of beta}
a_s^{(\nu)} = \frac{(-1)^s(-\nu)_{2s}}{2^ss!}, \quad \alpha = -e^{-2\pi \ii\nu}\frac{\ii}{s_2}\frac{\sqrt{2\pi}}{\Gamma(\nu)}, \quad \text{and} \quad \alpha\beta = - \nu.
\end{align}
Moreover, upon recalling \eqref{mon data} and the definition of \( \nu \) in \eqref{defn of Pgl} as well as using the identity \( \pi |\Gamma \left(\frac12+ \ii \frac{\varkappa x}{\pi} \right)|^{-2} = \cosh(\varkappa x)\), we obtain that
\begin{align}
\alpha & = e^{\frac{3}{2} \varkappa x} e^{-\ii(x + \arg(p\Gamma(\nu)) + \frac\pi 2)}.
\label{definition of alpha}
\end{align}

Recall the definition of \( \varphi(\lambda) \) in \eqref{defn W}. To set up the correspondence between \( \lambda \) and \( z \) planes, we introduce the following conformal mapping on $U_1$:
\begin{align}
\label{Z <--> lambda}
    z_1(\lambda) := e^{\frac{\pi \ii}{2}} \sqrt{4 \varphi(\lambda) - 2\ii} = e^{\frac{3\pi \ii}{4}}\frac{\lambda - 1}{\sqrt{\lambda}},
\end{align}
where we fix the branches of the square roots so that \(z_1(\lambda) = e^{\frac{3\pi \ii}{4}} (\lambda - 1)\left( 1 + \O \left(\lambda - 1 \right) \right)\) as \(\lambda \to 1\).
Recall that the contour \( \Gamma_{W^*} \cap U_1 \) consists of $(1-\varrho,1]$ and the curves on which the imaginary part of \( 2\varphi(\lambda) - \ii \) vanishes. Hence, one can easily see that \( x^{1/2} z_1(\lambda )\) maps it into \( \Gamma_{Pc} \) for any \( x>0 \) with the segment $(1-\varrho,1]$ mapped into the ray \( \arg(z) = -\frac\pi4 \).

\subsubsection{Local Parametrix around \( 1 \)}

We now define a local parametrix \( P^{(1)}(\lambda) \) by setting
\begin{align} \label{defn of P1}
P^{(1)}(\lambda) := f(\lambda; x)^{\sigma_3} \sigma_1 Pc\left(x^{\frac12} z_1(\lambda)\right) \sigma_1, \quad \lambda \in U_1,
\end{align}
where \( f(\lambda; x)^{\sigma_3} \) is a holomorphic prefactor given by
\begin{align} 
\label{defn of f}
    f(\lambda; x) := \left( \frac{\lambda - 1}{\lambda + 1} \right)^{\nu} \left( x^{\frac12} z_1(\lambda) \right)^{-\nu} = x^{-\frac12\nu} e^{-\frac{3\pi \ii}{4}\nu} \frac{\lambda^{\frac12\nu}}{(\lambda+1)^\nu}
\end{align}
and all the branches above are principal. Then, \( P^{(1)}(\lambda) \) is a solution of the following Riemann-Hilbert problem.
\begin{RHP} 
\label{RHP P1}
Find a $2 \times 2$ matrix function $P^{(1)}(\lambda)$ such that
\begin{enumerate}
    \item $P^{(1)}(\lambda)$ is analytic in $U_1 \setminus \Gamma^*$;
    \item one-sided traces $P^{(1)}_\pm(\lambda)$ are continuous on \( U_1\cap\Gamma^* \) and satisfy
    \begin{align*}
        P^{(1)}_+(\lambda) = P^{(1)}_+(\lambda) G_{W^*}(\lambda);
    \end{align*}
    \item it holds for \( \lambda\in \partial U_1\) that
\[
P^{(1)}(\lambda) = T^{(1)}(\lambda)\left( I + \frac1x \begin{pmatrix} O\big( \varrho^{-6}\big) &  \O \big( \varrho^{-3} \big) \medskip \\  \O\big( \varrho^{-7} \big)  & \O\big( \varrho^{-6} \big) \end{pmatrix} \right)\left( \frac{\lambda-1}{\lambda+1} \right)^{\nu\sigma_3},
\]
where the constants in \( \O(\cdot) \) terms are absolute, and
\[
T^{(1)}(\lambda) = \begin{pmatrix}
        1 & 0 \\
        \displaystyle \frac{f^*(\lambda;x)}{\lambda-1} & 1
    \end{pmatrix}, \quad f^*(\lambda;x) = -\frac{\alpha x^{-\frac12}}{f^2(\lambda;x)} \frac{\lambda - 1}{z_1(\lambda)}.
\]
\end{enumerate}
\end{RHP}

RHP~\ref{RHP P1}(1) is obvious and RHP~\ref{RHP P1}(2) is straightforward to check. Thus, only RHP~\ref{RHP P1}(3) requires an explanation. First of all, we get from \eqref{D5} and \eqref{radius rho} that
\[
x^{\frac12} \varrho \geq \tfrac12 D_5^{\frac16} x^{\frac13}.
\]
Since \( |z_1(\lambda)|\sim \varrho \) on \( \partial U_1 \) and we are interested in \( Pc(x^{\frac12} z_1(\lambda)) \), the error formula \eqref{matrix K} applies on \( \partial U_1 \). It is easy to see from \eqref{matrix K} that
\[
f^{\sigma_3}\sigma_1 Pc(z;\alpha,\beta) = \sigma_1 Pc\big(z;\alpha f^{-2},\beta f^2\big) f^{-\sigma_3},
\]
where we write \( Pc(z)=Pc(z;\alpha,\beta) \) to emphasize the dependence on the parameters \( \alpha,\beta\). Clearly, it also holds that
\[
f(\lambda;x)^{-\sigma_3} \big(x^{\frac12}z_1(\lambda)\big)^{-\nu\sigma_3} \sigma_1 = \sigma_1\left( \frac{\lambda-1}{\lambda+1} \right)^{\nu\sigma_3}.
\]
Given \eqref{radius rho}, one can readily check that
\[
\max\left\{\big|a_1^{(\nu)}\big|,\big|a_1^{(\nu-1)}\big|,\big|a_1^{(-\nu)}\big| \right\} \leq (\varkappa x)_+^2 = (2\varrho)^{-4}.
\]
Then, by using \eqref{matrix K} with \( n=0 \) in (2,1) entry and \( n = 1 \) in all other entries, we get that
\[
P^{(1)}(\lambda) = \left( T^{(1)}(\lambda) + \begin{pmatrix} \O\big(x^{-1}\varrho^{-6}\big) &  \O \big( \beta f^2 x^{-\frac12}\varrho^{-1} \big) \medskip \\  \O\big(\alpha f^{-2}x^{-\frac32}\varrho^{-7} \big)  & \O\big( x^{-1}\varrho^{-6} \big) \end{pmatrix} \right)\left( \frac{\lambda-1}{\lambda+1} \right)^{\nu\sigma_3}
\]
for \( \lambda\in \partial U_1 \), where the constants in \( \O(\cdot) \) terms are absolute. Next, we get from the definition of \( \nu \) in \eqref{defn of Pgl} and from \eqref{defn of f} that
\[
|f(x,\lambda)|^2 = x^{-\frac12}e^{\frac32\varkappa x-\frac1\pi \varkappa x(\arg(\lambda)-2\arg(1+\lambda))} \left| \frac{\sqrt\lambda} {\lambda+1} \right|.
\]
When \( \lambda\in\partial U_1\), i.e., \( \lambda = 1 + \varrho e^{\ii\theta} \), we have that
\begin{align}
\arg(\lambda)-2\arg(1+\lambda) & = \arctan\left( \frac{\varrho \sin\theta}{1 + \varrho \cos\theta}\right) - 2\arctan\left( \frac{\varrho \sin\theta}{2 + \varrho \cos\theta}\right) \\ & = -\frac{\varrho^2\sin\theta\cos\theta}{(1+\varrho\cos\theta)(2+\varrho\cos\theta)} + \O(\varrho^3) = \O(\varrho^2),
\end{align}
where the constant in \( \O(\cdot) \) term is absolute. Furthermore, we get from \eqref{defn of beta} and \eqref{definition of alpha} that \( |\alpha| = e^{\frac32\varkappa x}\) and \( |\beta| = |\nu|e^{-\frac32\varkappa x} \). Hence,
\[
P^{(1)}(\lambda) = \left( T^{(1)}(\lambda) + \frac1x \begin{pmatrix} O\big(\varrho^{-6}\big) &  \O \big(\varrho^{-3} \big) \medskip \\  \O\big(\varrho^{-7} \big)  & \O\big(\varrho^{-6} \big) \end{pmatrix} \right)\left( \frac{\lambda-1}{\lambda+1} \right)^{\nu\sigma_3}
\]
for \( \lambda\in \partial U_1 \), where we used \eqref{radius rho}. We also have that \( |f^*(\lambda;x)| \) is bounded above by an absolute constant on \( \partial U_1 \). Hence, the \( (2,1) \) entry of \( T^{(1)}(\lambda) \) is of size \( \varrho^{-1} \). This observation and the above asymptotic formula immediately yield RHP~\ref{RHP P1}(3).

\subsubsection{Local Parametrix around \( -1 \)}

It readily follows from \eqref{Wtilde jump symmetry} that a desired local parametrix in \( U_{-1} \) can be defined as
\begin{align} 
\label{defn of P-1}
P^{(-1)}(\lambda) := \sigma_1 P^{(1)}(-\lambda) \sigma_1, \quad \lambda \in U_{-1}.
\end{align}
Then, \( P^{(-1)}(\lambda) \) is a solution of the following Riemann-Hilbert problem.
\begin{RHP} 
\label{RHP P-1}
Find a $2 \times 2$ matrix function $P^{(-1)}(\lambda)$ such that
\begin{enumerate}
    \item $P^{(-1)}(\lambda)$ is analytic in $\in U_{-1} \setminus \Gamma^*$;
    \item one-sided traces $P^{(-1)}_\pm(\lambda)$ are continuous on \( U_{-1}\cap\Gamma^* \) and satisfy
    \begin{align*}
        P^{(-1)}_+(\lambda) = P^{(-1)}_+(\lambda) G_{W^*}(\lambda);
    \end{align*}
    \item it holds for \( \lambda\in \partial U_{-1} \) that
\[
P^{(-1)}(\lambda) = T^{(-1)}(\lambda)\left( I + \frac1x \begin{pmatrix} O\big( \varrho^{-6}\big) &  \O \big( \varrho^{-7} \big) \medskip \\  \O\big( \varrho^{-3} \big)  & \O\big( \varrho^{-6} \big) \end{pmatrix} \right)\left( \frac{\lambda-1}{\lambda+1} \right)^{\nu\sigma_3},
\]
where the constants in \( \O(\cdot) \) terms are absolute, and \( T^{(-1)}(\lambda) = \sigma_1 T^{(1)}(-\lambda)\sigma_1 \).
\end{enumerate}
\end{RHP}

\subsection{Small Norm Problem}

We look for a solution of RHP~\ref{W tilde RH problem} in the form
\begin{equation}
\label{defn of Tstar}
    W^* (\lambda) = V^*(\lambda)\begin{dcases}
        T^{(\pm1)}(\lambda)^{-1}P^{(\pm1)}(\lambda), & \lambda \in U_{\pm 1}, \\
        \left( \frac{\lambda - 1}{\lambda + 1} \right)^{\nu \sigma_3}, & \text{otherwise}.
    \end{dcases}
\end{equation}
Then, \( V^*(\lambda)\) must solve the following meromorphic Riemann-Hilbert problem.

\begin{RHP}
\label{V star rhp}
Find a $2 \times 2$ matrix function $V^*(\lambda)$ such that
\begin{enumerate}
    \item $V^*(\lambda)$ is analytic in $\C \setminus (\Gamma_V \cup \{\pm 1\})$, where \( \Gamma_V\) is depicted on Figure~\ref{pic for V problem};
    \item one-sided traces $V^*_\pm(\lambda)$ exist a.e. on \( \Gamma_V \), belong to \( L^2(\Gamma_V) \), and satisfy
    \begin{align*}
        V^*_+(\lambda) = V^*_-(\lambda) G_{V^*}(\lambda), \quad \lambda \in \Gamma_V,
    \end{align*}
    where
    \begin{align}
        G_{V^*}(\lambda) = \begin{dcases}
              \left( \frac{\lambda - 1}{\lambda + 1} \right)^{\nu \sigma_3} G_{W^*}(\lambda) \left( \frac{\lambda - 1}{\lambda + 1} \right)^{-\nu \sigma_3}, & \lambda\in \Gamma_V \setminus(\partial U_1\cup\partial U_{-1}), \\
             T^{(\pm1)}(\lambda)^{-1}P^{(\pm1)}(\lambda)\left( \frac{\lambda - 1}{\lambda + 1} \right)^{-\nu \sigma_3}, & \lambda \in \partial U_{\pm1};
        \end{dcases}
    \end{align}
    \item it holds that
    \begin{align*}
        V^*(\lambda) = \begin{cases}
            I + \O(1/\lambda) & \text{as} \quad \lambda\to\infty,\\
            P_0 \, \sigma_1 \sigma_3 \, \ii^{-\sigma_3} (I + \O(\lambda)) & \text{as} \quad \lambda \to 0;
        \end{cases}
    \end{align*}
        \item  $V^*(\lambda) = (V^{*(1)}(\lambda), V^{*(2)}(\lambda))$ has simple poles at $\lambda = \pm 1$ with residues
    \begin{align}
        \underset{\lambda = 1}{\res} \, V^{*(1)}(\lambda) &= -\alpha^* \, V^{*(2)}(1) \label{residue relation 1},\\
        \underset{\lambda = -1}{\res} \, V^{*(2)}(\lambda) &= \alpha^* \, V^{*(1)}(-1), \label{residue relation 2}
    \end{align}
    where \( \alpha^* := -f^*(1;x) \).
\end{enumerate}
\end{RHP}

\begin{figure}[htbp]
    \centering
    \includegraphics[width=6cm]{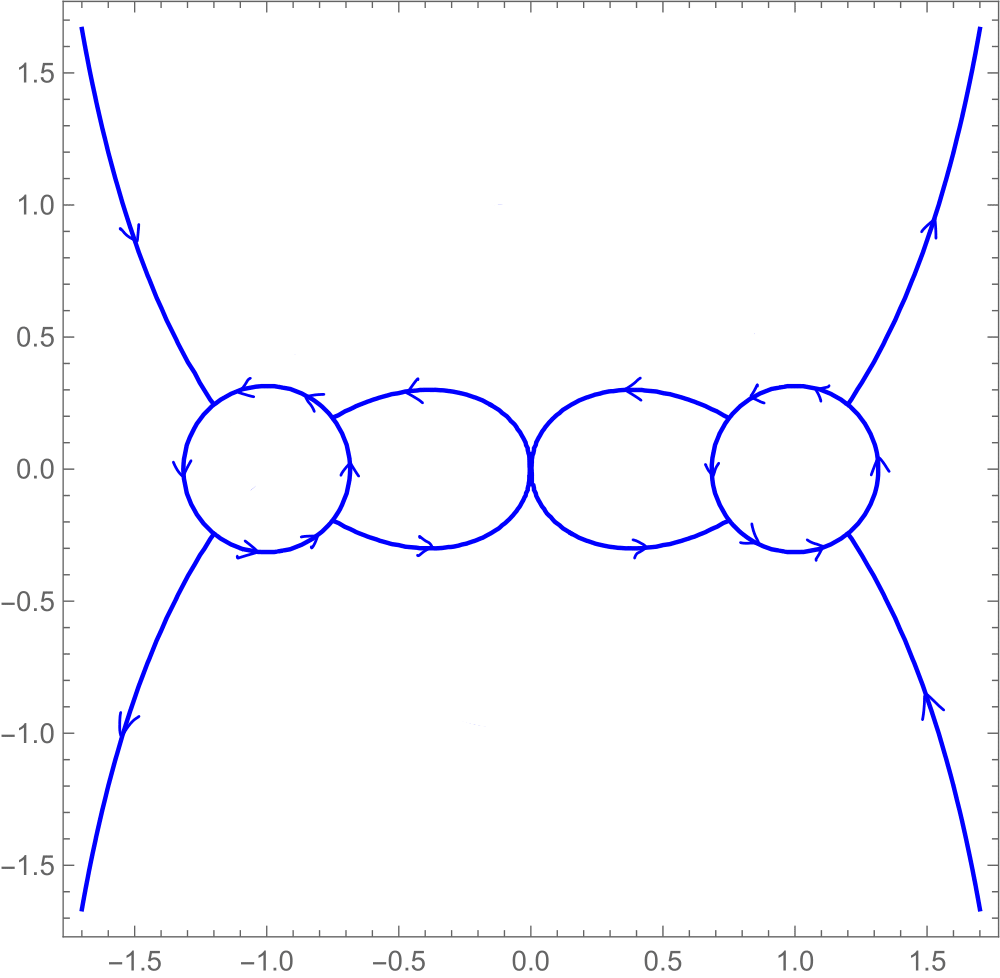}
    \caption{The contour $\Gamma_V$ for RHP~\ref{V star rhp}.}
    \label{pic for V problem}
\end{figure}

Following dressing technique of \cite{BothnerIts}, we look for a solution of RHP~\ref{V star rhp} in the form
\begin{align}
    V^*(\lambda) = (\lambda I + B) V(\lambda) \begin{pmatrix}
        \lambda - 1 & 0 \\
        0 & \lambda + 1
    \end{pmatrix}^{-1},
    \label{defn of matrix V}
\end{align}
where the matrix $B$ will be specified shortly and $V(\lambda)$ solves the following (small norm) Riemann-Hilbert problem.
\begin{RHP}
\label{RHP V}
Find a $2 \times 2$ matrix function $V(\lambda)$ such that
\begin{enumerate}
    \item $V(\lambda)$ is analytic in $\C \setminus \Gamma_V$;
    \item one-sided traces $V_\pm(\lambda)$ exist a.e. on \( \Gamma_V \), belong to \( L^2(\Gamma_V) \), and satisfy
    \begin{align*}
        V_+(\lambda) = V_-(\lambda) G_V(\lambda), \quad \lambda \in \Gamma_L,
    \end{align*}
    where
    \begin{align}
        G_V(\lambda) = \begin{pmatrix}
        \lambda - 1 & 0 \\
        0 & \lambda + 1
    \end{pmatrix}^{-1} G_{V^*}(\lambda)
    \begin{pmatrix}
        \lambda - 1 & 0 \\
        0 & \lambda + 1
    \end{pmatrix};
    \end{align}
    \item it holds that \( V(\lambda) = I + \O(1/\lambda) \) as \( \lambda\to\infty \).
\end{enumerate}
\end{RHP}

Indeed, let us show that this is a small norm problem. For \( \lambda\in\partial U_1\), we readily get from RHP~\ref{V star rhp}(2) and RHP~\ref{RHP P1}(3) that
\begin{align}
G_V(\lambda) & = I + \frac1x \begin{pmatrix} O\big( \varrho^{-6}\big) &  \frac{\lambda+1}{\lambda-1}\O \big( \varrho^{-3} \big) \medskip \\  \frac{\lambda-1}{\lambda+1} \O\big( \varrho^{-7} \big)  & \O\big( \varrho^{-6} \big) \end{pmatrix}  \\ 
& = I + \frac1x \begin{pmatrix} O\big( \varrho^{-6}\big) &  \O \big( \varrho^{-4} \big) \medskip \\  \O\big( \varrho^{-6} \big)  & \O\big( \varrho^{-6} \big) \end{pmatrix} = I + \O\left(\big(x\varrho^6\big)^{-1}\right),
\label{GV circles}
\end{align}
where the constants in \( \O(\cdot) \) terms are absolute. Clearly, the same estimate holds on \( \partial U_{-1} \).

Next, let us consider the unbounded subarc of \( \Gamma_V \) that lies in the first quadrant. On it,  \( G_{W^*}(\lambda)=S_{L_1}^*(\lambda) \), see Figures~\ref{W tilde problem pic} and~\ref{pic for V problem}. We shall call it \( \Gamma_{L_1} \). There, we have that
\[
G_V(\lambda) = I -\overline A e^{x(\varkappa + 2\varphi(\lambda))} \left( \frac{\lambda - 1}{\lambda + 1} \right)^{2\nu-1} \begin{pmatrix} 0 & 1 \\ 0 & 0 \end{pmatrix}
\]
by \eqref{SL1 star}. Recall \eqref{real part varphi}. One can easily check that the unbounded arc of the stationary contour to which \( \Gamma_{L_1} \) belongs can be parametrized as
\begin{align}
\label{param Gamma L1}
\lambda=\xi+\ii\eta, \quad \xi = s+1, \quad \eta = s\sqrt{\frac{1+s}{1-s}}, \quad s\in(0,1).
\end{align}
These relations easily yield that
\[
\Im(2\varphi(\lambda)) = 1 \quad \text{and} \quad \Re(2\varphi(\lambda)) = -\frac{s^2}{\sqrt{1-s^2}}  
\]
as well as that
\[
\frac{\lambda - 1}{\lambda + 1} = \frac{s+\ii s\sqrt{1-s^2}}{2-s^2} = \frac{s}{\sqrt{2-s^2}} \exp\left\{ \ii \arctan\big(\sqrt{1-s^2} \big)\right\}.
\]
Therefore,
\[
\left| e^{x(\varkappa + 2\varphi(\lambda))} \left( \frac{\lambda - 1}{\lambda + 1} \right)^{2\nu-1} \right| = \exp\left\{\varkappa x\left(1-\frac2\pi \arctan\big(\sqrt{1-s^2} \big)\right) - \frac{xs^2}{\sqrt{1-s^2}}\right\}.
\]
Recall that \( |\lambda-1|\geq \varrho \) for \( \lambda\in\Gamma_{L_1} \). Hence, it is sufficient for our purposes to consider \( s\in(\frac14 \varrho,1) \). In this case, due to monotonicity in \( s \), we get that
\begin{align}
\left| e^{x(\varkappa + 2\varphi(\lambda))} \left( \frac{\lambda - 1}{\lambda + 1} \right)^{2\nu-1} \right| & \leq \exp\left\{ \frac{\varkappa x}2 - \frac{x\varrho^2}{16}\right\} \leq \exp\left\{\frac1{8\varrho^2} - \frac{x\varrho^2}{16} \right\} \\
& = \exp\left\{- \frac{x\varrho^2}{16} \left( 1 - \frac{2}{x\varrho^4} \right) \right\} \leq \exp\left\{ -\frac{x\varrho^2}{32}\right\},
\end{align}
where the last inequality is valid because
\[
\frac{2}{x\varrho^4} = 2^5 \frac{(\varkappa x)_+^2}x \leq 2^5D_5^{-\frac23}x^{-\frac13} \leq \frac12
\]
by \eqref{D5} and \eqref{radius rho} with the last inequality being true when \( D_5 \) is taken to be sufficiently large. Moreover, since
\begin{align}
\label{param Gamma L1 diff}
|d\lambda| = \frac{\sqrt{2-s^2}}{\sqrt{1+s}}\frac{ds}{(1-s)^{3/2}}
\end{align}
on \( \Gamma_{L_1} \), we similarly have that
\begin{multline}
\int_{\Gamma_{L_1}} \left| e^{x(\varkappa + 2\varphi(\lambda))} \left( \frac{\lambda - 1}{\lambda + 1} \right)^{2\nu-1} \right|^2 |d\lambda| \leq \sqrt 2 e^{\varkappa x}\int_0^1 \exp\left\{- \frac{x\varrho^2}{16\sqrt{1-s}}\right\}\frac{ds}{(1-s)^{3/2}} \\
 = \frac{32\sqrt 2}{x\varrho^2} \exp\left\{ \varkappa x - \frac{x\varrho^2}{16}\right\} \leq \frac{32\sqrt 2}{x\varrho^2} \exp\left\{ -\frac{x\varrho^2}{32}\right\},
\end{multline}
where, again, the last inequality holds if we take \( D_5 \) sufficiently large. As \( |A|^2\leq 2 \) by \eqref{mon data} and \( x\varrho^2>1 \) by \eqref{D5} and \eqref{radius rho}, we get that
\begin{align}
\label{est on Gamma L1}
\|G_V-I\|_{L^2(\Gamma_{L_1})\cap L^\infty(\Gamma_{L_1})} = \O\left(e^{-\frac1{32} x\varrho^2}\right)
\end{align}
where the constant in \( \O(\cdot) \) is absolute and the estimate holds for all \( (\varkappa,x) \) satisfying \eqref{D5} as long as \( D_5 \) is sufficiently large.

Let \( \Gamma_{R_1} \) be subarc of \( \Gamma_V \) on which \( G_{W^*}(\lambda) = S_{R_1}^*(\lambda) \), see Figures~\ref{W tilde problem pic} and~\ref{pic for V problem}. On it, we have that
\[
G_V(\lambda) = I + A e^{-x(\varkappa + 2\varphi(\lambda))} \left( \frac{\lambda - 1}{\lambda + 1} \right)^{1-2\nu} \begin{pmatrix} 0 & 0 \\ 1 & 0 \end{pmatrix}.
\]
Similarly to \( \Gamma_{S_1} \), \( \Gamma_{R_1} \) can be parametrized as
\begin{align}
\label{param Gamma V}
\lambda=\xi+\ii\eta, \quad \xi = s+1, \quad \eta = (-s)\sqrt{\frac{1+s}{1-s}}, \quad s\in(-1,0).
\end{align}
It now holds that
\[
\Im(2\varphi(\lambda)) = 1 \quad \text{and} \quad \Re(2\varphi(\lambda)) = \frac{s^2}{\sqrt{1-s^2}}  
\]
as well as that
\[
\frac{\lambda - 1}{\lambda + 1} = \frac{s-\ii s\sqrt{1-s^2}}{2-s^2} = \frac{|s|}{\sqrt{2-s^2}} \exp\left\{ \ii \left(\pi-\arctan\big(\sqrt{1-s^2} \big)\right)\right\}.
\]
Therefore, we again get that
\begin{multline}
\label{est GV}
\left| e^{-x(\varkappa + 2\varphi(\lambda))} \left( \frac{\lambda - 1}{\lambda + 1} \right)^{1-2\nu} \right| = \\ \exp\left\{\varkappa x\left(1-\frac2\pi \arctan\big(\sqrt{1-s^2} \big)\right) - \frac{xs^2}{\sqrt{1-s^2}}\right\}  \leq \exp\left\{ -\frac{x\varrho^2}{32}\right\}
\end{multline}
(we do not need a separate \( L^2\) estimate as \( \Gamma_{R_1} \) is of finite length). Since the estimates in the remaining three quadrants are essentially identical, we can now conclude that
\begin{align}
\label{GV-I is small}
\|G_V-I\|_{L^2(\Gamma_V)\cap L^\infty(\Gamma_V)} = \O\left(\big(x\varrho^6\big)^{-1}\right) = \O\left(D_5^{-1}\right)
\end{align}
by \eqref{D5} and \eqref{radius rho}, where the constant in \( \O(\cdot) \) is absolute. That is, RHP~\ref{RHP V} is indeed a small norm problem and respectively is uniquely solvable for \( D_5 \) sufficiently large. Here, one needs to again observe that \( \Gamma_V \) is in fact dependent on \( (\varkappa x)_+\). Hence, to see that a single constant \( D_5 \) is sufficient (and respectively the constant in \( O(\cdot) \) term in \eqref{sing rho - I has a small norm} is absolute) we use uniform boundedness of the norms of the corresponding Cauchy operators. For the explanation of this fact see again Appendix~\ref{s:co}.

\subsection{Linear Prefactor}

The goal of this subsection is to find matrix \( B \) so that \eqref{defn of matrix V} solves RHP~\ref{V star rhp}. This matrix is algebraically determined by residue conditions \eqref{residue relation 1} and \eqref{residue relation 2}. It readily follows from \eqref{defn of matrix V} that
\begin{align*}
    \left(V^{*(1)}(\lambda), V^{*(2)}(\lambda)\right) = (\lambda I + B) \left( \frac{1}{\lambda - 1}V^{(1)}(\lambda), \frac{1}{\lambda + 1} V^{(2)}(\lambda)\right),
\end{align*}
where we denote by \( V^{(1)}(\lambda) \) and \( V^{(2)}(\lambda) \) the first and second columns of \( V(\lambda) \), respectively. Recall respective definitions of \( G_{V^*}(\lambda) \) and \( G_V(\lambda) \) in RHP~\ref{V star rhp}(2) and RHP~\ref{RHP V}(2). Symmetry relations \eqref{Wtilde jump symmetry} and the connection formula \eqref{defn of P-1} yield that
\[
G_V(\lambda) = \sigma_1 G_V(-\lambda) \sigma_1.
\]
As a solution of a small norm Riemann-Hilbert problem, \( V(\lambda) \) is unique and therefore \( V(\lambda) = \sigma_1 V(-\lambda) \sigma_1 \). Write \( V(\lambda) = I + E(\lambda) \). Then,
\begin{equation}
    \label{symmerries of E}
    E(\lambda) = \sigma_1 E(-\lambda) \sigma_1.
\end{equation}
Write \( E(\lambda) = (E^{(1)}(\lambda),E^{(2)}(\lambda))\). Then, we get from \eqref{residue relation 1} and \eqref{residue relation 2} that
\begin{align}
\label{solve for B}
\begin{dcases}
    (I + B) \left( \mathbf{e}_1 + E^{(1)}(1) \right) = - \tfrac12 \alpha^* (I + B) \left( \mathbf{e}_2 + E^{(2)}(1) \right), \\
    (-I + B) \left( \mathbf{e}_2 + E^{(2)}(-1) \right) = -\tfrac12\alpha^* (-I + B) \left( \mathbf{e}_1 + E^{(1)}(-1) \right),
\end{dcases}
\end{align}
where \(\mathbf{e}_1\) and \( \mathbf{e}_2 \) are the standard coordinate vectors in \( \R^2 \). Thus, symmetry \eqref{symmerries of E} at \( \lambda=1\) now implies that \( \)
\[
b_{11} = - b_{22} \quad \text{and} \quad b_{12} = - b_{21},  \quad B = \begin{pmatrix}
        b_{11} & b_{12}\\
        b_{21} & b_{22}
    \end{pmatrix}.    
\]
It follows from \eqref{definition of alpha}, \eqref{Z <--> lambda}, and \eqref{defn of f} that
\begin{align}
\label{alpha star}
\alpha^* = -f^*(1;x) = -2\ii e^{-\ii(x - \kappa \ln x + \phi)} \quad \Rightarrow \quad |\alpha^*|=2,
\end{align}
where $\phi = -\kappa \ln 4 + \arg (p\,\Gamma(\nu))$ and one needs to recall that \( \kappa\pi=\varkappa x \). Hence, we can further deduce that
\begin{align}
\label{bees}
    b_{11} = - b_{22} = - \frac{4 + (\alpha^*)^2 + o(1)}{4 - (\alpha^*)^2 + o(1)}, \quad b_{12} = -b_{21} = \frac{ 4\alpha^* + o(1)}{4 - (\alpha^*)^2 + o(1)}
\end{align}
provided the denominator above is non-zero, where \( o(1) \) is as big as the size of \( E(1) \), which is estimated in the next subsection. 

\subsection{Asymptotic Analysis}

By the small norm theorem, the unique solution $V(\lambda)$ of RHP~\ref{RHP V} is given by
\begin{align}
    V(\lambda) &= I + \frac{1}{2 \pi \ii} \int_{\Gamma_V} \frac{\rho(\lambda') (G_V (\lambda') - I)}{\lambda' - \lambda} d\lambda' \\
    & =  I + \frac{1}{2 \pi \ii} \int_{\Gamma_V} \frac{G_V (\lambda') - I}{\lambda' - \lambda} d\lambda' + \frac{1}{2 \pi \ii} \int_{\Gamma_V} \frac{(\rho(\lambda') -I ) (G_V (\lambda') - I)}{\lambda' - \lambda} d\lambda'
     \label{sing singular integral equation}
\end{align}
for \( \lambda \notin \Gamma_L \), where $\displaystyle{\rho(\lambda') := V_-(\lambda^\prime)}$, $\lambda' \in \Gamma_V$, which itself is a solution of the corresponding integral equation. Exactly as in \eqref{gl rho_Y - I has a small norm}, the small norm theorem and \eqref{GV-I is small} imply that
\begin{align}
\label{sing rho - I has a small norm}
    \|\rho - I\|_{L^2(\Gamma_L)} = \O\left(\big(x\varrho^6\big)^{-1}\right),
\end{align}
where the constant in \( \O(\cdot) \) is absolute. Let, for brevity, \( J_V(\lambda) = G_V (\lambda) - I\). Now, observe that
\begin{align}
\left| \frac1{2\pi\ii}\int_{\Gamma_V} \frac{J_V (\lambda')}{\lambda' - 1} d\lambda' \right|  \leq & \frac1{2\pi} \int_0^{2\pi}|J_V (1+\varrho e^{\ii t})|dt + \frac\varrho{2\pi} \int_0^{2\pi}|J_V (-1+\varrho e^{\ii t})| dt \\
& + \left( \frac1{2\pi} \int_{\Gamma^*\cap \Gamma_V} \frac{|d\lambda'|}{|\lambda'-1|^2} \right)^{1/2}\left( \frac1{2\pi}\int_{\Gamma^*\cap \Gamma_V} |J_V(\lambda')|^2 |d\lambda'| \right)^{1/2},
\end{align}
where \( \Gamma^*\cap \Gamma_V = \Gamma_V\setminus \partial(U_1\cup U_{-1})\). Using \eqref{param Gamma L1} and \eqref{param Gamma L1 diff} one can easily show that the integral of \( |\lambda-1|^{-2} \) is of order \( \varrho^{-1} \). Hence, we get from \eqref{GV circles} and similarly to \eqref{est on Gamma L1} we can conclude that
\[
\left| \frac1{2\pi\ii}\int_{\Gamma_V} \frac{J_V (\lambda')}{\lambda' - 1} d\lambda' \right| = \O\left(\big(x\varrho^6\big)^{-1}\right) + \O\left(e^{-\frac1{32} x\varrho^2}\right) = \O\left(\big(x\varrho^6\big)^{-1}\right).
\]
It is not hard to see that the second integral in \eqref{sing singular integral equation} can be estimated similarly with the help of Cauchy-Schwarz inequality and of \eqref{sing rho - I has a small norm} to yield
\begin{align}
\label{size of E1}
E(1) = V(1) - I = \O\left(\big(x\varrho^6\big)^{-1}\right).
\end{align}

Similarly, we now need to estimate the size of \( V(0) \).
Using \eqref{param Gamma V} and similar parametrizations of the remaining three parts of \( \Gamma_V \) around the origin, one can verify that \( \Gamma_V \) is tangential to the imaginary axis at the origin and that \( J_V(\lambda) \) vanishes exponentially there. Indeed, we can see from \eqref{param Gamma V} and \eqref{est GV} that
\[
\left| \frac1\lambda e^{-x(\varkappa + 2\varphi(\lambda))} \left( \frac{\lambda - 1}{\lambda + 1} \right)^{1-2\nu} \right| \leq \sqrt{\frac2\xi}\exp\left\{ x^{\frac13} - \frac{x}{2\sqrt\xi}\right\} \leq 2\sqrt 2 \exp\left\{x^{\frac13} -x\right\}
\]
for \( \lambda=\xi+\ii\eta \in \Gamma_{R_1} \) for all \( \xi\in(0,\frac14) \). Since \( \varrho\leq1 \) and \( x\varrho^6\geq D_5\) by \eqref{D5} and \eqref{radius rho}, it holds that
\[
e^{x^{\frac13} -x} \leq e^{-\frac12 x} \leq e^{-\frac12 (x\varrho^6)} \leq \big(x\varrho^6\big)^{-1}.
\]
In particular, we can see that from the above vanishing and Plemelj-Sokhotski formulae that \( V_+(0)=V_-(0) \), which means that \( V(0) \) is well defined. Since the estimates of the integrals in \eqref{sing singular integral equation} away from  the origin can be estimated with the help of Cauchy-Schwarz inequality exactly as was done for \( V(1) \), we deduce that
\begin{align}
\label{size of V0}
V(0) = I + \O\left(\big(x\varrho^6\big)^{-1}\right).
\end{align}

It follows from RHP~\ref{V star rhp}(3) and \eqref{defn of matrix V} that
\begin{align*}
    P_0 &= -B V(0)\ii^{\sigma_3} \sigma_1.
\end{align*}
We now get from the definition of \( P_0 \) in RHP~\ref{rhp original}(3) that
\[
e^u = \frac{\cosh(\frac12 u)+\sinh(\frac12 u)}{\cosh(\frac12 u)-\sinh(\frac12 u)} = \frac{(1+o(1)) b_{12} - (1+o(1)) b_{11}}{(1+o(1)) b_{12} + (1+o(1)) b_{11}},
\]
where $o(1)=\O\left(\big(x\varrho^6\big)^{-1}\right)$ by \eqref{size of V0}. Observe that the denominators in both expressions of \eqref{bees} are the same. Hence, we get from \eqref{alpha star} that
\begin{align}
\label{penultimate formula}
e^u = \frac{4\alpha^* + 4+(\alpha^*)^2+o(1)}{4\alpha^* - 4 - (\alpha^*)^2+o(1)} = \frac{1 - \sin(x-\kappa \ln x + \phi) + o(1)}{1+ \sin(x-\kappa \ln x + \phi) + o(1)},
\end{align}
where \( o(1)=\O\big(\big(x\varrho^6\big)^{-1}\big) \) by \eqref{size of E1}, provided the common denominator in \eqref{bees} is non-zero. Taking logarithms on both sides of \eqref{penultimate formula} finishes the proof of \eqref{asymptotic formula 5} when \( \varepsilon=1 \). Assume now that \( \varepsilon=-1 \). Recall that the transformation \( (A, s^{\R}, B^{\R}) \mapsto (-\iota A, -s^{\R}, \iota B^{\R}) \) corresponds to \( p \mapsto - p \) by \eqref{p parametrization}. Respectively, we get from \eqref{kappa phi} that \( \phi \mapsto \phi+\pi \). Therefore, in this case we get from \eqref{symmetries of u} and \eqref{penultimate formula} that
\begin{align}
u\big(x,(A, s^{\R}, B^{\R})\big) & = -\ln  \left( \frac{1 - \sin(x-\kappa \ln x + \phi + \pi) + o(1)}{1+ \sin(x-\kappa \ln x + \phi +\pi) + o(1)} \right) \\
& = \ln  \left( \frac{1 - \sin(x-\kappa \ln x + \phi) + o(1)}{1+ \sin(x-\kappa \ln x + \phi) + o(1)} \right),
\end{align}
which finishes the proof of Theorem~\ref{thm:5}.


\appendix

\section{On Cauchy Operators}
\label{s:co}

Let \( \Sigma \) be a locally rectifiable oriented contour. Given \( \tau\in\Sigma \) and \( \epsilon>0 \), let \( \Sigma(\tau,\epsilon):=\Sigma\cap\{|s-\tau|<\epsilon\} \). Then, for any \( f\in L^1(\Sigma) \), the limit
\[
(\mathcal S_\Sigma f)(\tau) = \lim_{\epsilon\to0} \int_{\Sigma\setminus\Sigma(\tau,\epsilon)} \frac{f(s)}{\tau-s}ds 
\]
exists for almost every \( \tau\in\Sigma \), see \cite[Theorem~4.14]{BottcherKarlovich}. In the main text we denoted by \( \mathcal C \) the Cauchy operator
\[
(Cf)(\lambda) := \frac1{2\pi\ii} \int_\Sigma \frac{f(s)}{s-\lambda}ds
\]
and by \( \mathcal C_\pm \) the boundary Cauchy operators
\[
(\mathcal C_\pm f)(\tau) := \lim_{\lambda\to\tau\in\Sigma^\pm}(Cf)(\lambda),
\]
where \( \Sigma^- \) and \( \Sigma^+ \) are the left-hand and right-hand sides of \( \Sigma \) according to the chosen orientation. It is well-known (Plemelj-Sokhotski formulae) that
\[
2(\mathcal C_\pm f)(\tau) = \pm f(\tau) + (\mathcal S_\Sigma f)(\tau),
\]
see \cite[Equation~(5.72)]{BottcherKarlovich}. Therefore, \( L^2(\Sigma) \)-boundedness of \( \mathcal C_\pm \) operators is equivalent to \( L^2(\Sigma) \)-boundedness of the singular integral operator \( \mathcal S_\Sigma \).

Contour \( \Sigma \) is called a Carleson contour or Alhfors-David contour if
\[
C_\Sigma := \sup_{\tau\in\Sigma}\sup_{\epsilon>0}\epsilon^{-1}|\Sigma(\tau,\epsilon)|<\infty,
\]
where \( |\Sigma(\tau,\epsilon)| \) is the arclength of \( \Sigma(\tau,\epsilon) \). It is a deep and by now classical result of David that \( \mathcal S_\Sigma \) is a bounded operator from \( L^p(\Sigma) \) for any \( p\in(1,\infty)\) into itself if and only if \( \Sigma \) is a Carleson contour, see \cite[Theorem~4.17]{BottcherKarlovich}. Unfortunately, we could not locate in the literature a quantitative version of this result stating that the norm of \( \mathcal S_\Sigma \) depends only on \( p \) and the constant \( C_\Sigma \). Therefore, we need to provide some explanations.

Assume for simplicity that \( \Sigma=\cup_{n=1}^N \Sigma_n \), where each \( \Sigma_n \) is a smooth closed Jordan arc (in particular, a Carleson arc) and any two arcs \( \Sigma_n \) and \( \Sigma_m \), \( n\neq m \), are either disjoint or have only one point in common which is an endpoint for both arcs. Then, the operator \( \mathcal S_\Sigma \) can be viewed as a matrix operator acting on \( \oplus L^p(\Sigma_n) \) with entries
\[
\mathcal S_{nm} = \chi_n \mathcal S_\Sigma \chi_m : L^p(\Sigma_n) \to L^p(\Sigma_m),
\]
where \( \chi_n \) is the characteristic function of \( \Sigma_n \). Since any two arcs \( \Sigma_n \) and \( \Sigma_m \) can be embedded into an unbounded Carleson arc, say \( \Sigma_{nm} \), the norm of \( \mathcal S_\Sigma \) depends on \( N \) and the norms of \( \mathcal S_{\Sigma_{nm}} \) (\cite[Section~4.5]{BottcherKarlovich} is devoted to a more detailed explanation of this reduction in full generality of arbitrary Carleson contours). It is stated in lecture notes \cite[Remark following Theorem~2.24]{uDeift} that if \( \Sigma \) is an unbounded Carleson arc, then
\[
\|\mathcal S_\Sigma \|_{L^p(\Sigma)\to L^p(\Sigma)} \leq \phi_p(C_\Sigma),
\]
where \( \phi_p(t)\geq0 \) is a continuous, increasing function with \( \phi_p(0)\geq0 \) independent of \( \Sigma \). The primary reference in \cite{uDeift} is \cite{BottcherKarlovich}. As we already mentioned, \cite[Corollary~5.42]{BottcherKarlovich} unfortunately is not stated in this quantitative way (unlike the case of Lipschitz graphs, see \cite[Theorem~5.29]{BottcherKarlovich}) and only careful examination of the proof shows that the above claim is indeed correct. Even though the above discussion was concerned only with operators on functions, it can be naturally extended to matrix functions as well.

In the ucoming considerations, we use the following general principle: if \( \sigma_\alpha(t) \), \( t\in[a,b] \), are parametrizations of a family of bounded arcs depending on \( \alpha \) and there exists a constant \( \delta>0 \) such that
\[
|\arg(\sigma_\alpha^\prime(t_1))-\arg(\sigma_\alpha^\prime(t_0))| \leq \tfrac\pi3 \quad \Rightarrow \quad \Re e^{\ii \arg(\sigma_\alpha^\prime(t_1))- \ii \arg(\sigma_\alpha^\prime(t_0))} \geq \tfrac12
\]
for all \( |t_1-t_0|\leq \delta \), then these arcs have uniformly bounded Carleson constants because
\[
\frac{\int_{t_0}^{t_1} |\sigma_\alpha^\prime(t)|dt}{\big| \int_{t_0}^{t_1} \sigma_\alpha^\prime(t)dt \big|} \leq \frac{\int_{t_0}^{t_1} |\sigma_\alpha^\prime(t)|dt}{\big| \int_{t_0}^{t_1} |\sigma_\alpha^\prime(t)| \Re e^{\ii \arg(\sigma_\alpha^\prime(t))- \ii \arg(\sigma_\alpha^\prime(t_0))} dt \big|} \leq 2,
\]
where, trivially, the top integral on the left represents the arclength between \( \sigma(t_0) \) and \( \sigma(t_1) \) while the integral on the bottom represents the distance between these points (it is also easy to show that for bounded arcs it is enough to consider \( t_0,t_1\) that are sufficiently close to each other to determine finiteness of the Carleson constant). Clearly, the above property holds as long as \( \{\arg(\sigma_\alpha^\prime(t))\} \) is a uniformly equicontinuous family of functions on the parametrizing interval. Furthermore, by Arzel\`a-Ascoli theorem, the latter property holds if \( \arg(\sigma_\alpha^\prime(t)) \) converges uniformly to \( \arg(\sigma_{\alpha_0}^\prime(t)) \) when \( \alpha\to\alpha_0\).

We needed bounds on Cauchy operators on four contours: \( \Gamma \) in the proof of Theorem~\ref{thm:1}, see Figure~\ref{Y problem pic}; \( \Gamma_Z \) in the proof of Theorems~\ref{thm:2} and~\ref{thm:3}, see Figure~\ref{hat Z problem pic}; \( \Gamma_R \) in the proof of Theorem~\ref{thm:4}, see Figure~\ref{R problem pic}; and \( \Gamma_V \) in the proof of Theorem~\ref{thm:5}, see Figure~\ref{pic for V problem}. One can readily compute the Carleson constant \( C_\Sigma \) for a circle, which is \( \pi \) (hence, Carleson constant of any subrac of a circle is at most \( \pi \)). Since \( \Gamma \) and \( \Gamma_Z \) separate into straight line segments and subarcs of circles, they are clearly Carleson contours. \( \Gamma_V \) consists of a stationary contour of \( 2\varphi(\lambda) \), see Figure~\ref{defn of W tilde}, and circles of variable radii around \( \pm1 \) (the part of the stationary contour inside the circles must be removed). It is clear that this stationary contour is a Carleson contour (one can even use the explicit expression for \( \varphi(\lambda)\) to estimate the Carleson constant). Hence, the Carleson constant of \( \Gamma_V \) is bounded above independently of the radii of the circles.

It only remains to discuss \( \Gamma_R \). It consists of the imaginary axis, subarcs of the unit circle, subintervals of the real line, arcs \( \gamma_i^{in},\gamma_i^{out} \), and arcs connecting the endpoints of \( \gamma_i^{in},\gamma_i^{out} \) to the real line, see Figure~\ref{R problem pic}. Clearly, only the latter two groups need to be studied. In fact, we only need to explain why the two arcs of \( \Gamma_R\cap \mathcal Q_1 \cap \{|\lambda|<1\} \) have uniformly bounded Carleson constants as the remaining arcs are obtained by various reflections of these two. Recall Section~\ref{sss:conf map}. By construction, the conformal maps
\[
\zeta(\lambda;\alpha) := h_1(\lambda)^{\frac23}, 
\] 
parametrized by \( \alpha\in(0,\frac\pi2) \), form a continuous family. Thus, we only need to study the limits as \( \alpha\to0 \) and \( \alpha\to\frac\pi2\). We concentrate on the latter as the former can be described similarly. These maps take the arcs connecting \( \gamma_1^{in} \) to the real line into horizontal segments. Hence, we can use
\[
\sigma_\alpha(t) = \zeta^{-1}(t;\alpha)
\]
as functions parametrizing these arcs (here, superscript \( -1 \) denotes the inverse function). Since \( V\to0 \) when \( \alpha\to \frac\pi2\), these horizontal segments converge to \( [0,2^{-\frac23}] \) (clearly, the above argument about finiteness of Carleson constants can be easily modified to include changing but converging parametrizing intervals). Observe that
\[
\arg(\sigma_\alpha^\prime(t)) = -\arg(\zeta^\prime(\sigma_\alpha(t);\alpha)).
\]
Thus, we need to show that the above right-hand sides converge uniformly as \( \alpha \to \frac\pi2\). Recall that the functions \( h_1(\lambda) \) themselves converge uniformly, including their derivatives, to \( -\theta(\lambda) - \frac12  \) by \eqref{g and ha} and \eqref{g theta}. From this, we readily deduce the desired conclusion.

The above argument needs a slight modification for the arcs \(\gamma_1^{in} \) as they collapse into a point. Thus, we first need to perform some affine transformations (which, of course, do not change Carleson constants). The functions \( V^{-1}h_1(\lambda) \) from \eqref{defn of h1} map \( \gamma_1^{in} \) into a fixed segment \( e^{-\frac{\pi\ii}8}[0,\frac \pi2\csc(\frac\pi 8)] \). Thus, we parametrize \( \gamma_1^{in} \) by the inverses of these functions. One can readily compute using \eqref{defn of V} that \( (\frac\pi2-\alpha)^2/V \) has a limit as \( \alpha\to\frac\pi2\). Moreover,
\[
\frac1V h_1(\lambda) = \frac{(\overline a+a)^2}V \frac\ii4\int_0^z \frac{\sqrt{((\overline a+a)u+a-\overline a)((\overline a+a)u+2a)}}{((\overline a+a)u+a)^2}\sqrt{u(u+1)}du,
\]
where \( z=\frac{\lambda-a}{\overline a+a} \) and \(\overline a+a=2\sin(\frac\pi2-\alpha)\). When \( \alpha\to\frac\pi2 \), the above functions converge uniformly (including their derivatives) to
\[
c\int_0^z\sqrt{u(u+1)}du
\]
for some constant \( c \), where branch of the square root is chosen so that the integrand is analytic in the vertical strip \( -1<\Re(u)<0 \) (notice that under the above substitution, \( -\overline a \) corresponds to \( -1 \)). Now, we can again appeal to uniform convergence of the parametrizations to conclude uniform boundedness of the Carleson constants.

\section{A Curious Determinantal Identity}

In this appendix, we prove a proposition that was used to derive formula \eqref{prefactor linear system}. The proof below uses a combinatorial identity
\begin{align}
    \label{bimonial identity}
    \sum_{k=0}^m\binom{m+n}k x^k y^{m-k} & = \sum_{k=0}^m \binom{-n}{k} (-x)^k(x+y)^{m-k} \\
    & = \sum_{k=0}^m (-1)^{n-1}\binom{-k-1}{n-1} x^k(x+y)^{m-k},
\end{align}
where the first equality can be found, for instance, in \cite[Equation (5.19)]{MR1397498}, and the second one is a straightforward consequence of the definition of the generalized binomial coefficient used with integer parameters.

\begin{prop}
\label{prop:det}
Let \( T=[t_{j-i}]_{i,j=1}^N \) and \( H=[h_{i+j-1}]_{i,j=1}^N \) be a Toeplitz and a Hankel matrix, respectively, with \( t_k=0 \) when \( k<0 \) and \( h_k=0 \) when \( k>N \) (in all matrices, \( i \) stands for the row index and \( j \) for the column one). Further, let
\[
L = \left[\left(\frac\ell2\right)^{i+j-1}(-1)^{i-1}\binom{-j}{i-1}\right]_{i,j=1}^N
\]
for some constant \( \ell \). Assume that \( \det(T-HL)\neq 0 \). If \( \vec v=(v_1,\ldots,v_N)^\mathsf{T} \) is the solution of \( (T-HL)\vec v = H\vec e_1\), where \( \vec e_1 = (1,0,\ldots,0)^\mathsf{T} \), then
\[
1 + \sum_{i=1}^N v_i\ell^i = \frac{\det(T+HL)}{\det(T-HL)}.
\]
\end{prop}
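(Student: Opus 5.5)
The plan is to reduce the claim to a rank-one determinant identity. I would first show that
\[
\vec w := \vec e_1 + L\vec v
\]
satisfies \(1+\sum_i v_i\ell^i = \vec u^{\mathsf T}\vec w\) for a suitable vector \(\vec u\), and that \(\vec u^{\mathsf T} H \vec e_1\) is a natural quantity. Then I would apply the matrix determinant lemma
\[
\det(M+\vec a\vec b^{\mathsf T})=\det M\,(1+\vec b^{\mathsf T}M^{-1}\vec a).
\]
Concretely, \(T+HL = (T-HL)+2HL\). The key structural input I expect is that \(2HL\) is not of rank one in general, but that \(T+HL\) and \(T-HL\) are related through the first column, because \(L\) encodes re-expansion of powers of \((\lambda+\ell)^{-1}\) around \(\lambda=0\) (this is exactly how \(L_n\) arose in the main text with \(\ell=\ii\)). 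So I would work at the level of generating functions rather than matrices.

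Step one is to translate everything into Laurent series. Associate to \(\vec v\) the function \(d(\lambda)=\sum_j v_j(-1)^j(\lambda+\ell)^{-j}\). Up to the sign conventions of the paper, \(L\vec v\) lists the first \(N\) Taylor coefficients of \(d\) at \(\lambda=0\), while \(d(\ell)\)-type evaluations produce the weights \(\ell^i\) via the binomial identity \eqref{bimonial identity}. Multiplication by \(T\) corresponds to multiplying the principal part \(\sum_m b_m\lambda^{-m}\) by the Taylor series \(t(\lambda)\) and truncating; \(H\) similarly corresponds to taking the principal part of \(h(\lambda)\) times a Taylor series. The system \((T-HL)\vec v=H\vec e_1\) then becomes a condition saying that a certain rational function \(R(\lambda)=\sum_m v_m\lambda^{-m}\) satisfies
\[
\text{principal part at }0\text{ of }\bigl[t R - h\,(1+\widetilde R)\bigr]=0,
\]
where \(\widetilde R(\lambda)=R(\text{reflected point})\), obtained by reexpanding around \(\lambda=\ell\) (a mirror of the \((1,1)\)/\((2,1)\) entry structure in Section~\ref{Polynomial Prefactor}). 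In that language \(1+\sum v_i\ell^i\) is \(1+\widetilde R\) evaluated at the special point.

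Step two is to realize both determinants as solvability determinants of the coupled \(2\times2\) system. Following the main text, the full coupled system \(T\vec b=HL\vec d\), \(T\vec d=H(\vec e_1+L\vec b)\) decouples into \((T\mp HL)(\vec d\pm\vec b)=H\vec e_1\). I would then consider the block matrix
\[
\mathcal M=\begin{pmatrix} T & -HL\\ -HL & T\end{pmatrix}, \qquad \det\mathcal M=\det(T-HL)\det(T+HL).
\]
Using Cramer's rule on an augmented system with one extra unknown representing the constant term \(1\), the quantity \(1+\sum\ell^i v_i\) becomes a ratio of bordered determinants. The claim is that the bordered determinant equals \(\det(T+HL)\). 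I would prove this by exhibiting an explicit column operation: add to the border column the combination \(\sum_i\ell^i(\text{column }i)\). The binomial identity \eqref{bimonial identity} should show that \(\sum_i \ell^i L_{ij}\) telescopes so that \(HL\) columns combine into \(H\vec e_1\)-type terms, converting \(-HL\) into \(+HL\) after a rank-one correction absorbed by the border.

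The main obstacle is this last computation: verifying that \(\bigl(\ell,\ell^2,\dots,\ell^N\bigr)L\) (or \(L\) applied against the Vandermonde vector) produces exactly the vector needed to flip the sign of \(HL\), with the truncations \(t_k=0\) for \(k<0\) and \(h_k=0\) for \(k>N\) making boundary terms vanish. I would first test it in the cases \(N=1\) and \(N=2\) by hand to pin down signs. After that I would prove the general case using the generating-function identity \(\sum_{i}\binom{-j}{i-1}(-\ell\lambda/2)^{i-1}\dots\), which is a finite piece of \((1+\cdot)^{-j}\), combined with the second equality of \eqref{bimonial identity}.
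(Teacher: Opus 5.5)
Your rank-one reduction is correct and coincides with the paper's first step. Take \(\vec u=(\ell,\ell^2,\ldots,\ell^N)^{\mathsf T}\) and \(V=\vec e_1\vec u^{\mathsf T}\). The matrix determinant lemma (equivalently, your bordered determinant) gives
\[
1+\sum_{i=1}^N v_i\ell^i=1+\vec u^{\mathsf T}(T-HL)^{-1}H\vec e_1=\frac{\det\big(T+H(V-L)\big)}{\det(T-HL)} .
\]
So the entire content of the proposition is \(\det(T+H(V-L))=\det(T+HL)\). This is exactly the step you defer as the ``main obstacle'', and the mechanism you propose for it cannot work:
\begin{itemize}
\item Adding \(\sum_i\ell^i(\text{column }i)\) to the border column leaves the \(N\times N\) block untouched.
\item Any rank-one correction routed through the border just reproduces \(\det(T-HL+H\vec e_1\vec u^{\mathsf T})\).
\item The difference \((T+HL)-(T+H(V-L))=H(2L-V)\) has rank at least two once \(N\ge2\), \(\ell\neq0\), \(h_N\neq0\), since the upper-left \(2\times2\) block of \(2L-V\) is \(\bigl(\begin{smallmatrix}0&-\ell^2/2\\ \ell^2/2&\ell^3/2\end{smallmatrix}\bigr)\). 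So no identity about the single row vector \(\vec u^{\mathsf T}L\) can ``flip the sign of \(HL\)''.
\end{itemize}
The factorization of \(\det\mathcal M\) is true but is never connected to the bordered determinant. Your Step one is only heuristic: in the main text \(L_n\) re-expands powers of \((\lambda+\ii)^{-1}\) at \(\lambda=\ii\), not at \(0\).

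The missing idea is a similarity transformation, and this is where the hypotheses on \(T\) and \(H\) are used (not to make boundary terms vanish). Let \(S\) be the upper shift and \(P=(I-\ell S)^{-1}=[\ell^{j-i}]_{i\le j}\).
\begin{itemize}
\item Since \(t_k=0\) for \(k<0\), \(T\) is a polynomial in \(S\), so \(PT=TP\).
\item Since \(h_k=0\) for \(k>N\), \(HJ\) is upper triangular Toeplitz, where \(J\) is the flip matrix. Hence \(PH=(HJ)PJ=H(JPJ)=HP^{\mathsf T}\).
\end{itemize}
Therefore
\[
P\big(T+H(V-L)\big)P^{-1}=T+HP^{\mathsf T}(V-L)P^{-1},
\]
and it suffices to show \(P^{\mathsf T}(V-L)P^{-1}=L\), i.e. \((V-L)(I-\ell S)=(I-\ell S^{\mathsf T})L\). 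Using \(\vec u^{\mathsf T}(I-\ell S)=\ell\vec e_1^{\mathsf T}\), this reads
\[
2L_{ij}=\ell\,(L_{i-1,j}+L_{i,j-1})+\ell\,\delta_{i1}\delta_{j1},\qquad L_{0,j}=L_{i,0}=0.
\]
This is Pascal's rule for \((-1)^{i-1}\binom{-j}{i-1}=\binom{i+j-2}{i-1}\). The paper's proof is exactly this conjugation, by the product of elementary matrices \(I+\ell E_{i,i+1}\) (which equals \(P\)), checked via \eqref{bimonial identity}. Without this step your argument does not close.
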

\begin{proof}
Let \( V \) be the matrix whose first row is given by \( \big(\ell,\ell^2,\ldots,\ell^N\big) \) and whose remaining entries are all zero. Then
\begin{align*}
1 + \sum_{i=1}^N v_i\ell^i &= \vec e_1^\mathsf{T}\big(I + V (T-HL)^{-1} H\big)\vec e_1 \\
 &= \det\big(I + V (T-HL)^{-1} H\big) =\frac{\det(T+H(V-L))}{\det(T-HL)}.
\end{align*}
Thus, we need to show that \( \det(T+H(V-L))=\det(T+HL) \), i.e., that the matrix \( T+HL \) can be obtained by row and column operations from \( T+H(V-L) \). We shall perform the following operations:
\begin{itemize}
    \item for \( i \) from \( N \) to \( 2 \) (starting with the last row and ending with the second row) add \( \ell \) multiple of the \( i \)-th row to the \( (i-1)\)-st one;
    \item for \( j \) from \( N-1 \) to \( 1 \) (starting with the penultimate column and ending with the first one) subtract \( \ell \) multiple of the \( j \)-th column from the \( (j+1)\)-st one.
\end{itemize}
Set \( L_{i,j} := I + \ell E_{i,j} \), where \( E_{i,j} \) is the matrix with all zero entries except for \( (i,j) \)-th entry, which is \( 1 \). Since \(  I - \ell E_{i,j} = L_{i,j}^{-1} \), performing the suggested operations, is of course, equivalent to the multiplication on the left by \( L_{1,2} \cdots L_{N-1,N} \) and on the right by \( L_{N-1,N}^{-1} \cdots L_{1,2}^{-1} \). That is, we would like to show that
\[
L_{1,2} \cdots L_{N-1,N} (T+H(V-L)) L_{N-1,N}^{-1} \cdots L_{1,2}^{-1} = T+HL.
\]

It is straightforward to verify that
\[
\begin{cases}
L_{1,2} \cdots L_{N-1,N} T &= T L_{1,2} \cdots L_{N-1,N}, \\ 
L_{1,2} \cdots L_{N-1,N} H &= H L_{N,N-1}\cdots L_{2,1}.
\end{cases}
\]
For instance, when \( N=3 \), one can readily see that
\begin{multline*}
\begin{pmatrix} 1&\ell&0 \\ 0&1&0 \\ 0&0&1 \end{pmatrix} \begin{pmatrix} 1&0&0 \\ 0&1&\ell \\ 0&0&1 \end{pmatrix} T = \begin{pmatrix}
t_0 & t_1 + \ell t_0 & t_2 + \ell t_1 + \ell^2 t_0 \\
0 & t_0 & t_1 + \ell t_0 \\
0 & 0 & t_0
\end{pmatrix} \\ = T \begin{pmatrix} 1&\ell&0 \\ 0&1&0 \\ 0&0&1 \end{pmatrix} \begin{pmatrix} 1&0&0 \\ 0&1&\ell \\ 0&0&1 \end{pmatrix}.
\end{multline*}
and
\begin{multline*}
\begin{pmatrix} 1&\ell&0 \\ 0&1&0 \\ 0&0&1 \end{pmatrix}\begin{pmatrix} 1&0&0 \\ 0&1&\ell \\ 0&0&1 \end{pmatrix} H = \begin{pmatrix} h_1 + \ell h_2 + \ell^2 h_3 & h_2 + \ell h_3 & h_3 \\ h_2 + \ell h_3 & h_3 & 0 \\ h_3 & 0 & 0 \end{pmatrix} \\ = H \begin{pmatrix} 1&0&0 \\ 0&1&0 \\ 0&\ell&1 \end{pmatrix} \begin{pmatrix} 1&0&0 \\ \ell&1&0 \\ 0&0&1 \end{pmatrix}.
\end{multline*}
Hence, to prove the proposition, we need to show that
\[
L_{N,N-1}\cdots L_{2,1} (V-L) L_{N-1,N}^{-1} \cdots L_{1,2}^{-1} = L.
\]
Observe that
\[
L_{N,N-1}\cdots L_{2,1} V  = \big[\ell^{i+j-1}\big]_{i,j=1}^N.
\]
Therefore, the desired equality can be rewritten as
\[
\big[\ell^{i+j-1}\big]_{i,j=1}^N = L_{N,N-1}\cdots L_{2,1} L + L L_{1,2}\cdots L_{N-1,N}.
\]
The first summand on the right-hand side of the equality above represents successive additions of the \( (i-1) \)-st row multiplied by \( \ell \) to the \( i\)-th one starting with \( i=2 \). Thus, its \( (i,j) \)-th entry is given by
\begin{align*}
    \sum_{k=1}^i \ell^{i-k}\left(\frac\ell2\right)^{k+j-1}(-1)^{k-1}\binom{-j}{k-1} & = \frac{\ell^{i+j-1}}{2^j} \sum_{k=0}^{i-1} \left(-\frac12 \right)^k \binom{-j}{k} \\
    & = \left(\frac\ell2\right)^{i+j-1}\sum_{k=0}^{i-1} \binom{i+j-1}{k},
\end{align*}
where we used the equality of the first and middle terms in \eqref{bimonial identity} with \( x=y=\tfrac12\), \( m=i-1 \), and \( n=j \). On the other hand, the second summand represents successive additions of the \( (j-1) \)-st column multiplied by \( \ell \) to the \( j\)-th one starting with \( j=2 \). Hence, its \( (i,j) \)-th entry is given by
\begin{align*}
    \sum_{k=1}^j \ell^{j-k}\left(\frac\ell2\right)^{k+i-1}(-1)^{i-1}\binom{-k}{i-1} & = \frac{\ell^{i+j-1}}{2^i} \sum_{k=0}^{j-1} (-1)^{i-1} \left(\frac12\right)^k \binom{-k-1}{i-1} \\
    & = \left(\frac\ell2\right)^{i+j-1} \sum_{k=0}^{j-1}\binom{i+j-1}{k},
\end{align*}
where we used the equality of the first and last terms in \eqref{bimonial identity} with \( x=y=\tfrac12\), \( m=j-1 \), and \( n=i \). Since
\[
\sum_{k=0}^{i-1} \binom{i+j-1}{k} + \sum_{k=0}^{j-1}\binom{i+j-1}{i+j-1-k} = \sum_{k=0}^{i+j-1}\binom{i+j-1}{k} = 2^{i+j-1},
\]
the claim of the proposition follows.
\end{proof}

\section{On Elliptic Moduli}
\label{app:elliptic}


Recall \eqref{Elliptic integrals}. It is well known, see \cite[(19.5.1-2)]{NIST:DLMF}, that
\begin{equation}
\label{seriesK}
K(k) = \frac{\pi}{2} \sum_{m=0}^{\infty}\frac{{\left(\tfrac{1}{2}\right)_{m}}{\left (\tfrac{1}{2}\right)_{m}}}{m!\;m!}{k}^{2m} \quad \text{and} \quad E(k) = \frac{\pi}{2} \sum_{m=0}^{\infty}\frac{{\left(-\tfrac{1}{2}\right)_{m}}{\left (\tfrac{1}{2}\right)_{m}}}{m!\;m!}{k}^{2m},
\end{equation}
where the series converge for \( |k|<1 \). Recall also \eqref{kappa as elliptic} and \eqref{ks}.

\begin{lem}
\label{lem:k varkappa}
We have that
\[
\varkappa \leq \kk^2\KK \quad \text{and} \quad 1-\varkappa \leq (\kk^\prime)^2 \KK.
\]
Moreover, it holds as \( \varkappa\to 0 \) that
\[
\kk = 2\left(\frac\varkappa\pi\right)^{\frac12} - \frac12\left(\frac\varkappa\pi\right)^{\frac32} - \frac{5}{16} \left(\frac\varkappa\pi\right)^{\frac52} + \O \left( \varkappa^{\frac72} \right).
\]
\end{lem}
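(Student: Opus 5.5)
The two inequalities will come from the integral form of \( \varkappa \) given in the proof of Lemma~\ref{lem:varkappa}, namely
\[
\varkappa = \kk^2 \int_0^1 \frac{\sqrt{1-y^2}}{\sqrt{1-\kk^2y^2}}\,dy = E(\kk) - (\kk^\prime)^2\KK .
\]
For the first bound, bound the integrand from above: since \( \sqrt{1-y^2}\le 1/\sqrt{1-y^2} \) on \( [0,1) \), the integrand is at most \( 1/\sqrt{(1-y^2)(1-\kk^2y^2)} \). This gives \( \varkappa\le \kk^2\KK \) directly from \eqref{Elliptic integrals}.

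For the second bound, write
\[
1-\varkappa = 1 - E(\kk) + (\kk^\prime)^2\KK ,
\]
so it suffices to show \( E(\kk)\ge 1 \). This holds because \( E \) is decreasing on \( [0,1] \) with \( E(1)=1 \). Alternatively, it follows termwise from \eqref{seriesK} after comparing with the value at \( \kk=1 \). Either way, \( 1-\varkappa\le(\kk^\prime)^2\KK \).

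For the expansion as \( \varkappa\to0 \), combine the two series in \eqref{seriesK} to get a power series in \( k^2 \):
\[
E(k)-(1-k^2)K(k)=\frac\pi2\Big(\tfrac12k^2+\tfrac1{16}k^4+\tfrac1{64}k^6+\O(k^8)\Big).
\]
The coefficients come from \( \tfrac12-\big(-\tfrac14\big)\ldots \). They are computed directly:
\begin{itemize}
\item the \( k^2 \) coefficient is \( -\tfrac14-\tfrac14+1=\tfrac12 \);
\item the \( k^4 \) coefficient is \( -\tfrac3{64}-\tfrac9{64}+\tfrac14=\tfrac1{16} \);
\item the \( k^6 \) coefficient is \( -\tfrac{5}{256}-\tfrac{25}{256}+\tfrac9{64}=\tfrac{6}{256}=\tfrac3{128} \).
\end{itemize}
I will recompute these carefully.

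With these coefficients, set \( t=\varkappa/\pi \). The relation becomes
\[
4t = k^2 + c_2k^4 + c_3k^6 + \O(k^8), \qquad c_2 = \tfrac18,\; c_3=\tfrac3{64}.
\]
Invert this series: put \( k^2 = 4t + a t^2 + b t^3 + \cdots \) and solve order by order, which gives \( a=-2 \). Then take the square root to obtain
\[
\kk = 2t^{1/2}\big(1 + \tfrac a8 t + \cdots\big).
\]
Matching \( \tfrac{a}{8}\cdot 2 = -\tfrac12 \) is consistent with \( a=-2 \). The next term fixes \( b \), which yields the coefficient \( -\tfrac5{16} \).

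The main obstacle is this bookkeeping in the series reversion and square root. I would check the final \( -\tfrac5{16} \) by substituting the claimed expansion back into the \( \kk^2 \)-series for \( \varkappa \). The error term \( \O(\varkappa^{7/2}) \) follows because \( t\mapsto\kk \) is analytic in \( t^{1/2} \) near \( 0 \) by the analytic inverse function theorem. That theorem applies since the derivative of \( \varkappa \) with respect to \( k^2 \) at \( 0 \) equals \( \pi/4\neq0 \).
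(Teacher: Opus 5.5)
Your proof is correct and follows essentially the paper's route: you use the same identity $\varkappa=E(\kk)-(\kk^\prime)^2\KK$, with your pointwise bound $\sqrt{1-y^2}\le 1/\sqrt{1-y^2}$ serving as a self-contained replacement for the paper's $E\le\tfrac\pi2\le K$ step, then $E(\kk)\ge1$ for the second bound, and finally series reversion giving $\kk^2=4t-2t^2-t^3+\O(t^4)$ (so $b=-1$) followed by the square-root expansion. The only blemish is the displayed $\tfrac1{64}k^6$ term, which should read $\tfrac3{128}k^6$, as your own bullet computation shows; your $c_3=\tfrac3{64}$ is already the correct value.
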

\begin{proof}
By Lemma~\ref{lem:varkappa}, we have that
\[
\varkappa = \kk^2 \KK + (E(\kk)-K(\kk)).
\]
Since \( E(k)\leq \tfrac\pi2 \leq K(k)\), see \eqref{seriesK} and \cite[Equation (19.9.1)]{NIST:DLMF}, the first claim of the lemma follows. On the other hand, we have that
\[
1-\varkappa = 1-E(\kk ) + (\kk^\prime)^2 \KK.
\]
Since \( E(k) \geq 1\), see again \cite[Equation (19.9.1)]{NIST:DLMF}, we get the second claimed bound. Moreover, we readily get from \eqref{seriesK} that
\[
\varkappa = \frac{\pi}{2} \sum_{m=1}^\infty \left( \frac{ \left(\tfrac12\right)_m}{m!} \right)^2 \frac{2m}{(2m - 1)^2} \kk^{2m}.
\]
By the Lagrange inversion theorem, one can write $\kk^2$ as a power series in $\varkappa$. A direct computation yields
\[
\kk^2 = 4\frac{\varkappa}{\pi}  - 2\left(\frac{\varkappa}{\pi}\right)^2 - \left(\frac{\varkappa}{\pi}\right)^3 + \O \left( \varkappa^4 \right).
\]
Since $\sqrt{1 - y} = 1 - \tfrac{1}{2}y - \tfrac{1}{8} y^2 + \O( y^3)$, the last claim of the lemma follows. 
\end{proof}

Recall \eqref{D4} and \eqref{Kalpha}.

\begin{lem}
\label{lem:K alpha}
Assuming \eqref{D4} and \( x\geq e \), there exists an absolute constant \( c_0 \) such that
\[
c_0 D_\delta \leq x\mathcal K_\alpha \leq |s^\R|_+^{-1}e^{\varkappa x} \min\left\{ (\mathcal K_\alpha^*)^2, x^{-1/3} (\mathcal K_\alpha^{**})^2 \right\}.
\]
\end{lem}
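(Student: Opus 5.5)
The plan is to rewrite both inequalities in terms of the moduli using the definitions \eqref{Kalpha}:
\[
\frac{\mathcal K_\alpha}{(\mathcal K_\alpha^*)^2}=(\kk\kk^\prime)^2 \quad\text{and}\quad \frac{\mathcal K_\alpha}{(\mathcal K_\alpha^{**})^2}=(\kk\kk^\prime)^{\frac83}.
\]
Then I would treat the upper and lower bounds separately. The only inputs are \eqref{kappa V estimate}, Lemma~\ref{lem:k varkappa}, the two sides of \eqref{D4}, and standard behaviour of \( K(k) \).

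\emph{Upper bound.} After the reduction above, I need to show
\[
x(\kk\kk^\prime)^2\leq |s^\R|_+^{-1}e^{\varkappa x} \quad\text{and}\quad x^{\frac43}(\kk\kk^\prime)^{\frac83}\leq |s^\R|_+^{-1}e^{\varkappa x}.
\]
By \eqref{kappa V estimate} and \( \pi V<1 \), we have \( (\kk\kk^\prime)^2<2\varkappa \). Hence both left-hand sides are at most \( \max\{2\varkappa x,(2\varkappa x)^{4/3}\}\leq (2x)^{4/3} \), using \( \varkappa<1 \) and \( x\geq e \). The left inequality in \eqref{D4} gives
\[
\varkappa x-\ln|s^\R|_+\geq (D_\delta^2x)^{\frac13}.
\]
Therefore the right-hand sides are at least \( \exp\{(D_\delta^2 x)^{1/3}\} \). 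This dominates \( (2x)^{4/3} \) once \( D_4 \), and hence \( D_\delta\geq D_4 \), is chosen large. This finishes the upper bound.

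\emph{Lower bound.} I would split into three regimes: \( \varkappa\leq\epsilon_0 \), \( \varkappa\geq 1-\epsilon_0 \), and the middle range, for a small fixed \( \epsilon_0 \).

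In the middle range, \( \mathcal K_\alpha \) is a continuous positive function of \( \varkappa \) on a compact interval, so it is bounded below by a constant. In this range the right inequality in \eqref{D4} forces \( x\gtrsim D_\delta \) (already \( (x^2D_\delta\ln x)^{1/3}\leq x \) gives this). Hence \( x\mathcal K_\alpha\gtrsim D_\delta \).

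For small \( \varkappa \), Lemma~\ref{lem:k varkappa} gives \( \kk\asymp\varkappa^{1/2} \), while \( \KK\asymp 1 \) and \( \kk^\prime\asymp1 \). So \( \mathcal K_\alpha\asymp\varkappa^{3/2} \). The left side of \eqref{D4} gives \( \varkappa\geq (D_\delta^2x)^{1/3}/x \), so \( x\varkappa^{3/2}\geq D_\delta \).

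For \( \varkappa \) near \( 1 \), the bound \( 1-\varkappa\leq(\kk^\prime)^2\KK \) from Lemma~\ref{lem:k varkappa}, together with \( \kk\asymp1 \), gives
\[
\mathcal K_\alpha\asymp \KK^2(\kk^\prime)^6\geq (1-\varkappa)^3/\KK.
\]
The right side of \eqref{D4} says \( (1-\varkappa)x\geq(x^2D_\delta\ln x)^{1/3} \), that is, \( (1-\varkappa)^3x\geq D_\delta\ln x \). Hence \( x\mathcal K_\alpha\gtrsim D_\delta \ln x/\KK \).

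\emph{Main obstacle.} The delicate step is this last regime, where one has to show \( \KK\lesssim\ln x \). I would use \( \KK\leq \ln(4/\kk^\prime)+C \) and bound \( \kk^\prime \) from below. From \( (\kk^\prime)^2\geq(1-\varkappa)/\KK \) and the bound \( 1-\varkappa\geq (D_\delta\ln x/x)^{1/3}\geq x^{-1/3} \) just derived, a short bootstrap gives \( \ln(1/\kk^\prime)\lesssim\ln x \). Hence \( \KK=\O(\ln x) \), which completes the lower bound with an absolute \( c_0 \).
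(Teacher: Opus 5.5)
Your proposal is correct and follows essentially the paper's route. The upper bound reduces, as in the paper, to showing that $x(\kk\kk^\prime)^2$ and $x^{4/3}(\kk\kk^\prime)^{8/3}$ are dominated by $e^{(D_\delta^2x)^{1/3}}$; the paper simply uses $\kk\kk^\prime\le\frac12$ where you use \eqref{kappa V estimate}. Your near-$1$ regime, including the bootstrap giving $\KK=\O(\ln x)$, is exactly the paper's argument. The only difference is in how the remaining range is handled. The paper splits once, at $\kk^\prime=k_*$, and covers your small and middle ranges together via $\varkappa\le\kk^2\KK$, which gives $x\mathcal K_\alpha\ge(\kk^\prime)^6\sqrt{\KK}\,x\varkappa^{3/2}$. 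This avoids your compactness step and the small-$\varkappa$ asymptotics, and yields explicit constants.
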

\begin{proof}
Let \( k_*\in(0,1) \) be such that \( \ln x < \tfrac1{20}x\) for \( x\geq \tfrac4{k_*}\). It follows from \cite[Equation~(19.9.2)]{NIST:DLMF} that
\[
\KK \leq \frac54 \ln\frac 4{\kk^\prime} \quad \Rightarrow \quad \KK \leq \frac 1{4\kk^\prime}, \quad \kk^\prime \in(0,k_*].
\]
Recall that \( x\geq e\) and \( D_\delta\geq 4D_4 \geq 1 \) (we can always assume that  \( D_4\geq \frac14 \)). By using these estimates, the upper estimate in \eqref{D4}, Lemma~\ref{lem:k varkappa}, and second inequality in the above display, we get that
\[
\frac13\ln \frac1{x} \leq \frac13\ln \left(\frac{D_\delta \ln x}{x}\right) \leq \ln(1-\varkappa) \leq 2\ln \kk^\prime + \ln \KK \leq \ln \frac{\kk^\prime}4.
\]
We now get from Lemma~\ref{lem:k varkappa}, the upper bound in \eqref{D4}, the first estimate of \( \KK \) above, and the preceding inequality that
\[
x \mathcal K_\alpha \geq x \kk^3 \frac{(1-\varkappa)^3}{\KK} \geq \kk ^3 \frac{D_\delta \ln x}{\KK} \geq \frac{4\kk^3}{5} \frac{D_\delta\ln x}{\ln(4/\kk^\prime)} \geq \frac{12 (1-k_*^2)^{\frac32}}{5} D_\delta.
\]
On the other hand, when \( \kk^\prime \in [k_*,1)\), we get from Lemma~\ref{lem:k varkappa}, an estimate \( \KK\geq \tfrac \pi2 \), see \eqref{seriesK}, and the lower bound in \eqref{D4} that
\[
x \mathcal K_\alpha \geq (\kk^\prime)^6 \sqrt{\KK} \, x\varkappa^{\frac32} \geq (\kk^\prime)^6 \sqrt{\frac\pi2} x\varkappa^{\frac32} \geq k_*^6 \sqrt{\frac\pi2} D_\delta,
\]
which finishes the proof of the estimate \( c_0 D_\delta \leq x\mathcal K_\alpha \).

It is a simple computation using \eqref{Kalpha} to verify that the second desired inequality is equivalent to showing that
\[
(\kk\kk^\prime)^2 \leq  x^{-1} e^{\varkappa x-\ln |s^\R|_+} \quad \text{and} \quad (\kk\kk^\prime)^{\frac83} \leq x^{-\frac43}e^{\varkappa x-\ln |s^\R|_+}.
\]
Since \( \kk\kk^\prime\leq \tfrac12\), the lower bound in \eqref{D4} yields the desired inequalities (recall that we always assume that \( x\geq1\) and we always can adjust the constant \( D_4 \) as it was simply a fixed constant sufficiently large for our purposes). 
\end{proof}

Let \( \sd(z,k) \) be the Jacobian elliptic function we used in Theorem~\ref{thm:4}.

\begin{lem} 
\label{lem:sd}
It holds as $k\to 0^+$ for all real \( z \) that
\[
\sd(z, k) = \sin \left( z - \frac{k^2 z}{4} \right) + \O \left( k^4 z \right).
\]
\end{lem}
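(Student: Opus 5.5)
We need $\sd(z,k)=\sin(z-k^2z/4)+\O(k^4z)$ as $k\to0^+$, uniformly in real $z$. (For $|z|\lesssim 1$ the error should really be $\O(k^4)$, and this is what the estimates below give.) The plan is to compare $\sd(z,k)$ with a time-reparametrized sine through Gronwall-type bounds. Expanding the theta-quotient representation $\sd=\vartheta_3^2(0)\vartheta_1/(\vartheta_0(0)\vartheta_2(0)\vartheta_3)$ in the nome is an alternative, but the ODE route controls linear growth in $z$ more directly.

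\textbf{Step 1: the ODE.} Set $y(z)=\sd(z,k)$. From the standard identities $\mathrm{cn}^2=1-\mathrm{sn}^2$, $\mathrm{dn}^2=1-k^2\mathrm{sn}^2$ and $\sd=\mathrm{sn}/\mathrm{dn}$, one derives
\[
(y')^2=(1-(k')^2y^2)(1+k^2y^2),\qquad y''=-(1-2k^2)\,y-2k^2(k')^2y^3 .
\]
The initial conditions are $y(0)=0$ and $y'(0)=1$. The solution is periodic with period $4K(k)$.

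\textbf{Step 2: Lindstedt ansatz.} Write $\omega=\pi/(2K(k))$. By \eqref{seriesK}, $\omega=1-\tfrac{k^2}{4}+\O(k^4)$. We seek $y(z)=Y(\omega z)$ with $Y$ being $2\pi$-periodic. First-order perturbation in $k^2$ gives
\[
Y(t)=\sin t+k^2\big(c_1\sin t+c_3\sin 3t\big)+\O(k^4),
\]
uniformly on $[0,2\pi]$. The constants come from removing secular terms, exactly as in Lindstedt–Poincaré. Because $Y$ is a uniformly smooth family of periodic functions of $t$, the remainder $\O(k^4)$ holds for all real $t$ with no growth in $t$.

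\textbf{Step 3: compare with $\sin(\omega z)$.} From Step 2,
\[
\sd(z,k)=\sin(\omega z)+k^2\big(c_1\sin(\omega z)+c_3\sin(3\omega z)\big)+\O(k^4).
\]
Replacing $\omega z$ by $z-k^2z/4$ changes the argument by $\O(k^4|z|)$, and $\sin$ is $1$-Lipschitz, so that step costs only $\O(k^4|z|)$.

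\textbf{Main obstacle.} The $k^2$-correction terms $c_1\sin t+c_3\sin 3t$ do not obviously fall under the error $\O(k^4z)$. The resolution should be that these coefficients actually vanish, or cancel against the $\O(k^4)$ ambiguity in $\omega$. I would check this by computing $\sd$ to order $k^2$ directly from the Fourier series in \cite[22.11]{NIST:DLMF}. Those series give
\[
\kk^\prime\sd(2K x/\pi,k)=\frac{2\pi}{kK}\sum_n \frac{(-1)^nq^{n+1/2}}{1+q^{2n+1}}\,\sin((2n+1)x),
\]
with nome $q=k^2/16+\O(k^4)$. The $n=0$ coefficient equals $1+\O(k^4)$, and the $n=1$ coefficient is $\O(q)=\O(k^2)$.

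If the $\sin 3t$ coefficient is genuinely of order $k^2$, the lemma as stated can only hold in the weaker form $\sd(z,k)=\sin(z-k^2z/4)+\O(k^2)+\O(k^4z)$. Since the statement is used only together with $z\sim x$ and $k^2\sim\varkappa\sim x^{-2/3}$, the correction term would then have to be absorbed into the error of \eqref{4 to 5 reduction}. Settling which of the two forms is correct is the key computation in the whole argument, and I would do it before anything else.
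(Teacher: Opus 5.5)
Your suspicion is right, and finishing the computation you postpone decides it against the statement. As written the lemma is false, so your ``main obstacle'' cannot be overcome.

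\textbf{A counterexample.} Take $z=K(k)$. There $\sd(K(k),k)=1/k'=1+\tfrac{k^2}{2}+\O(k^4)$, with $k'=\sqrt{1-k^2}$, while $|\sin(z-k^2z/4)|\le 1$. So the discrepancy is at least $\tfrac{k^2}{2}+\O(k^4)$, not $\O(k^4z)=\O(k^4)$.

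\textbf{Your own Lindstedt step already shows this.} Write $\omega^2=1+\omega_1k^2+\cdots$. The order-$k^2$ equation is $Y_1''+Y_1=(\omega_1+\tfrac12)\sin t+\tfrac12\sin 3t$. Removing the secular term gives $\omega_1=-\tfrac12$, and then $c_3=-\tfrac1{16}$. The condition $\omega Y'(0)=1$ forces $c_1=\tfrac14-3c_3=\tfrac7{16}$. Neither coefficient vanishes.

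\textbf{The Fourier series you quote agrees.} Use $q^{1/2}=\tfrac k4\bigl(1+\tfrac{k^2}4+\O(k^4)\bigr)$ and $K(k)=\tfrac\pi2\bigl(1+\tfrac{k^2}4+\O(k^4)\bigr)$. Then, uniformly in real $z$ and with $t=\pi z/(2K(k))$,
\[
\sd(z,k)=\Bigl(1+\frac{7k^2}{16}\Bigr)\sin t-\frac{k^2}{16}\sin 3t+\O(k^4)=\sin\Bigl(z-\frac{k^2z}{4}\Bigr)+\frac{k^2}{4}\sin t\,\bigl(1+\sin^2t\bigr)+\O\bigl(k^4(1+|z|)\bigr).
\]

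\textbf{Two of your side remarks are therefore wrong.} First, the full $n=0$ coefficient of $k'\sd$ is $1-\tfrac{k^2}{16}+\O(k^4)$ once the factor $(1+q)^{-1}$ is included, not $1+\O(k^4)$. Second, for $|z|\lesssim1$ the error is not $\O(k^4)$; at $z=\pi/2$ it equals $\tfrac{k^2}{2}+\O(k^4)$. What is true, and sharp, is your weaker form $\sd(z,k)=\sin(z-k^2z/4)+\O(k^2)+\O(k^4|z|)$.

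\textbf{This weaker form is exactly what the paper's proof delivers.} The paper uses the same series \cite[Equation~(22.11.5)]{NIST:DLMF} to get $\sd(z,k)=\sin\bigl(\pi z/(2K(k))\bigr)+\O(k^2\ln k)$. The logarithm is only an artifact of its crude bound on $K(k')$, since $q/k^2$ is analytic in $k^2$. It then replaces $\pi/(2K(k))$ by $1-k^2/4$ at cost $\O(k^4z)$, and silently drops the $\O(k^2\ln k)$ term in its last line.

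The slip is harmless where the lemma is used. In Theorem~\ref{thm:sd-sine}, $\kk^2=\O(Dx^{-2/3})$. So the extra $\O(\kk^2\ln(1/\kk))$ is absorbed by the stated error $\O(x^{-1/3}D^2\ln x)$ whenever $\varkappa x=Dx^{1/3}$ stays bounded below.

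\textbf{On method.} Your exact-periodicity argument for $Y$ is the same mechanism that makes the paper's expansion in $\zeta=\pi z/(2K)$ uniform in $z$. The Fourier route reads off the coefficients directly, with no ODE. Your route, once $c_1$ and $c_3$ are computed, gives the sharper $\O(k^2)$ in place of $\O(k^2\ln k)$.
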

\begin{proof}
We get from \eqref{seriesK} that
\[
K(k) = \frac{\pi}{2} \left( 1 + \frac{k^2}{4} + \O \big( k^4 \big) \right) \quad \text{and} \quad K(k^\prime) = -\ln k + 2 \ln 2 + \O \big( k^2 \ln k \big)
\]
as $k \to 0$. Hence, it holds that
\begin{align}
q^{1/2} &= \exp\left\{-\frac\pi2\frac{K(k^\prime)}{K(k)}\right\} = \exp\left\{\ln k -2\ln2 + \O \big( k^2 \ln k \big)\right\} \\
& = \frac k4\big(1 + \O(k^2\ln k) \big) \quad \text{as} \quad k \to 0
\end{align}
(\( q \) is the nome corresponding to the elliptic modulus \( k \); when \( k\in(0,1) \), it holds that \( q\in(0,1)\)). In particular, we have that
\[
\frac{2\pi q^{\frac12}}{K(k)\, k\,k^{\prime}} = 1 + \O(k^2\ln k) \quad \text{as} \quad k \to 0^+.
\]
It is known, see \cite[Equation (22.11.5)]{NIST:DLMF}, that
\begin{align}
\sd\left(z,k\right) &= \frac{2\pi}{K(k)\, k\,k^{\prime}}\sum_{n=0}^{\infty}%
\frac{(-1)^{n}q^{n+\frac{1}{2}}\sin\left((2n+1)\zeta\right)}{1+q^{2n+1}}, \quad \zeta = \frac{\pi z}{2K(k)},
\end{align}
when $q \in (0,1)$. Hence, we get that
\begin{align}
\sd\left(z,k\right) &= \left( 1 + \O \left( k^2 \ln k \right) \right) \left[ \sin \left( \frac{\pi z}{2K(k)} \right) + \O(q) \right]\\
&= \sin \left( \frac{\pi z}{2K(k)} \right) + \O\left(k^2 \ln k\right) \quad \text{as} \quad k \to 0^+.
\end{align}
One can readily compute from \eqref{seriesK} that
\[
\frac{\pi}{2K(k)} = 1 - \frac{1}{4}k^2 + \O\big(k^4\big)  \quad \text{as} \quad k \to 0^+.
\]
Trivially,
\[
\sin \left( \frac{\pi z}{2K(k)} \right) = \sin \left( z - \frac{k^2 z}{4} \right) \cos \left( \O\big(k^4z\big) \right) - \cos \left( z - \frac{k^2 z}{4} \right) \sin \left( \O\big(k^4z\big) \right).
\]
To finish the proof of the lemma, it remains to observe that
\[
\sin \left( \O\big(k^4z\big) \right) = \O\big(k^4z\big) \quad \text{and} \quad \cos \left( \O\big(k^4z\big) \right) = 1 + \O \left( k^4 z \right). \qedhere
\]
\end{proof}

The following theorem allowed us to connect the behavior of solutions in Theorems~\ref{thm:4} and~\ref{thm:5}.

\begin{thm}
\label{thm:sd-sine}
As in the main text, let \( \pi\kappa = \varkappa x = \tfrac12\ln(|p(\varkappa,x)|^2-1)\). Assume that \( \pi\kappa = D x^{1/3} \) for some \( D>0 \). It holds that
\begin{align}
\kk^\prime \sd\left(x-2\kappa \KK^\prime,\kk\right) = \varepsilon\sin( x - \kappa \ln x + \phi ) + \O \left( x^{-1/3} D^2 \ln x\right)
\end{align}
as $x \to \infty$, where, as in \eqref{kappa phi}, $\phi = - \kappa \ln 4 + \arg \left( p \Gamma \left(\frac12 + \ii \kappa \right) \right)$, and \( \O(\cdot) \) term is independent of \( D \) and \( x \).
\end{thm}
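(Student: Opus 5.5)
We will write \(\varkappa = D x^{-2/3}\to0\) and expand the left-hand side with Lemma~\ref{lem:sd} and Lemma~\ref{lem:k varkappa}. By the last claim of Lemma~\ref{lem:k varkappa}, \(\kk^2 = 4\varkappa/\pi + \O(\varkappa^2)\), so \(\kk^2 = \O(Dx^{-2/3})\) and \(\kk^\prime = 1+\O(Dx^{-2/3})\). Thus the prefactor \(\kk^\prime\) contributes only \(\O(Dx^{-2/3})\), since \(|\sd|\) stays bounded by \(1/\kk^\prime\). Lemma~\ref{lem:sd} with \(z = x - 2\kappa\KK^\prime\) (here \(|z|\lesssim x\)) gives
\[
\sd(x-2\kappa\KK^\prime,\kk)=\sin\Big(\frac{\pi(x-2\kappa\KK^\prime)}{2\KK}\Big)+\O(\kk^2\ln\kk)+\O(\kk^4 x).
\]
Rather than use the truncated phase \(z-\kk^2z/4\), we will keep the exact phase \(\pi z/(2\KK)\) from the proof of Lemma~\ref{lem:sd}. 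The error \(\kk^4 x \sim D^2 x^{-1/3}\) matches the claimed rate, and \(\kk^2\ln\kk = \O(Dx^{-2/3}\ln x)\) is smaller.

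The main work is the phase. By Legendre's relation \eqref{Legendre's relation}, \(\pi(x-2\kappa\KK^\prime)/(2\KK) = 2\pi x V\), where \(\pi V = E(\kk^\prime)-\kk^2K(\kk^\prime)\). We will expand this as \(\kk\to0\) using the standard expansions \(K(\kk^\prime)=\ln(4/\kk)+\O(\kk^2\ln\kk)\) and \(E(\kk^\prime)=1+\frac{\kk^2}{2}(\ln(4/\kk)-\tfrac12)+\O(\kk^4\ln\kk)\) (DLMF 19.12). This gives
\[
2\pi V = 2-\kk^2\ln(4/\kk)-\tfrac{\kk^2}{2}+\O(\kk^4\ln\kk).
\]
Multiplying by \(x\), the term \(x\kk^4\ln\kk\) is \(\O(D^2x^{-1/3}\ln x)\), which is exactly the stated error. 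We then need to rewrite \(x\kk^2(\ln(4/\kk)+\tfrac12)\) in terms of \(\kappa\). Lemma~\ref{lem:k varkappa} gives \(x\kk^2 = 4\kappa - 2\pi\kappa^2/x+\dots\), so \(x\kk^2 = 4\kappa+\O(D^2x^{-1/3})\). Also \(\ln(4/\kk) = \ln 2+\tfrac12\ln(\pi/\varkappa)+\O(\varkappa) = \ln 2+\tfrac12\ln(x/\kappa)+\O(\varkappa)\), since \(\varkappa/\pi = \kappa/x\). Hence
\[
2\pi xV = 2x-\kappa\ln x+2\kappa\ln\kappa-2\kappa-4\kappa\ln 2+\O(D^2x^{-1/3}\ln x),
\]
after checking the constants carefully. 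The \(\O\) term must absorb products like \(\kappa\cdot\varkappa\ln x\).

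We will then compare this phase, \(2x-\kappa\ln x-\kappa\ln 4+(2\kappa\ln\kappa-2\kappa-\kappa\ln4)\), with \(x-\kappa\ln x+\phi\). Note that the argument of \(\sd\) is \(2xV\KK\), not \(2\pi xV\), so the phase we actually need is \(\pi z/(2\KK)=2\pi xV\). We expect the doubling of \(x\) to disappear once we track that \(x-2\kappa\KK^\prime\) has leading term \(x-2\kappa\ln(4/\kk)\), which is the quantity we will actually expand. Doing so yields \(x - \kappa\ln x - \kappa\ln 4 + \kappa\ln\kappa - \kappa+\dots\). By Stirling's formula, \(\arg\Gamma(\tfrac12+\ii\kappa) = \kappa\ln\kappa-\kappa+\O(1/\kappa)\), but with \(\kappa\sim Dx^{1/3}/\pi\) the error \(\O(1/\kappa)\) is fine only when \(D\) is bounded below. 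So we will use the exact \(\arg\Gamma\) and make sure the comparison lives at the level of the expansion. Finally, \(\arg p = \arg(\Re p + \ii\Im p)\) with \(|p|\sim e^{\varkappa x}\) and bounded imaginary part gives \(\arg p = \tfrac\pi2(1-\varepsilon)+\O(e^{-\varkappa x})\). This produces the sign \(\varepsilon\), since \(\sin(\theta+\pi) = -\sin\theta\).

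The main obstacle is this phase bookkeeping: getting every \(\O(\kappa)\) constant exactly right, including \(\ln4\), \(\kappa\ln\kappa-\kappa\), and the Stirling remainder, and confirming that the remaining errors are uniformly \(\O(D^2x^{-1/3}\ln x)\). To control the Stirling remainder for small \(\kappa\), we would first try the exact identity \(\arg\Gamma(\tfrac12+\ii\kappa)=\kappa\ln\kappa-\kappa+\O(\min(\kappa,1/\kappa))\), which gives an \(\O(1)\)-free bound. We will then absorb any leftover \(\O(1/\kappa)\) into \(D^2x^{-1/3}\ln x\), or else restrict to \(D\) bounded below.
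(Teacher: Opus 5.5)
Your overall plan matches the paper's: Lemma~\ref{lem:k varkappa} for $\kk^2=4\varkappa/\pi+\O(\varkappa^2)$, the sine approximation from the proof of Lemma~\ref{lem:sd}, a logarithmic expansion of the phase, Stirling for $\arg\Gamma(\tfrac12+\ii\kappa)$, and $\arg p=\tfrac{1-\varepsilon}2\pi+\O(e^{-\varkappa x})$ to produce $\varepsilon$. The phase $x-\kappa\ln x-\kappa\ln4+\kappa\ln\kappa-\kappa$ you finally write down is exactly the paper's $x-\kappa\ln x+\phi_*$.

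\textbf{The Legendre step is wrong.} By \eqref{Legendre's relation}, $xV=(x-2\kappa\KK^\prime)/(2\KK)$, i.e.\ $z=2xV\KK$, as you yourself note. Hence $\pi z/(2\KK)=\pi xV$, not $2\pi xV$. This factor of $2$ is the whole source of the ``doubling of $x$''. It was hidden by a second slip: in your display of $2\pi xV$, the term $-\kappa\ln x$ should read $-2\kappa\ln x$, while every other term there is correctly doubled. Saying the doubling ``disappears'' once one tracks the leading term of $x-2\kappa\KK^\prime$ is not an argument. As written, the proposal is internally inconsistent: it asserts the needed phase is $2\pi xV$, then uses a phase equal to $\pi xV$.

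With the correct identity, your route goes through directly:
\[
\frac{\pi z}{2\KK}=\pi xV=x\big(E(\kk^\prime)-\kk^2K(\kk^\prime)\big)=x-\tfrac12x\kk^2\ln\tfrac4\kk-\tfrac14x\kk^2+\O\big(x\kk^4\ln\kk\big).
\]
Inserting $x\kk^2=4\kappa+\O(D^2x^{-1/3})$ and $\ln\frac4\kk=\ln2+\frac12\ln\frac x\kappa+\O(\varkappa)$ turns this into $x-\kappa\ln x-\kappa\ln4+\kappa\ln\kappa-\kappa+\O(D^2x^{-1/3}\ln x)$.

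\textbf{Comparison with the paper.} Repaired this way, your argument is a mild variant of the paper's, not a different proof. The paper uses Lemma~\ref{lem:sd} in its truncated form $\sin(z-\kk^2z/4)+\O(\kk^4z)$. It expands only $\KK^\prime=\ln\frac4\kk+\O(\kk^2\ln\kk)$ inside $z$ and checks that $\kk^2z/4=\kappa+\O(x^{-1/3}D^2\ln x)$. You instead keep the exact phase $\pi z/(2\KK)$, convert it to $\pi xV$ via Legendre's relation, and expand both complete integrals near modulus $1$. This removes the phase-truncation error but otherwise needs the same bookkeeping. You also correctly retain the $\O(\kk^2\ln\kk)$ term from the proof of Lemma~\ref{lem:sd}, which the statement of that lemma omits.

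\textbf{On the Stirling remainder.} The identity $\arg\Gamma(\tfrac12+\ii\kappa)=\kappa\ln\kappa-\kappa+\O(\min\{\kappa,1/\kappa\})$ that you float is false for small $\kappa$: there the difference behaves like $\kappa|\ln\kappa|$. It would not help anyway. Instead use $\kappa=Dx^{1/3}/\pi\to\infty$, and absorb $\O(\kappa^{-1})=\O(D^{-1}x^{-1/3})$ into $\O(D^2x^{-1/3}\ln x)$ for $x$ large depending on $D$, or for $D$ bounded below. This is your second option, and it is what the paper tacitly does. No uniformity as $D\to0$ is possible, because the two phases genuinely differ by about $1/(24\kappa)$.
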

\begin{proof}
Let us write \( z=x-2\kappa \KK^\prime \). Since \( \varkappa = \pi D x^{-\frac23}\), we see from Lemma~\ref{lem:k varkappa} that \( \kk \to 0 \) and it holds that
\[
\kk^2 = \frac4\pi \varkappa + \O(\varkappa^2) = \O(D x^{-\frac23}). 
\]
We also get from \cite[Equation~(19.12.1)]{NIST:DLMF} that
\(
-2\KK^\prime = 2\ln \kk -2 \ln 4 + \O \big( \kk^2 \ln \kk \big).
\)
Therefore, we have that
\[
\frac{\kk^2 z}4 = \kappa + \O(x^{-\frac13} D^2\ln x) \quad \text{and} \quad \kk^4 z = \O(x^{-\frac13}D^2).
\]
If we define \( \phi_* \) via the relation
\[
z- \frac{\kk^2 z}4 = x - \kappa \ln x + \phi_*, \quad 
\]
then it follows from Lemma~\ref{lem:sd} that
\[
\sd\left(x-2\kappa \KK^\prime,\kk\right) = \sin( x - \kappa \ln x + \phi_* ) + \O( x^{-\frac13} D^2)
\]
as \( x\to\infty \). Moreover, it holds that
\begin{align}
\phi_* &= \kappa \ln x + 2\kappa\ln\kk - 2\kappa\ln 4 - \kappa + \O\big(x^{-\frac13} D^2\ln x\big) \\
& = -\kappa\ln 4 + \kappa \ln \kappa - \kappa + \O\big(x^{-\frac13} D^2\ln x\big)
\end{align}
as \( x\to\infty \). Next, we choose
\begin{align} \label{recall of arg p Gamma}
\arg \left( p \Gamma \left(\frac12 + \ii \kappa \right) \right) = \arg(p) + \Im \left( \log \Gamma \left( \frac12 + \ii \kappa \right) \right).
\end{align}
By \eqref{asymp regime}, one has that
\[
p = \varepsilon e^{\varkappa x} - \frac{\ii s^\R}{2} + \O \left( e^{-\varkappa x} \right) \quad \Rightarrow \quad  \arg(p) = \frac{1-\varepsilon}2\pi + \O(e^{-\pi\kappa})
\]
On the other hand, \cite[Equation (5.11.1) and Table 24.2.1]{NIST:DLMF} give that
\begin{align}
\log \Gamma(s) = \left(s - \frac{1}{2}\right) \log s - s + \frac{1}{2} \log (2 \pi) + \frac{1}{12 s} + \O \left( s^{-3} \right), \quad s \to \infty.
\end{align}
Applying this formula with \(s=\tfrac12+\ii\kappa \) after some computation gives
\[
\Im \left( \log \Gamma \left( \frac12 + \ii \kappa \right) \right) = \kappa \ln \kappa - \kappa + \frac{1}{24 \kappa} + \O \left( \kappa^{-3} \right).
\]
Altogether, we have shown that there is a choice of the argument such that
\[
\arg \left( p \Gamma \left(\frac12 + \ii \kappa \right) \right) = \frac{1-\varepsilon}2\pi + \kappa \ln \kappa - \kappa + \O \left( \kappa^{-1} \right).
\]
Hence, \( \phi_* = - \frac{1-\varepsilon}2\pi + \phi + \O(x^{-\frac13} D^2\ln x) \) and therefore 
\[
\sin( x - \kappa \ln x + \phi_* ) = \varepsilon\sin( x - \kappa \ln x + \phi ) + \O\big(x^{-\frac13} D^2\ln x\big)
\]
as \( x\to\infty \). Since \( \kk^\prime = \sqrt{1-\kk^2} = 1 + \O(\kk^2) = 1 + \O(D x^{-\frac23})\), this finishes the proof of the theorem.
\end{proof}

\section{On Parabolic Cylinder Parametrix}
\label{app:PCF}

Given a parabolic cylinder function \( D_\mu(w) \), it can be written as a Kummer \( U(a,b,z) \) confluent hypergeometric function in the following way
\[
D_\mu(w) = 2^{\mu/2} U\big(-\tfrac\mu2,\tfrac12,\tfrac{w^2}2\big)e^{-\frac14w^2},
\]
see \cite[Equation~(12.7.14)]{NIST:DLMF}. Therefore, it admits a series expansion of the form
\begin{align}
\label{expansion-1}
D_\mu(w) = w^\mu e^{-\frac14w^2} \left( \sum_{s=0}^{n-1} \frac{a_s^{(\mu)}}{w^{2s}} + R_n(w;\mu)\right), \quad a_s^{(\mu)} := \frac{(-1)^s(-\mu)_{2s}}{2^ss!},
\end{align}
for any \( n\geq 0 \), see \cite[Equation~(13.7.4)]{NIST:DLMF}. We shall now use notation and material of \cite[\textsection13(ii)]{NIST:DLMF}. Assume that
\[
\sigma := \frac{|1+2\mu|}{|w|^2} \leq \frac12.
\]
Set \( \rho := \frac14|\mu^2+\mu+1|+\frac{\sigma(4+\sigma)}{4(1-\sigma)^2} \). Then, for \( \Re(w)\geq0 \), it holds that
\[
|R_n(w;\mu)| \leq \frac{2C_n(\mu)}{1-\sigma} \frac{|a_n^{(\mu)}|}{|w|^{2n}} \exp\left\{\frac{4C_1(\mu)}{1-\sigma} \frac{\rho}{|w|^2}\right\},
\]
where constants \( C_n(\mu) \) satisfy
\[
|C_n(\mu)| \leq \left( \sqrt\pi\frac{\Gamma(\frac{n+2}2)}{\Gamma(\frac{n+1}2)} + \frac{2\sigma n}{1+\sqrt{1-4\sigma^2}} \right)\frac{2^{n/2}}{(1+\sqrt{1-4\sigma^2})^{n/2}}.
\]
Clearly, it holds that \( |C_n(\mu)|\leq C_n^\prime \), where \( C_n^\prime \) is independent of \( \mu \) and \( w \). Moreover, we have that \( 4\rho \leq |\mu|(|\mu|+1) + 10 \). Therefore, if we suppose that
\begin{align}
\label{expansion-2}
|w| \geq \max\{5,|\mu|\} \quad \text{and} \quad \Re(w)\geq0,
\end{align}
then \( \sigma\leq \frac12 \) and there exist constants \( A_n \), independent of \( \mu \) and \( w \), such that
\begin{align}
\label{expansion-3}    
|R_n(w;\mu)| \leq A_n \frac{|a_n^{(\mu)}|}{|w|^{2n}}.
\end{align}

Now, the model parametrix \( Pc(z)e^{\frac14z^2\sigma_3}\), used in Section~\ref{ss:6.3}, is explicitly given by
\[
 \begin{cases}
    \begin{pmatrix} D_{-\nu}\left(ze^{\frac{\pi \ii}2} \right )e^{\frac{\pi \ii\nu }2} & -\alpha D_{\nu-1}(z) \\ \ii\beta D_{-\nu-1}\left(ze^{\frac{\pi \ii}2}\right )e^{\frac{\pi \ii\nu }2} & D_{\nu}(z) \end{pmatrix}, & \arg(z)\in\left(-\frac\pi4,0\right), \medskip \\
    \begin{pmatrix} D_{-\nu}\left(ze^{-\frac{\pi \ii}2} \right )e^{-\frac{\pi \ii\nu }2} & -\alpha D_{\nu-1}(z) \\ -\ii\beta D_{-\nu-1}\left(ze^{-\frac{\pi \ii}2}\right )e^{-\frac{\pi \ii\nu }2} & D_{\nu}(z) \end{pmatrix}, & \arg(z)\in\left(0,\frac\pi2\right), \medskip \\
    \begin{pmatrix} D_{-\nu}\left(ze^{-\frac{\pi \ii}2} \right )e^{-\frac{\pi \ii\nu }2} & \alpha D_{\nu-1}\left(ze^{-\pi \ii} \right)e^{\pi \ii \nu} \\ -\ii\beta D_{-\nu-1}\left(ze^{-\frac{\pi \ii}2}\right )e^{-\frac{\pi \ii\nu }2} & D_{\nu}\left(ze^{-\pi \ii} \right)e^{\pi \ii \nu} \end{pmatrix}, & \arg(z)\in\left(\frac\pi2,\pi\right), \medskip \\
    \begin{pmatrix} D_{-\nu}\left(ze^{-\frac{3\pi \ii}2} \right )e^{-\frac{3\pi \ii\nu }2} & \alpha D_{\nu-1}\left(ze^{-\pi \ii} \right)e^{\pi \ii \nu} \\ \ii\beta D_{-\nu-1}\left(ze^{-\frac{3\pi \ii}2}\right )e^{-\frac{3\pi \ii\nu }2} & D_{\nu}\left(ze^{-\pi \ii} \right)e^{\pi \ii \nu} \end{pmatrix}, & \arg(z)\in\left(\pi,\frac{3\pi}2\right), \medskip \\
    \begin{pmatrix} D_{-\nu}\left(ze^{-\frac{3\pi \ii}2} \right )e^{-\frac{3\pi \ii\nu }2} & -\alpha D_{\nu-1}\left(ze^{-2\pi \ii} \right)e^{2\pi \ii \nu} \\ \ii\beta D_{-\nu-1}\left(ze^{-\frac{3\pi \ii}2}\right )e^{-\frac{3\pi \ii\nu }2} & D_{\nu}\left(ze^{-2\pi \ii} \right)e^{2\pi \ii \nu} \end{pmatrix}, & \arg(z)\in\left(\frac{3\pi}2,\frac{7\pi}{4}\right),
\end{cases}
\]
where \( \nu = \frac12+\ii\frac{\varkappa x}\pi \). In the above formula, \( \arg(z)\in(-\frac\pi4,\frac{7\pi}4) \) and multiplication of \( z \) by \( e^{\ii\phi}\) represents the rotation of the argument of \( z \) by \( \phi \). Hence, all the arguments of the parabolic cylinder functions above always belong to the right half-plane. Thus, assuming that \( |z|\geq \max\{5,|\nu+1|\} \), \eqref{matrix K} does indeed hold.






\bibliographystyle{alpha}
\bibliography{reference.bib}

\end{document}